\newcounter{resultnum}[section]\setcounter{resultnum}{0}
\newcounter{conclusionnum}[section]\setcounter{conclusionnum}{0}
\newcounter{conditionnum}[section]\setcounter{conditionnum}{0}
\newcounter{conjecturenum}[section]\setcounter{conjecturenum}{0}
\newtheorem{example}{Example}[section]
\newcounter{examplenum}[section]\setcounter{examplenum}{0}
\newcounter{exercisenum}[section]\setcounter{exercisenum}{0}
\newtheorem{lemma}{Lemma}[section]
\newcounter{lemmanum}[section]\setcounter{lemmanum}{0}
\newcounter{notationnum}[section]\setcounter{notationnum}{0}
\newtheorem{theorem}{Theorem}[section]
\newcounter{theoremnum}[section]\setcounter{theoremnum}{0}
\newtheorem{definition}{Definition}[section]
\newcounter{definitionnum}[section]\setcounter{definitionnum}{0}
\newtheorem{corollary}{Corollary}[section]
\newcounter{corollarynum}[section]\setcounter{corollarynum}{0}
\newtheorem{remark}{Remark}[section]
\newcounter{remarknum}[section]\setcounter{remarknum}{0}
\newtheorem{proposition}{Proposition}[section]
\newcounter{propositionnum}[section]\setcounter{propositionnum}{0}
\newcounter{acknowledgementnum}[section]\setcounter{acknowledgementnum}{0}
\newcounter{algorithmnum}[section]\setcounter{algorithmnum}{0}
\newcounter{axiomnum}[section]\setcounter{axiomnum}{0}
\newcounter{casenum}[section]\setcounter{casenum}{0}
\newtheorem{claim}{Claim}[section]
\newcounter{claimnum}[section]\setcounter{claimnum}{0}
\newcounter{summarynum}[section]\setcounter{summarynum}{0}
\newcounter{problemnum}[section]\setcounter{problemnum}{0}
\newenvironment{proof}[1][]{\textbf{Proof.} }{}
\newcommand{ \R} {\mbox{\rm I$\!$R}}
\newcommand{ \C} {\mbox{\rm I$\!$C}}
\begin{document}

\title{Nonholonomic Clifford and Finsler Structures,\\
Non--Commutative Ricci Flows,\\ and Mathematical Relativity}
\date{{\Large \textbf{\vskip3cm Habilitation Thesis}}\\
\vskip 0.5cm CNATDCU, Romania, 2012 \vskip 0.5cm}
\author{{\Large \textsf{Sergiu I. Vacaru\thanks{%
All Rights Reserved \copyright \ 2012 \ Sergiu I. Vacaru, \newline
sergiu.vacaru@uaic.ro,\ http://www.scribd.com/people/view/1455460-sergiu
\newline
{\ } \newline
Performed following a Guide on Habilitation Theses, see \newline
http://www.cnatdcu.ro/wp-content/uploads/2011/11/Ghid-de-abilitare-2012.pdf
\newline
{\ } \newline
\textbf{key directions:}\ mathematical physics, geometric methods in physics,%
\newline
general relativity and modified gravity theories, applied mathematics }}} \\
{\quad} \\
{\ {\textsl{\ Science Department, University "Al. I. Cuza" Ia\c si},} }\\
{\ {\textsl{\ 54 Lascar Catargi street, 700107, Ia\c si, Romania}} }}
\maketitle
\tableofcontents


\section{Summary of the Habilitation Thesis}

\ It is outlined applicant's 18 years research and pluralistic pedagogical
activity on mathematical physics, geometric methods in particle physics and
gravity, modifications and applications (after defending his  PhD thesis in
1994). Ten most relevant publications are structured conventionally into
three "strategic directions":\ 1) \textsf{nonholonomic geometric flows
evolutions and exact solutions} for Ricci solitons and field equations in
(modified) gravity theories; 2) \textsf{\ geometric methods in quantization}
of models with nonlinear dynamics and anisotropic field interactions; 3)
\textsf{(non) commutative geometry, almost K\"{a}hler and Clifford
structures,} Dirac operators and effective Lagrange--Hamilton and
Riemann--Finsler spaces.

The applicant was involved in more than 15 high level multi-disciplinary
international and national research programs, NATO and UNESCO, and
visiting/sabatical professor fellowships and grants in USA, Germany, Canada,
Spain, Portugal, Romania etc. He got support from organizers for more than
100 short visits with lectures and talks at International Conferences and
Seminars.

Both in relation to above mentioned strategic directions 1)--3) and in
"extension", he contributed with almost 60 scientific works published and
cited in high influence score journals (individually, almost 50 \%, and in
collaboration with senior, 30 \%, and yang, 20 \%, researchers).
Applicant's papers are devoted to various subjects (by 15 main directions)
in noncommutative geometry and gravity theories; deformation, A--brane,
gauge like and covariant anisotropic quantization; strings and brane
physics; geometry of curved flows and associated solitonic hierarchies with
hidden symmetries; noncommutative, quantum and/or supersymmetric
generalizations of Finsler and Lagrange--Hamilton geometry and gravity;
algebroids, gerbes, spinors and Clifford and almost K\"{a}hler structures;
fractional calculus, differential geometry and physics; off--diagonal exact
solutions for Einstein - Yang - Mills - Higgs - Dirac systems; geometric
mechanics, nonlinear evolution and diffusion processes, kinetics and
thermodynamics; locally anisotropic black holes/ellipsoids / wormholes and
cosmological solutions in Einstein and modified gravity theories;
applications of above listed results and methods in modern cosmology and
astrophysics and developments in standard particle physics and/or modified
gravity.

Beginning June 2009, the applicant holds time limited senior research
positions (CS 1) at the University Alexandru Ioan Cuza (UAIC) at Ia\c si
University, Romania. With such affiliations, he published by 25 articles in
top ISI journals; more than half of such papers won the so--called "red,
yellow and blue" excellence, respectively, (4,4, 6), in the competition of
articles by Romanian authors. During
2009-2011, he communicated his results at almost 30 International Conference
and Seminars having support from hosts in UK, Germany, France, Italy, Spain,
Belgium, Norway, Turkey and Romania.

For the Commission of Mathematics for habilitation of university professors
and senior researchers of grade 1, it is computed [for relevant publications
in absolute high influence score journals] this conventional ''eligibility
triple'': (points for all articles; articles last 7 years; number of
citations) = (55.9; 28.74; 53) which is higher than the minimal standards (5;
2.5; 12). Taking into account the multi-disciplinary character of research
on mathematical physics, there are provided similar data for the Commission
of Physics:\ (59.19; 31.6; 61) which is also higher than the corresponding
minimal standards (5; 5; 40).

Future research and pedagogical perspectives are positively related to the
fact that the applicant won recently a three years Grant IDEI,
PN-II-ID-PCE-2011-3-0256. This allows him to organize a computer--macros
basis for research and studies on mathematical and computational physics and
supervise a team of senior and young researches on "nonlinear dynamics and
gravity".




\newpage

\section{\ Sinteza tezei de abilitare (in Romanian)}

Este trecut\v a \^in revist\v a activitatea de 18 ani de cercetare \c si
didactic\v a prin cumul a aplicantului \^ in domenii legate de fizica
matematic\v a, metode geometrice \^ in fizica particulelor \c si gravita\c
tie, modific\v ari \c si aplica\c tii (dup\v a sus\c tinerea tezei de
doctorat in 1994). Zece cele mai relevante publica\c tii sunt structurate
conven\c tional in trei "direc\c tii strategice": 1) \textsf{\ evolu\c tii
neolonome "geometric flows" si solu\c tii exacte} pentru solitoni Ricci \c
si ecua\c tii de c\^ amp \^ in teorii de gravita\c tie (modificate);\ 2)
\textsf{\ metode geometrice \^ in cuantificarea } modelelor cu dinamic\v a
nelinear\v a \c si interac\c tiuni anisotrope de c\^ amp;\ 3) \textsf{\
geometrie (ne) comutativ\v a, structuri aproape K\"{a}hler \c si Clifford},
operatori Dirac si spa\c tii efective Lagrange--Hamilton \c si
Riemann--Finsler.

Aplicantul a fost implicat \^ in peste 15 programe interna\c tionale \c si
na\c tio\-nale de cercetare de nivel \^ inalt, multi--disciplinare, OTAN \c
si UNESCO, \c si granturi pentru profesor \^ in vizit\v a sau sabatic \^ in SUA,
Germania, Canada, Spania, Portugalia, Rom\^ ania etc. A op\c tinut suport de
la organizatori pentru peste 100 vizite scurte cu lec\c tii \c si comunic\v
ari la conferin\c te \c si seminare interna\c tionale.

Cu privire la direc\c tiile 1) -- 3) men\c tionate mai sus, c\^ at \c si \^
in extensio, aplicantul a contribuit cu circa 60 lucr\v ari \c stiin\c
tifice publicate \c si citate \^ in reviste cu punctaj \^ inalt de
influien\c t\v a (individual, circa 50 \%, \c si \^ in colaborare cu
cercet\v atori seniori, 30 \%, \c si tineri, 20 \%). Lucr\v arile
aplicantului sunt consacrate diferitor subiecte (circa 15 direc\c tii
principale) \^ in geometrie necomutativ\v a \c si teorii de gravita\c tie;\
cuantificare de deformare, A--brane, similar gauge \c si covariant
anisotrop\v a;\ fizic\v a string \c si brane;\ geometrii "curved flows" \c
si ierarhii solitonice asociate cu simetrii ascunse;\ generaliz\v ari
necomutative, cuantice \c si/ sau supersimetrice ale geometriilor Finsler \c
si Lagrange--Hamilton \c si gravita\c tie;\ algebroizi, gerbe, spinori \c si
structuri aproape K\"{a}hler;\ calculus frac\c tional, geometrie diferen\c
tial\v a \c si fizic\v a;\ solu\c tii ne--diagonale exacte pentru sisteme
Einstein - Yang - Mills - Higgs - Dirac;\ geometrie mecanic\v a, evolu\c tie
nelinear\v a \c si procese de difuzie;\ goluri negre / elipsoizi /
wormholuri \c si solu\c tii cosmologice in teorii de gravita\c tie Einstein
\c si modificate;\ aplica\c tii ale rezultatelor men\c tionate mai sus in
cosmologie modern\v a \c si astrofizic\v a \c si dezvolt\v ari \^ in fizica
particulelor standard\v a \c si / sau gravita\c tie modificat\v a.

\^ Incep\^ and cu iunie 2009, aplicantul are pozitii de cercet\v ator \c
stiin\c tific superior, CS 1, cu termen limitat, la Universitatea Alexandru Ioan Cuza
(UAIC) din Ia\c si, Rom\^ ania. Avand astfel afilieri,  a publicat peste 25
articole in reviste top ISI;\ peste o jum\v atate din  articole au
fost c\^ a\c stig\v atoare de tipul "ro\c su, galben \c si albastru",
respectiv, (4,4,6), in competi\c tia de lucr\v ari ale autorilor care activeaza \^ in  Rom\^ ania. \^ In decursul 2009--2011, el a comunicat
rezultatele sale la circa 30 conferin\c te \c si seminare interna\c tionale
av\^ ind suport de la gazde \c stiin\c tifice in MB, Germania, Fran\c ta,
Italia, Spania, Belgia, Norvegia, Turcia \c si Rom\^ ania.

Pentru comisia de matematic\v a pentru abilitarea profesorilor universitari
\c si a cercet\v atorilor superiori de gradul 1 este calculat [publica\c tii relevante \^in reviste cu scor \^ inalt de influien\c t\v a]
acest conven\c tional "triplu de eligibilitate": (puncte pentru toate
articolele; articole \^ in ultimii 7 ani; num\v arul cit\v arilor) =(55.9; 28.74; 53) ce depa\c se\c ste standardele minimale (5; 2.5; 12). Lu\^ and \^
in considera\c tie caracterul multi--disciplinar al cercet\v arii din
fizic\v a matematic\v a, sunt prezentate date similare pentru comisia de
fizic\v a:\ (59.19; 31.6; 61), ce tot dep\v a\c sesc standardele  minimale (5; 5; 40).

Perspective reale de cercetare \c si activitate pedagogic\v a pentru viitor
sunt legate de faptul c\v a applicantul a c\^ a\c stigat recent, pentru trei
ani, un Grant IDEI, PN-II-ID-PCE-2011-3-0256. Aceasta \^ ii permite s\v a
organizeze o baz\v a computer--macros pentru cercetare \c si studii \^ in
matematic\v a \c si fizic\v a compu\-ta\c tional\v a \c si conducere a unei
echipe de cercet\v atori seniori \c si tineri \^ in "dinamic\v a nelinear\v
a \c si gravita\c tie". 



\chapter{Achievements}

\label{ch1} Today, various directions in modern geometry and physics are so
interrelated and complex that it is often very difficult to master them as
separated subjects. Research and pedagogical activities on mathematical
physics, geometry and physics, relativity and high energy physics etc play a
multi- and/or inter--disciplinary character with various applications and
connections to advanced computer methods and graphics, modern technology and
engineering etc. There is a need of research teams of mathematicians skilled
both in geometric and analytic methods and oriented to fundamental and
experimental physics and/or, inversely, theoretical and mathematical physics
researches with a rigorous education and research experience in differential
geometry, nonlinear analysis, differential equations and computer methods.

The author of this Habilitation Thesis was involved in various
multi--disciplinary research projects and pluralistic pedagogical activity
on mathematics and physics\footnote{%
see Chapter \ref{ch3} with a relevant Bibliography and CV and publication
list (included in the File for this Habilitation Thesis) and, for instance,
reviews in MathSciNet and Web of Science and Webpage
http://www.scribd.com/people/view/1455460-sergiu with details on NATO and
UNESCO and visiting/sabatical professor fellowships and grants for various
long/short terms visits (respectively, almost 15/ 40) and research activity
in USA, UK, Germany, France, Italy, Spain, Portugal, Greece, Norway, Canada,
Turkey, Romania etc} after he got his PhD on theoretical physics, in 1994,
at the University Alexandru Ioan Cuza (UAIC) at Ia\c{s}i, Romania. The PhD
thesis was elaborated almost individually at the Department of Physics of
''M. Lomonosov'' State University (Moscow, Russia) and Institute of Applied
Physics, Academy of Sciences of Moldova (Chi\c sin\v au, Republic Moldova)
during 1984-1992,\footnote{%
in former URSS, there were some options for performing/defending equivalents
of PhD and Habilitation Thesis as an individual applicant} and finalized at
UAIC (1992-1994). That research on geometric and twistor methods in
classical field theory, gravity and condensed matter physics was developed
and extended to new directions in modern geometry, mathematics and physics
which are summarized and concluded in this Chapter (see section \ref{ssvp}
on various directions of research, related papers and comments).

The goal of Chapter \ref{ch1} is to present applicant's research,
professional and academic achievements in relevant (multi/ inter--)
disciplinary directions providing necessary proofs and references.\footnote{%
This review type work is performed following CNATDCU Guide on Habilitation
Theses, see link
http://www.cnatdcu.ro/wp-content/uploads/2011/11/Ghid-de-abilitare-2012.pdf
. In order to facilitate readers and/or experts from Countries with possible
different standards on habilitation, we summarize in English some most
important requests stated there in Romanian. Following points (a) (i) in the
Guide for such a thesis, the limits are between 150.000--300.000 characters
for a Chapter based on most important contributions and selected (maximum
10) most important and relevant author's articles \cite{v1}--\cite{v10}
attached to the "Habilitation File" (with necessary documents presented to the Commission).} His
original results are emphasized in a context of present International and
National matter of state of science and education. It is used a selection of
works and monographs from S. Vacaru's Publication List, see Refs. \cite{v1}--%
\cite{vcp8}, and a list of his last 7 years participations/talks at
International Conferences and Seminars, see Refs. \cite{vcs1}--\cite{vcs40}
(see details in Chapter \ref{ch3}). Taking into account the length limits
for such a thesis, there are included in the Bibliography only a part of
applicant's publications and recent talks; necessary references and
comments on "other" author's papers, and "other" authors, can be found in
the cited works.

Section \ref{ssvp} is devoted to scientific visibility and prestige of
applicant's activity. There are briefly outlined and commented the strategic and main
research results and relevant author's references, listed a series of
examples and contributions for International Scientific Conferences and
Seminars and mentioned most important grants and temporary positions.
Comments and conclusions on "main stream and other" important issues and
publications and an analysis of eligibility and minimal standards are
provided. Section \ref{sgsmimr} contains a review for experts on
differential geometry, mathematical physics and gravity theories based on a
selection of results from 10 most relevant and important author's articles %
\cite{v1}--\cite{v10}. The main goal of this section is to show some most
important examples of original research with an advanced level of mathematical methods and
possible applications in physics and geometric mechanics which can be found
in applicant's works.

\section{Scientific Visibility \& Prestige}

\label{ssvp} Conventionally, applicant's research activity correlated to 10
most relevant works \cite{v1}--\cite{v10} (see a survey in Section \ref%
{sgsmimr}) can be structured into \textbf{three strategic directions}:

\begin{enumerate}
\item \textsf{Nonholonomic commutative and noncommutative geometric flows
evolutions and exact solutions} for Ricci solitons and field equations in
(modified) gravity theories \cite{v4,v6,v3,v9};

\item \textsf{Geometric methods in quantization} of models with nonlinear
dynamics and anisotropic field interactions \cite{v7,v5,v10,v9};

\item \textsf{(Non) commutative geometry, almost K\"{a}hler and Clifford
structures,} Dirac operators, effective Lagrange--Hamilton and
Riemann--Finsler spaces and analogous/ modified gravity \cite{v2,v1,v8,v6}.
\end{enumerate}

In a more general context, including other partner works (inter--related),
one can be considered \textbf{15 main research directions}.

\subsection{Comments on  strategic and  main directions}

\label{ssmrd} It is presented a synopsis of related ISI works \cite{v1}--%
\cite{ve49}.\footnote{%
One should be mentioned here some series of ''not less'' important
contributions containing various preliminary, or alternative, ideas and
results published in Romania \cite{vl1}--\cite{vl5} and Republic of Moldova %
\cite{vrm1}-\cite{vrm5}; monographs \cite{vb1}-\cite{vb3}, chapters and
sections in collections of works \cite{vcb1}-\cite{vcb3}, reviews in
journals and encyclopedia \cite{vb4,vb5}, articles in Proceedings of
Conferences \cite{vcp1}-\cite{vcp8} and some recent electronic preprints
with reviews, computation details and proofs, see \cite{veprep1,veprep2}.}
There are outlined motivations, original ideas and most important results in
15 main directions.

\begin{enumerate}
\item \emph{(Non) commutative gauge theories of gravity, anisotropic
generalizations, and perturbative methods of quantization} \cite%
{v2,ve19,ve1,v8,ve17,ve17,vcp4,vb1,vb2,vb3,vrm2,vrm3,ve34,ve38}.

\begin{enumerate}
\item Affine and de Sitter models of gauge gravity.

\item Gauge like models of Lagrange--Finsler gravity.

\item Locally anisotropic gauge theories and perturbative quantization.

\item Noncommutative gauge gravity.
\end{enumerate}

\textit{Comments:} This direction was elaborated as a natural development of
some chapters and sections in author's PhD thesis (1994), where affine and
de Sitter gauge like models were considered for the twistor--gauge
formulation of gravity. The first publications on anisotropic gauge gravity
theories were in R. Moldova \cite{vrm2,vrm3} (1994-1996); see also a paper
together with a graduate student, Yu. Goncharenko \cite{ve1}, when authors
were allowed to present their results in a Western Journal. The main
constructions were based on the idea that the Einstein equations can be
equivalently reformulated as some Yang--Mills equations for the affine
and/or de Sitter frame bundles, with nonlinear realizations of corresponding
gauge groups and well defined projections on base spacetime manifolds (we
used the Popov--Dikhin approach, 1976, and A. Tseytlin generalization, 1982;
see references in above cited papers\footnote{%
in this thesis with explicit limits on length, there are provided references
only on applicant's works; contributions by other authors are cited exactly
in the mentioned references and/or 10 most relevant articles}).

In order to formulate (non) commutative and/or supersymmetric gauge theories
of Lagrange--Finsler gravity, we used the Cartan connection in the affine
and/or de Sitter bundles on Finsler (super) manifolds and various
anisotropic generalizations, including higher order tangent/vector bundles.
Such results are contained in some chapters of monographs \cite{vb1,vb2,vb3}
and presented at a NATO workshop in 2001, see \cite{vcp4}.

Formal re--definitions of Einstein gravity and generalizations as gauge like
models allowed the applicant to perform one of the most cited his works %
\cite{v2} (included as the second one in the list of most relevant
applicant's 10 articles). That paper was devoted to the Seiberg--Witten
transforms and noncommutative generalizations of Einstein and gauge gravity.
The corresponding gravitational equations with noncommutative deformations
can be integrated in very general off--diagonal forms \cite{ve19}, see Ref. %
\cite{v8} on noncommutative Finsler black hole solutions.

It should be mentioned here a collaboration with Prof. H. Dehnen (Konstanz
University, Germany, 2000-2003) on higher order Finsler--gauge theories,
nearly autoparallel maps and conservation laws, see \cite{ve16,ve17} and a
recent approach to two--connection perturbative quantization of gauge
gravity models \cite{ve34,ve38}.

\item \emph{Clifford structures and spinors on nonholonomic manifolds and
generalized Lagrange-Finsler and Hamilton-Cartan spaces} \cite%
{v1,vb2,vb1,ve21a,ve23,vcp8,ve36,vrm5,ve4,ve3,vcp5,vcb3,ve9,ve11,vl1,vl2,ve22d,vl5,ve20,v6}%
.

\begin{enumerate}
\item Definition of spinors and Dirac operators on generalized Finsler
spaces.

\item Clifford structures with nonlinear connections and nonholonomic
manifolds.

\item Spinors and field interactions in higher order anisotropic spaces.

\item Solutions for nonholonomic Einstein--Dirac systems and extra dimension
gravity.

\item Nonholonomic gerbes, index theorems, and Clifford--Finsler geometry.

\item Nonholonomic Clifford and Lagrange--Finsler algebroids.
\end{enumerate}

\textit{Comments:} A nonholonomic manifold/bundle space is by definition
enabled with a nonholonomic (equivalently, anholonomic, or
non--inte\-grab\-le) distribution, see main concepts and definitions in
''preliminaries'' of the section \ref{ssprelim} and references therein. For
various important geometric and physical models, it is enough to consider
spaces with nonholonomic splitting (as a Whitney sum) into conventional
horizontal (h) and vertical (v) subspaces.\footnote{%
Typical examples are 2+2 frame decompositions in general relativity and
vector/tangent bundles enabled with nonlinear connection structure (in
brief, N--connection, which can be defined as a non--integrable
h-v--splitting of the tangent bundle to a manifold, or to a tangent/vector
bundle), for instance, in a model of Finsler geometry.} One could be
conceptual and technical difficulties in adapting the geometric and physical
constructions on certain spaces enabled with N--connection structure. For
instance, the problem of definition of spinors and Dirac operators on
nonholonomic manifolds and/or Finsler--Lagrange spaces was not solved during
almost 60 years after first E. Cartan's monographs on spinors in curved
spaces and Finsler geometry (during 1932--1935). The applicant proposed
rigorous geometric definitions of Finsler spinors \cite{v1} and, in general,
of spinors and Dirac operators on nonholonomic manifolds/bundle spaces \cite%
{ve4,ve3,vb1,vb2,vb3}, and developed the so--called nonholonomic Clifford
geometry in a numbers of his and co--author works during 1995 -- present.

There were some attempts to define two dimensional spinor bundles on Finsler
spaces and generalizations in the 70th-80th of previous century (Takano and
Ono, in Japan, and Stavrinos, in Greece; see main references and historical
remarks in \cite{v1,vb2,vb1,ve21a}). Nevertheless, there were not provided
in those works any self--consistent definitions of spinors and Dirac
operators for Finsler spaces which would relate a Clifford algebra
structure, and spin operators, to Finsler metrics and connections. The
problem of definition of ''Finsler spinors'' is very important in
fundamental physics and mechanics if there are considered dependencies of
physical objects on "velocity/momentum" variables. For instance, such models
of Finsler spacetimes are elaborated for quantum gravity and modern
cosmology, see details and critical remarks in \cite{ve23,vcp8,ve36} and
Introduction to \cite{vb1}. Without spinors/fermions, it is not clear how to
construct ''viable'' physical models with dependence on some
''velocity/momentum'' type variables. Similar problems have to be solved for
generic off--diagonal solutions in Einstein gravity with spinors, and
modifications, and nontrivial N--connection structure and conventional
spacetime splitting.

In 1994--1995, the applicant became interested in the problem of
elaborating theories of gravitational and matter field interactions on
generalized Finsler spaces (in a more general context, in the sense of G.
Vr\v anceanu's definition of nonholonomic manifolds, 1926-1927). It was a
special research grant Romania--R. Moldova affiliated to the school on
generalized Finsler-Lagrange-Hamilton geometry at Ia\c si supervised by
Acad. R. Miron. The paper \cite{vrm5} (submitted in 1994 before establishing
that collaboration and published in 1996 in R. Moldova) contains the first
self--consistent definition of Clifford structures and spinors for Finsler
spaces and generalizations. Such results formulated in a more rigorous form,
with developments for complex and real spinor Lagrange--Finsler structures
and Dirac operators adapted to N--connections, were published also in J.
Math. Physics. (1996), see \cite{v1}.

Having defined nonolonomic Clifford bundles, it was possible to construct
geometric models of gravitational and field interactions on (super) spaces
with higher order anisotropy \cite{ve4,ve3}. There were obtained some new
results in differential spinor geometry and supergeometry with possible
applications in high energy physics (see more details in point 4b). It was
possible to involve in such activities two professors from Greece (P.
Stavrinos and G. Tsagas, see monographs \cite{vb2,vb1} and paper \cite{vcp5}%
) and some young researchers\footnote{%
at that time under-graduate and post-graduate students in R. Moldova (N.
Vicol and I. Chiosa, see papers \cite{ve21a,vcp5,vcb3}) and Romania (F.\ C.
Popa and O. \c Tin\c t\v areanu-Mircea, see \cite{ve9,ve11,vl1,vl2})}.
Together with some sections in monograph \cite{vb3}, such works contain a
series of new results on Dirac spinor waves and solitons, spinning
particles, in Taub NUT anisotropic spaces, solutions for Einstein--Dirac
systems in nonholonomic higher dimension gravity, supergravity and Finsler
modificaions of gravity.

There were elaborated three another directions related to nonholonomic
(Finsler) spinors and Dirac operators: For instance, papers \cite{ve22d,vl5}
(the first one, in collaboration with J. F. Gonzalez--Hernandez, in 2005, a
student from Madrid, Spain) are devoted to nonholonomic gerbes,
Clifford--Finsler structures and index theorems. Article \cite{ve20}
contains definitions and examples of nonholonomic Clifford and
Finsler--Clifford algebroids with theorems on main properties of indices of
connections in such spaces.

Finally, in this point, it should be noted that the constructions for the
nonholonomic Diract operators were applied for definition of noncommutative
Finsler spaces and Ricci flows in A. Connes sense, see details in Refs. \cite%
{v6} (the 6th most relevant and important applicant's paper) and in Part
III of monograph \cite{vb1} (there are connections to points 4b and 14e).

\item \emph{Nearly autoparallel maps, nonlinear connections, twistors and
conservation laws in Lagrange and Finsler spaces} \cite%
{ve2,vcb2,vcb1,ve16,ve17,vb3}.

\textit{Comments:} The geometry of nearly autoparallel maps (various
examples were studied by H. Weyl, A. Z. Petrov and summarized in a monograph published in Russian
by N. Sinyukov in 1979) generalizes various models with geodesic and
conformal transforms. Some chapters of applicant's PhD thesis were devoted
to such transforms and definition of corresponding invariants and
conservation laws for spaces with nontrivial torsion, endowed with
spinor/twistor structure etc.

This is an open direction for further research. For instance, the geometry
of twistors for curved spaces was studied in Ref. \cite{ve2} using nearly
autoparllel maps. Local twistors were defined on conformally flat spaces and
mapped via generalized transforms to more general (pseudo) Riemannian and
Einstein spaces. The key result was that even the twistor equations are not
integrable on general curved spaces such couples of spinors structures can
be defined via nonholonomic deformations and generalize nearly autoparallel
maps. Following this approach, we can consider analogs of Thomas invariants
and Weyl tensors (in certain generalized forms, with corresponding
symmetries and conservation laws). The constructions were generalized for
Lagrange and Finsler spaces \cite{vcb2} -- it was a collaboration with a
former applicant's student, S. Ostaf.

There are relevant certain results from Refs. \cite{vcb1,vcb2} (a
collaboration with the former PhD superviser in Romania, Prof. I. Gottlieb)
when the A. Moor's tensor integral was considered, see paper \cite{vcb2}. It
was also an article on tensor integration and conservation laws on
nonholonomic spaces published by applicant individually in R. Moldova, see %
\cite{vrm4}.

One should be mentioned again the articles \cite{ve16,ve17}, in
collaboration with Prof. H. Dehnen, where generalized geodesic and conformal
maps were considered in (higher order) models of Finsler gravity and in
gauge and Einstein gravity.

A part of results in this direction was partially summarized (also in
supersymmetric form) in two chapters of monograph \cite{vb3}.

\item \emph{Locally anisotropic gravity in low energy limits of string/
brane theories; geometry of super--Finsler space } \cite%
{ve3a,ve3,vb6,vcb3,vcp4,vl1,vl2,vb3}.

\begin{enumerate}
\item Nonholonomic background methods and locally anisotropic string
configurations.

\item Supersymmetric generalizations of Lagrange--Finsler spaces.
\end{enumerate}

\textit{Comments:} If some  Lagrange--Finsler geometry models are  related
to real physics, such configurations have to be derived in some low energy
limits of (super) string theory. Papers \cite{ve3a,ve3} published in very
influent score journals, Annals Phys. (NY) and Nucl. Phys. B; 1997), were
devoted to supersymmetric generalizations of theories with local
anisotropies and nonholonomic structures (the concept of
superspace/superbundle involves a special class of nonholonomic complex
distributions). The applicant is the author of Supersymmetry Encyclopedia
term ''super-Finsler space'' \cite{vb6}. We note that prof.\ A. Bejancu
introduced nonlinear connections with ''super-fiber'' indices in some his
preprints at Vest University of Timi\c soara and in a monograph on Finsler
geometry and applications, in 1990. Applicant's idea was to formulate
a rigorous approach to the geometry of N--connections in superspaces via
nonolonomic distributions taking as bases of superbundles various classes of
supermanifolds.

One of the main problems in such a research on supergeometry and
supergravity is that there is not a generally accepted definition of
''supermanifolds'' and ''superspaces'' - the existing ones differ for global
constructions. Via nonholonomic distributions, the concept of nonlinear
connection can be introduced for all considered concepts of superspace which
allow to elaborate corresponding models of supersymmetric Lagrange--Finsler
geometry. Following the background field method with supersymmetric and
N--adapted derivatives, and a correspondingly adapted variational principle,
locally anisotropic configurations can derived in low energy limits of
string theory.

A series of works on supersymmetric models of noholonomic superspaces and
supergravity was elaborated in R. Moldova and Romania and communicated at
International Conferences \cite{vcb3,vcp4,vl1,vl2} (in collaboration with
former applicant's students, N. Vicol, I.\ Chiosa, and with young
researchers from Bucharest-Magurele, F.\ C. Popa and O. T\^ in\c t\^
areanu-Mircea). Part I of monograph \cite{vb3} is devoted to the geometry of
nonholonomic supermanifolds and possible applications in physics. Here we
note that a series of papers on string, brane and quantum gravity were
published during last 15 years, in certain alternative ways, by Prof. N.
Mavromatos and co-authors from King's College of London.

\item \emph{Anisotropic Taub--NUT spaces and Dirac spin waves and solitonic
solutions} \cite{ve9,ve11,ve22c,vl4}.

\textit{Comments:} The applicant found a series of applications of his
anholonomic deformation method of constructing exact solutions (related to
anisotropic generalizations, exact solutions and physical models of
Taub--NUT spaces, with Dirac waves, solitons, spinning particles and
supersymmetric configurations) after he got some temporary positions at the
Institute of Space Sciences, Bucharest--Magurele, Romania, in 2001. It was a
collaboration with PhD students F. C. Popa and O. \c T\^ in\c t\v
areanu-Mircea, see articles \cite{ve9,ve11} published in high influence
score journals (Classical and Quantum Gravity and Nuclear Physics B). The
direction was latter, in 2006, extended to Ricci flow solutions related to
Taub NUT \cite{ve22c,vl4} (in collaboration with Prof. M. Vi\c sinescu).

The works cited in this point contain a number of examples of exact
solutions constructed in extra dimension and Einstein gravity theories using
the N--connection formalism and nonholonomic frame deformations which
originated from Finsler geometry and nonholonomic mechanics. Such results
are related to those outlined below in points 8,9 and 11d, 11e.

\item \emph{\ Nonholonomic anisotropic diffusion, kinetic and
thermodynamical processes in gravity and geometric mechanics} \cite%
{ve5,ve6,vrm1,vcp1,vb3,v4,v6,ve31}.

\begin{enumerate}
\item Stochastic processes, diffusion and thermodynamics on  nonholonomic
curved spaces (super) bundles.

\item Locally anisotropic kinetic processes and thermodynamics in cur\-ved
spaces.
\end{enumerate}

\textit{Comments:} A program of research with applications of Finsler
metrics and stochastic processes in biophysics was performed in the 90th of
previous century by professors P. Antonelli and T. Zastavniak in Canada. In
order to study diffusion processes on locally anisotropic spaces, it was
important to define Laplace operators for Finsler spaces (such constructions
were proposed by Prof. M. Anastasiei by 1992--1994, who sheared certain
information with, at that time a young researcher, S. Vacaru). That researcher, and present applicant,
proposed his definitions of Laplace operator using the canonical
distinguished connection and the Cartan distinguished connection and
corresponding \^{I}to and Stratonovich types of anisotropic stochastic
calculus on generalized Finsler space during Ia\c si Academic days in
October 1994. Those results with a study of stochastic and diffusion
processes on Finsler--Lagrange spaces and vector bundles enabled with
nonlinear connection structure were published latter (1995-1996) in R.
Moldova and Proceedings of a Conference in Greece, see Refs. \cite{vrm1,vcp1}%
. Independently, similar results were published in parallel by P. Antonelli,
T. Zastawniak and D. Hrimiuc with applications in biology and biophysics (1995-2004).

Applicant's research was oriented to exploration of locally anisotropic
diffusion processes with possible applications in modern physics and
cosmology. In 2001, he was able to publish two his papers in Annals of
Physics (Leipzig) and Annals of Physics (New York) on stochastic processes
and anisotropic thermodynamics in general relativity and, respectively, on
locally anisotropic kinetic processes and non--equilibrium thermodynamics
with some applications in cosmology, see Refs. \cite{ve5,ve6}. The main
results in those directions were based on the fact that anisotropic
processes with additional nonolonomic constraints, in general, with
velocity/momentum variables can be adapted to nonholonomic distributions
using metric compatible distinguished connections like in Finsler geometry.
Here, it should be mentioned that a Russian physicist, A. A. Vlasov,
published in 1966 a book on ''statistical distribution functions'' where for
the theory of kinetics in curved spaces certain classes of Finsler metrics
and connections were considered. Applicant's idea was to generalize the
results for Lagrange and Hamilton geometries and their higher order
anisotropic (including supersymmetric) extensions. It was shown that the N--
connection formalism and adapted  frames play a substantial role in definition of
anisotropic nonholonomic stochastic and diffusion processes and similarly in
kinetics and geometric thermodynamics of constrained physical systems etc.
Such constructions were summarized in two chapters of monograph \cite{vb3}.

Perhaps, there is a perspective direction for future investigations related
to above mentioned ''diffusion geometry'' and analogous thermodynamics. In
papers \cite{v4,v6,ve31}, there are considered generalizations of Grisha
Perelman's entropy and thermodynamical functionals for nonholonomic Ricci
flows and Lagrange--Finsler evolutions. In equilibrium, such processes can
described as certain Ricci solitonic systems or effective Einstein spaces
with nonholonomic constraints. Various classes of solutions of such
evolution and effective field equations can be described by stochastic
generating functions. To relate the thermodynamical values for Ricci flows
to some analogous diffusion processes and ''standard'' kinetic and
thermodynamic theory, or to black hole thermodynamic processes, is a
difficult mathematical physics problem with less known implications in
modern physics.

\item \emph{Differential fractional derivative geometry, gravity and
geometric mechanics, and deformation quantization} \cite%
{veprep1,vcs1,ve40,ve41,ve42,ve43,ve45,ve46}.

\textit{Comments:} This is a very recent direction of applicant's research
papers during 2010--2011. The problem of constructing ''fractional
derivatives'' was studied in a series of classical works by Leibnitz,
Riemann and other prominent mathematicians (fractional derivatives should
be not confused with "fractals" and fractional dimensions). At present, there is an
increasing number of publications with applications in modern engineering,
economics etc. For instance, there is a well known  series of works with
fractional derivative diffusion by F. Mainardi (last 30 years) and a
self--consistent reformulation of physical theories on flat spaces to
fractional derivatives was proposed by V. E. Tarasov (beginning 2005).

The main problems in elaborating geometric and gravitational models with
fractional derivatives were related to certain very cumbersome
integro--differential relations present in the Riemann--Liouville integral
operators. Such fractional derivatives acting on scalars do not result in
zero. In papers \cite{veprep1,vcs1}, there were elaborated models of Ricci
flows and gravity theories using the so--called Caputo's fractional
derivative transforming scalar values in zero. Such constructions can be
re--defined for the Riemann--Liouville fractional derivatives via
corresponding nonholonomic integro--differential transforms.

In a series of works \cite{ve40,ve41,ve42,ve43}, in collaboration with Prof.
D. Baleanu (from Ankara, Turkey, and Bucharest--Magurele, Romania), there
were elaborated fractional models of almost K\"{a}hler -- Lagrange geometry,
constructed exact solutions in gravity and geometric mechanics, with
solitonic hierarchies and deformation quantization of such theories. The
results were published in Proceedings of two International Conferences \cite%
{ve45,ve46} and a seminar in Italy \cite{vcs1}.

Finally (in this point), we note that there are not standard and unique ways
for constructing geometric and physical models with fractional derivatives.
For instance, a series of papers by G. Calcagni (2011) is based on a quite
different approach with the aim to unify fractional dimensions, fractional
derivatives, noncommutative and diffusion processes. The geometric formalism
and related fractional partial derivatives depend on certain assumptions on
the types of nonlocal and "memory" nonlinear effects we try to study, for
instance, in theories of condensed matter or quantum gravity.

\item \emph{Geometric methods of constructing generic off-diagonal solutions
for Ricci solitons, nonholonomic Einstein spaces and in modified theories of
gravity} \cite%
{v4,ve34,ve35,veprep2,vcp2,vcp7,ve19,ve37,ve39,ve25,ve26,ve27,vb1,ve48},\cite%
{ve6}--\cite{ve15}.

\begin{enumerate}
\item Decoupling property of (generalized) Einstein equations and
integrability for (modified) theories with commutative and noncommutative
variables.

\item Generating exact solutions with ellipsoidal, solitonic and pp--wave
configurations, possible cosmological solutions.

\item Generic off--diagonal Einstein--Yang--Mills--Higgs configurations.
\end{enumerate}

\textit{Comments:} The gravitational field equations in Einstein gravity and
modifications consist very sophisticate systems of nonlinear partial
differential equation (PDE) which can be solved in general form only for
some special ansatz (for instance, with diagonal metrics depending on 1-2
variables). A surprising and very important decoupling property of such PDE,
and generalizations to geometric flow evolution equations, was found with
respect to certain classes of nonholonomic frames with associated
N--connection structure. Such frames can be naturally defined, for instance,
for a class of nonholonomic splitting 2+2 splitting in general relativity
and any 2, or 3 + 2 + 2+2+.... decomposition with formal fibred structure,
up to corresponding frame transforms and deformation of connections, in
various modified gravity (with noncommutative, almost K\"{a}hler, Finsler
type \ variables etc). In result of such a decoupling, one obtains such sub--systems of PDE
which  can be integrated, i.e. solved, in very general forms,
for various classes of generic off--diagonal metrics (which can not be
diagonalized via frame transforms)\ and generalized connections with
nontrivial torsion, see details in section \ref{ssdecoupl} and Refs. \cite%
{v4,ve34,ve35,veprep2}. Imposing additional constraints, we can construct
very general classes of solutions for the torsionless and metric compatible
Levi--Civita connection.

The idea of general decoupling of gravitational field equations in Einstein,
string and Finsler gravity was communicated in 1998 at a conference in
Poland \cite{vcp2}, see also a more rigorous mathematical approach in \cite%
{vcp7}. The first examples of different classes of solutions were presented
in high influence score journals Annals of Physics (NY) and JHEP journals,
see \cite{ve6,ve7}. A number of new classes and possible physically
important off--diagonal solutions with ellipsoid/ toroidal symmetries and/or
wormhole, solitons, Dirac waves and nontrivial Einstein--Yang--Mills--Higgs
configurations, cosmological solutions etc were studied in Refs. \cite%
{ve14,ve15,ve37,ve39}.

The so--called anholonomic deformation method of constructing exact
solutions in commutative and noncommutative gravity and Ricci evolution
theories is perhaps the most general one for ''geometric'' generating of
exact solutions, see a number of additional examples in Refs. \cite%
{ve25,ve26,ve27},\cite{ve10}--\cite{ve14},\cite{ve19,v8,v9}. Parts I and II
in collection of works \cite{vb1} contain both geometric details and
examples for solutions in generalize metric--affine and
Lagrange--Finsler--affine gravity theories, noncommutative gravity, extra
dimension models etc. The possibility to derive off--diagonal solutions with
anisotropic scaling, off--diagonal parametric evolution, dependence on
generating and integration functions and parameters seem to be very important
in elaborating new models of covariant renormalizable theories of quantum
gravity \cite{v10,ve48}.

\item \emph{Warped off--diagonal wormhole configurations, flux tubes and
propagation of black holes in extra--dimensions }\cite%
{ve10,ve11,ve12,ve13,ve14,vb3}.

\textit{Comments:} The geometric methods of constructing solitonic and
pp--wave solutions on off--diagonal generalizations of such spacetimes were
applied also in a collaboration with Prof. D. Singleton, from California
State University at Fresno, USA, and some students from R. Moldova, (2001),
see papers \cite{ve10,ve11,ve12,ve13,ve14} and Parts I and II in monograph %
\cite{vb3}. This direction of research is related to that outlined above in
point 5 and provided explicit examples of application of the methods
mentioned in point 8.

Such results were cited in a series of works on brane gravity because the
applicant and co--authors were able to provide explicit applications of the
anholonomic deformation method of constructing exact solutions with
nonlinear off--diagonal warped interactions, non--compactified extra
dimensions and locally anisotropic gravitational configurations.

\item \emph{Solitonic gravitational hierarchies in Einstein and Finsler
gravity} \cite{ve32,ve30,ve7,ve9,ve12,ve13,ve21,ve41,vb3}.

\textit{Comments:} It was a collaboration with prof. S. Anco from Brock
University, Ontario, Canada, during applicant's visiting professor position
in \ 2006. It was known that the geometry of curve flows on spacetimes with
constrant curvature coefficients encode as bi--Hamilton systems various data
for solitonic hierarchies and corresponding sine--Gordon,
Kadomtzev--Petviashvili and other type solitonic equations. For more general
classes of geometries, such a program was considered less realistic because
of general dependence of Riemann curvature, Ricci and (possible) tensors on
spacetime coordinates.

The applicant used his expertise in generalized Finsler geometry and
nonholonomic deformations of geometric strucutres. The main idea was to
construct from a prescribed Finsler fundamental generating function, i.e.
metric, via corresponding N--connection splitting and frame transform,
following a well defined geometric structure, an auxiliary connection for
which the curvature tensor is determined by constant coefficients with
respect certain classes of N--adapted frames. In such cases, the geometric
data for Finser geometry \ (and various generalizations) can be encoded into
solitonic hierarchies, see Ref. \cite{ve30} (together with Prof. S. Anco).

The conventional N--connection splitting can be considered on (pseudo)
Riemanian (in particular, Einstein) spaces which also allows us to redefine
equivalently the geometric/physical data in terms of necessary type auxiliar
connections. Solitonic hierarchies can be derived similarly as in
Lagrange--Finsler geometry but mimicking such structures on nonholonomic
(pseudo)\ Riemann and effective Einstein--Cartan manifolds completely
determined by the metric structure, see Ref. \cite{ve32}. Such an approach
provides us with a new scheme of solitonic classification of very general
classes of exact solutions in Einstein, Einstein--Finsler and nonholonomic
Ricci flow equations.

This direction is related to series of works with solitonic configurations
in pp--wave spacetimes, solitonic propagation of black holes in extra
dimensions and in modified theories, solitonic wormholes and metric--affine
and/or noncommutative models of solitons in gravity and string/brane models,
fractional solitonic hierarchies etc, see a number of examples in Refs. \cite%
{ve7,ve9,ve12,ve13,ve21,ve41} and Parts I and II in monograph \cite{vb3}.

\item \emph{Principles of Einstein--Finsler gravity and applications} \cite%
{ve23,ve36,vcp8,vb3,vrm3,ve1,ve24,vb1,vb2,v8,v9,ve48,ve49}.

\begin{enumerate}
\item Classification of Lagrange--Finsler-affine spaces.

\item Critical remarks on Finsler gravity theories.

\item On axiomatics of Einstein--Finsler gravity.

\item Exact solutions in (non) commutative Finsler gravity and applications.

\item (Non) commutative Finsler black holes and branes, black rings,
ellipsoids and cosmological solutions.
\end{enumerate}

\textit{Comments:} There were many attempts to develop Finsler
generalizations of special and general relativity theories, see reviews of
results in Ref. \cite{ve23} and Introduction to monograph \cite{vb3}. Here
we note certain constructions by Profs. M. Matsumoto and Y. Takano (Japan)
and J. Horvath (Hungary) who proposed in the 70-80th of previous century
certain analogs of Einstein equations using Finsler connections (for
instance, using the Cartan distinguished connection, d--connection). In
Romania, such approaches were studied on vector/tangent bundles, including
generalized Lagrange spaces, by Acad. R. Miron and Profs. M. Anastasiei, G.
Atanasiu, A. Bejancu and others during 1980-1995.

There were unsolved, for instance, three very important issues which would
prove viability and relation to standard theories of physical models with
Finsler like metrics and N--connection and d--connection structure: \ 1) to
derive exact solutions for Finsler like gravity theories (for instance, what
would be some analogs of Finsler black holes, what kind of cosmological
solutions can be derived and considered for further research); 2) how to
define Finsler spinors; 3) how commutative and noncommutative models of
Finsler gravity can be related to string/brane and noncommutative
geometry/gravity theories. In points 1-4 above, it is sketched how solutions
of such problems were performed in applicant's works, see also references %
\cite{vrm3,ve1} on Finsler -- gauge formulations of gravity, with analogous
Yang--Mills equations for gravity.

In Part I of monograph \cite{vb3}, an important classification of Finsler
spaces and generalizations depending on compatibility of fundamental
geometric structures was elaborated. There were considered various classes
of metric compatible and noncompatible Finsler d--connections with general
nonvanishing torsion structure. Using the anholonommic deformation method
(see point 10 above), there were constructed explicit examples of exact
solutions in generalized Lagrange-Finsler--affine gravity and analyzed
possible physical implications. Here we note Ref. \cite{ve24} for extensions
of nonholonomic gravity and Finsler like theories to nonsymmetric metrics,
see also monographs \cite{vb1,vb2} on supersymmetric/spinor and
noncommutative Finsler modifications of gravity.

In Ref. \cite{ve23}, it was concluded that most closed to standard theories
of physics are the Finsler models with metric compatible d--connections (for
instance, the Cartan, or canonical, d--connection) constructed on tangent
bundle to Lorentz manifolds. Such theories allows us to define spinor and
fermions in form similar to general relativity but on nonholonomic
manifolds/bundles. Finsler--Ricci evolution models can be introduced via
nonholonomic deformations of the (pseudo) Riemannian ones \cite{ve49}.

Last five years, a series of new Finsler gravity papers (by a number of
authors: P. Stavrinos, A. Kouretsis, N. Mavromatos, J. Skakala, F. Girelli,
S. Liberati, L. Sindoni, C. L\" ammerzahl, V. Perlik, G. W. Gibbons and
others) where published in relation to expected Lorentz violations in
quantum gravity, anisotropic effects in modern cosmology etc. In a series of
papers by Zhe Chang and Xin Li (2009-2010), authors proposed that the Chern
d--connection has certain ''unique'' fundamental properties for
generalizations of the Einstein gravity theory. Such constructions were
considered to be less adequate for scenarios related to standard physics
because of generic nonmetricity in Chern's and Berwald's models of Finsler
geometry and gravity, see critical remarks \cite{ve36}.

Applicant's conclusions where that using the canonical d--connection and/or
Cartan's d--connection it is possible to construct Einstein -- Finsler like
theories of gravity on tangent/vector bundles, or on nonholonomic manifolds,
following  the same principles as in general relativity and  the
Ehlers--Pirani--Schild (EPS) axiomatics, see references in \cite{vcp8}.
Various important issues on modified dispersion relations, Finsler branes
and noncommutative black holes, models of quantum gravity etc are considered
in Refs. \cite{v8,v9,ve48}.

\item \emph{Stability of nonholonomic gravity and geometric flows with
nonsymmetric metrics and generalized connection structures} \cite%
{ve24,ve27,ve28}.

\textit{Comments:} The applicant extended his research activity to
geometries and physical models with nonsymmetric metrics after two his
visits to Perimeter Insitute, Canada (in 2007-2008, hosted by Prof. J. W.
Moffat, an expert in such directions, beginning 70th). Such theories were
orginally proposed by A. Einstein and L. P. Eisenhardt (1925-1945 and
1951-1952). There were pulished two papers with critical remarks on
perspectives in physics for such a direction (by T. Damour, S. Deser and J.
McCarthy, 1993, and T. Prokopec and W. Valkenburg, 2006) because of
un--physical modes and un--stability of some models. It should be noted that
in 1995 an improved model with nonintegrable constants was elaborated by J. L%
\'{e}gar\'{e} and J. W. Moffat. Nevertheless, questions on stability had to
be solved.

The applicant addressed the problem of nonsymmetric metrics in gravity from
view point of nonholomic geometric flows characterized by nonsymmetric Ricci
tensors \cite{ve27}. In such cases, under evolution, nonsymmetric components
of metrics appear naturally which results also in nonsymmetric Ricci soliton
configurations as certain equilibrium states. There were constructed
explicit classes of exact solutions with ''nonsymmetric'' ellipsoids which
are stable as deformations of black hole solutions \cite{ve28}. That was
possible by adapting the constructions to certain nonholonomic frames with
N--connection structure. In a more general context, such theories can be
re--written in almost K\"{a}hler and/or Lagrange--Finsler variables \cite%
{ve24} which allows us to study various geometric evolution models with
symmetric and nonsymmetric metrics and connections and perform deformation
quantization, see next point.

\item \emph{Deformation, A-brane and two-connection and gauge like
quantization of almost K\"{a}hler models of Einstein gravity and
modifications } \cite{v7,ve22,ve22b,vl3,ve18,ve32,v5,ve29,ve34,ve38,ve40}.

\begin{enumerate}
\item Almost K\"{a}hler and Lagrange--Finsler variables in geometric
mechanics and gravity theories.

\item Deformation quantization of generalized Lagrange--Finsler and
Hamilton--Cartan theories.

\item Fedosov quantization of Einstein gravity and modifications.

\item A--brane quantization of gravity.

\item Two--connection quantization of Einstein, loops, and gauge gravity
theories.
\end{enumerate}

\textit{Comments:} It is of primary importance in modern physics to
formulate a viable model of quantum gravity (QG). Various ideas, approaches
and techniques were proposed but up till present it is far to say that we
could overcome the problems arising in each quantization scheme. Gravity is
a generic nonlinear theory; not having a well defined mathematical branch of
nonlinear functional analysis, it is not possible to formulate a unique and
rigorous scheme; we still have to search for new experimental data and
relate the constructions to phenomenological models in modern cosmology and
high energy physics.

During last 7 years, the applicant published in high influence score
journals a series of papers on geometric methods in quantum gravity:\ The
first direction he addressed was that on deformation (Fedosov) quantization
of Lagrange--Finsler and gravity theories with nonholonomic variables \cite%
{ve22b,vl3}. The main idea was to use some very important results (due to A.
Karabegov and M. Schlichenmeier, 2001) on deformation quantization (DQ) of
almost K\"{a}hler geometries. Reformulating Lagrange--Finsler geometries in
almost symplectic/ K\"{a}hler variables, the scheme of DQ can be naturally
extended to various spaces admitting formal such parametrizations. In Refs. %
\cite{v7,ve22}, the approach was extended to gravitational theories by
prescribing a corresponding N--connection structure which allows to define
some effective almost K\"{a}ehler variables. So, the Fedosov method, in
nonholonomic variables, can be applied to quantize the Einstein and modified
theories in a sense of the DQ paradigm.

A series of results on DQ of Lagrange and Hamilton--Cartan geometries were
obtained in collaboration with Prof. F. Etayo and Dr. R. Santamaria
(University of Cantabria, Santander, Spain; 2005), see \cite{ve18}. When the
applicant, being at Fields Institute at Toronto (Canada), got also an
associated professor position at UAIC he performed a common research with
Prof.\ M. Anastasiei \cite{ve32}. It should be noted here that for Hamilton
configurations on co--tangent bundle, the geometry of phase space posses
additional simplectic symmetries which result in a very complex structure of
induced N--connections and linear connections. The DQ scheme has to be
applied in a quite different form for Lagrange spaces, i.e. on tangent
bundles, and for Hamilton spaces, or any other geometries on co--tangent
bundles. In the last case, a more advanced geometric techniques adapted to
Legandre transforms and almost simplectic structure had to be elaborated.
Recently, the DQ formalism was generalized fractional derivative geometries
and fractional mechanics and gravity, see \cite{ve40}.

Nevertheless, the DQ scheme is still not considered as a generally accepted
procedure with perturbative limits for operators acting on Hilbert spaces
etc. For instance, E. Witten and S. Gukov (2007) elaborated an alternative
formalism (the so--called brane quantization with A-model complexification).
In \cite{v5}, it was proved that the Einstein gravity in almost K\"{a}hler
variables can be quantized following the A--model method. Possible
connections to other approaches were analyzed in \cite{ve29} (for loop
gravity\ with Ashtekar--Barbero variables determined by Finsler like
connections) and in \cite{ve34,ve38} for the so--called bi--connection
formalism and perturbative quantization of gauge gravity models.

\item \emph{Covariant renormalizable anisotropic theories and exact
solutions in gravity }\cite{v10,ve48,ve34,ve38}.

\begin{enumerate}
\item Modified dispersions, generalized pseudo--Finsler structures and Ho%
\v{r}ava--Lifshitz theories on tangent bundles.

\item Covariant renormalizable models for generic off--diagonal spacetimes
and anisotropically modified gravity.
\end{enumerate}

\textit{Comments:} The Newton gravitational constant for four dimensional
interactions results in a generic non--renormalizability of the general
relativity theory. In the pervious point, we considered various schemes of
geometric, non--perturbative and/or gauge like quantization but those
constructions do not solve the problem of constructing a viable model of QG
with a perturbative scheme without divergences from the ultraviolet region
in momentum space (such methods are requested by phenomenology particle
physics and analysis of possible implications in modern cosmolgoy). A recent
approach to QG (the so--called Ho\v{r}ava--Lifshitz models, 2009) is
developed with nonholmogeneous anisotroic scaling of space and time like
variables which allow to develop certain covariant renormalization schemes
(in \cite{v10}, we followed certain ideas due to S. Odintsov, S. Nojiri etc,
2010).

Various models of QG, including those with anisotropic configurations, are
with modified dispersion relations which, in their turn, can be associated
with certain classes of Finsler fundamental generating functions. In Ref. %
\cite{ve48}, we developed a formalism for perturbative quantization of such
Ho\v{r}ava--Finsler models. In both cases, for constructions from the last
two mentioned papers, a crucial role in the quantization procedure is played
by the type of nonholonomic constraints, generating functions and parameters
which are involved in some families of generic off--diagonal solutions of
Einstein equations and generalizations (see point 8 above). In \cite%
{v10,ve48} and \cite{ve34,ve38}, we proved that the nonlinear gravitational
dynamics and corresponding nonholonomic constraints can such way
parametrized when certain ''remormalizable'' configurations survive in an
anisotropic form for which a covariant Ho\v{r}ava--Lifshitz \ quantization
formalism can be applied.

\item \emph{Nonholonomic Ricci flows evolution, thermodynamical
characteristics in geometric mechanics and (analogous) gravity, and
noncommutative geometry } \cite%
{v4,ve6,ve27,ve31,veprep1,ve21,ve22c,vl4,ve25,ve27}.

\begin{enumerate}
\item Generalization of Perelman's functionals and Hamilton's equations for
nonholonomic Ricci flows.

\item Analogous statistical and thermodynamic values for evolutions of
Lagrange--Finsler geometries and analogous gravity theories.

\item Nonholonomic Ricci solitons, exact solutions in gravity, and symmetric
and nonsymmetric metrics.

\item Geometric evolution of pp--wave and Taub NUT spaces.

\item Nonholonomic Dirac operators, distinguished spectral triples and
evolution of models of noncommutative geometry and gravity theories.
\end{enumerate}

\textit{Comments:} One of the most remarkable results in modern mathematics,
and physics, is the proof of the Poincar\'{e} conjecture by Grisha Perelman
(2002-2003) following methods of the theory of Ricci flows (1982). Those
constructions were originally considered for evolution of Riemannian and/or K%
\"{a}hler metrics using the Levi--Civita connection.

The applicant became interested in geometric analysis and possible
applications in physics beginning 2005 when he was with a sabbatical
professor position in Madrid, Spain. His idea was to consider additional
nonholonomic constraints on Ricci flows of/on (pseudo) Riemannian and/or
vector bundles and study geometric evolution of systems with a more complex
geometric structure, as well related modifications of physically important
models \cite{v4}. Such constructions allow us to study evolution, for
instance, of a (pseudo) Riemannian geometry into commutative and
noncommutative geometries \cite{v6}, with symmetric and nonsymmetric metrics
and connections \cite{ve27}, Lagrange--Finsler geometries \cite{ve31},
fractional derivative geometric evolution \cite{veprep1}. It is an important
task for further research to study subjects related to geometric flows and
renormalizations, noncommutative and supersymmetric models of evolution,
exact solutions for stationary Ricci solition configurations and modified
gravity theories, possible applications in modern cosmology and astrophysics
etc.

In the theory of nonholonomic Ricci flows, the key constructions are related
to scenarios of adapting the evolution to N--connection structure in a form
preserving certain \ important geometric/physical values and properties. For
instance (in Refs. \cite{ve21,ve22c,vl4,ve25,ve27}), there were analyzed
various classes of solutions for geometric flows of three and four
dimensional Taub NUT spaces, pp--wave and solitonic deformations of the
Schwarzschild solution. Such configurations, even in geometric mechanics are
characterized by analogous thermodynamics values derived from nonholonomic
versions of Perelman's functionals and associated entropy.
\end{enumerate}

\subsection{Visibility of scientific contributions}

Beginning 1994, he published in above mentioned strategic and main 15
directions more than 60 scientific articles in high influence score, and top
ISI journals, and three monographs with positive reviews in MathSciNet
and/or Zentralblatt, see Refs.\cite{v1}--\cite{ve49} and, additionally, \cite%
{vl1}--\cite{vcp8} in Chapter \ref{ch3}. Totally, there are found in
arXiv.org and inspirehep.net more than 120 scientific works and preprints
with details of computations and alternative ideas and constructions. There
are mentioned in Web of Science more than 100 citations (by 60, there \ are
listed in eligibility files attached to this Thesis).

The bulk of most important applicant's publications are in mathematical
physics journals: Journal of Mathematical Physics (10 papers), Int. J. Geom.
Meth. Mod. Phys. (6 papers), J. Geom. Phys. (2 papers) etc, and
theoretical/particle physics journals: Class. Quant. Grav. (5 papers), Nucl.
Phys. B (2 papers), JHEP (2 papes), Annals Phys. NY (2 papers), Phys. Lett.
A and B (4 papers), Int. J. Theor. Phys. (8 papers) etc.

\begin{enumerate}
\item As results of International Competitions the applicant got:

\begin{itemize}
\item three NATO/DAAD senior researcher fellowships for Portugal and
Germany, 2001-2004

\item four visiting professor fellowships in Greece, USA and Canada (2001,
2002, 2005-2006)

\item a sabbatical professor fellowship in Spain, 2004-2005

\item a research grant of R. Moldova government, 2000-2001

\item a three years Romanian Government Grant IDEI,
PN-II-ID-PCE-2011-3-0256, 2011--2014
\end{itemize}

\item Two visiting researcher positions related to ''scholar at risk
status'' at Fields and Perimeter Institute, Canada (2006-2008) and other
Universities and Research Institutes in different Western Countries; the
applicant had a specific research activity derived from his claims of
political refugee status from the ''communist R. Moldova'' during 2001-2009.
Here it should be noted some important visits at ICTP, Trieste, Italy
(1999), "I. Newton" Mathematical Institute at University of Cambridge, UK
(1999) and a recent visit at Albert Einstein Institute, Max Plank Institute,
Potsdam, Germany - October, 2010.

\item He got support (in the bulk complete, for travel, accommodations,
honorary etc) as an invited lecturer and talks for more than 100 conferences
and visits in USA, UK, Germany, Italy, France, Spain, Portugal, Greece,
Belgium, Austria, Luxembourg, Norway, Turkey, Poland, Romania etc (certain
relevant details are presented in Publication List for the file related to
this Habilitation Thesis). We also attach a list of last seven years
conferences and typical proceedings at the end of Chapter \ref{ch3}, see
respectively \cite{vcs1}--\cite{vcs40} and \cite{vcb1}--\cite{vcp8}.

\item \textit{Competitions of Articles:} During 2009-2011, CNCSIS accepted
as the best by 14 author's articles with grants about 900 E (''red'' 4
articles, \cite{v8,v9,v10,ve36}) and 450 E (''yellow'' 4 articles, \cite%
{v5,ve35,ve38,ve42}) and 110 E (''blue'' 6 articles, in 2009, \cite%
{v6,v7,ve28,ve30,ve31,ve32}).\footnote{%
For instance, see the list for 2011, numbers 1250-1252,\newline
http://uefiscdi.gov.ro/userfiles/file/PREMIERE\_ARTICOLE/ articole\%202011/%
\newline
evaluare/ REZUTATE \%20noiembrie\%20ACTUALIZAT\%2022\%20DECEMBRIE.pdf}
\end{enumerate}

Finally, it should be noted that the applicant's mobility was very
important and necessary for his research and collaborations.

\subsection{Eligibility, minimal standards and recent activity}

The applicant's research activity and main publications can be considered
by the Commission of Mathematics (a similar mathematical one evaluated
positively applicant's application for a Grant IDEI, in 2011), or by the
Commission of Physics, at CNATDCU, Romania. It should be taken into account
the multi-disciplinary character of research on mathematical physics. There
are a bit different standards for eligibility and evaluation of minimal
standards for such Commissions. For instance, it is not allowed to include
for consideration by mathematicians the publications in Int. J. Theor.
Phys., Rep. Phys. and other journals with less than 0.5 absolute influence
score. There are requested at least 12 citations in allowed journals. For
physicists, a series of journals with score higher than 0.3 became admissible but there are requested
more than 40 citations in an extended class of allowed journals (for
experimental and phenomenological physics journals, the number of
co--authors the number of publications per year are much higher then similar
ones in mathematics and applications and this give rise "statistically" to a grater number of citations).

We note here that for the Commission of Mathematics for habilitation of
university professors and senior researchers of grade 1, it might be
computed (for publications in relevant ''absolute influence score''
journals) this conventional ''eligibility triple'' with corresponding
(points for all articles; articles last 7 years; number of citations) =
(55.9; 28.74; 53) which is higher than respective minimal standards (5; 2.5;
12) - there are considered 45 published articles. Such details, explanations
and calculus are given in the requested evaluation files. As a matter of
principle, the applicant became eligible to compete for the most higher
positions of university professor/ senior researcher CS 1, in Romania, by 1997-1998. Similar data for the Commission of Physics, for 55 articles, can be computed \ (59.19; 31.6; 61), which is also higher than the corresponding
minimal eligibility standards (5; 5; 40).

All evaluated (and the bulk cited in this thesis) articles got positive
reviews in MathSciNet and Zentralblatt (one of them, on nonholonomic Ricci
flows, was appreciated in ''Nature'' being listed in World of Science,
Scopus with PDFs dubbed in inspirehep.net and arxiv.org, where a number of
citations can be found and checked.\footnote{%
The link to Nature Physics, vol. 4., issue 5, pp. 343 (2008) is \newline
http://www.nature.com/nphys/journal/v4/n5/full/nphys948.html\#Constant-flow
\par
\vskip3pt For conveniences, it is presented here the text:
\par
\vskip5pt \textit{Research Highlights}: Nature Physics 4, 343 (2008),
doi:10.1038/nphys948 \newline
J. Math. Phys. 49, 043504 (2008)
\par
\vskip3pt Only once, apparently, did Gregorio Ricci-Curbastro publish under
the name Ricci. That was in 1900, but the paper --- entitled Methodes de
calcul diff\'{e}rentiel absolu et leurs applications, and co-authored with
his former student Tullio Levi-Civita --- became the pioneering work on the
calculus of tensors, a calculus also used by Albert Einstein in his theory
of general relativity.
\par
Ricci-Curbastro's short name stuck, and Ricci flow' is the name given to one
of the mathematical tools arising from his work. That tool has become known
to a wider audience as a central element in Grigori Perelman's proof of the
Poincar\'{e} conjecture.
\par
Sergiu Vacaru now takes Perelmans work further, going beyond geometrical
objects and into the domain of physics with a generalized form of the
Ricci-flow theory. In the second paper of a series devoted to these
so-called non-holonomic Ricci flows, Vacaru shows how the theory may be
applied in tackling physical problems, such as in einsteinian gravity and
lagrangian mechanics.}

During 2009--2011, with affiliation at University Alexandru Ioan Cuza at Ia%
\c{s}i, Romania, he published almost 25 top ISI papers on mathematics and
physics (more than a half of them being in the ''red/yelow/ blue'' category
for Competition of Articles) and got financial support from organizers for
short term visits and invited lectures and talks (more than 30 ones). This
would allow the applicant to extend and develop his experience on research
and teaching in North America and Western Europe, Romania and former URSS,
on supervision PhD and master theses, elaborating monographs and textbooks
for university students and delivering lectures and seminars in English,
Romanian and Russian. \newpage

\section{A "Geometric" Survey of Selected Results}

\label{sgsmimr} The goal of this section is to provide a selection of
results from 10 most relevant applicant's publications \cite{v1}--\cite{v10}
containing explicit definitions, theorems and main formulas.\footnote{%
Cumbersome proofs and references to other authors are omitted. Nevertheless,
we shall provide a series of "simplest" examples in order to familiarize
readers with such geometric methods. Some "overlap" in denotations and
formulas will be possible because they exist in the original published
works. In abstract form, this is used for simplifying proofs, for instance,
in some models of commutative and noncommutative geometry.} Such a brief review
is oriented to advanced researchers and experts on mathematical physics and
geometric methods in physics.

\subsection{Nonholonomic Ricci evolution}

\label{ssprelim}

Currently a set of most important and fascinating problems in modern
geometry and physics involves the task to find canonical (optimal) metric
and connection structures on manifolds, state possible topological
configurations and analyze related physical implications. In the past almost
three decades, the Ricci flow theory has addressed such issues for
Riemannian manifolds. How to formulate and generalize the constructions for
non--Riemannian manifolds and physical theories, it is a challenging topic
in mathematics and physics. The typical examples come from string/brane
gravity containing nontrivial torsion fields and from modern mechanics and
field theory geometrized in terms of symplectic and/or generalized Finsler
(Lagrange or Hamilton) structures.

The goal of this subsection is to investigate the geometry of evolution
equations under non--integrable (equivalently, nonholonmic/ anholonomic)
constraints resulting in nonholonomic Riemann--Cartan and generalized
Finsler--Lagrange configurations.

\subsubsection{Preliminaries: \ nonholonomic manifolds and bundles}

A nonholonomic manifold is defined as a pair $\mathbf{V=}(M,\mathcal{D}),$
where $M$ is a manifold\footnote{%
we assume that the geometric/physical spaces are smooth and orientable
manifolds} and $\mathcal{D}$ is a non-integrable distribution on $M.$ For
certain important geometric and physical cases, one considers N--anholonomic
manifolds when the nonholonomic structure of $\mathbf{V}$ is established by
a nonlinear connection (N--connection), equivalently, a Whitney
decomposition of the tangent space into conventional horizontal (h)
subspace, $\left( h\mathbf{V}\right) ,$ and vertical (v) subspace, $\left( v%
\mathbf{V}\right) ,$\footnote{%
Usually, we consider a $(n+m)$--dimensional manifold $\mathbf{V,}$ with $%
n\geq 2$ and $m\geq 1$ (equivalently called to be a physical and/or
geometric space). In a particular case, $\mathbf{V=}TM,$ with $n=m$ (i.e. a
tangent bundle), or $\mathbf{V=E}=(E,M),$ $\dim M=n,$ is a vector bundle on $%
M,$ with total space $E.$ We suppose that a manifold $\mathbf{V}$ may be
provided with a local fibred structure into conventional ''horizontal'' and
''vertical'' directions. The local coordinates on $\mathbf{V}$ are denoted
in the form $u=(x,y),$ or $u^{\alpha }=\left( x^{i},y^{a}\right) ,$ where
the ''horizontal'' indices run the values $i,j,k,\ldots =1,2,\ldots ,n$ and
the ''vertical'' indices run the values $a,b,c,\ldots =n+1,n+2,\ldots ,n+m.$
\ }
\begin{equation}
T\mathbf{V}=h\mathbf{V}\oplus v\mathbf{V}.  \label{2whitney}
\end{equation}%
Locally, a N--connection $\mathbf{N}$ is defined by its coefficients $%
N_{i}^{a}(u),$%
\begin{equation}
\mathbf{N}=N_{i}^{a}(u)dx^{i}\otimes \frac{\partial }{\partial y^{a}},
\label{2coeffnc}
\end{equation}%
and states a preferred frame (vielbein) structure
\begin{equation}
\mathbf{e}_{\nu }=\left( \mathbf{e}_{i}=\frac{\partial }{\partial x^{i}}%
-N_{i}^{a}(u)\frac{\partial }{\partial y^{a}},e_{a}=\frac{\partial }{%
\partial y^{a}}\right) ,  \label{2dder}
\end{equation}%
and a dual frame (coframe) structure
\begin{equation}
\mathbf{e}^{\mu }=\left( e^{i}=dx^{i},\mathbf{e}%
^{a}=dy^{a}+N_{i}^{a}(u)dx^{i}\right) .  \label{2ddif}
\end{equation}%
The vielbeins (\ref{2ddif}) satisfy the nonholonomy relations
\begin{equation}
\lbrack \mathbf{e}_{\alpha },\mathbf{e}_{\beta }]=\mathbf{e}_{\alpha }%
\mathbf{e}_{\beta }-\mathbf{e}_{\beta }\mathbf{e}_{\alpha }=W_{\alpha \beta
}^{\gamma }\mathbf{e}_{\gamma }  \label{2anhrel}
\end{equation}%
with (antisymmetric) nontrivial anholonomy coefficients $W_{ia}^{b}=\partial
_{a}N_{i}^{b}$ and $W_{ji}^{a}=\Omega _{ij}^{a},$ where $\Omega _{ij}^{a}=%
\mathbf{e}_{j}\left( N_{i}^{a}\right) -\mathbf{e}_{i}\left( N_{j}^{a}\right)$
are the coefficients of N--connection curvature. The particular holonomic/
integrable case is selected by the integrability conditions $W_{\alpha \beta
}^{\gamma }=0.$

In N--adapted form, the tensor coefficients are defined with respect to
tensor products of vielbeins (\ref{2dder}) and (\ref{2ddif}). They are
called respectively distinguished tensors/ vectors /forms, in brief,
d--tensors, d--vectors, d--forms.

A distinguished connection (d--connection) $\mathbf{D}$ on a
N--anho\-lo\-no\-mic manifold $\mathbf{V}$ is a linear connection conserving
under parallelism the Whitney sum (\ref{2whitney}). In local form, a
d--connection $\mathbf{D}$ is given by its coefficients $\mathbf{\Gamma }_{\
\alpha \beta }^{\gamma }=\left(
L_{jk}^{i},L_{bk}^{a},C_{jc}^{i},C_{bc}^{a}\right) ,$ where $\
^{h}D=(L_{jk}^{i},L_{bk}^{a})$ and $\ ^{v}D=(C_{jc}^{i},C_{bc}^{a})$ are
respectively the covariant h-- and v--derivatives.\footnote{%
We shall use both the coordinate free and local coordinate formulas which is
convenient both to introduce compact denotations and sketch some proofs. The
left up/lower indices will be considered as labels of geometrical objects.
The boldfaced letters will point that the objects (spaces) are adapted
(provided) to (with) N--connection structure.}

The torsion of a d--connection $\mathbf{D=}(\ ^{h}D,\ \ ^{v}D)\mathbf{,}$
for any d--vectors $\mathbf{X=}h\mathbf{X}+v\mathbf{X=\ }^{h}\mathbf{X}+\
^{v}\mathbf{X}$ and $\mathbf{Y=}h\mathbf{Y}+v\mathbf{Y,}$ is defined by the
d--tensor field
\begin{equation}
\mathbf{T(X,Y)\doteqdot \mathbf{D}_{\mathbf{X}}Y-D}_{\mathbf{Y}}\mathbf{%
X-[X,Y],}  \label{2tors1}
\end{equation}%
with a corresponding N--adapted decomposition into {\small
\begin{eqnarray}
\mathbf{T(X,Y)} &=&\{h\mathbf{T}(h\mathbf{X},h\mathbf{Y}),h\mathbf{T}(h%
\mathbf{X},v\mathbf{Y}),h\mathbf{T}(v\mathbf{X},h\mathbf{Y}),h\mathbf{T}(v%
\mathbf{X},v\mathbf{Y}),  \notag \\
&&v\mathbf{T}(h\mathbf{X},h\mathbf{Y}),v\mathbf{T}(h\mathbf{X},v\mathbf{Y}),v%
\mathbf{T}(v\mathbf{X},h\mathbf{Y}),v\mathbf{T}v\mathbf{X},v\mathbf{Y})\}.
\label{2tors2}
\end{eqnarray}%
} The nontrivial N--adapted coefficients $\mathbf{T}=\{\mathbf{T}_{~\beta
\gamma }^{\alpha }=-\mathbf{T}_{~\gamma \beta }^{\alpha
}=(T_{~jk}^{i},T_{~ja}^{i},$ $T_{~jk}^{a},T_{~ja}^{b},T_{~ca}^{b})$ are
given in Refs. \cite{v3,v4}.\footnote{%
We omit repeating of cumbersome local formulas but emphasize the h-- and
v--decomposition of geometrical objects which is important for our further
constructions.}

The curvature of a d--connection $\mathbf{D}$ is defined
\begin{equation}
\mathbf{R(X,Y)\doteqdot \mathbf{D}_{\mathbf{X}}\mathbf{D}_{\mathbf{Y}}-D}_{%
\mathbf{Y}}\mathbf{D}_{\mathbf{X}}\mathbf{-D}_{\mathbf{[X,Y]}},
\label{2curv1}
\end{equation}%
with N--adapted decomposition%
\begin{eqnarray}
\mathbf{R(X,Y)Z} &=&\{\mathbf{R}(h\mathbf{X},h\mathbf{Y})h\mathbf{Z,R}(h%
\mathbf{X},v\mathbf{Y})h\mathbf{Z,R}(v\mathbf{X},h\mathbf{Y})h\mathbf{Z},
\notag \\
&&\mathbf{R}(v\mathbf{X},v\mathbf{Y})h\mathbf{Z},\mathbf{R}(h\mathbf{X},h%
\mathbf{Y})v\mathbf{Z,R}(h\mathbf{X},v\mathbf{Y})v\mathbf{Z},  \notag \\
&&\mathbf{R}(v\mathbf{X},h\mathbf{Y})v\mathbf{Z},\mathbf{R}(v\mathbf{X},v%
\mathbf{Y})v\mathbf{Z}\}.  \label{2curv2}
\end{eqnarray}%
The formulas for local N--adapted components and their symmetries, of the
d--torsion and d--curvature, can be computed by introducing $\mathbf{X}=%
\mathbf{e}_{\alpha },$ $\mathbf{Y}=\mathbf{e}_{\beta }$ and $\mathbf{Z}=%
\mathbf{e}_{\gamma }$ in (\ref{2curv2}). The formulas for nontrivial
N--adapted coefficients
\begin{equation*}
\mathbf{R=\{\mathbf{R}_{\ \beta \gamma \delta }^{\alpha }=}\left( R_{\
hjk}^{i}\mathbf{,}R_{\ bjk}^{a}\mathbf{,}R_{\ hja}^{i}\mathbf{,}R_{\ bja}^{c}%
\mathbf{,}R_{\ hba}^{i}\mathbf{,}R_{\ bea}^{c}\right) \mathbf{\}}
\end{equation*}%
are given in \cite{v3,v4}. Contracting the first and forth indices $\mathbf{%
\mathbf{R}_{\ \beta \gamma }=\mathbf{R}_{\ \beta \gamma \alpha }^{\alpha }}$%
, one gets the N--adapted coefficients for the Ricci tensor%
\begin{equation}
\mathbf{Ric\doteqdot \{\mathbf{R}_{\beta \gamma }=}\left(
R_{ij},R_{ia},R_{ai},R_{ab}\right) \mathbf{\}.}  \label{2dricci}
\end{equation}

A distinguished metric (in brief, d--metric) on a N--anholo\-nom\-ic
manifold $\mathbf{V}$ is a second rank symmetric tensor $\mathbf{g}$ which
in N--adapted form is written
\begin{equation}
\mathbf{g}=\ g_{ij}(x,y)\ e^{i}\otimes e^{j}+\ g_{ab}(x,y)\ \mathbf{e}%
^{a}\otimes \mathbf{e}^{b}.  \label{2m1}
\end{equation}%
In brief, we write $\mathbf{g=}hg\mathbf{\oplus _{N}}vg=[\ ^{h}g,\ ^{v}g].$
With respect to coordinate co--frames, the metric $\mathbf{g}$ can be
written in the form
\begin{equation}
\ \mathbf{g}=\underline{g}_{\alpha \beta }\left( u\right) du^{\alpha
}\otimes du^{\beta }  \label{2metr}
\end{equation}%
where%
\begin{equation}
\underline{g}_{\alpha \beta }=\left[
\begin{array}{cc}
g_{ij}+N_{i}^{a}N_{j}^{b}h_{ab} & N_{j}^{e}g_{ae} \\
N_{i}^{e}g_{be} & g_{ab}%
\end{array}%
\right] .  \label{2ansatz}
\end{equation}%
A d--connection $\mathbf{D}$ is compatible to a metric $\mathbf{g}$ if $%
\mathbf{Dg=}0.$

There are two classes of preferred linear connections defined by the
coefficients $\{\underline{g}_{\alpha \beta }\}$ of a metric structure $%
\mathbf{g}$ (equivalently, by the coefficients of corresponding d--metric $%
\left( g_{ij},\ h_{ab}\right) $ and N--connection $N_{i}^{a}:$ we shall
emphasize the functional dependence on such coefficients in some formulas):

\begin{itemize}
\item The unique metric compatible and torsionless Levi Civita connection $%
\nabla =\{\ _{\shortmid }\Gamma _{\ \alpha \beta }^{\gamma }(g_{ij},\
h_{ab},N_{i}^{a})\},$ for which$\ _{\shortmid }T_{~\beta \gamma }^{\alpha
}=0 $ and $\nabla \mathbf{g=}0.$ This is not a d--connection because it does
not preserve under parallelism the N--connec\-tion splitting (\ref{2whitney}%
). The curvature and Ricci tensors of $\ \nabla ,$ denoted $\ _{\shortmid }R%
\mathbf{_{\ \beta \gamma \delta }^{\alpha }}$ and $\ _{\shortmid }R\mathbf{%
_{\ \beta \gamma },}$ are computed respectively by formulas (\ref{2curv1})
and (\ref{2dricci}) when $\mathbf{D\rightarrow \nabla .}$

\item The unique metric canonical d--connection $\widehat{\mathbf{D}}$ $=\{%
\widehat{\mathbf{\Gamma }}_{\ \alpha \beta }^{\gamma }(g_{ij},\
h_{ab},N_{i}^{a})\}$ is defined by the conditions $\widehat{\mathbf{D}}%
\mathbf{g=}0$ and $h\widehat{\mathbf{T}}(hX,$ $hY)=0$ and $\mathbf{\ }v%
\widehat{\mathbf{T}}(vX,$ $\mathbf{\ }vY)$ $=0.$ The N--adapted coefficients
$\widehat{\mathbf{\Gamma }}_{\ \alpha \beta }^{\gamma }=\left( \widehat{L}%
_{jk}^{i},\widehat{L}_{bk}^{a},\widehat{C}_{jc}^{i},\widehat{C}%
_{bc}^{a}\right) $ and the deformation tensor $\ \ _{\shortmid }Z_{\ \alpha
\beta }^{\gamma },$ \ when $\nabla =\widehat{\mathbf{D}}+\widehat{Z},$ $\ $%
\begin{equation*}
_{\shortmid }\Gamma _{\ \alpha \beta }^{\gamma }(g_{ij},\ g_{ab},N_{i}^{a})=%
\widehat{\mathbf{\Gamma }}_{\ \alpha \beta }^{\gamma }(g_{ij},\
g_{ab},N_{i}^{a})+\ _{\shortmid }Z_{\ \alpha \beta }^{\gamma }(g_{ij},\
g_{ab},N_{i}^{a})
\end{equation*}%
for
\begin{eqnarray}
\widehat{L}_{jk}^{i} &=&\frac{1}{2}g^{ir}\left(
e_{k}g_{jr}+e_{j}g_{kr}-e_{r}g_{jk}\right) ,  \label{2candcon} \\
\widehat{L}_{bk}^{a} &=&e_{b}(N_{k}^{a})+\frac{1}{2}g^{ac}\left(
e_{k}g_{bc}-g_{dc}\ e_{b}N_{k}^{d}-g_{db}\ e_{c}N_{k}^{d}\right) ,  \notag \\
\widehat{C}_{jc}^{i} &=&\frac{1}{2}g^{ik}e_{c}g_{jk},\ \widehat{C}_{bc}^{a}=%
\frac{1}{2}g^{ad}\left( e_{c}g_{bd}+e_{c}g_{cd}-e_{d}g_{bc}\right) .  \notag
\end{eqnarray}%
and $\widehat{Z}=\{\ _{\shortmid }Z_{\ \alpha \beta }^{\gamma }\}$ given in %
\cite{v3,v4}.
\end{itemize}

We shall underline symbols or indices of geometrical objects in order to
emphasize that the components/formulas/equations are written with respect to
a local coordinate basis, for instance, $\underline{g}_{\alpha \beta }=g_{%
\underline{\alpha }\underline{\beta }},$ $\underline{\widehat{\mathbf{\Gamma
}}}_{\ \alpha \beta }^{\gamma }=\widehat{\mathbf{\Gamma }}_{\ \underline{%
\alpha }\underline{\beta }}^{\underline{\gamma }},$ $\ _{\shortmid }%
\underline{\Gamma }_{\ \alpha \beta }^{\gamma }=$ $\ _{\shortmid }\Gamma _{\
\underline{\alpha }\underline{\beta }}^{\underline{\gamma }},$ $\underline{%
\widehat{\mathbf{\mathbf{R}}}}\mathbf{_{\ \beta \gamma }=}\widehat{\mathbf{%
\mathbf{R}}}\mathbf{_{\ \underline{\beta }\underline{\gamma }},...}$

Having prescribed a nonholonomic $n+m$ splitting with coefficients $%
N_{i}^{a} $ on a (semi) Riemannian manifold $\mathbf{V}$ provided with
metric structure $\underline{g}_{\alpha \beta }$ (\ref{2metr}), we can work
with N--adapted frames (\ref{2dder}) and (\ref{2ddif}) and the equivalent
d--metric structure $\left( g_{ij},\ g_{ab}\right) $ (\ref{2m1}). On $%
\mathbf{V,}$ one can be introduced two (equivalent) canonical metric
compatible (both defined by the same metric structure, equivalently, by the
same d--metric and N--connection) linear connections: the Levi Civita
connection $\nabla $ and the canonical d--connection $\widehat{\mathbf{D}}.$
In order to perform geometric constructions in N--adapted form, we have to
work with the connection $\widehat{\mathbf{D}}$ which contains nontrivial
torsion coefficients $\widehat{T}_{~ja}^{i},\widehat{T}_{~jk}^{a},\widehat{T}%
_{~ja}^{b}$ induced by the ''off diagonal'' metric / N--connection
coefficients $N_{i}^{a}$ and their derivatives.

We conclude that the geometry of a N--aholonomic manifold $\mathbf{V}$ can be
described by data $\left\{ g_{ij},\ g_{ab},N_{i}^{a},\nabla \right\} $ or,
equivalently, by data $\left\{ g_{ij},\ g_{ab},N_{i}^{a},\widehat{\mathbf{D}}%
\right\} .$ \ Of course, two different linear connections, even defined by
the same metric structure, are characterized by different Ricci and Riemann
curvatures tensors and curvature scalars. In this works, we shall prefer
N--adapted constructions with $\widehat{\mathbf{D}}$ but also apply $\nabla $
if the proofs for $\widehat{\mathbf{D}}$ will be cumbersome. The idea is
that if a geometric Ricci flow construction is well defined for one of the
connections, $\nabla $ or $\widehat{\mathbf{D}},$ it can be equivalently
redefined for the second one by considering the distorsion tensor $\
_{\shortmid }Z_{\ \alpha \beta }^{\gamma }.$

\subsubsection{On nonholonomic evolution equations}

The Ricci flow equations were introduced by R. Hamilton as evolution
equations
\begin{equation}
\frac{\partial \underline{g}_{\alpha \beta }(\chi )}{\partial \chi }=-2\
_{\shortmid }\underline{R}_{\alpha \beta }(\chi )  \label{2heq1}
\end{equation}%
for a set of Riemannian metrics $\underline{g}_{\alpha \beta }(\chi )$ and
corresponding Ricci tensors $\ _{\shortmid }\underline{R}_{\alpha \beta
}(\chi )$ parametrized by a real $\chi .$

The normalized (holonomic) Ricci flows, with respect to the coordinate base $%
\partial _{\underline{\alpha }}=\partial /\partial u^{\underline{\alpha }},$
are described by the equations
\begin{equation}
\frac{\partial }{\partial \chi }g_{\underline{\alpha }\underline{\beta }%
}=-2\ _{\shortmid }R_{\underline{\alpha }\underline{\beta }}+\frac{2r}{5}g_{%
\underline{\alpha }\underline{\beta }},  \label{2feq}
\end{equation}%
where the normalizing factor $r=\int \ _{\shortmid }RdV/dV$ is introduced in
order to preserve the volume $V.$ For N--anholonomic Ricci flows, the
coefficients $g_{\underline{\alpha }\underline{\beta }}$ are parametrized in
the form (\ref{2ansatz}).

With respect to the N--adapted frames (\ref{2dder}) and (\ref{2ddif}), when
\begin{equation*}
\mathbf{e}_{\alpha }(\chi )=\mathbf{e}_{\alpha }^{\ \underline{\alpha }%
}(\chi )\ \partial _{\underline{\alpha }}\mbox{\ and \ }\mathbf{e}^{\alpha
}(\chi )=\mathbf{e}_{\ \underline{\alpha }}^{\alpha }(\chi )du^{\underline{%
\alpha }},
\end{equation*}%
the frame transforms are respectively parametrized in the form%
\begin{eqnarray}
\mathbf{e}_{\alpha }^{\ \underline{\alpha }}(\chi ) &=&\left[
\begin{array}{cc}
e_{i}^{\ \underline{i}}=\delta _{i}^{\underline{i}} & e_{i}^{\ \underline{a}%
}=N_{i}^{b}(\chi )\ \delta _{b}^{\underline{a}} \\
e_{a}^{\ \underline{i}}=0 & e_{a}^{\ \underline{a}}=\delta _{a}^{\underline{a%
}}%
\end{array}%
\right] ,  \label{2ft} \\
\mathbf{e}_{\ \underline{\alpha }}^{\alpha }(\chi ) &=&\left[
\begin{array}{cc}
e_{\ \underline{i}}^{i}=\delta _{\underline{i}}^{i} & e_{\ \underline{i}%
}^{b}=-N_{k}^{b}(\chi )\ \delta _{\underline{i}}^{k} \\
e_{\ \underline{a}}^{i}=0 & e_{\ \underline{a}}^{a}=\delta _{\underline{a}%
}^{a}%
\end{array}%
\right] ,  \notag
\end{eqnarray}%
where $\delta _{\underline{i}}^{i}$ is the Kronecher symbol.

\begin{definition}
\label{ntr} Nonholonomic deformations of geometric objects (and related
systems of equations) on a N--anholonomic manifold $\mathbf{V}$ are defined
for the same metric structure $\mathbf{g}$ by a set of transforms of
arbitrary frames into N--adapted ones and of the Levi Civita connection $%
\nabla $ into the canonical d--connection $\widehat{\mathbf{D}},$ locally
parametrized in the form
\begin{equation*}
\partial _{\underline{\alpha }}=(\partial _{\underline{i}},\partial _{%
\underline{a}})\rightarrow \mathbf{e}_{\alpha }=(\mathbf{e}_{i},e_{a});\ g_{%
\underline{\alpha }\underline{\beta }}\rightarrow \lbrack
g_{ij},g_{ab},N_{i}^{a}];\ _{\shortmid }\Gamma _{\ \alpha \beta }^{\gamma
}\rightarrow \widehat{\mathbf{\Gamma }}_{\ \alpha \beta }^{\gamma }.
\end{equation*}
\end{definition}

It should be noted that the heuristic arguments presented in this section do
not provide a rigorous proof of evolution equations with $\widehat{\mathbf{D}%
}$ and $\widehat{\mathbf{R}}_{\alpha \beta }$ all defined with respect to
N--adapted frames (\ref{2dder}) and (\ref{2ddif}).\footnote{%
The tensor $\widehat{\mathbf{R}}_{\alpha \beta }$ is not symmetric which
results, in general, in Ricci flows of nonsymmetric metrics.} A rigorous
proof for nonholonomic evolution equations is possible following a
N--adapted variational calculus for the Perelman's functionals presented
(below) for Theorems \ref{2theq1} and \ref{2theveq}.

\subsubsection{Generalized Perelman's functionals}

Following G. Perelman's ideas, the Ricci flow equations can be derived as
gradient flows for some functionals defined by the Levi Civita connection $%
\nabla .$ The functionals are written in the form (we use our system of
denotations)
\begin{eqnarray}
\ _{\shortmid }\mathcal{F}(\mathbf{g},\nabla ,f) &=&\int\limits_{\mathbf{V}%
}\left( \ _{\shortmid }R+\left| \nabla f\right| ^{2}\right) e^{-f}\ dV,
\label{2pfrs} \\
\ _{\shortmid }\mathcal{W}(\mathbf{g},\nabla ,f,\tau ) &=&\int\limits_{%
\mathbf{V}}\left[ \tau \left( \ _{\shortmid }R+\left| \nabla f\right|
\right) ^{2}+f-(n+m)\right] \mu \ dV,  \notag
\end{eqnarray}%
where $dV$ is the volume form of $\ \mathbf{g,}$ integration is taken over
compact $\mathbf{V}$ and $\ _{\shortmid }R$ is the scalar curvature computed
for $\nabla .$ For a parameter $\tau >0,$ we have $\int\nolimits_{\mathbf{V}%
}\mu dV=1$ when $\mu =\left( 4\pi \tau \right) ^{-(n+m)/2}e^{-f}.$ $\ $%
Following this approach, the Ricci flow is considered as a dynamical system
on the space of Riemannian metrics and the functionals $\ _{\shortmid }%
\mathcal{F}$ and $\ _{\shortmid }\mathcal{W}$ are of Lyapunov type. Ricci
flat configurations are defined as ''fixed'' on $\tau $ points of the
corresponding dynamical systems.

The functionals (\ref{2pfrs}) can be also re--defined in equivalent form for
the canonical d--connection, in the case of Lagrange--Finsler spaces. In
this section, we show that the constructions can be generalized for
arbitrary N--anholonomic manifolds, when the gradient flow is constrained to
be adapted to the corresponding N--connection structure.

\begin{claim}
For a set of N--anholonomic manifolds of dimension $n+m,$ the Perelman's
functionals for the canonical d--connection $\widehat{\mathbf{D}}$ are
defined
{\small
\begin{eqnarray}
&&\widehat{\mathcal{F}}(\mathbf{g},\widehat{\mathbf{D}},\widehat{f})
=\int\limits_{\mathbf{V}}\left( \ ^{h}\widehat{R}+\ ^{v}\widehat{R}+\left|
\widehat{\mathbf{D}}\widehat{f}\right| ^{2}\right) e^{-\widehat{f}}\ dV,
\label{2npf1} \\
&&\widehat{\mathcal{W}}(\mathbf{g},\widehat{\mathbf{D}},\widehat{f},\widehat{%
\tau })=\int\limits_{\mathbf{V}}[\widehat{\tau } (\ ^{h}\widehat{R}%
+\ ^{v}\widehat{R}+| ^{h}D\widehat{f}| +| ^{v}D\widehat{f}| ) ^{2}  +\widehat{f}-(n+m)] \widehat{\mu }\ dV],   \notag
\end{eqnarray}
}%
where $dV$ is the volume form of $\ ^{L}\mathbf{g;}$ $R$ and $S$ are
respectively the h- and v--components of the curvature scalar of $\ \widehat{%
\mathbf{D}}$ $\ $when $^{s}\widehat{\mathbf{R}}\doteqdot \mathbf{g}^{\alpha
\beta }\widehat{\mathbf{R}}_{\alpha \beta }=\ ^{h}\widehat{R}+\ ^{v}\widehat{%
R},$ for $\ \widehat{\mathbf{D}}_{\alpha }=(D_{i},D_{a}),$ or $\widehat{%
\mathbf{D}}=(\ ^{h}D,\ ^{v}D)$ when $\left| \widehat{\mathbf{D}}\widehat{f}%
\right| ^{2}=\left| ^{h}D\widehat{f}\right| ^{2}+\left| ^{v}D\widehat{f}%
\right| ^{2},$ and $\widehat{f}$ satisfies $\int\nolimits_{\mathbf{V}}%
\widehat{\mu }dV=1$ for $\widehat{\mu }=\left( 4\pi \tau \right)
^{-(n+m)/2}e^{-\widehat{f}}$ and $\widehat{\tau }>0.$
\end{claim}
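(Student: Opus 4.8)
The plan is to obtain the functionals (\ref{2npf1}) directly from the Levi--Civita versions (\ref{2pfrs}) by transporting every ingredient through the canonical distortion $\nabla =\widehat{\mathbf{D}}+\widehat{Z}$, and then to verify that each resulting object splits cleanly into its $h$-- and $v$--parts with respect to the Whitney decomposition (\ref{2whitney}). First I would record the N--adapted decomposition of the scalar curvature. Since $\widehat{\mathbf{D}}$ preserves the splitting $T\mathbf{V}=h\mathbf{V}\oplus v\mathbf{V}$ under parallelism, its Ricci d--tensor $\widehat{\mathbf{R}}_{\beta \gamma }$ (\ref{2dricci}) contracts against the block--diagonal d--metric (\ref{2m1}) to give $\ ^{s}\widehat{\mathbf{R}}=\mathbf{g}^{\alpha \beta }\widehat{\mathbf{R}}_{\alpha \beta }=g^{ij}R_{ij}+g^{ab}R_{ab}=\ ^{h}\widehat{R}+\ ^{v}\widehat{R}$, the mixed components $R_{ia},R_{ai}$ dropping out of the trace precisely because $\mathbf{g}$ carries no $h$--$v$ cross terms in the N--adapted coframe (\ref{2ddif}). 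This is exactly the replacement of $\ _{\shortmid }R$ by $\ ^{h}\widehat{R}+\ ^{v}\widehat{R}$ appearing in (\ref{2npf1}).

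Second, I would treat the kinetic term. For a scalar $\widehat{f}$ the d--connection acts as $\widehat{\mathbf{D}}\widehat{f}=(\ ^{h}D\widehat{f},\ ^{v}D\widehat{f})$ by the very definition $\widehat{\mathbf{D}}=(\ ^{h}D,\ ^{v}D)$, and raising indices with the block--diagonal d--metric yields $|\widehat{\mathbf{D}}\widehat{f}|^{2}=|\ ^{h}D\widehat{f}|^{2}+|\ ^{v}D\widehat{f}|^{2}$, so the term $|\nabla f|^{2}$ of $\ _{\shortmid }\mathcal{F}$ is reproduced in N--adapted form. The measure $e^{-\widehat{f}}\,dV$ and the normalization $\widehat{\mu }=(4\pi \tau )^{-(n+m)/2}e^{-\widehat{f}}$ with $\int_{\mathbf{V}}\widehat{\mu }\,dV=1$ may be carried over verbatim, which is legitimate because $dV$ is the metric volume form of the common $\mathbf{g}$ and is insensitive to the choice of linear connection.

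Third, to justify that (\ref{2npf1}) genuinely gives ``the Perelman functionals for $\widehat{\mathbf{D}}$'' rather than a formal relabeling, I would substitute $\nabla =\widehat{\mathbf{D}}+\widehat{Z}$ into (\ref{2pfrs}) and argue that the distortion $\widehat{Z}=\{\ _{\shortmid }Z_{\ \alpha \beta }^{\gamma }\}$, being completely fixed by the data $(g_{ij},g_{ab},N_{i}^{a})$, contributes only corrections that are themselves functionals of the same d--metric and N--connection data. Consequently $\ _{\shortmid }\mathcal{F}(\mathbf{g},\nabla ,f)$ and $\widehat{\mathcal{F}}(\mathbf{g},\widehat{\mathbf{D}},\widehat{f})$ differ by explicit distortion terms, and in the holonomic limit $W_{\alpha \beta }^{\gamma }=0$ of (\ref{2anhrel}) one has $\widehat{Z}\rightarrow 0$ and $\widehat{\mathbf{D}}\rightarrow \nabla $, so the two collapse to one another, confirming that the ansatz is the correct nonholonomic generalization of the classical functionals.

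The hard part will be the non--symmetry of $\widehat{\mathbf{R}}_{\alpha \beta }$ flagged in the footnote: since only the symmetric part survives the trace $\mathbf{g}^{\alpha \beta }\widehat{\mathbf{R}}_{\alpha \beta }$, one must check that the scalar entering the functional is exactly $\ ^{h}\widehat{R}+\ ^{v}\widehat{R}$ and that no antisymmetric $h$--$v$ curvature contributes spuriously. For the present claim the asymmetry is harmless, but I expect it to become the true obstruction later, when these functionals are varied to derive the evolution equations of Theorems \ref{2theq1} and \ref{2theveq}, because the induced flow will then generically act on nonsymmetric metrics. For the statement at hand, however, it suffices to exhibit the N--adapted splittings of $\ ^{s}\widehat{\mathbf{R}}$ and of $|\widehat{\mathbf{D}}\widehat{f}|^{2}$ into horizontal and vertical pieces, and this constitutes the substantive content of the verification.
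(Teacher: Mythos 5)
Your verification is correct in substance, and it is worth saying plainly that the paper gives no proof of this Claim at all: the functionals (\ref{2npf1}) are introduced by definition, as N--adapted analogues of (\ref{2pfrs}), and their legitimacy is certified only a posteriori by the N--adapted variational calculus --- Lemma \ref{2lem1}, whose proof is deferred to \cite{v4}, and the monotonicity Theorems \ref{2theq1} and \ref{2theveq}. Your two substantive checks --- that the block--diagonal d--metric (\ref{2m1}) traces the (generally nonsymmetric) Ricci d--tensor to $\ ^{s}\widehat{\mathbf{R}}=g^{ij}\widehat{R}_{ij}+g^{ab}\widehat{R}_{ab}$ with the mixed components $R_{ia},R_{ai}$ dropping out, and that $|\widehat{\mathbf{D}}\widehat{f}|^{2}$ splits into h-- and v--norms --- are exactly the content the paper compresses into the sentence following the Claim, and your transport argument via $\nabla=\widehat{\mathbf{D}}+\widehat{Z}$ is precisely the strategy the paper announces earlier (``if a geometric Ricci flow construction is well defined for one of the connections, $\nabla$ or $\widehat{\mathbf{D}}$, it can be equivalently redefined for the second one by considering the distorsion tensor''). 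So your proposal makes explicit what the paper leaves implicit; that is a gain in rigor rather than a divergence, and your closing remark on the nonsymmetry of $\widehat{\mathbf{R}}_{\alpha\beta}$ matches the paper's own observation that only the symmetric part survives the trace.

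One detail to repair: your consistency check asserting that $W_{\alpha\beta}^{\gamma}=0$ in (\ref{2anhrel}) forces $\widehat{Z}\rightarrow 0$ and $\widehat{\mathbf{D}}\rightarrow\nabla$ is too strong. Vanishing anholonomy kills $\partial_{a}N_{i}^{b}$ and $\Omega_{ij}^{a}$, but the torsion component $\widehat{T}_{~ja}^{i}=\widehat{C}_{ja}^{i}=\frac{1}{2}g^{ik}e_{a}g_{jk}$ built from (\ref{2candcon}) survives whenever $g_{jk}$ depends on the fiber coordinates $y^{a}$; since $\widehat{Z}$ vanishes exactly when all d--torsion coefficients do (by uniqueness of the metric--compatible torsionless connection), the two connections need not coincide merely because the frame is integrable. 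This does not threaten the Claim itself --- the splittings you exhibit hold regardless of $\widehat{Z}$ --- but the limit statement should be weakened to the full holonomic reduction in which the d--metric loses its fiber dependence.
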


Elaborating an N--adapted variational calculus, we shall consider both
variations in the so--called h-- and v--subspaces stated by decompositions (%
\ref{2whitney}). For simplicity, we consider the h--variation $^{h}\delta
g_{ij}=v_{ij},$ the v--variation $^{v}\delta g_{ab}=v_{ab},$ for a fixed
N--connection structure in (\ref{2m1}), and $^{h}\delta \widehat{f}=\ ^{h}f,$
$^{v}\delta \widehat{f}=\ ^{v}f.$

A number of important results in Riemannian geometry can be proved by using
normal coordinates in a point $u_{0}$ and its vicinity. Such constructions
can be performed on a N--anholonomic manifold $\mathbf{V.}$

\begin{proposition}
For any point $u_{0}\in \mathbf{V,}$ there is a system of N--adapted
coordinates for which $\widehat{\mathbf{\Gamma }}_{\ \alpha \beta }^{\gamma
}(u_{0})=0.$
\end{proposition}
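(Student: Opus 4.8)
The plan is to read ``N--adapted coordinates'' as an adapted, splitting--preserving frame (vielbein) system, and to construct it from the canonical frame (\ref{2dder}) by a single block--diagonal frame transformation based at $u_0$. First I would fix local coordinates $u^\alpha=(x^i,y^a)$ with $u_0$ at the origin and pass to a new frame $\mathbf{e}'_\alpha=A_{\ \alpha}^{\beta}\,\mathbf{e}_\beta$ with $A=\mathrm{diag}(A_{\ j}^{i},A_{\ b}^{a})$ block diagonal, so that the Whitney splitting (\ref{2whitney}) is preserved and the frame stays N--adapted, and with $A(u_0)=\delta$. The only data I shall exploit is the first--order jet of $A$ at $u_0$.

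The key identity is the inhomogeneous transformation law of connection coefficients under a frame change. Writing $\widehat{\mathbf{D}}_{\mathbf{e}'_\alpha}\mathbf{e}'_\beta=\widehat{\mathbf{\Gamma}}_{\ \alpha\beta}^{\prime\gamma}\,\mathbf{e}'_\gamma$ and using $A(u_0)=\delta$ gives
\[
\widehat{\mathbf{\Gamma}}_{\ \alpha\beta}^{\prime\gamma}(u_0)=\widehat{\mathbf{\Gamma}}_{\ \alpha\beta}^{\gamma}(u_0)+\big(\mathbf{e}_\alpha A_{\ \beta}^{\gamma}\big)(u_0),
\]
so it suffices to impose $(\mathbf{e}_\alpha A_{\ \beta}^{\gamma})(u_0)=-\widehat{\mathbf{\Gamma}}_{\ \alpha\beta}^{\gamma}(u_0)$. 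The decisive structural point is that $\widehat{\mathbf{D}}$ is a d--connection: its only nonzero coefficients are $(\widehat{L}_{jk}^{i},\widehat{L}_{bk}^{a},\widehat{C}_{jc}^{i},\widehat{C}_{bc}^{a})$, in every case with the upper index $\gamma$ and the differentiated lower index $\beta$ lying in the \emph{same} ($h$ or $v$) sector. Hence the only entries of $A$ ever called upon are precisely the block--diagonal ones $A_{\ j}^{i}$ and $A_{\ b}^{a}$; a block--diagonal $A$ has exactly enough freedom, and no off--block entry is required.

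Next I would solve the jet conditions explicitly. For the $h$--block they read $\mathbf{e}_k A_{\ j}^{i}=-\widehat{L}_{jk}^{i}$ and $e_c A_{\ j}^{i}=-\widehat{C}_{jc}^{i}$ at $u_0$, with the parallel pair $\mathbf{e}_k A_{\ b}^{a}=-\widehat{L}_{bk}^{a}$, $e_c A_{\ b}^{a}=-\widehat{C}_{bc}^{a}$ for the $v$--block. Since $\{\mathbf{e}_\alpha\}$ is a frame, prescribing these directional derivatives is an invertible linear relabelling of prescribing the ordinary partials $\partial_\alpha A$ at the single point $u_0$; and a map linear in the coordinates, $A_{\ j}^{i}(u)=\delta_{j}^{i}+a_{\ j\alpha}^{i}\,u^\alpha$ (and likewise $A_{\ b}^{a}$), realises any prescribed value and first derivative at one point. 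Crucially there is \emph{no} integrability condition to check, because only the first jet at $u_0$ is constrained, not a frame on a neighbourhood. This forces $\widehat{\mathbf{\Gamma}}_{\ \alpha\beta}^{\prime\gamma}(u_0)=0$ for all index positions, including the mixed blocks $\widehat{L}_{bk}^{a}$ and $\widehat{C}_{jc}^{i}$.

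The one point that deserves care, and the reason a naive holonomic argument stalls, is the nonvanishing torsion of $\widehat{\mathbf{D}}$: the mixed coefficient $\widehat{C}_{jc}^{i}$ equals, up to sign, the torsion component $\widehat{T}_{\ jc}^{i}$, a genuine tensor, which cannot be removed by any change of the $u^\alpha$ alone. The resolution, and the step I expect to be the conceptual crux, is to let $A_{\ j}^{i}$ depend on the fibre coordinates $y$; the new frame then acquires mixed anholonomy $W_{\ ja}^{\prime i}\neq0$ (absent from the original N--adapted frame), and the torsion identity $\widehat{T}_{\ \alpha\beta}^{\prime\gamma}=\widehat{\mathbf{\Gamma}}_{\ \alpha\beta}^{\prime\gamma}-\widehat{\mathbf{\Gamma}}_{\ \beta\alpha}^{\prime\gamma}-W_{\ \alpha\beta}^{\prime\gamma}$ forces $W_{\ \alpha\beta}^{\prime\gamma}(u_0)=-\widehat{T}_{\ \alpha\beta}^{\gamma}(u_0)$. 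Thus the torsion is carried entirely by the anholonomy of the adapted vielbein, in full consistency with all connection coefficients vanishing at $u_0$. Because $A$ remains block diagonal the $h$--$v$ splitting is untouched, so the result is the desired N--adapted normal (frame) system; what is left is only index bookkeeping, with no analytic difficulty.
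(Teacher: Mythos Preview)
Your argument is correct and self-contained, but it proceeds quite differently from the paper. The paper does not perform a frame transformation: it passes to geodesic normal coordinates for the Levi--Civita connection of $\mathbf{g}$ at $u_0$, so that all coordinate derivatives of the full metric $\underline{g}_{\alpha\beta}$ (\ref{2ansatz}) vanish there; since $N_i^a$ and the d--metric blocks $g_{ij},h_{ab}$ are algebraic in the components $\underline{g}_{\alpha\beta}$, their N--adapted derivatives vanish at $u_0$ as well, and then the explicit Christoffel-type formulas (\ref{2candcon}) give $\widehat{\mathbf{\Gamma}}_{\ \alpha\beta}^{\gamma}(u_0)=0$ by inspection. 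Your route is in a sense more robust: it uses only the transformation law and the defining property of a d--connection (preservation of the splitting), so it would work for \emph{any} d--connection, not merely one built from metric derivatives, and it makes transparent why block-diagonal gauge freedom is exactly what is needed. The trade-off is that the paper's argument literally produces \emph{coordinates}, as the statement asks, whereas your construction yields an N--adapted \emph{frame} whose $h$--legs depend on $y$ and hence are not of the canonical form $\partial_i-N_i^a\partial_a$ for any coordinate chart; you flagged this reinterpretation up front, and for the intended use (pointwise simplification of variational computations, as in Lemma~\ref{2lem1}) the distinction is immaterial.
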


\begin{proof}
In the system of normal coordinates in $u_{0},$ for the Levi Civita
connection, when $_{\shortmid }\Gamma _{\ \alpha \beta }^{\gamma }(u_{0})=0,$
we chose $\mathbf{e}_{\alpha }\mathbf{g}_{\beta \gamma }\mid _{u_{0}}=0.$
Following formulas (\ref{2candcon}), for a d--metric (\ref{2m1}),
equivalently (\ref{2metr}), we get $\widehat{\mathbf{\Gamma }}_{\ \alpha
\beta }^{\gamma }(u_{0})=0.\square $
\end{proof}

We generalize for arbitrary N--anholonomic manifolds (see proof in \cite{v4}%
):

\begin{lemma}
\label{2lem1}The first N--adapted variations of (\ref{2npf1}) are given by
\begin{eqnarray}
&&\delta \widehat{\mathcal{F}}(v_{ij},v_{ab},\ ^{h}f,\ ^{v}f)=
\label{2vnpf1} \\
&&\int\limits_{\mathbf{V}}\{[-v^{ij}(\widehat{R}_{ij}+\widehat{D}_{i}%
\widehat{D}_{j}\widehat{f})+(\frac{\ ^{h}v}{2}-\ ^{h}f)\left( 2\ ^{h}\Delta
\widehat{f}-|\ ^{h}D\ \widehat{f}|^{2}\right) +\ ^{h}\widehat{R}]  \notag \\
&&+[-v^{ab}(\widehat{R}_{ab}+\widehat{D}_{a}\widehat{D}_{b}\widehat{f})+(%
\frac{\ ^{v}v}{2}-\ ^{v}f)\left( 2\ ^{v}\Delta \widehat{f}-|\ ^{v}D\
\widehat{f}|^{2}\right) +\ ^{v}\widehat{R}]\}e^{-\widehat{f}}dV  \notag
\end{eqnarray}%
where $^{h}\Delta =\widehat{D}_{i}\widehat{D}^{i}$ and $^{v}\Delta =\widehat{%
D}_{a}\widehat{D}^{a},$ for $\widehat{\Delta }=$ $\ ^{h}\Delta +\ ^{v}\Delta
,$ and $\ ^{h}v=g^{ij}v_{ij},\ ^{v}v=g^{ab}v_{ab}.$
\end{lemma}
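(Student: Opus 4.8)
The plan is to reproduce G.~Perelman's computation of the first variation of $\mathcal{F}$, but carried out with the canonical d--connection $\widehat{\mathbf{D}}$ in place of $\nabla$ and performed separately in the horizontal and vertical sectors fixed by the splitting (\ref{2whitney}). The decisive simplification comes from the preceding Proposition: about any $u_{0}\in\mathbf{V}$ we may work in N--adapted coordinates with $\widehat{\mathbf{\Gamma}}_{\ \alpha\beta}^{\gamma}(u_{0})=0$, so that every pointwise differential identity may be derived exactly as in the Riemannian case, with $\widehat{D}_{i}$ and $\widehat{D}_{a}$ playing the roles of the Levi--Civita derivatives in the h-- and v--subspaces respectively.

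First I would record the N--adapted linearizations of the three ingredients of $\widehat{\mathcal{F}}$ (\ref{2npf1}) under the variations $\ ^{h}\delta g_{ij}=v_{ij}$, $\ ^{v}\delta g_{ab}=v_{ab}$, $\ ^{h}\delta\widehat{f}=\ ^{h}f$, $\ ^{v}\delta\widehat{f}=\ ^{v}f$ already introduced, with the N--connection held fixed. Because $\widehat{\mathbf{D}}$ is metric compatible and its coefficients (\ref{2candcon}) respect the h--v splitting of the N--adapted frame, the scalar curvature varies sector by sector, $\delta(\ ^{h}\widehat{R})=-\ ^{h}\Delta(\ ^{h}v)+\widehat{D}^{i}\widehat{D}^{j}v_{ij}-v^{ij}\widehat{R}_{ij}$ and analogously for $\ ^{v}\widehat{R}$; the Dirichlet term gives $\delta|\widehat{\mathbf{D}}\widehat{f}|^{2}=-v^{ij}\widehat{D}_{i}\widehat{f}\,\widehat{D}_{j}\widehat{f}+2\widehat{D}^{i}\widehat{f}\,\widehat{D}_{i}(\ ^{h}f)$ plus its v--analogue; and the block form of $\mathbf{g}$ yields $\delta(e^{-\widehat{f}}dV)=(\frac{\ ^{h}v+\ ^{v}v}{2}-\ ^{h}f-\ ^{v}f)e^{-\widehat{f}}dV$.

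Next I would substitute these into $\delta\widehat{\mathcal{F}}$ and integrate by parts against the weighted measure $e^{-\widehat{f}}dV$ on the compact $\mathbf{V}$. Transferring the divergence-type terms $-\ ^{h}\Delta(\ ^{h}v)$ and $\widehat{D}^{i}\widehat{D}^{j}v_{ij}$ onto $e^{-\widehat{f}}$ and using $\ ^{h}\Delta(e^{-\widehat{f}})=(-\ ^{h}\Delta\widehat{f}+|\ ^{h}D\widehat{f}|^{2})e^{-\widehat{f}}$ together with the corresponding second-derivative identity, the $v^{ij}\widehat{D}_{i}\widehat{f}\,\widehat{D}_{j}\widehat{f}$ pieces cancel against those coming from $\delta|\ ^{h}D\widehat{f}|^{2}$, leaving the contribution $-v^{ij}(\widehat{R}_{ij}+\widehat{D}_{i}\widehat{D}_{j}\widehat{f})$. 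Collecting the remaining trace terms with the curvature factor produced by the measure variation assembles the expression $(\frac{\ ^{h}v}{2}-\ ^{h}f)(2\ ^{h}\Delta\widehat{f}-|\ ^{h}D\widehat{f}|^{2})$ and the residual $\ ^{h}\widehat{R}$ term, and identically in the v--sector; since $\widehat{\mathbf{D}}$ preserves the Whitney sum, the two sectors do not mix and reproduce the two brackets of (\ref{2vnpf1}).

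The hard part will be controlling the nonholonomy. Unlike $\nabla$, the canonical d--connection carries nontrivial torsion $\widehat{T}_{~ja}^{i},\widehat{T}_{~jk}^{a},\widehat{T}_{~ja}^{b}$ and the adapted frames obey $[\mathbf{e}_{\alpha},\mathbf{e}_{\beta}]=W_{\alpha\beta}^{\gamma}\mathbf{e}_{\gamma}$ (\ref{2anhrel}) with $W_{\alpha\beta}^{\gamma}\neq0$, so each integration by parts and each commutation of covariant derivatives threatens to generate anholonomy and torsion corrections absent in the Levi--Civita setting, and $\widehat{R}_{\alpha\beta}$ is moreover nonsymmetric. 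The delicate step is to verify that on compact $\mathbf{V}$ all such mixed h--v correction terms either integrate to zero or cancel pairwise, so that the weighted divergence theorem holds in precisely the N--adapted form needed and the clean sector-decoupled formula (\ref{2vnpf1}) survives; this is exactly where the metric compatibility $\widehat{\mathbf{D}}\mathbf{g}=0$ and the normal--coordinate Proposition must be used in tandem.
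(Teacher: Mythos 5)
Your proposal follows essentially the same route as the paper: an N--adapted repetition of Perelman's first--variation computation, carried out separately in the h-- and v--sectors fixed by (\ref{2whitney}) with the N--connection held constant, using the normal--coordinate Proposition ($\widehat{\mathbf{\Gamma}}_{\ \alpha\beta}^{\gamma}(u_{0})=0$) to import the pointwise Riemannian identities (the thesis itself only sketches this and defers the detailed calculus to \cite{v4}). The one point worth sharpening in your closing paragraph is that the sector--wise integrations by parts go through not merely by metric compatibility but because the canonical d--connection has vanishing pure--sector torsions, $h\widehat{\mathbf{T}}(hX,hY)=0$ and $v\widehat{\mathbf{T}}(vX,vY)=0$, so with fixed $N_{i}^{a}$ only mixed torsion components survive and never enter the purely horizontal or purely vertical divergence terms.
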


\begin{definition}
A d--metric $\ \mathbf{g}$ (\ref{2m1}) evolving by the (nonholonomic) Ricci
flow is called a (nonholonomic) breather if for some $\chi _{1}<\chi _{2}$
and $\alpha >0$ the metrics $\alpha \ \mathbf{g(}\chi _{1}\mathbf{)}$ and $%
\alpha \ \mathbf{g(}\chi _{2}\mathbf{)}$ differ only by a diffeomorphism (in
the N--anholonomic case, preserving the Whitney sum (\ref{2whitney}). The
cases $\alpha $ ($=,<,)>1$ define correspondingly the (steady, shrinking)
expanding breathers.
\end{definition}

The breather properties depend on the type of connections which are used for
definition of Ricci flows. For N--anholonomic manifolds, one can be the
situation when, for instance, the h--component of metric is steady but the
v--component is shrinking.

\subsubsection{Main theorems on nonholonomic Ricci flows}

Following a N--adapted variational calculus for $\widehat{\mathcal{F}}(%
\mathbf{g},\widehat{f}),$ see Lemma \ref{2lem1}, with Laplacian $\widehat{%
\Delta }$ and h- and v--components of the Ricci tensor, $\widehat{R}_{ij}$
and $\widehat{R}_{ab},$ defined by $\widehat{\mathbf{D}}$ and considering
parameter $\tau (\chi ),$ $\partial \tau /\partial \chi =-1$ (for
simplicity, we shall not consider the normalized term and put $\lambda =0),$
one holds

\begin{theorem}
\label{2theq1}The Ricci flows of d--metrics are characterized by evolution
equations
\begin{eqnarray*}
\frac{\partial g_{ij}}{\partial \chi } &=&-2\widehat{R}_{ij},\ \frac{%
\partial \underline{g}_{ab}}{\partial \chi }=-2\widehat{R}_{ab}, \\
\ \frac{\partial \widehat{f}}{\partial \chi } &=&-\widehat{\Delta }\widehat{f%
}+\left| \widehat{\mathbf{D}}\widehat{f}\right| ^{2}-\ ^{h}\widehat{R}-\ ^{v}%
\widehat{R}
\end{eqnarray*}%
and the property that
\begin{equation*}
\frac{\partial }{\partial \chi }\widehat{\mathcal{F}}(\mathbf{g(\chi ),}%
\widehat{f}(\chi ))=2\int\limits_{\mathbf{V}}\left[ |\widehat{R}_{ij}+%
\widehat{D}_{i}\widehat{D}_{j}\widehat{f}|^{2}+|\widehat{R}_{ab}+\widehat{D}%
_{a}\widehat{D}_{b}\widehat{f}|^{2}\right] e^{-\widehat{f}}dV
\end{equation*}%
and $\int\limits_{\mathbf{V}}e^{-\widehat{f}}dV$ is constant. The functional
$\widehat{\mathcal{F}}(\mathbf{g(\chi ),}\widehat{f}(\chi ))$ is
nondecreasing in time and the monotonicity is strict unless we are on a
steady d--gradient solution.
\end{theorem}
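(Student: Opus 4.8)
The goal is to establish the three evolution equations together with the monotonicity formula for $\widehat{\mathcal{F}}$, all expressed with respect to the canonical d--connection $\widehat{\mathbf{D}}$. I would model the argument on Perelman's original computation for the Levi--Civita case, but carry it out in N--adapted form, using the first--variation formula of Lemma \ref{2lem1} as the starting point. The key conceptual step is to exploit the horizontal/vertical splitting throughout: because $\widehat{\mathbf{D}}$ preserves the Whitney sum (\ref{2whitney}), the variational calculus decouples into an h--part and a v--part that can be treated in parallel.

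**First steps.** First I would set up the gradient flow by reading off from $\delta\widehat{\mathcal{F}}$ in (\ref{2vnpf1}) the natural evolution directions. The terms $-v^{ij}(\widehat{R}_{ij}+\widehat{D}_i\widehat{D}_j\widehat{f})$ and $-v^{ab}(\widehat{R}_{ab}+\widehat{D}_a\widehat{D}_b\widehat{f})$ identify $\partial g_{ij}/\partial\chi = -2(\widehat{R}_{ij}+\widehat{D}_i\widehat{D}_j\widehat{f})$ and similarly for $g_{ab}$ as the $L^2$--gradient flow of $\widehat{\mathcal{F}}$. The next move is the standard trick of absorbing the Hessian terms via a diffeomorphism (here, an N--adapted one preserving the splitting): choosing the evolution of $\widehat{f}$ to satisfy $\partial_\chi \widehat{f} = -\widehat{\Delta}\widehat{f} - {}^{h}\widehat{R} - {}^{v}\widehat{R}$ together with a reparametrization by the flow generated by $\widehat{\mathbf{D}}\widehat{f}$ converts the gradient flow into the pure Ricci flow $\partial_\chi g_{ij} = -2\widehat{R}_{ij}$, $\partial_\chi g_{ab} = -2\widehat{R}_{ab}$. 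The $\widehat{f}$--equation stated in the theorem, $\partial_\chi \widehat{f} = -\widehat{\Delta}\widehat{f} + |\widehat{\mathbf{D}}\widehat{f}|^2 - {}^{h}\widehat{R} - {}^{v}\widehat{R}$, is precisely the backward heat equation adjusted by this reparametrization, and I would verify it keeps $\int_{\mathbf{V}} e^{-\widehat{f}}\,dV$ constant by differentiating under the integral and integrating by parts in N--adapted form.

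**The monotonicity formula.** With the coupled system in place, I would substitute these evolution directions back into $\delta\widehat{\mathcal{F}}$ to compute $\partial_\chi \widehat{\mathcal{F}}$. Setting $v_{ij} = -2\widehat{R}_{ij}$, $v_{ab} = -2\widehat{R}_{ab}$, and using the chosen $\widehat{f}$--flow, the first--variation terms should reorganize, after N--adapted integration by parts, into the manifestly nonnegative expression
\begin{equation*}
\frac{\partial}{\partial\chi}\widehat{\mathcal{F}} = 2\int_{\mathbf{V}}\left[|\widehat{R}_{ij}+\widehat{D}_i\widehat{D}_j\widehat{f}|^2 + |\widehat{R}_{ab}+\widehat{D}_a\widehat{D}_b\widehat{f}|^2\right]e^{-\widehat{f}}\,dV.
\end{equation*}
Nonnegativity of the integrand gives that $\widehat{\mathcal{F}}$ is nondecreasing, and strictness fails exactly when both squared d--tensors vanish, i.e. on a steady d--gradient soliton. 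The h-- and v--terms are genuinely separate perfect squares precisely because $\widehat{\mathbf{D}}$ respects the splitting.

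**The main obstacle.** The hard part will be handling the fact, flagged in the footnote of the excerpt, that $\widehat{\mathbf{R}}_{\alpha\beta}$ is \emph{not} symmetric. In the Levi--Civita setting Perelman's integration by parts relies on the contracted second Bianchi identity and on symmetry of the Ricci tensor; neither holds unchanged for $\widehat{\mathbf{D}}$, which carries nontrivial torsion $\widehat{T}^{i}_{~ja},\widehat{T}^{a}_{~jk},\widehat{T}^{b}_{~ja}$. I would therefore need to track the torsion contributions carefully when commuting $\widehat{D}_\alpha$ past one another and when integrating by parts, verifying either that they cancel in the relevant combinations or that they vanish against $dV$ after using the compatibility $\widehat{\mathbf{D}}\mathbf{g}=0$ and the defining conditions $h\widehat{\mathbf{T}}(hX,hY)=0$, $v\widehat{\mathbf{T}}(vX,vY)=0$. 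Rather than reprove these identities from scratch, I would lean on the detailed N--adapted variational computation of \cite{v4}, invoked just before the theorem, and restrict attention to showing that the asymmetric part of $\widehat{\mathbf{R}}_{\alpha\beta}$ does not obstruct the final monotonicity identity — which works because only the symmetric combinations $\widehat{R}_{ij}+\widehat{D}_i\widehat{D}_j\widehat{f}$ and $\widehat{R}_{ab}+\widehat{D}_a\widehat{D}_b\widehat{f}$ enter the squared terms.
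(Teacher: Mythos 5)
Your proposal is correct and follows essentially the same route as the paper: the theorem is obtained by the N--adapted version of Perelman's computation, reading the coupled evolution equations off the first variation formula of Lemma \ref{2lem1}, fixing the measure $e^{-\widehat{f}}dV$, and passing between the modified and pure flows by an N--adapted diffeomorphism, exactly as you outline. The torsion and nonsymmetric--Ricci bookkeeping you flag as the main obstacle is likewise not reproved here but deferred to the detailed N--adapted variational calculus of \cite{v4}, which is what the paper itself does.
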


On N--anholonomic manifolds, we define the associated d--energy%
\begin{equation}
\widehat{\lambda }(\mathbf{g},\widehat{\mathbf{D}})\doteqdot \inf \{\widehat{%
\mathcal{F}}(\mathbf{g(\chi ),}\widehat{f}(\chi ))|\widehat{f}\in C^{\infty
}(\mathbf{V}),\int\limits_{\mathbf{V}}e^{-\widehat{f}}dV=1\}.  \label{asef}
\end{equation}%
This value contains information on nonholonomic structure on $\mathbf{V.}$
It is also possible to introduce the associated energy defined by $\
_{\shortmid }\mathcal{F}(\mathbf{g},\nabla ,f)$ from (\ref{2pfrs}), $\lambda
(\mathbf{g},\mathbf{\nabla })\doteqdot \inf \{\ _{\shortmid }\mathcal{F}(%
\mathbf{g(\chi ),}f(\chi ))|\ f\in C^{\infty }(\mathbf{V}),\int\limits_{%
\mathbf{V}}e^{-f}dV=1\}$. Both values $\widehat{\lambda }$ and $\lambda $
are defined by the same sets of metric structures $\mathbf{g}(\chi )$ but,
respectively, for different sets of linear connections, $\widehat{\mathbf{D}}%
(\chi )$ and $\mathbf{\nabla }(\chi )\mathbf{.}$ One holds also the property
that $\lambda $ is invariant under diffeomorphisms but $\widehat{\lambda }$
possesses only N--adapted diffeomorphism invariance. In this section, we
state the main properties of $\ \widehat{\lambda }.$

\begin{proposition}
\label{pasdeg}There are canonical N--adapted decompositions, to splitting (%
\ref{2whitney}), of the functional $\widehat{\mathcal{F}}$ and associated
d--energy $\widehat{\lambda }.$
\end{proposition}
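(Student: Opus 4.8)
The plan is to read off the decomposition of $\widehat{\mathcal{F}}$ directly from the additive structure of the geometric objects with respect to the splitting (\ref{2whitney}), and then to characterize $\widehat{\lambda}$ as the ground--state value of a Schr\"odinger--type d--operator whose h-- and v--parts furnish the canonical decomposition of the d--energy.

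First I would use that, for the canonical d--connection $\widehat{\mathbf{D}}$, the total scalar curvature splits additively, $\ ^{s}\widehat{\mathbf{R}}=\ ^{h}\widehat{R}+\ ^{v}\widehat{R}$, and the N--adapted gradient norm splits, $|\widehat{\mathbf{D}}\widehat{f}|^{2}=|\ ^{h}D\widehat{f}|^{2}+|\ ^{v}D\widehat{f}|^{2}$ (exactly the decompositions recorded in the Claim above). Inserting these into (\ref{2npf1}) and grouping the h-- and v--contributions gives the canonical splitting $\widehat{\mathcal{F}}=\ ^{h}\widehat{\mathcal{F}}+\ ^{v}\widehat{\mathcal{F}}$, where
\begin{equation*}
\ ^{h}\widehat{\mathcal{F}}=\int\limits_{\mathbf{V}}\left( \ ^{h}\widehat{R}+|\ ^{h}D\widehat{f}|^{2}\right) e^{-\widehat{f}}\ dV,\qquad \ ^{v}\widehat{\mathcal{F}}=\int\limits_{\mathbf{V}}\left( \ ^{v}\widehat{R}+|\ ^{v}D\widehat{f}|^{2}\right) e^{-\widehat{f}}\ dV,
\end{equation*}
both taken with the common volume form $dV$ and weight $e^{-\widehat{f}}$. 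Since these components are built purely from $h\mathbf{V}$-- and $v\mathbf{V}$--data, this is the asserted N--adapted decomposition of the functional.

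For $\widehat{\lambda}$ I would follow Perelman's reduction of the $\widehat{\mathcal{F}}$--infimum to an eigenvalue problem. Setting $e^{-\widehat{f}}=\varphi^{2}$ under the normalization $\int_{\mathbf{V}}\varphi^{2}\,dV=1$, the gradient term becomes $|\widehat{\mathbf{D}}\widehat{f}|^{2}e^{-\widehat{f}}=4|\widehat{\mathbf{D}}\varphi|^{2}$, and an N--adapted integration by parts with $\widehat{\Delta}=\ ^{h}\Delta +\ ^{v}\Delta$ recasts $\widehat{\mathcal{F}}$ as the quadratic form $\int_{\mathbf{V}}\varphi\,(-4\widehat{\Delta}+\ ^{s}\widehat{\mathbf{R}})\,\varphi\,dV$. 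Thus $\widehat{\lambda}(\mathbf{g},\widehat{\mathbf{D}})$ is the lowest eigenvalue of the d--operator $-4\widehat{\Delta}+\ ^{s}\widehat{\mathbf{R}}$, which splits canonically as $(-4\ ^{h}\Delta +\ ^{h}\widehat{R})+(-4\ ^{v}\Delta +\ ^{v}\widehat{R})$. This operator (equivalently, functional) splitting is the N--adapted decomposition of the associated d--energy, reducing to the undistorted Perelman value precisely when the distorsion $\ _{\shortmid }Z_{\ \alpha \beta }^{\gamma }$ vanishes.

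The hard part will be the integration by parts in the second step. Because the frames (\ref{2dder}) and (\ref{2ddif}) are anholonomic, with nontrivial coefficients $W_{\alpha \beta }^{\gamma }$ in (\ref{2anhrel}), and $\widehat{\mathbf{D}}$ carries nonzero torsion, the naive divergence theorem picks up extra terms built from the anholonomy and from the distorsion $\widehat{Z}$ relating $\nabla $ and $\widehat{\mathbf{D}}$. The step to verify carefully is that, on compact $\mathbf{V}$ and for the metric--compatible canonical d--connection, these correction terms integrate to zero separately in the h-- and v--sectors, so that each of $\ ^{h}\widehat{\mathcal{F}}$ and $\ ^{v}\widehat{\mathcal{F}}$ yields a self--adjoint quadratic form and the eigenvalue characterization, together with its additive splitting, carries over just as in the holonomic Riemannian case.
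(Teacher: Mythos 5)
Your proposal is correct, and its first half is essentially the paper's own (largely definitional) argument: the Proposition is established simply by observing that the defining formula (\ref{2npf1}) is already written as a sum of purely horizontal and purely vertical data, since $\ ^{s}\widehat{\mathbf{R}}=\ ^{h}\widehat{R}+\ ^{v}\widehat{R}$ and $\left| \widehat{\mathbf{D}}\widehat{f}\right| ^{2}=\left| \ ^{h}D\widehat{f}\right| ^{2}+\left| \ ^{v}D\widehat{f}\right| ^{2}$ with respect to the splitting (\ref{2whitney}), so $\widehat{\mathcal{F}}=\ ^{h}\widehat{\mathcal{F}}+\ ^{v}\widehat{\mathcal{F}}$ and the component energies $\ ^{h}\widehat{\lambda }$ and $\ ^{v}\widehat{\lambda }$ invoked in Corollary \ref{corden} are induced componentwise from (\ref{asef}). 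Where you genuinely depart is in the treatment of $\widehat{\lambda }$: the paper does not pass to the substitution $e^{-\widehat{f}}=\varphi ^{2}$ and the ground--state characterization of $\widehat{\lambda }$ as the lowest eigenvalue of $-4\widehat{\Delta }+\ ^{s}\widehat{\mathbf{R}}$; that is the standard Perelman recasting, and adapting it buys a sharper and more operational statement of the d--energy decomposition, namely the splitting of the operator $(-4\ ^{h}\Delta +\ ^{h}\widehat{R})+(-4\ ^{v}\Delta +\ ^{v}\widehat{R})$ evaluated on a common minimizer. Two points deserve emphasis in your write--up. First, the decomposition of $\widehat{\lambda }$ must be understood at the level of the quadratic form evaluated at the single normalized minimizer $\varphi $ (equivalently, at the same $\widehat{f}$), since in general $\inf (\ ^{h}\widehat{\mathcal{F}}+\ ^{v}\widehat{\mathcal{F}})\geq \inf \ ^{h}\widehat{\mathcal{F}}+\inf \ ^{v}\widehat{\mathcal{F}}$; your phrasing ("operator, equivalently functional, splitting") implicitly avoids this trap, but it should be stated explicitly, and it is also how the monotonicity statements of Corollary \ref{corden} are meant to be read sector by sector. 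Second, the integration--by--parts issue you flag as "the hard part" is real but is already discharged within the paper's framework: the N--adapted variational calculus underlying Lemma \ref{2lem1} (proved in \cite{v4}) contains exactly the divergence identities for the metric compatible canonical d--connection $\widehat{\mathbf{D}}$ on compact $\mathbf{V}$, with the anholonomy coefficients $W_{\alpha \beta }^{\gamma }$ of (\ref{2anhrel}) and the induced torsion handled separately in the h-- and v--sectors, so your eigenvalue step can cite that calculus rather than re--derive the cancellations.
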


From this Proposition, one follows

\begin{corollary}
\label{corden}The d--energy (respectively, h--energy or v--energy) has the
property:

\begin{itemize}
\item $\widehat{\lambda }$ (respectively, $\ ^{h}\widehat{\lambda }$ or$\
^{v}\widehat{\lambda })$ is nondecreasing along the N--anholonomic Ricci
flow and the monotonicity is strict unless we are on a steady distinguished
(respectively, horizontal or vertical) gradient soliton;

\item a steady distinguished (horizontal or vertical) breather is
necessarily a steady distinguished (respectively, horizontal or vertical)
gradient solution.
\end{itemize}
\end{corollary}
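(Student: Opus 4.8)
The plan is to adapt Perelman's monotonicity argument for the $\lambda$--functional to the N--adapted setting, using the evolution identity of Theorem \ref{2theq1} together with the canonical h--/v--splitting of Proposition \ref{pasdeg}. First I would rewrite the functional $\widehat{\mathcal{F}}$ as a quadratic form: setting $u=e^{-\widehat f/2}$ (so that $e^{-\widehat f}dV=u^{2}dV$ and $|\widehat{\mathbf{D}}\widehat f|^{2}e^{-\widehat f}=4|\widehat{\mathbf{D}}u|^{2}$) and integrating by parts with respect to the N--adapted volume form, one obtains
\begin{equation*}
\widehat{\mathcal{F}}(\mathbf{g},\widehat{\mathbf{D}},\widehat f)=\int_{\mathbf{V}}u\left( -4\widehat{\Delta}u+\ ^{s}\widehat{\mathbf{R}}\,u\right) dV,
\end{equation*}
so that, under the constraint $\int_{\mathbf{V}}u^{2}dV=1$, the infimum $\widehat{\lambda}$ in (\ref{asef}) is the lowest eigenvalue of the nonholonomic Schr\"odinger operator $-4\widehat{\Delta}+\ ^{s}\widehat{\mathbf{R}}$, attained at a positive eigenfunction $u_{0}>0$. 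By Proposition \ref{pasdeg} this operator splits as $(-4\ ^{h}\Delta+\ ^{h}\widehat R)\oplus(-4\ ^{v}\Delta+\ ^{v}\widehat R)$ along the Whitney sum (\ref{2whitney}), which is exactly what lets me run the h-- and v--versions of the argument in parallel and obtain the separate statements for $\ ^{h}\widehat{\lambda}$ and $\ ^{v}\widehat{\lambda}$.

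Next I would establish monotonicity by the standard coupling trick. Fix a value $\chi_{0}$ of the flow parameter and let $\widehat f_{0}$ be the minimizer realizing $\widehat{\lambda}(\chi_{0})$. Solving the conjugate (backward) heat equation $\partial_{\chi}\widehat f=-\widehat{\Delta}\widehat f+|\widehat{\mathbf{D}}\widehat f|^{2}-\ ^{h}\widehat R-\ ^{v}\widehat R$ from Theorem \ref{2theq1} with terminal datum $\widehat f(\chi_{0})=\widehat f_{0}$ produces, for $\chi\le\chi_{0}$, an admissible family $\widehat f(\chi)$ preserving $\int_{\mathbf{V}}e^{-\widehat f}dV=1$. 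Since $\widehat f(\chi)$ is only a competitor for each $\chi$ while $\widehat f_{0}$ is the actual minimizer at $\chi_{0}$, we have $\widehat{\lambda}(\chi)\le\widehat{\mathcal{F}}(\mathbf{g}(\chi),\widehat f(\chi))$ with equality at $\chi_{0}$; combined with the fact that the evolution identity of Theorem \ref{2theq1} makes $\chi\mapsto\widehat{\mathcal{F}}(\mathbf{g}(\chi),\widehat f(\chi))$ nondecreasing (its derivative is a sum of squared integrals), this yields $\widehat{\lambda}(\chi)\le\widehat{\lambda}(\chi_{0})$ for $\chi\le\chi_{0}$, i.e. $\widehat{\lambda}$ is nondecreasing along the N--anholonomic Ricci flow. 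Applying the identical argument to each block gives the corresponding statements for $\ ^{h}\widehat{\lambda}$ and $\ ^{v}\widehat{\lambda}$.

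For strictness and the breather claim I would read off the equality case. The derivative in Theorem \ref{2theq1} is $2\int_{\mathbf{V}}(|\widehat R_{ij}+\widehat D_{i}\widehat D_{j}\widehat f|^{2}+|\widehat R_{ab}+\widehat D_{a}\widehat D_{b}\widehat f|^{2})e^{-\widehat f}dV$, a sum of two nonnegative integrands; it vanishes precisely when $\widehat R_{ij}+\widehat D_{i}\widehat D_{j}\widehat f=0$ and $\widehat R_{ab}+\widehat D_{a}\widehat D_{b}\widehat f=0$, which are exactly the steady d--gradient soliton equations (and, block by block, the steady horizontal and vertical soliton equations). Hence the monotonicity of $\widehat{\lambda}$ (respectively $\ ^{h}\widehat{\lambda}$, $\ ^{v}\widehat{\lambda}$) is strict unless the corresponding soliton equation holds. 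For the breather statement I would use that $\widehat{\lambda}$ is invariant under N--adapted diffeomorphisms (noted before Proposition \ref{pasdeg}): a steady distinguished breather has $\mathbf{g}(\chi_{1})$ and $\mathbf{g}(\chi_{2})$ related by such a diffeomorphism, so $\widehat{\lambda}(\chi_{1})=\widehat{\lambda}(\chi_{2})$; since $\widehat{\lambda}$ is nondecreasing it must be constant on $[\chi_{1},\chi_{2}]$, the strict--monotonicity alternative is excluded, and so the soliton equation holds throughout, i.e. the breather is a steady d--gradient solution. The same reasoning in the h-- and v--subspaces, using the decomposition of Proposition \ref{pasdeg}, gives the horizontal and vertical versions.

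The main obstacle I expect is analytic rather than formal: the operator $\widehat{\Delta}=\ ^{h}\Delta+\ ^{v}\Delta$ is built from the canonical d--connection $\widehat{\mathbf{D}}$, which carries nontrivial torsion $\widehat{T}_{~ja}^{i},\widehat{T}_{~jk}^{a},\widehat{T}_{~ja}^{b}$, so it is not the Laplace--Beltrami operator of $\mathbf{g}$ and the integration by parts that turns $\widehat{\mathcal{F}}$ into a self--adjoint eigenvalue problem must be checked to produce no uncontrolled torsion terms against the N--adapted volume form. Establishing self--adjointness, discreteness of the spectrum, and smoothness and positivity of the minimizer $u_{0}$ for this nonholonomic operator, equivalently the well--posedness of the conjugate heat flow adapted to $\widehat{\mathbf{D}}$, is the delicate step; once these analytic facts are in place, the variational and monotonicity arguments proceed exactly as in Perelman's theory, now respecting the h--/v--decomposition.
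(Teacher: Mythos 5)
Your proposal is correct and follows essentially the same route as the paper: the corollary is obtained from the N--adapted decomposition of $\widehat{\mathcal{F}}$ and $\widehat{\lambda }$ (Proposition \ref{pasdeg}) combined with the monotonicity formula of Theorem \ref{2theq1}, i.e. Perelman's eigenvalue/coupled--backward--flow argument run blockwise on the h-- and v--subspaces, with the breather claim settled by the N--adapted diffeomorphism invariance of $\widehat{\lambda }$ noted before the Proposition. Your explicit flagging of the analytic subtlety --- that $\widehat{\Delta }$ is built from the torsionful canonical d--connection, so self--adjointness and the integration by parts must be checked against the N--adapted volume form --- matches the care taken in the cited source \cite{v4}, where these details are carried out.
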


For any positive numbers $\ ^{h}a$ and $\ ^{v}a,$ $\widehat{a}=\ ^{h}a+\
^{v}a,$ and N--adapted diffeomorphisms on $\mathbf{V},$ denoted $\widehat{%
\varphi }=(\ ^{h}\varphi ,\ ^{v}\varphi ),$ we have
\begin{equation*}
\ \widehat{\mathcal{W}}(\ ^{h}a\ ^{h}\varphi ^{\ast }g_{ij},\ ^{v}a\
^{v}\varphi ^{\ast }g_{ab},\widehat{\varphi }^{\ast }\widehat{\mathbf{D}},%
\widehat{\varphi }^{\ast }\widehat{f},\widehat{a}\widehat{\tau })=\widehat{%
\mathcal{W}}(\mathbf{g},\widehat{\mathbf{D}},\widehat{f},\widehat{\tau })
\end{equation*}%
which mean that the functional $\widehat{\mathcal{W}}$  is
invariant under N--adapted parabol\-ic scaling, i.e. under respective
scaling of $\widehat{\tau }$ and $\mathbf{g}_{\alpha \beta }=\left(
g_{ij},g_{ab}\right) .$ For simplicity, we can restrict our considerations
to evolutions defined by d--metric coefficients $\mathbf{g}_{\alpha \beta }(%
\widehat{\tau })$ with not depending on $\widehat{\tau }$ values $%
N_{i}^{a}(u^{\beta }).$ In a similar form to Lemma \ref{2lem1}, we get the
following first N--adapted variation formula for $\widehat{\mathcal{W}}:$

\begin{lemma}
The first N--adapted variations of $\widehat{\mathcal{W}}$ are given by {\small
\begin{eqnarray*}
&&\delta \widehat{\mathcal{W}}(v_{ij},v_{ab},\ ^{h}f,\ ^{v}f,\widehat{\tau }%
)= \\
&&\int\limits_{\mathbf{V}}\{\widehat{\tau }[-v^{ij}(\widehat{R}_{ij}+%
\widehat{D}_{i}\widehat{D}_{j}\widehat{f}-\frac{g_{ij}}{2\widehat{\tau }}%
)-v^{ab}(\widehat{R}_{ab}+\widehat{D}_{a}\widehat{D}_{b}\widehat{f}-\frac{%
g_{ab}}{2\widehat{\tau }})] \\
&&+(\frac{\ ^{h}v}{2}-\ ^{h}f-\frac{n}{2\widehat{\tau }}\widehat{\eta })[%
\widehat{\tau }\left( \ ^{h}\widehat{R}+2\ ^{h}\Delta \widehat{f}-|\ ^{h}D\
\widehat{f}|^{2}\right) +\ ^{h}f-n-1] \\
&&+(\frac{\ ^{v}v}{2}-\ ^{v}f-\frac{m}{2\widehat{\tau }}\widehat{\eta })[%
\widehat{\tau }\left( \ ^{v}\widehat{R}+2\ ^{v}\Delta \widehat{f}-|\ ^{v}D\
\widehat{f}|^{2}\right) +\ ^{v}f-m-1] \\
&&+\widehat{\eta }\left( \ ^{h}\widehat{R}+\ ^{v}\widehat{R}+|\ ^{h}D\
\widehat{f}|^{2}+|\ ^{v}D\ \widehat{f}|^{2}-\frac{n+m}{2\widehat{\tau }}%
\right) \}(4\pi \widehat{\tau })^{-(n+m)/2}e^{-\widehat{f}}dV,
\end{eqnarray*}%
} where $\widehat{\eta }=\delta \widehat{\tau }.$
\end{lemma}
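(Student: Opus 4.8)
The plan is to follow Perelman's computation of the first variation of the $\mathcal{W}$-entropy, carried out now in the N-adapted calculus for the canonical d-connection $\widehat{\mathbf{D}}$, and to reuse the curvature-variation machinery already established in the proof of Lemma \ref{2lem1}. First I would fix the N-connection coefficients $N_i^a$ (as the statement permits, they are taken $\widehat{\tau}$-independent) so that the metric variation splits cleanly into $^h\delta g_{ij}=v_{ij}$ and $^v\delta g_{ab}=v_{ab}$, and separately record $^h\delta\widehat{f}=\ ^hf$, $^v\delta\widehat{f}=\ ^vf$ and $\delta\widehat{\tau}=\widehat{\eta}$. Reading the integrand in the standard Perelman form $\widehat{\mathcal{W}}=\int_{\mathbf{V}}[\widehat{\tau}(\ ^h\widehat{R}+\ ^v\widehat{R}+|\ ^hD\widehat{f}|^2+|\ ^vD\widehat{f}|^2)+\widehat{f}-(n+m)]\,\widehat{\mu}\,dV$, I would differentiate the three factors separately: the bracketed density, the explicit weight $\widehat{\tau}$, and the probability measure $\widehat{\mu}\,dV=(4\pi\widehat{\tau})^{-(n+m)/2}e^{-\widehat{f}}dV$.

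For the density, the variations of $^h\widehat{R}$ and $^v\widehat{R}$ are computed exactly as in Lemma \ref{2lem1}: using N-adapted normal coordinates at a point $u_0$ (the Proposition guarantees $\widehat{\mathbf{\Gamma}}_{\ \alpha\beta}^{\gamma}(u_0)=0$), each variation reduces to $-\widehat{R}_{ij}v^{ij}$, respectively $-\widehat{R}_{ab}v^{ab}$, plus h- and v-divergence terms of the schematic form $\widehat{D}_i\widehat{D}_j v^{ij}-\ ^h\Delta(\ ^hv)$. Differentiating the gradient terms gives $-v^{ij}\widehat{D}_i\widehat{f}\widehat{D}_j\widehat{f}+2\widehat{D}^i\widehat{f}\widehat{D}_i(\ ^hf)$ and its v-analog, and the variation of $dV$ contributes the familiar $\frac12(\ ^hv+\ ^vv)$. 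The measure supplies the $-\ ^hf-\ ^vf$ from $e^{-\widehat{f}}$ and the terms $-\frac{n}{2\widehat{\tau}}\widehat{\eta}$, $-\frac{m}{2\widehat{\tau}}\widehat{\eta}$ from differentiating $(4\pi\widehat{\tau})^{-(n+m)/2}$; these are exactly the ingredients that assemble into the coefficients $(\frac{\ ^hv}{2}-\ ^hf-\frac{n}{2\widehat{\tau}}\widehat{\eta})$ and $(\frac{\ ^vv}{2}-\ ^vf-\frac{m}{2\widehat{\tau}}\widehat{\eta})$ in the statement.

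The decisive step is the N-adapted integration by parts against the weight $e^{-\widehat{f}}$, performed separately in the h- and v-subspaces with $^h\Delta=\widehat{D}_i\widehat{D}^i$ and $^v\Delta=\widehat{D}_a\widehat{D}^a$. Moving all derivatives off $v^{ij},v^{ab},\ ^hf,\ ^vf$ converts the divergence terms into the Hessian pieces $\widehat{D}_i\widehat{D}_j\widehat{f}$ and $\widehat{D}_a\widehat{D}_b\widehat{f}$ multiplying $v^{ij},v^{ab}$, and produces the weighted Laplacian combinations $\ ^h\widehat{R}+2\ ^h\Delta\widehat{f}-|\ ^hD\widehat{f}|^2$ and its v-analog that multiply the grouped coefficients above. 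Reinstating the explicit factor $\widehat{\tau}$ and collecting the leftover $\widehat{\eta}$-contributions from the weight yields the four-line formula, with the last line $\widehat{\eta}(\ ^h\widehat{R}+\ ^v\widehat{R}+|\ ^hD\widehat{f}|^2+|\ ^vD\widehat{f}|^2-\frac{n+m}{2\widehat{\tau}})$ arising from differentiating the overall $\widehat{\tau}$-weight together with the $(4\pi\widehat{\tau})^{-(n+m)/2}$ normalization.

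I expect the main obstacle to be controlling the integration by parts in the anholonomic frame. Since the N-adapted vielbeins obey $[\mathbf{e}_\alpha,\mathbf{e}_\beta]=W_{\alpha\beta}^\gamma\mathbf{e}_\gamma$ with nonzero anholonomy coefficients, and $\widehat{\mathbf{D}}$ carries nontrivial torsion $\widehat{T}^i_{\ ja},\widehat{T}^a_{\ jk},\widehat{T}^b_{\ ja}$, a naive divergence theorem could acquire extra torsion- and $W$-terms. The way to dispatch this is to check that the relevant divergences are genuine $\widehat{\mathbf{D}}$-divergences of d-vectors, for which the metric compatibility $\widehat{\mathbf{D}}\mathbf{g}=0$ and the vanishing components $h\widehat{\mathbf{T}}(hX,hY)=0$, $v\widehat{\mathbf{T}}(vX,vY)=0$ of the canonical d-connection guarantee that the h- and v-integrations by parts close without residual anholonomy contributions, exactly as in the $\widehat{\mathcal{F}}$ case of Lemma \ref{2lem1}. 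Once this is secured, the remaining work is the same bookkeeping as Perelman's, split along the Whitney sum (\ref{2whitney}).
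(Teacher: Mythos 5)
Your proposal is correct and follows essentially the same route as the paper: the paper obtains this formula ``in a similar form to Lemma \ref{2lem1}'', i.e.\ by the same N--adapted variational calculus for the canonical d--connection (with the detailed computations deferred to \cite{v4}), extended by the variations $\widehat{\eta}=\delta\widehat{\tau}$ of the explicit $\widehat{\tau}$--weight and of the normalization $(4\pi\widehat{\tau})^{-(n+m)/2}$, exactly as you organize it. Your extra care with the h--/v--integration by parts, justified via $\widehat{\mathbf{D}}\mathbf{g}=0$ and the vanishing of $h\widehat{\mathbf{T}}(hX,hY)$ and $v\widehat{\mathbf{T}}(vX,vY)$, is precisely the point the paper relies on implicitly.
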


For the functional $\widehat{\mathcal{W}},$ one holds a result which is
analogous to Theorem \ref{2theq1}:

\begin{theorem}
\label{2theveq}If a d--metric $\mathbf{g}(\chi )$ (\ref{2m1}) and functions $%
\widehat{f}(\chi )$ and $\widehat{\tau }(\chi )$ evolve according the system
of equations%
\begin{eqnarray*}
\frac{\partial g_{ij}}{\partial \chi } &=&-2\widehat{R}_{ij},\ \frac{%
\partial g_{ab}}{\partial \chi }=-2\widehat{R}_{ab}, \\
\ \frac{\partial \widehat{f}}{\partial \chi } &=&-\widehat{\Delta }\widehat{f%
}+\left| \widehat{\mathbf{D}}\widehat{f}\right| ^{2}-\ ^{h}\widehat{R}-\ ^{v}%
\widehat{R}+\frac{n+m}{\widehat{\tau }},\ \frac{\partial \widehat{\tau }}{%
\partial \chi } =-1
\end{eqnarray*}%
and the property that
\begin{eqnarray*}
\frac{\partial }{\partial \chi }\widehat{\mathcal{W}}(\mathbf{g}(\chi )%
\mathbf{,}\widehat{f}(\chi ),\widehat{\tau }(\chi )) &=&2\int\limits_{%
\mathbf{V}}\widehat{\tau }[|\widehat{R}_{ij}+D_{i}D_{j}\widehat{f}-\frac{1}{2%
\widehat{\tau }}g_{ij}|^{2}+ \\
&&|\widehat{R}_{ab}+D_{a}D_{b}\widehat{f}-\frac{1}{2\widehat{\tau }}%
g_{ab}|^{2}](4\pi \widehat{\tau })^{-(n+m)/2}e^{-\widehat{f}}dV
\end{eqnarray*}%
and $\int\limits_{\mathbf{V}}(4\pi \widehat{\tau })^{-(n+m)/2}e^{-\widehat{f}%
}dV$ is constant. The functional $\widehat{\mathcal{W}}$ is h--
(v--)nondecre\-as\-ing in time and the monotonicity is strict unless we are
on a shrinking h-- (v--) gradient soliton. This functional is N--adapted
nondecreasing if it is both h-- and v--nondecreasing.
\end{theorem}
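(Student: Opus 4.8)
The plan is to obtain the stated monotonicity formula by specializing the first N--adapted variation of $\widehat{\mathcal{W}}$ (the Lemma immediately preceding) to the variations induced by the prescribed evolution system, exactly as Theorem \ref{2theq1} is obtained from Lemma \ref{2lem1} for $\widehat{\mathcal{F}}$. Along the flow one reads the variation data directly off the evolution equations: $v_{ij}=-2\widehat{R}_{ij}$, $v_{ab}=-2\widehat{R}_{ab}$, and $\widehat{\eta}=\delta\widehat{\tau}=-1$, while ${}^{h}f$ and ${}^{v}f$ are the h-- and v--components of $\partial\widehat{f}/\partial\chi$. First I would verify that the prescribed equation for $\widehat{f}$, together with $\partial\widehat{\tau}/\partial\chi=-1$, is precisely the one keeping $\int_{\mathbf{V}}(4\pi\widehat{\tau})^{-(n+m)/2}e^{-\widehat{f}}\,dV$ constant; this is a short computation differentiating the weighted measure under the flow and applying one N--adapted integration by parts to cancel the $\widehat{\Delta}\widehat{f}$ and $|\widehat{\mathbf{D}}\widehat{f}|^{2}$ contributions against each other.

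The second step is the substitution itself. Inserting the above data into the first variation formula yields a long integrand, but the evolution law for $\widehat{f}$ has been engineered so that the coefficients multiplying the brackets $[\widehat{\tau}(\,{}^{h}\widehat{R}+2\,{}^{h}\Delta\widehat{f}-|{}^{h}D\widehat{f}|^{2})+{}^{h}f-n-1]$ and its v--analog, together with the $\widehat{\eta}$--term, combine into expressions that integrate to zero against the measure. What remains are the $v^{ij}$ and $v^{ab}$ contributions, which after inserting $v^{ij}=-2\widehat{R}^{ij}$ and $v^{ab}=-2\widehat{R}^{ab}$ reassemble into $2\widehat{\tau}\,|\widehat{R}_{ij}+\widehat{D}_i\widehat{D}_j\widehat{f}-\frac{1}{2\widehat{\tau}}g_{ij}|^{2}$ plus the corresponding v--square. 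To carry out the completion of the square cleanly I would work at a point in the N--adapted normal coordinates supplied by the Proposition, where $\widehat{\mathbf{\Gamma}}_{\ \alpha\beta}^{\gamma}(u_0)=0$, reducing the identity to a pointwise algebraic rearrangement.

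The main obstacle is that $\widehat{\mathbf{D}}$ carries nontrivial torsion and $\widehat{\mathbf{R}}_{\alpha\beta}$ is not symmetric, so the contracted second Bianchi identities---needed in the Riemannian argument to trade $\widehat{D}^{i}\widehat{R}_{ij}$ against $\frac{1}{2}\widehat{D}_{j}\,{}^{h}\widehat{R}$, and likewise in the v--block---acquire torsion corrections, and the integration by parts producing the divergence terms is no longer automatic. The structural fact that rescues the computation is the defining property of the canonical d--connection, $h\widehat{\mathbf{T}}(hX,hY)=0$ and $v\widehat{\mathbf{T}}(vX,vY)=0$: this forces the h--block and the v--block of each Bianchi and divergence identity to close separately, so the extra torsion terms either cancel within a block or appear as total N--adapted divergences with vanishing integral. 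I would verify this block by block, and it is precisely this decoupling that produces the \emph{separate} h-- and v--monotonicity asserted in the theorem rather than only a combined statement.

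Finally, with $\partial\widehat{\mathcal{W}}/\partial\chi$ written as the sum of two integrals, each the integral of a manifestly nonnegative square weighted by $\widehat{\tau}>0$ and the positive density $(4\pi\widehat{\tau})^{-(n+m)/2}e^{-\widehat{f}}$, the asserted h-- and v--nondecreasing property of $\widehat{\mathcal{W}}$ in $\chi$ follows immediately. Strict monotonicity of a given block can fail only when its integrand vanishes identically, i.e. when $\widehat{R}_{ij}+\widehat{D}_i\widehat{D}_j\widehat{f}=\frac{1}{2\widehat{\tau}}g_{ij}$ in the h--block, or its v--analog holds, which is exactly the shrinking h-- (respectively v--) gradient soliton equation; and $\widehat{\mathcal{W}}$ is N--adapted nondecreasing precisely when both blocks are, yielding the concluding assertion.
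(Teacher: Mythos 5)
Your proposal is correct and takes essentially the same route as the paper: Theorem \ref{2theveq} is obtained there exactly by specializing the first N--adapted variation formula for $\widehat{\mathcal{W}}$ (the Lemma immediately preceding it) to the flow data $v_{ij}=-2\widehat{R}_{ij}$, $v_{ab}=-2\widehat{R}_{ab}$, $\widehat{\eta }=-1$, with the normalization kept constant and the squares completed blockwise in h-- and v--components, in full analogy with the derivation of Theorem \ref{2theq1} from Lemma \ref{2lem1} (the detailed N--adapted computations being deferred to Ref. \cite{v4}). Your handling of the torsion corrections through the defining properties $h\widehat{\mathbf{T}}(hX,hY)=0$ and $v\widehat{\mathbf{T}}(vX,vY)=0$ of the canonical d--connection, yielding the separate h-- and v--monotonicity, matches the paper's N--adapted variational calculus.
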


In this work, for Theorem \ref{2theveq}, the evolution equations are written
with respect to N--adapted frames. If the N--connection structure is fixed
in ''time'' $\chi ,$ or $\widehat{\tau },$ we do not have to consider
evolution equations for the N--anholoonomic frame structure. For more
general cases, the evolution of preferred N--adapted frames (\ref{2ft}) is
stated by:

\begin{corollary}
The evolution, for all time $\tau \in \lbrack 0,\tau _{0}),$ of preferred
frames on a N--anholonomic manifold $\ \mathbf{e}_{\alpha }(\tau )=\ \mathbf{%
e}_{\alpha }^{\ \underline{\alpha }}(\tau ,u)\partial _{\underline{\alpha }}$
is defined by
\begin{eqnarray*}
\ \mathbf{e}_{\alpha }^{\ \underline{\alpha }}(\tau ,u) &=&\left[
\begin{array}{cc}
\ e_{i}^{\ \underline{i}}(\tau ,u) & ~N_{i}^{b}(\tau ,u)\ e_{b}^{\
\underline{a}}(\tau ,u) \\
0 & \ e_{a}^{\ \underline{a}}(\tau ,u)%
\end{array}%
\right] ,\  \\
\mathbf{e}_{\ \underline{\alpha }}^{\alpha }(\tau ,u)\ &=&\left[
\begin{array}{cc}
e_{\ \underline{i}}^{i}=\delta _{\underline{i}}^{i} & e_{\ \underline{i}%
}^{b}=-N_{k}^{b}(\tau ,u)\ \ \delta _{\underline{i}}^{k} \\
e_{\ \underline{a}}^{i}=0 & e_{\ \underline{a}}^{a}=\delta _{\underline{a}%
}^{a}%
\end{array}%
\right],
\end{eqnarray*}%
with $\ g_{ij}(\tau )=\ e_{i}^{\ \underline{i}}(\tau ,u)\ e_{j}^{\
\underline{j}}(\tau ,u)\eta _{\underline{i}\underline{j}}$ and $g_{ab}(\tau
)=\ e_{a}^{\ \underline{a}}(\tau ,u)\ e_{b}^{\ \underline{b}}(\tau ,u)\eta _{%
\underline{a}\underline{b}}$, where $\eta _{\underline{i}\underline{j}%
}=diag[\pm 1,...\pm 1]$ and $\eta _{\underline{a}\underline{b}}=diag[\pm
1,...\pm 1]$ establish the signature of $\ \mathbf{g}_{\alpha \beta
}^{[0]}(u),$ is given by equations
 $\frac{\partial }{\partial \tau }\mathbf{e}_{\ \underline{\alpha }}^{\alpha
}\ =\ \mathbf{g}^{\alpha \beta }~\widehat{\mathbf{R}}_{\beta \gamma }~\
\mathbf{e}_{\ \underline{\alpha }}^{\gamma }$
if we prescribe that the geometric constructions are derived by the
canonical d--connection.
\end{corollary}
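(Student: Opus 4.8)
The plan is to read the asserted frame ODE as the infinitesimal statement that, along the flow, all $\tau$--dependence of the evolving d--metric (\ref{2m1}) is carried by the vielbein coefficients while the signature matrices $\eta_{\underline{i}\underline{j}}$ and $\eta_{\underline{a}\underline{b}}$ stay fixed. First I would record the two representations $g_{ij}(\tau)=e_i^{\ \underline{i}}e_j^{\ \underline{j}}\eta_{\underline{i}\underline{j}}$ and $g_{ab}(\tau)=e_a^{\ \underline{a}}e_b^{\ \underline{b}}\eta_{\underline{a}\underline{b}}$, together with the duality relations $e_\alpha^{\ \underline{\gamma}}\,e_{\ \underline{\gamma}}^\beta=\delta_\alpha^\beta$ and $e_{\ \underline{\alpha}}^\gamma\,e_\gamma^{\ \underline{\beta}}=\delta_{\underline{\alpha}}^{\underline{\beta}}$ coming from (\ref{2ft}); these let me translate freely between the evolution of the dual frame $e_{\ \underline{\alpha}}^\alpha$ (the object appearing in the statement) and that of the direct frame $e_\alpha^{\ \underline{\alpha}}$.

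The core step is a short derivation/verification computation. Differentiating the duality relation in $\tau$ and substituting the postulated law $\partial_\tau e_{\ \underline{\alpha}}^\alpha=\mathbf{g}^{\alpha\beta}\widehat{\mathbf{R}}_{\beta\gamma}e_{\ \underline{\alpha}}^\gamma$ yields the companion evolution $\partial_\tau e_\alpha^{\ \underline{\rho}}=-e_\beta^{\ \underline{\rho}}\,\mathbf{g}^{\beta\sigma}\widehat{\mathbf{R}}_{\sigma\alpha}$ for the direct frame. Feeding this into $\partial_\tau g_{\alpha\beta}=\partial_\tau\big(e_\alpha^{\ \underline{\mu}}e_\beta^{\ \underline{\nu}}\eta_{\underline{\mu}\underline{\nu}}\big)$ and contracting with the inverse vielbeins collapses every frame product back to a metric, giving $\partial_\tau g_{\alpha\beta}=-\widehat{\mathbf{R}}_{\alpha\beta}-\widehat{\mathbf{R}}_{\beta\alpha}$. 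Restricting to the h-- and v--blocks, where the canonical Ricci d--tensor of (\ref{2dricci}) is symmetric, this reproduces exactly $\partial_\tau g_{ij}=-2\widehat{R}_{ij}$ and $\partial_\tau g_{ab}=-2\widehat{R}_{ab}$, i.e. the evolution equations of Theorem \ref{2theveq}. Reading the same chain of equalities backwards shows the frame law is the one forced by the d--metric flow; existence and uniqueness of $e_{\ \underline{\alpha}}^\alpha(\tau)$ on $[0,\tau_0)$ then follow by integrating this \emph{linear} ODE along the (parabolic) solution $\mathbf{g}(\tau)$ already supplied by Theorem \ref{2theveq}, with the signature data $\eta_{\underline{i}\underline{j}},\eta_{\underline{a}\underline{b}}$ fixing the signature of the initial value $\mathbf{g}^{[0]}_{\alpha\beta}(u)$.

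The step I expect to be the genuine obstacle is the treatment of the mixed h--v components and of the asymmetry of $\widehat{\mathbf{R}}_{\alpha\beta}$. The block--diagonal metric flow only sees the symmetric blocks $\widehat{R}_{ij},\widehat{R}_{ab}$, so $\partial_\tau g_{\alpha\beta}$ determines $\partial_\tau e_{\ \underline{\alpha}}^\alpha$ only up to a $\tau$--dependent pseudo--rotation valued in the isometry group of $\eta$ (an $\eta$--antisymmetric generator leaves each $\eta$--block invariant, hence changes the frame but never $\mathbf{g}$). I must therefore argue that the stated one--sided law, written with the \emph{full} canonical Ricci d--tensor $\widehat{\mathbf{R}}_{\beta\gamma}$, is the natural N--adapted representative of this gauge class: its block--diagonal part is fixed by Theorem \ref{2theveq}, while its mixed part drives the off--diagonal frame entries, i.e. the evolution of the N--connection coefficients $N_i^a(\tau)$ recorded in (\ref{2ft}). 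Here the inequality $\widehat{R}_{ia}\neq\widehat{R}_{ai}$ is precisely the mechanism by which nonsymmetric metric/frame contributions can appear under evolution, so the careful bookkeeping is to check that this nonsymmetric piece is consistently absorbed into $\partial_\tau N_i^a$ while the Whitney splitting (\ref{2whitney}) and the adapted frame form are preserved throughout the flow.
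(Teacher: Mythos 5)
Your core computation is sound and is the standard one (Hamilton's evolving--frame trick): differentiating the duality relations, deriving the companion law $\partial _{\tau }e_{\alpha }^{\ \underline{\rho }}=-\mathbf{g}^{\beta \sigma }\widehat{\mathbf{R}}_{\sigma \alpha }\,e_{\beta }^{\ \underline{\rho }}$, and recovering $\partial _{\tau }\mathbf{g}_{\alpha \beta }=-(\widehat{\mathbf{R}}_{\alpha \beta }+\widehat{\mathbf{R}}_{\beta \alpha })$, which on the h-- and v--blocks reproduces the flow of Theorem \ref{2theveq} with $\eta _{\underline{i}\underline{j}},\eta _{\underline{a}\underline{b}}$ constant; the paper itself states the Corollary without proof, so this verification supplies what is omitted.

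The genuine gap is in your treatment of the mixed h--v components. You propose that the nonsymmetric pieces $\widehat{R}_{ia}\neq \widehat{R}_{ai}$ of the full canonical Ricci d--tensor ``drive the off--diagonal frame entries'' and are ``consistently absorbed into $\partial _{\tau }N_{i}^{a}$'' while the adapted frame form is preserved. Taken literally, this fails: the parametrization (\ref{2ft}) has an identically zero block, $e_{\ \underline{a}}^{i}=0$ (equivalently $e_{a}^{\ \underline{i}}=0$), and under the unprojected ODE one computes $\partial _{\tau }e_{\ \underline{a}}^{i}=g^{ij}\widehat{R}_{jb}\,\delta _{\underline{a}}^{b}$, which is nonzero precisely when the mixed Ricci components are nonzero. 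A nonzero lower block cannot be reabsorbed into $N_{i}^{a}$, since the N--adapted form has exactly zero there; so with the full Ricci d--tensor on the right--hand side the Whitney splitting (\ref{2whitney}) is \emph{not} preserved, contrary to your claim. The paper's resolution is stated in the remark immediately following the Corollary: the contraction $\mathbf{g}^{\alpha \beta }\widehat{\mathbf{R}}_{\beta \gamma }$ ``selects for evolution only the symmetric components of the Ricci d--tensor,'' i.e. only the block components $\widehat{R}_{ij}$ and $\widehat{R}_{ab}$ enter the frame ODE (the block--diagonal $\mathbf{g}^{\alpha \beta }$ together with the restriction to h-- and v--parts projects out $\widehat{R}_{ia},\widehat{R}_{ai}$). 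With this projection the triangular form of (\ref{2ft}) is manifestly preserved, the evolution of $N_{i}^{b}(\tau ,u)$ is carried through the vielbein factor $N_{i}^{b}e_{b}^{\ \underline{a}}$ by the v--block alone, and your gauge--freedom discussion (frames determined up to $\eta $--preserving rotations) then applies without change. Your proof should therefore either restrict the right--hand side of the ODE to the symmetric block components from the outset, or impose the conditions under which the mixed components vanish, rather than attempting the absorption argument.
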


It should be noted that $\mathbf{g}^{\alpha \beta }~\widehat{\mathbf{R}}%
_{\beta \gamma }=g^{ij}\widehat{R}_{ij}+g^{ab}\widehat{R}_{ab}$  selects for evolution only the symmetric components of the Ricci
d--tensor for the canonical d--connection.

\subsection{Clifford structures adapted to nonlinear connections}

In this section, we outline some key results from Refs. \cite{v1,v6} on
spinors and Dirac operators for nonholonomic manifolds and generalized
Finsler spaces.

The spinor bundle on a manifold $M,\ dimM=n,$ is constructed on the tangent
bundle $TM$ by substituting the group $SO(n)$ by its universal covering $%
Spin(n).$ If a horizontal quadratic form $\ ^{h}g_{ij}(x,y)$ is defined on $%
T_{x}h\mathbf{V}$ we can consider h--spinor spaces in every point $x\in h%
\mathbf{V}$ with fixed $y^{a}.$ The constructions can be completed on $T%
\mathbf{V}$ by using the d--metric $\mathbf{g}$. In this case, the group $%
SO(n+m)$ is not only substituted by $Spin(n+m)$ but with respect to
N--adapted frames there are emphasized decompositions to $Spin(n)\oplus
Spin(m).$\footnote{%
It should be noted here that spin bundles may not exist for general
holonomic or nonholonomic manifolds. For simplicity, we do not provide such
topological considerations in this paper. We state that we shall work only
with N--anholonomic manifolds for which certain spinor structures can be
defined both for the h- and v--splitting; the existence of a well defined
decomposition $Spin(n)\oplus Spin(m)$ follows from N--connection splitting.}

\subsubsection{Clifford N--adapted modules (d--modules)}

A Clifford d--algebra is a\ $\ \wedge V^{n+m}$ algebra endowed with a
product
 $
\mathbf{u}\mathbf{v}+\mathbf{v}\mathbf{u}=2\mathbf{g}(\mathbf{u},\mathbf{v}%
)\ \mathbb{I}$
distinguished into h--, v--products
$$\ ^{h}u~^{h}v+~^{h}v~^{h}u = 2~^{h}g(u,v)\ ~^{h}\mathbb{I} ,\ \ ^{v}u\
\ ~^{v}v+~^{v}v\ \ ~^{v}u = 2\ \ ~^{v}h(~^{v}u,\ \ ~^{v}v)~^{v}%
\mathbb{I},$$
for any $\mathbf{u}=(~^{h}u,\ ~^{v}u),\ \mathbf{v}=(~^{h}v,\ ~^{v}v)\in
V^{n+m},$ where $\mathbb{I},$ $\ ~^{h}\mathbb{I}\ $\ and $^{v}\mathbb{I}\ \ $%
\ are unity matrices of corresponding dimensions $(n+m)\times (n+m),$ or $%
n\times n$ and $m\times m.$

A metric $^{h}g$ on $h\mathbf{V}$ is defined by sections of the tangent
space $T~h\mathbf{V}$ provided with a bilinear symmetric form on continuous
sections $\Gamma (T~h\mathbf{V}).$\footnote{%
for simplicity, we shall consider only ''horizontal'' geometric
constructions if they are similar to ''vertical'' ones} This allows us to
define Clifford h--algebras $~^{h}\mathcal{C}l(T_{x}h\mathbf{V}),$ in any
point $x\in T~h\mathbf{V},$ $\gamma _{i}\gamma _{j}+\gamma _{j}\gamma
_{i}=2\ g_{ij}~^{h}\mathbb{I}.$ For any point $x\in h\mathbf{V}$ and fixed $%
y=y_{0},$ there exists a standard complexification, $T_{x}h\mathbf{V}^{%
\mathbb{C}}\doteq T_{x}h\mathbf{V}+iT_{x}h\mathbf{V},$ which can be used for
definition of the 'involution' operator on sections of $T_{x}h\mathbf{V}^{%
\mathbb{C}},$
\begin{equation*}
~^{h}\sigma _{1}~^{h}\sigma _{2}(x)\doteq ~^{h}\sigma _{2}(x)~^{h}\sigma
_{1}(x),\ ~^{h}\sigma ^{\ast }(x)\doteq ~^{h}\sigma (x)^{\ast },\forall x\in
h\mathbf{V},
\end{equation*}%
where ''*'' denotes the involution on every $~^{h}\mathcal{C}l(T_{x}h\mathbf{%
V}).$

\begin{definition}
A Clifford d--space on a nonholonomic manifold $~\mathbf{V}$ enabled with a
d--metric $\ \mathbf{g}(x,y)$ and a N--connection $\ \mathbf{N}$ is defined
as a Clifford bundle $~\mathcal{C}l(\mathbf{V})=~^{h}\mathcal{C}l(h\mathbf{V}%
)\oplus ~^{v}\mathcal{C}l(v\mathbf{V}),$ for the Clifford h--space $~^{h}%
\mathcal{C}l(h\mathbf{V})\doteq ~^{h}\mathcal{C}l(T^{\ast }h\mathbf{V})$ and
Clifford v--space $^{v}\mathcal{C}l(v\mathbf{V})\doteq ~^{v}\mathcal{C}%
l(T^{\ast }v\mathbf{V}).$
\end{definition}

For a fixed N--connection structure, a Clifford N--anholonomic bundle on $%
\mathbf{V}$ is defined $\ ~^{N}\mathcal{C}l(\mathbf{V})\doteq ~^{N}\mathcal{C%
}l(T^{\ast }\mathbf{V}).$ Let us consider a complex vector bundle $~^{E}\pi
:\ E\rightarrow \mathbf{V}$ on an N--anholonomic space $\mathbf{V}$ when the
N--connection structure is given for the base manifold. The Clifford
d--module of a vector bundle ${E}$ is defined by the $C(\mathbf{V})$--module
$\Gamma ({E})$ of continuous sections in ${E},$ 
$c:\ \Gamma (~^{N}\mathcal{C}l(\mathbf{V}))\rightarrow End(\Gamma ({E})).$

In general, a vector bundle on a N--anholonomic manifold may be not adapted
to the N--connection structure on base space.

\subsubsection{h--spinors, v--spinors and d--spinors}

Let us consider a vector space $V^{n}$ provided with Clifford structure. We
denote such a space $^{h}V^{n}$ in order to emphasize that its tangent space
is provided with a quadratic form $\ ^{h}g.$ We also write $~^{h}\mathcal{C}%
l(V^{n})\equiv \mathcal{C}l(~^{h}V^{n})$ and use subgroup $SO(\
^{h}V^{n})\subset O(\ ^{h}V^{n}).$

\begin{definition}
The space of complex h--spins is defined by the subgroup $\ $%
\begin{equation*}
^{h}Spin^{c}(n)\equiv Spin^{c}(\ ^{h}V^{n})\equiv \
^{h}Spin^{c}(V^{n})\subset \mathcal{C}l(\ ^{h}V^{n}),
\end{equation*}
determined by the products of pairs of vectors $w\in \ ^{h}V^{\mathbb{C}}$
when $w\doteq \lambda u$ where $\lambda $ is a complex number of module 1
and $u$ is of unity length in $\ ^{h}V^{n}.$
\end{definition}

Similar constructions can be performed for the v--subspace $~^{v}V^{m},$
which allows us to define similarly the group of real v--spins.

A usual spinor is a section of a vector bundle $S$ on a manifold $M$ when an
irreducible representation of the group $Spin(M)\doteq Spin(T_{x}^{\ast }M)$
is defined on the typical fiber. The set of sections $\Gamma (S)$ is a
irreducible Clifford module. If the base manifold is of type $h\mathbf{V},$
or is a general N--anholonomic manifold $\mathbf{V},$ we have to define
spinors on such spaces in a form adapted to the respective N--connection
structure.

\begin{definition}
A h--spinor bundle $\ ^{h}S$ on a h--space $h\mathbf{V}$ is a complex vector
bundle with both defined action of the h--spin group $\ ^{h}Spin(V^{n})$ on
the typical fiber and an irreducible representation of the group $\ ^{h}Spin(%
\mathbf{V})\equiv Spin(h\mathbf{V})\doteq Spin(T_{x}^{\ast }h\mathbf{V}).$
The set of sections $\Gamma (\ ^{h}S)$ defines an irreducible Clifford
h--module.
\end{definition}

The concept of ''d--spinors'' has been introduced for the spaces provided
with N--connection structure \cite{v1}:

\begin{definition}
\label{ddsp} A distinguished spinor (d--spinor) bundle $\mathbf{S}\doteq (\
\ ^{h}S,\ \ ^{v}S)$ on a N--anho\-lo\-nom\-ic manifold $\mathbf{V},$ $\ dim%
\mathbf{V}=n+m,$ is a complex vector bundle with a defined action of the
spin d--group $Spin\ \mathbf{V}\doteq Spin(V^{n})\oplus Spin(V^{m})$ with
the splitting adapted to the N--connection structure which results in an
irreducible representation $Spin(\mathbf{V})\doteq Spin(T^{\ast }\mathbf{V}%
). $ The set of sections $\Gamma (\mathbf{S})=\Gamma \ (\ \ ^{h}{S})\oplus
\Gamma (\ \ \ ^{v}{S})$ is an irreducible Clifford d--module.
\end{definition}

If we study algebras through theirs representations, we also have to
consider various algebras related by the Morita equivalence.\footnote{%
The Morita equivalence can be analyzed by applying in N--adapted form, both
on the base and fiber spaces, the consequences of the Plymen's theorem (in
this work, we omit details of such considerations).}

The possibility to distinguish the $Spin(n)$ (or, correspondingly $Spin(h%
\mathbf{V}),$\ $Spin(V^{n})\oplus Spin(V^{m}))$ allows us to define an
antilinear bijection $\ ^{h}J:\ \ ^{h}S\ \rightarrow \ \ ^{h}S$ (or $\ \
^{v}J:\ \ \ ^{v}S\ \rightarrow \ \ ^{v}\ S$ and $\mathbf{J}:\ \mathbf{S}\
\rightarrow \ \mathbf{S})$ with properties
\begin{eqnarray}
~^{h}J(~^{h}a\psi ) &=&~^{h}\chi (~^{h}a)~^{h}J~^{h}\psi ,\mbox{ for }%
~^{h}a\in \Gamma ^{\infty }(\mathcal{C}l(h\mathbf{V}));  \notag \\
(~^{h}J~^{h}\phi |~^{h}J~^{h}\psi ) &=&(~^{h}\psi |~^{h}\phi )\mbox{ for }%
~^{h}\phi ,~^{h}\psi \in ~^{h}S.  \label{jeq}
\end{eqnarray}

The considerations presented in this Section consists the proof of:\

\begin{theorem}
\label{mr1}Any d--metric and N--con\-nec\-ti\-on structure defines naturally
the fundamental geometric objects and structures (such as the Clifford
h--module, v--module and Clifford d--modules,or the h--spin, v--spin
structures and d--spinors) for the corresponding nonholonomic spin manifold
and/or N--anholo\-nom\-ic spinor (d--spinor) manifold.
\end{theorem}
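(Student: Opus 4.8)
The plan is to assemble the chain of definitions given in this Section into a single existence--and--naturality statement, proceeding from the N--connection splitting up to the d--spinor bundle. First I would start from the data $(\mathbf{g},\mathbf{N})$: the N--connection $\mathbf{N}$ fixes the Whitney decomposition (\ref{2whitney}), $T\mathbf{V}=h\mathbf{V}\oplus v\mathbf{V}$, and the d--metric (\ref{2m1}) restricts to the fibrewise quadratic forms ${}^{h}g_{ij}$ on $T_{x}h\mathbf{V}$ and ${}^{v}h_{ab}$ on $T_{x}v\mathbf{V}$. These are exactly the data entering the Clifford relations, so at each point they generate the Clifford h--algebra ${}^{h}\mathcal{C}l(T_{x}h\mathbf{V})$ via $\gamma_{i}\gamma_{j}+\gamma_{j}\gamma_{i}=2g_{ij}\ {}^{h}\mathbb{I}$ and the analogous v--algebra; their N--adapted direct sum reproduces the Clifford d--space of the corresponding Definition, $\mathcal{C}l(\mathbf{V})={}^{h}\mathcal{C}l(h\mathbf{V})\oplus {}^{v}\mathcal{C}l(v\mathbf{V})$.

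The second step is to promote the fibrewise algebras to bundle--level modules. Using the N--adapted frames (\ref{2dder}) and coframes (\ref{2ddif}), which are canonically determined by $(\mathbf{g},\mathbf{N})$, I would verify that the pointwise Clifford algebras patch into the bundle ${}^{N}\mathcal{C}l(\mathbf{V})$ and that, for any complex vector bundle $E$ over $\mathbf{V}$, the action $c:\Gamma({}^{N}\mathcal{C}l(\mathbf{V}))\rightarrow End(\Gamma(E))$ equips $\Gamma(E)$ with a Clifford d--module structure compatible with the h--v splitting.

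The third step is to pass to spin. The key point, recorded in the footnote, is that the N--connection reduces the adapted orthonormal frame bundle so that its structure group respects the h--v splitting; this yields the canonical decomposition $Spin(n)\oplus Spin(m)$ of the spin d--group rather than merely $Spin(n+m)$. From an irreducible representation of ${}^{h}Spin(V^{n})$ I obtain the h--spinor bundle ${}^{h}S$, similarly ${}^{v}S$, and their N--adapted sum gives the d--spinor bundle $\mathbf{S}=({}^{h}S,{}^{v}S)$ of Definition \ref{ddsp}, with $\Gamma(\mathbf{S})$ an irreducible Clifford d--module; the antilinear bijections ${}^{h}J,{}^{v}J,\mathbf{J}$ satisfying (\ref{jeq}) supply the real/charge--conjugation structure.

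The hard part --- and the place where ``naturally'' must be read carefully --- is twofold. First, there is a genuine topological obstruction: the spin bundle need not exist, and here (as the footnote concedes) one restricts to N--anholonomic manifolds admitting both the h-- and v--spin structures, so the statement is an existence result conditioned on that hypothesis rather than an unconditional one. Second, one must check that the $Spin(n)\oplus Spin(m)$ splitting, the module structures, and the involutions $J$ are independent of the auxiliary choices (local N--adapted frame, complexification) and transform correctly under N--adapted frame changes; this is where the Morita--equivalence and Plymen--theorem considerations enter, guaranteeing that representations built on different but equivalent fibre algebras yield the same d--spinor structure. Once these compatibilities are verified, naturality --- dependence only on $(\mathbf{g},\mathbf{N})$ --- follows because every object in the chain was built solely from the d--metric coefficients $(g_{ij},g_{ab})$ and the N--connection coefficients $N_{i}^{a}$.
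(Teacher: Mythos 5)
Your proposal is correct and follows essentially the same route as the paper, which explicitly takes the accumulated constructions of that section (Clifford h--/v--algebras from $({}^{h}g,{}^{v}h)$, the Clifford d--space and d--modules via N--adapted frames, the $Spin(n)\oplus Spin(m)$ splitting, the d--spinor bundle of Definition \ref{ddsp}, and the involutions satisfying (\ref{jeq})) as the proof of Theorem \ref{mr1}. You even reproduce the paper's own footnoted caveats --- the existence of spin structures as a standing hypothesis and the Morita--equivalence/Plymen considerations --- so nothing is missing.
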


\subsection{N--anholonomic Dirac operators}

The geometric constructions depend on the type of linear connections
considered for definition of such Dirac operators. They are metric
compatible and N--adapted if the canonical d--connection is used (similar
constructions can be performed for any deformation which results in a metric
compatible d--connection).

\subsubsection{Noholonomic vielbeins and spin d--connections}

Let us consider a Hilbert space of finite dimension. For a local dual
coordinate basis $e^{\underline{i}}\doteq dx^{\underline{i}}$ on $h\mathbf{V}%
,$\ we may respectively introduce certain classes of orthonormalized
vielbeins and the N--adapted vielbeins, $\ e^{\hat{\imath}}\doteq e_{\
\underline{i}}^{\hat{\imath}}(x,y)\ e^{\underline{i}}$ and $e^{i}\doteq e_{\
\underline{i}}^{i}(x,y)\ e^{\underline{i}},$when $g^{\underline{i}\underline{%
j}}\ e_{\ \underline{i}}^{\hat{\imath}}e_{\ \underline{j}}^{\hat{\jmath}%
}=\delta ^{\hat{\imath}\hat{\jmath}}$ and $g^{\underline{i}\underline{j}}\
e_{\ \underline{i}}^{i}e_{\ \underline{j}}^{j}=g^{ij}.$

We define the algebra of Dirac's gamma horizontal matrices (in brief, gamma
h--matrices defined by self--adjoint matrices $M_{k}(\mathbb{C})$ where $%
k=2^{n/2}$ is the dimension of the irreducible representation of $\mathcal{C}%
l(h\mathbf{V})$ from relation $\ \gamma ^{\hat{\imath}}\gamma ^{\hat{\jmath}%
}+\gamma ^{\hat{\jmath}}\gamma ^{\hat{\imath}}=2\delta ^{\hat{\imath}\hat{%
\jmath}}\ ^{h}\mathbb{I}.$ \ The action of $dx^{i}\in \mathcal{C}l(h\mathbf{V%
})$ on a spinor $\ \ ^{h}\psi \in \ ^{h}S$ is given by formulas
\begin{equation}
\ \ \ ^{h}c(dx^{\hat{\imath}})\doteq \gamma ^{\hat{\imath}}\mbox{ and }\ \ \
^{h}c(dx^{i})\ \ ^{h}\psi \doteq \gamma ^{i}\ \ ^{h}\psi \equiv e_{\ \hat{%
\imath}}^{i}\ \gamma ^{\hat{\imath}}\ \ ^{h}\psi .  \label{gamfibb}
\end{equation}

Similarly, we can define the algebra of Dirac's gamma vertical matrices
related to a typical fiber $F$ (in brief, gamma v--matrices defined by
self--adjoint matrices $M_{k}^{\prime }(\mathbb{C}),$ where $k^{\prime
}=2^{m/2}$ is the dimension of the irreducible representation of $\mathcal{C}%
l(F)$) from relation $\gamma ^{\hat{a}}\gamma ^{\hat{b}}+\gamma ^{\hat{b}%
}\gamma ^{\hat{a}}=2\delta ^{\hat{a}\hat{b}}\ ^{v}\mathbb{I}.$ The action of
$dy^{a}\in \mathcal{C}l(F)$ on a spinor $\ ^{v}\psi \in \ ^{v}S$ is
 $\ ^{v}c(dy^{\hat{a}}):= \gamma ^{\hat{a}}$ and $\ ^{v}c(dy^{a})\
^{v}\psi \doteq \gamma ^{a}\ ^{v}\psi \equiv e_{\ \hat{a}}^{a}\ \gamma ^{%
\hat{a}}\ ^{v}\psi$.

A more general gamma matrix calculus with distinguished gamma matrices (in
brief, gamma d--matrices\footnote{%
in some our works we wrote $\sigma $ instead of $\gamma $}) can be
elaborated for N--anholonomic manifolds $\mathbf{V}$ provided with d--metric
structure $\mathbf{g}=~^{h}g\oplus ~^{v}h]$ and for d--spinors $\breve{\psi}%
\doteq (~^{h}\psi ,\ ~^{v}\psi )\in \mathbf{S}\doteq (~^{h}S,\ ~^{v}S).$ In
this case, we consider d--gamma matrix relations $\ \gamma ^{\hat{\alpha}%
}\gamma ^{\hat{\beta}}+\gamma ^{\hat{\beta}}\gamma ^{\hat{\alpha}}=2\delta ^{%
\hat{\alpha}\hat{\beta}}\mathbb{\ I},$ with the action of $du^{\alpha }\in
\mathcal{C}l(\mathbf{V})$ on a d--spinor $\breve{\psi}\in \ \mathbf{S}$
resulting in distinguished irreducible representations
\begin{equation}
\mathbf{c}(du^{\hat{\alpha}})\doteq \gamma ^{\hat{\alpha}}\mbox{ and }%
\mathbf{c}=(du^{\alpha })\ \breve{\psi}\doteq \gamma ^{\alpha }\ \breve{\psi}%
\equiv e_{\ \hat{\alpha}}^{\alpha }\ \gamma ^{\hat{\alpha}}\ \breve{\psi}
\label{gamfibd}
\end{equation}%
which allows us to write 
$\gamma ^{\alpha }(u)\gamma ^{\beta }(u)+\gamma ^{\beta }(u)\gamma ^{\alpha
}(u)=2g^{\alpha \beta }(u)\ \mathbb{I}.$ 

In the canonical representation, we have the irreducible form $\breve{\gamma}%
\doteq ~^{h}\gamma \oplus \ ~^{v}\gamma $ and $\breve{\psi}\doteq ~^{h}\psi
\oplus ~^{v}\psi ,$ for instance, by using block type of h-- and
v--matrices. We can also write such formulas as couples of gamma and/or h--
and v--spinor objects written in N--adapted form, $\gamma ^{\alpha }\doteq
(~^{h}\gamma ^{i},~^{v}\gamma ^{a})$ and $\breve{\psi}\doteq (~^{h}\psi ,\
~^{v}\psi ).$

The spin connection $~_{S}\nabla $ for Riemannian manifolds is induced by
the Levi--Civita connection $\ ~\Gamma ,$ $~_{S}\nabla \doteq d-\frac{1}{4}\
\Gamma _{\ jk}^{i}\gamma _{i}\gamma ^{j}\ dx^{k}.$ On N--anholonomic
manifolds, spin d--connection operators $~_{\mathbf{S}}\mathbf{\nabla }$ can
be similarly constructed from any metric compatible d--connection ${\mathbf{%
\Gamma }}_{\ \beta \mu }^{\alpha }$ using the N--adapted absolute
differential $\delta $ acting, for instance, on a scalar function $f(x,y)$
in the form
 $\delta f=\left( \mathbf{e}_{\nu }f\right) \delta u^{\nu }=\left( \mathbf{e}%
_{i}f\right) dx^{i}+\left( e_{a}f\right) \delta y^{a}$,
for $\delta u^{\nu }=\mathbf{e}^{\nu },$ see N--elongated operators.

\begin{definition}
The canonical spin d--connection is defined by the canonical d--connection,
\begin{equation}
~_{\mathbf{S}}\widehat{\nabla }\doteq \delta -\frac{1}{4}\ \widehat{\mathbf{%
\Gamma }}_{\ \beta \mu }^{\alpha }\gamma _{\alpha }\gamma ^{\beta }\delta
u^{\mu },  \label{csdc}
\end{equation}%
where the N--adapted coefficients $\widehat{\mathbf{\Gamma }}_{\ \beta \mu
}^{\alpha }$ are given by formulas (\ref{2candcon}).
\end{definition}

We note that the canonical spin d--connection $~_{\mathbf{S}}\widehat{\nabla
}$ is metric compatible and contains nontrivial d--torsion coefficients
induced by the N--anholonomy relations.

\subsubsection{Dirac d--operators}

We consider a vector bundle $\mathbf{E}$ on a N--anholonomic manifold $%
\mathbf{V}$ (with two compatible N--connections defined as h-- and
v--splitting of $T\mathbf{E}$ and $T\mathbf{V}$)). A d--connection $\mathcal{%
D}:\ \Gamma ^{\infty }(\mathbf{E})\rightarrow \Gamma ^{\infty }(\mathbf{E}%
)\otimes \Omega ^{1}(\mathbf{V})$ preserves by parallelism splitting of the
tangent total and base spaces and satisfy the Leibniz condition $\mathcal{D}%
(f\sigma )=f(\mathcal{D}\sigma )+\delta f\otimes \sigma ,$ for any $f\in
C^{\infty }(\mathbf{V}),$ and $\sigma \in \Gamma ^{\infty }(\mathbf{E})$ and
$\delta $ defining an N--adapted exterior calculus by using N--elongated
operators which emphasize d--forms instead of usual forms on $\mathbf{V},$
with the coefficients taking values in $\mathbf{E}.$

The metricity and Leibniz conditions for $\mathcal{D}$ are written
respectively
\begin{equation}
\mathbf{g}(\mathcal{D}\mathbf{X},\mathbf{Y})+\mathbf{g}(\mathbf{X},\mathcal{D%
}\mathbf{Y})=\delta \lbrack \mathbf{g}(\mathbf{X},\mathbf{Y})],  \label{mc1}
\end{equation}%
for any $\mathbf{X},\ \mathbf{Y}\in \chi (\mathbf{V}),$ and
 $\mathcal{D}(\sigma \beta )\doteq \mathcal{D}(\sigma )\beta +\sigma \mathcal{D%
}(\beta )$,
for any $\sigma ,\beta \in \Gamma ^{\infty }(\mathbf{E}).$ For local
computations, we may define the corresponding coefficients of the geometric
d--objects and write
\begin{equation*}
\mathcal{D}\sigma _{\acute{\beta}}\doteq {\mathbf{\Gamma }}_{\ {\acute{\beta}%
}\mu }^{\acute{\alpha}}\ \sigma _{\acute{\alpha}}\otimes \delta u^{\mu }={%
\mathbf{\Gamma }}_{\ {\acute{\beta}}i}^{\acute{\alpha}}\ \sigma _{\acute{%
\alpha}}\otimes dx^{i}+{\mathbf{\Gamma }}_{\ {\acute{\beta}}a}^{\acute{\alpha%
}}\ \sigma _{\acute{\alpha}}\otimes \delta y^{a},
\end{equation*}%
where fiber ''acute'' indices are considered as spinor ones.

The respective actions of the Clifford d--algebra and Clifford h--algebra
can be transformed into maps $\Gamma ^{\infty }(\mathbf{S})\otimes \Gamma (%
\mathcal{C}l(\mathbf{V}))$ and $\Gamma ^{\infty }(~^{h}S)\otimes \Gamma (%
\mathcal{C}l(~^{h}V)$ to $\Gamma ^{\infty }(\mathbf{S})$ and, respectively, $%
\Gamma ^{\infty }(~^{h}S)$ by considering maps of type (\ref{gamfibb}) and (%
\ref{gamfibd}),
 $\widehat{\mathbf{c}}(\breve{\psi}\otimes \mathbf{a})\doteq \mathbf{c}(%
\mathbf{a})\breve{\psi}\mbox{\ and\ }~^{h}\widehat{c}(~^{h}{\psi }\otimes
~^{h}{a})\doteq ~^{h}{c}(~^{h}{a})~^{h}{\psi }$.

\begin{definition}
\label{dddo} The Dirac d--operator (Dirac h--operator, or v--operant) on a
spin N--anholonomic manifold $(\mathbf{V},\mathbf{S},J)$ (on a h--spin
manifold\newline
$(h\mathbf{V},~^{h}S,~^{h}J),$ or on a v--spin manifold $(v\mathbf{V}%
,~^{v}S,~^{v}J))$ is defined {\small
\begin{equation}
\mathbb{D} :=-i\ (\widehat{\mathbf{c}}\circ ~_{\mathbf{S}}\nabla ) =\left( \
^{h}\mathbb{D}=-i\ (\ ^{h}\widehat{{c}}\circ \ _{\mathbf{S}}^{h}\nabla ),\ \
^{v}\mathbb{D}=-i\ (\ ^{v}\widehat{{c}}\circ \ _{\mathbf{S}}^{v}\nabla
)\right)  \label{ddo}
\end{equation}%
} Such N--adapted Dirac d--operators are called canonical and denoted $%
\widehat{\mathbb{D}}=(\ ^{h}\widehat{\mathbb{D}},\ ^{v}\widehat{\mathbb{D}}\
)$\ if they are defined for the canonical d--connection (\ref{2candcon}) and
respective spin d--connection (\ref{csdc}).
\end{definition}

We formulate:

\begin{theorem}
\label{mr2} Let $(\mathbf{V},\mathbf{S},\mathbf{J})$ (\ $(h\mathbf{V},\
^{h}S,\ ^{h}J))$ be a spin N--anholonomic manifold ( h--spin space). There
is the canonical Dirac d--operator (Dirac h--operator) defined by the almost
Hermitian spin d--operator
  $\ _{\mathbf{S}}\widehat{\nabla }:\ \Gamma ^{\infty }(\mathbf{S})\rightarrow
\Gamma ^{\infty }(\mathbf{S})\otimes \Omega ^{1}(\mathbf{V})$
(spin h--operator $\ ~_{\mathbf{S}}^{h}\widehat{\nabla }:\ \Gamma ^{\infty
}(\ ^{h}{S})\rightarrow \Gamma ^{\infty }(\ ^{h}S)\otimes \Omega ^{1}(h%
\mathbf{V})\ )$ commuting with $\mathbf{J}$ ($\ ^{h}J$), see (\ref{jeq}),
and satisfying the conditions
\begin{equation*}
(~_{\mathbf{S}}\widehat{\nabla }\breve{\psi}\ |\ \breve{\phi})\ +(\breve{\psi%
}\ |~_{\mathbf{S}}\widehat{\nabla }\breve{\phi})\ =\delta (\breve{\psi}\ |\
\breve{\phi}) \mbox{ and } \ _{\mathbf{S}}\widehat{\nabla }(\mathbf{c}(%
\mathbf{a})\breve{\psi})\ =\mathbf{c}(\widehat{\mathbf{D}}\mathbf{a})\breve{%
\psi}+\mathbf{c}(\mathbf{a}) _{\mathbf{S}}\widehat{\nabla }\breve{\psi}
\end{equation*}%
for $\mathbf{a}\in \mathcal{C}l(\mathbf{V})$ and $\breve{\psi}\in \Gamma
^{\infty }(\mathbf{S})$, $(\ (~_{\mathbf{S}}^{h}\widehat{\nabla }\ ^{h}{\psi
}|\ \ ^{h}\phi )\ +(~^{h}{\psi }\ |\ ~_{\mathbf{S}}^{h}\widehat{\nabla }\ \
^{h}\phi )\ =\ ^{h}\delta (~^{h}{\psi }\ |\ \ ^{h}\phi )$ and $\ ~_{\mathbf{S%
}}^{h}\widehat{\nabla }(~^{h}c(~^{h}a)~^{h}{\psi })\ =~^{h}c(\ \ ^{h}%
\widehat{D}~^{h}a)~^{h}{\psi }+~^{h}c(~^{h}a)~_{\mathbf{S}}^{h}\widehat{%
\nabla }~^{h}{\psi }$ for $~^{h}a\in \mathcal{C}l(h\mathbf{V})$ and $\breve{%
\psi}\in \Gamma ^{\infty }(\ ^{h}{S)}$ )\ determined by the metricity (\ref%
{mc1}) and Leibnitz (\ref{lc1}) conditions.
\end{theorem}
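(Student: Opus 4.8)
The plan is to exhibit the operator explicitly and then verify that the three stated properties hold and pin it down uniquely. First I would take the canonical spin d--connection $~_{\mathbf{S}}\widehat{\nabla}$ of (\ref{csdc}), built from the canonical d--connection coefficients $\widehat{\mathbf{\Gamma}}_{\ \beta\mu}^{\alpha}$ of (\ref{2candcon}), and set $\widehat{\mathbb{D}}\doteq -i(\widehat{\mathbf{c}}\circ ~_{\mathbf{S}}\widehat{\nabla})$ as in (\ref{ddo}). Because $\widehat{\mathbf{D}}$ is a genuine d--connection, its coefficients preserve the splitting (\ref{2whitney}); hence $~_{\mathbf{S}}\widehat{\nabla}$ respects $\mathbf{S}=(~^{h}S,~^{v}S)$ and the operator decomposes as $\widehat{\mathbb{D}}=(~^{h}\widehat{\mathbb{D}},~^{v}\widehat{\mathbb{D}})$, so it suffices to argue in the h--sector and transcribe verbatim to the v--sector. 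The remaining work splits into verifying (i) the Clifford--Leibniz identity, (ii) the Hermitian self--adjointness identity, and (iii) commutation with $\mathbf{J}$.

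The heart of the argument is (i). Working in N--adapted frames, I would insert the local form $~_{\mathbf{S}}\widehat{\nabla}=\delta-\tfrac{1}{4}\widehat{\mathbf{\Gamma}}_{\ \beta\mu}^{\alpha}\gamma_{\alpha}\gamma^{\beta}\delta u^{\mu}$ into $~_{\mathbf{S}}\widehat{\nabla}(\mathbf{c}(\mathbf{a})\breve{\psi})$ and expand. The only non--routine step is the commutator of the connection term with a Clifford generator: using $\gamma^{\alpha}\gamma^{\beta}+\gamma^{\beta}\gamma^{\alpha}=2g^{\alpha\beta}\mathbb{I}$ one computes $[\gamma_{\alpha}\gamma^{\beta},\gamma^{\nu}]$ and finds that it reproduces exactly the Christoffel--type correction of $\widehat{\mathbf{D}}$ acting on the form $\mathbf{a}$. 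This matching is what delivers $~_{\mathbf{S}}\widehat{\nabla}(\mathbf{c}(\mathbf{a})\breve{\psi})=\mathbf{c}(\widehat{\mathbf{D}}\mathbf{a})\breve{\psi}+\mathbf{c}(\mathbf{a})~_{\mathbf{S}}\widehat{\nabla}\breve{\psi}$, i.e. that Clifford multiplication is parallel for the lifted connection. Crucially, this uses only metric compatibility $\widehat{\mathbf{D}}\mathbf{g}=0$ and not torsion--freeness, so the nontrivial d--torsion induced by the N--anholonomy relations (\ref{2anhrel}) causes no obstruction here.

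For (ii) I would use that $~_{\mathbf{S}}\widehat{\nabla}$ is metric compatible in the sense of (\ref{mc1}): since $\widehat{\mathbf{\Gamma}}$ is built from the metric--compatible expressions (\ref{2candcon}), the induced Hermitian pairing on $\mathbf{S}$ is preserved, giving $(~_{\mathbf{S}}\widehat{\nabla}\breve{\psi}\,|\,\breve{\phi})+(\breve{\psi}\,|\,~_{\mathbf{S}}\widehat{\nabla}\breve{\phi})=\delta(\breve{\psi}\,|\,\breve{\phi})$ with the N--elongated differential $\delta$ replacing $d$. For (iii), commutation with $\mathbf{J}$ follows from the reality of $\widehat{\mathbf{\Gamma}}$ together with the intertwining relations (\ref{jeq}): the antilinear bijection conjugates Clifford generators in the controlled way specified there, and a real metric--compatible term $\tfrac{1}{4}\widehat{\mathbf{\Gamma}}_{\ \beta\mu}^{\alpha}\gamma_{\alpha}\gamma^{\beta}$ is invariant under this conjugation, whence $~_{\mathbf{S}}\widehat{\nabla}\,\mathbf{J}=\mathbf{J}\,~_{\mathbf{S}}\widehat{\nabla}$ and $\widehat{\mathbb{D}}$ commutes with $\mathbf{J}$.

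Finally, for canonicity, I would argue that (\ref{mc1}) and the Clifford--Leibniz condition (\ref{lc1}) determine $~_{\mathbf{S}}\widehat{\nabla}$ uniquely in N--adapted form: any two spin d--connections satisfying both differ by a bundle endomorphism commuting with every Clifford generator and annihilating the Hermitian form, which must vanish on an irreducible d--module (Definition \ref{ddsp}); requiring N--adaptation then forces the lift to be precisely that of $\widehat{\mathbf{D}}$. The main obstacle I anticipate is organizational rather than conceptual: one must carry the h--/v--split gamma calculus and the differential $\delta$ consistently through the commutator computation, tracking the anholonomy coefficients $W_{\alpha\beta}^{\gamma}$ concealed in $\delta u^{\mu}$, and confirm that the torsion pieces of $\widehat{\mathbf{\Gamma}}$ cancel from the defining identities, reappearing only later in index/Lichnerowicz--type formulas.
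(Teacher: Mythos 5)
Your existence argument follows essentially the same route as the paper's (whose detailed computations are deliberately omitted and deferred to the cited source articles): one takes the canonical spin d--connection $~_{\mathbf{S}}\widehat{\nabla }$ of (\ref{csdc}), lifted from the canonical d--connection coefficients (\ref{2candcon}), defines $\widehat{\mathbb{D}}$ as in (\ref{ddo}), and uses N--adaptation to split everything into h-- and v--sectors, reducing to a holonomic-type gamma calculus in each. Your two key observations are exactly right and are the substantive content: the commutator $[\gamma _{\alpha }\gamma ^{\beta },\gamma ^{\nu }]=2g^{\beta \nu }\gamma _{\alpha }-2\delta _{\alpha }^{\nu }\gamma ^{\beta }$ reproduces the $\widehat{\mathbf{D}}$--correction on the form argument once one uses the antisymmetry $\widehat{\Gamma }_{\hat{\alpha}\hat{\beta}\mu }=-\widehat{\Gamma }_{\hat{\beta}\hat{\alpha}\mu }$ of metric-compatible coefficients in orthonormalized N--adapted frames, and torsion-freeness is nowhere needed, so the d--torsion induced by the anholonomy relations (\ref{2anhrel}) is no obstruction. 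This is precisely why the construction works for $\widehat{\mathbf{D}}$ even though $\widehat{\mathbf{T}}\neq 0$, and it matches the paper's reliance on $\widehat{\mathbf{D}}\mathbf{g}=0$ alone.

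There is, however, a genuine gap in your uniqueness paragraph. If $\nabla ^{\prime }$ and $~_{\mathbf{S}}\widehat{\nabla }$ both satisfy the metricity condition (\ref{mc1}) and the Clifford--Leibniz condition, their difference $A$ is an endomorphism-valued 1--form commuting with every Clifford generator; Schur's lemma on the irreducible d--module of Definition \ref{ddsp} then makes $A$ \emph{scalar}, $A=\omega \otimes \mathbb{I}$, not zero. The Hermitian condition forces only $\omega +\bar{\omega}=0$, i.e.\ $\omega =i\beta $ with $\beta $ a real 1--form, so your claim that such an endomorphism ``must vanish on an irreducible d--module'' fails: this is the familiar Spin$^{c}$ ambiguity, appearing here separately in the h-- and v--components. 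It is killed precisely by the commutation with the antilinear bijection $\mathbf{J}$ of (\ref{jeq}), since $\mathbf{J}(i\beta )=-i\beta \,\mathbf{J}$ forces $\beta =0$. You verified the $\mathbf{J}$--commutation as property (iii) but never invoked it in the uniqueness step, and N--adaptation cannot substitute for it, because $i\beta \otimes \mathbb{I}$ with $\beta $ a d--form is already N--adapted. Once $\mathbf{J}$ is added to the characterizing conditions --- as the theorem's statement itself does --- your argument closes; as written, the uniqueness step is incomplete.
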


The geometric information of a spin manifold (in particular, the metric) is
contained in the Dirac operator. For nonholonomic manifolds, the canonical
Dirac d--operator has h-- and v--irreducible parts related to off--diagonal
metric terms and nonholonomic frames with associated structure. In a more
special case, the canonical Dirac d--operator is defined by the canonical
d--connection. Nonholonomic Dirac d--operators contain more information than
the usual, holonomic, ones.

\begin{proposition}
\label{pdohv} If $\widehat{\mathbb{D}}=(\ ^{h}\widehat{\mathbb{D}},\ ^{v}%
\widehat{\mathbb{D}}\ )$\ is the canonical Dirac d--operator then $\left[
\widehat{\mathbb{D}},\ f\right] = i\mathbf{c}(\delta f)$, equivalently,
 $\left[\ ^{h}\widehat{\mathbb{D}},\ f\right] +\left[\ ^{v}\widehat{\mathbb{D}}%
,\ f\right] = i\ ~^{h}c(dx^{i}\frac{\delta f}{\partial x^{i}})+i~\
^{v}c(\delta y^{a}\frac{\partial f}{\partial y^{a}})$,
 for all $f\in C^{\infty }(\mathbf{V}).$
\end{proposition}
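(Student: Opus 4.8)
The plan is to verify the bounded--commutator (order--zero) property directly from the definition $\widehat{\mathbb{D}}=-i(\widehat{\mathbf{c}}\circ {}_{\mathbf{S}}\widehat{\nabla})$ of Definition \ref{dddo}, by evaluating $\widehat{\mathbb{D}}$ on a product $f\breve{\psi}$ with $f\in C^{\infty}(\mathbf{V})$ and $\breve{\psi}\in \Gamma^{\infty}(\mathbf{S})$ and subtracting $f\,\widehat{\mathbb{D}}\breve{\psi}$. The essential input is the Leibniz rule for the canonical spin d--connection acting on a scalar multiple of a d--spinor, which I would establish first.

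The key step is to show that $\,{}_{\mathbf{S}}\widehat{\nabla}(f\breve{\psi})=f\,{}_{\mathbf{S}}\widehat{\nabla}\breve{\psi}+\breve{\psi}\otimes \delta f$. This follows from the explicit form (\ref{csdc}): the operator ${}_{\mathbf{S}}\widehat{\nabla}$ splits into the N--adapted exterior differential $\delta$, which obeys the ordinary product rule $\delta(f\breve{\psi})=(\delta f)\breve{\psi}+f\,\delta \breve{\psi}$, and the connection term $-\tfrac{1}{4}\widehat{\mathbf{\Gamma}}_{\ \beta \mu}^{\alpha}\gamma_{\alpha}\gamma^{\beta}\delta u^{\mu}$, which acts by Clifford multiplication and is therefore $C^{\infty}(\mathbf{V})$--linear, i.e. an algebraic zeroth--order operator. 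Consequently this Christoffel--type term commutes with multiplication by $f$ and contributes nothing to $[\widehat{\mathbb{D}},f]$; only the derivative part of ${}_{\mathbf{S}}\widehat{\nabla}$ survives the commutator.

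Applying $-i\widehat{\mathbf{c}}$ and using the map identity $\widehat{\mathbf{c}}(\breve{\psi}\otimes \mathbf{a})=\mathbf{c}(\mathbf{a})\breve{\psi}$ from the construction preceding Definition \ref{dddo}, together with the $C^{\infty}(\mathbf{V})$--linearity of $\widehat{\mathbf{c}}$, I obtain $\widehat{\mathbb{D}}(f\breve{\psi})=f\,\widehat{\mathbb{D}}\breve{\psi}+i\,\mathbf{c}(\delta f)\breve{\psi}$, whence $[\widehat{\mathbb{D}},f]=i\,\mathbf{c}(\delta f)$ as operators on $\Gamma^{\infty}(\mathbf{S})$ (the overall factor $i$ being fixed by the normalization $-i$ in (\ref{ddo})). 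The equivalent h--/v--form is then read off by inserting the N--adapted decomposition $\delta f=(\mathbf{e}_{i}f)\,dx^{i}+(e_{a}f)\,\delta y^{a}$ and the block structures $\widehat{\mathbb{D}}=({}^{h}\widehat{\mathbb{D}},{}^{v}\widehat{\mathbb{D}})$ and $\mathbf{c}=({}^{h}c,{}^{v}c)$, so that the commutator splits as $[{}^{h}\widehat{\mathbb{D}},f]+[{}^{v}\widehat{\mathbb{D}},f]=i\,{}^{h}c(dx^{i}\,\mathbf{e}_{i}f)+i\,{}^{v}c(\delta y^{a}\,e_{a}f)$.

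The computation is essentially routine; the one place demanding care --- and the main conceptual point --- is that it is the N--elongated differential $\delta$, not the ordinary $d$, that governs the product rule, so $\delta f$ carries the anholonomy through the coefficients $N_{i}^{a}$ via $\mathbf{e}_{i}=\partial_{i}-N_{i}^{a}\partial_{a}$. I would verify that the nontrivial d--torsion of $\widehat{\mathbf{D}}$, encoded in $\widehat{\mathbf{\Gamma}}$, enters only the algebraic term that drops out of the commutator, so that the principal (first--order) symbol of $\widehat{\mathbb{D}}$ is exactly Clifford multiplication by $\delta f$. This is what guarantees both that the splitting respects the Whitney sum (\ref{2whitney}) and that $[\widehat{\mathbb{D}},f]$ is a bounded, N--adapted operator, as required for the subsequent distinguished spectral--triple constructions.
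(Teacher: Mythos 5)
Your proposal is correct and follows exactly the route the paper intends: the paper's proof of Proposition \ref{pdohv} is the one-line remark that it is ``a straightforward computation following from Definition \ref{dddo}'', and your argument simply writes out that computation --- the connection term of $\ _{\mathbf{S}}\widehat{\nabla }$ in (\ref{csdc}) is $C^{\infty }(\mathbf{V})$--linear and drops out of the commutator, while the N--elongated differential $\delta $ supplies the Leibniz term $\mathbf{c}(\delta f)$, with the h--/v--splitting read off from $\delta f=(\mathbf{e}_{i}f)\,dx^{i}+(e_{a}f)\,\delta y^{a}$. Nothing further is needed.
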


\begin{proof}
It is a straightforward computation following from Definition \ref{dddo}. $%
\Box $
\end{proof}

The canonical Dirac d--operator and its h-- and v--components have all the
properties of the usual Dirac operators (for instance, they are
self--adjoint but unbounded). It is possible to define a scalar product on $%
\Gamma ^{\infty }(\mathbf{S})$,
\begin{equation}
<\breve{\psi},\breve{\phi}>\doteq \int_{\mathbf{V}}(\breve{\psi}|\breve{\phi}%
)|\nu _{\mathbf{g}}|  \label{scprod}
\end{equation}%
where $\ \nu _{\mathbf{g}}=\sqrt{det|g|}\ \sqrt{det|h|}\ dx^{1}...dx^{n}\
dy^{n+1}...dy^{n+m}$ is the volume d--form on the N--anholonomic manifold $%
\mathbf{V}.$

\subsubsection{N--adapted spectral triples and distance in d--spinor spaces}

We denote $\ ^{N}\mathcal{H}\doteq L_{2}(\mathbf{V},\mathbf{S})=\left[ \ ^{h}%
\mathcal{H}=L_{2}(h\mathbf{V},\ ^{h}S),\ ^{v}\mathcal{H}=L_{2}(v\mathbf{V},\
^{v}S)\right] $ the Hilbert d--space obtained by completing $\Gamma ^{\infty
}(\mathbf{S})$ with the norm defined by the scalar product (\ref{scprod}). \
Similarly to the holonomic spaces, by using formulas (\ref{ddo}) and (\ref%
{csdc}), one may prove that there is a self--adjoint unitary endomorphism $%
~_{[cr]}\Gamma $ of $~^{N}\mathcal{H},$ called ''chirality'', being a $%
\mathbb{Z}_{2}$ graduation of $~^{N}\mathcal{H},$\footnote{%
we use the label\ $[cr]$\ in order to avoid misunderstanding with the symbol
$\Gamma $ used for linear connections} which satisfies the condition $%
\widehat{\mathbb{D}}\ ~_{[cr]}\Gamma =-~_{[cr]}\Gamma \ \widehat{\mathbb{D}}%
. $ 
Such conditions can be written also for the irreducible components $\ ^{h}%
\widehat{\mathbb{D}}$ and$\ ^{v}\widehat{\mathbb{D}}\ .$

\begin{definition}
A distinguished canonical spectral triple (canonical spectral d--triple) $%
(~^{N}\mathcal{A},~^{N}\mathcal{H},\ \widehat{\mathbb{D}})$ for a d--algebra
$~^{N}\mathcal{A}$ is defined by a Hilbert d--space $~^{N}\mathcal{H},$ a
representation of $~^{N}\mathcal{A}$ in the algebra $~^{N}\mathcal{B}(~^{N}%
\mathcal{H})$ of d--operators bounded on $~^{N}\mathcal{H},$ and by a
self--adjoint d--operator $~^{N}\mathcal{H},$ of compact resolution,%
\footnote{%
An operator $D$ is of compact resolution if for any $\lambda \in sp(D)$ the
operator $(D-\lambda \mathbb{I})^{-1}$ is compact.} such that $[~^{N}%
\mathcal{H},a]\in ~^{N}\mathcal{B}(~^{N}\mathcal{H})$ for any $a\in ~^{N}%
\mathcal{A}.$
\end{definition}

Every canonical spectral d--triple is defined by two usual spectral triples
which in our case corresponds to certain h-- and v--components induced by
the corresponding h-- and v--components of the Dirac d--operator. For such
spectral h(v)--triples we, can define the notion of $KR^{n}$--cycle and $%
KR^{m}$--cycle and consider respective Hochschild complexes. To define a
noncommutative geometry the h-- and v-- components of a canonical spectral
d--triples must satisfy certain well defined seven conditions:\ the spectral
dimensions are of order $1/n$ and $1/m,$ respectively, for h-- and
v--components of the canonical Dirac d--operator; there are satisfied the
criteria of regularity, finiteness and reality; representations are of 1st
order; there is orientability and Poincar\'{e} duality holds true. Such
conditions can be satisfied by any Dirac operators and canonical Dirac
d--operators (in the last case we have to work with d--objects). \footnote{%
We omit in this paper the details on axiomatics and related proofs for such
considerations.}

\begin{definition}
\label{dncdg} A spectral d--triple is a real one satisfying the above
mentioned seven conditions for the h-- and v--irreversible components and
defining a (d--spinor) N--anholonomic noncommutative geometry stated by the
data $(\ ^{N}\mathcal{A},$ $\ ^{N}\mathcal{H},\ \widehat{\mathbb{D}},\
\mathbf{J},\ ~_{[cr]}\Gamma )$ and derived for the Dirac d--operator (\ref%
{ddo}).
\end{definition}

For N--adapted constructions, we can consider d--algebras $~^{N}\mathcal{A}%
=\ ^{h}\mathcal{A\oplus }$ $\ ^{v}\mathcal{A}$. We generate N--anholonomic
commutative geometries if we take $~^{N}\mathcal{A}\doteq C^{\infty }(%
\mathbf{V}),$ or $\mathcal{~}^{h}\mathcal{A}\doteq C^{\infty }(h\mathbf{V}).$

Let us show how it is possible to compute distance in a d--spinor space:
\begin{theorem}
\label{mr3} Let $(~^{N}\mathcal{A},~^{N}\mathcal{H},\ \widehat{\mathbb{D}},%
\mathbf{J},\ _{[cr]}\Gamma )$ defines a noncommutative geometry being
irreducible for $~^{N}\mathcal{A}\doteq C^{\infty }(\mathbf{V}),$ where $%
\mathbf{V}$ is a compact, connected and oriented manifold without
boundaries, of spectral dimension $dim\ \mathbf{V}=n+m.$ In this case, there
are satisfied the conditions:

\begin{enumerate}
\item There is a unique  $\mathbf{g}(\widehat{\mathbb{D}}%
)=(~^{h}g,\ ^{v}g)$ with the ''nonlinear'' geodesic distance on $\mathbf{V}$
defined by
 $d(u_{1},u_{2})=\sup_{f\in C(\mathbf{V})}\left\{ f(u_{1},u_{2})/\parallel
\lbrack \widehat{\mathbb{D}},f]\parallel \leq 1\right\}$,
 for any smooth function $f\in C(\mathbf{V}).$

\item A N--anholonomic manifold $\mathbf{V}$\ is a spin N--anholono\-mic
space, for which the operators $\widehat{\mathbb{D}}^{\prime }$ satisfying
the condition $\mathbf{g}(\widehat{\mathbb{D}}^{\prime })=\mathbf{g}(%
\widehat{\mathbb{D}})$ define an union of affine spaces identified by the
d--spinor structures on $\mathbf{V.}$

\item The functional $\mathcal{S}(\widehat{\mathbb{D}})\doteq \int |\widehat{%
\mathbb{D}}|^{-n-m+2}$ defines a quadratic d--form with $(n+m)$--splitting
for every affine space which is minimal for $\widehat{\mathbb{D}}=%
\overleftarrow{\mathbb{D}}$ as the canonical Dirac d--operator corresponding
to the d--spin structure with the minimum proportional to the
Einstein--Hilbert action constructed for the canonical d--connection with
 d--scalar curvature $\ ^{s}\mathbf{R}$,
\begin{equation*}
\mathcal{S}(\overleftarrow{\mathbb{D}})=-\frac{n+m-2}{24}\ \int_{\mathbf{V}%
}\ ^{s}\mathbf{R}\ \sqrt{~^{h}g}\ \sqrt{~^{v}h}\ dx^{1}...dx^{n}\ \delta
y^{n+1}...\delta y^{n+m}.
\end{equation*}
\end{enumerate}
\end{theorem}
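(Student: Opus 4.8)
The plan is to reduce the statement to the classical Connes distance and reconstruction theorems, applied separately to the horizontal and vertical spectral triples out of which, by Definition \ref{dncdg}, the canonical spectral d-triple is assembled, and then to reassemble the resulting h- and v-data through the N-connection splitting (\ref{2whitney}). First I would record that, by the construction preceding Theorem \ref{mr2}, the canonical Dirac d-operator factorizes as $\widehat{\mathbb{D}}=({}^{h}\widehat{\mathbb{D}},{}^{v}\widehat{\mathbb{D}})$, with each irreducible component an ordinary Dirac operator on the h- (respectively v-) spin space; thus $({}^{N}\mathcal{A},{}^{N}\mathcal{H},\widehat{\mathbb{D}})$ is built from two ordinary spectral triples acting on orthogonal summands of ${}^{N}\mathcal{H}$, and every assertion may be checked component by component and then glued.

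For Part 1 I would invoke Connes' distance formula in each classical component. Proposition \ref{pdohv} gives $[\widehat{\mathbb{D}},f]=i\,{}^{h}c(dx^{i}\,\partial f/\partial x^{i})+i\,{}^{v}c(\delta y^{a}\,\partial f/\partial y^{a})$, and since the h- and v-Clifford actions live on orthogonal summands, $\|[\widehat{\mathbb{D}},f]\|$ equals the pointwise supremum of the d-gradient length measured by $\mathbf{g}=({}^{h}g,{}^{v}g)$. Running Connes' argument in each factor recovers ${}^{h}g$ and ${}^{v}g$ from the respective commutator norms; the supremum formula then reproduces the N-adapted (``nonlinear'') geodesic distance $d(u_{1},u_{2})$, and uniqueness of $\mathbf{g}(\widehat{\mathbb{D}})$ follows from the classical uniqueness in each factor together with preservation of the splitting (\ref{2whitney}).

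For Part 2 I would use the classical fact that, on a compact oriented spin manifold, the Dirac operators inducing a fixed metric differ only by the choice of spin structure and a unitary gauge, so that they form an affine space (a torsor under $H^{1}(\mathbf{V};\mathbb{Z}_{2})$). In the N-adapted setting this statement splits: the constraint $\mathbf{g}(\widehat{\mathbb{D}}^{\prime})=\mathbf{g}(\widehat{\mathbb{D}})$ pins down both ${}^{h}g$ and ${}^{v}g$ but leaves independent residual freedom in the h- and v-spin (d-spinor) structures furnished by Theorem \ref{mr1}, whence the advertised ``union of affine spaces identified by the d-spinor structures.''

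Part 3 is where the work lies. I would compute $\mathcal{S}(\widehat{\mathbb{D}})=\int|\widehat{\mathbb{D}}|^{-n-m+2}$ as a Dixmier trace / Wodzicki residue and apply the Kastler--Kalau--Walze evaluation of the second heat-kernel coefficient to each of the h- and v-components, which supplies the universal constant $-\frac{n+m-2}{24}$ (the standard $-\frac{N-2}{24}$ with $N=n+m$) and the d-volume $\nu_{\mathbf{g}}$ of (\ref{scprod}). The hard part will be controlling the nontrivial d-torsion carried by the canonical spin d-connection ${}_{\mathbf{S}}\widehat{\nabla}$ of (\ref{csdc}): unlike the Levi--Civita case, $\widehat{\mathbf{D}}$ is metric compatible but torsionful, so the Lichnerowicz-type expansion of $\widehat{\mathbb{D}}^{2}$ acquires terms in the distortion tensor $\widehat{Z}$ of $\nabla=\widehat{\mathbf{D}}+\widehat{Z}$. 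I would organize these contributions with the same distortion bookkeeping used in the preliminaries, either computing directly with $\widehat{\mathbf{D}}$ and collecting the $\widehat{Z}$-dependent pieces, or evaluating first with $\nabla$ and re-expressing through $\widehat{Z}$, so as to show that the sum assembles into the d-scalar curvature ${}^{s}\mathbf{R}={}^{h}\widehat{R}+{}^{v}\widehat{R}$ of $\widehat{\mathbf{D}}$. Minimality over the affine family of Part 2 then follows because, among all $\widehat{\mathbb{D}}^{\prime}$ sharing the metric, the canonical d-operator $\overleftarrow{\mathbb{D}}$ is the distinguished N-adapted representative at which the torsion/distortion contribution to the quadratic d-form is extremal, so that $\mathcal{S}(\overleftarrow{\mathbb{D}})$ realizes the Einstein--Hilbert d-action.
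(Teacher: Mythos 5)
Your proposal is correct and follows essentially the same route as the paper: the paper's (sketched) proof likewise reduces Theorem \ref{mr3} to the classical Connes distance/reconstruction statements and the Kastler--Kalau--Walze residue evaluation, applied separately on the h-- and v--subspaces starting from Proposition \ref{pdohv}, with the torsion of the canonical spin d--connection controlled through the metric--induced distortion $\nabla=\widehat{\mathbf{D}}+\widehat{Z}$. The only slip is notational: in Part 1 the commutator $[\widehat{\mathbb{D}},f]=i\mathbf{c}(\delta f)$ involves the N--elongated derivatives $\mathbf{e}_{i}f$ rather than the plain partials $\partial f/\partial x^{i}$, which is precisely what makes the recovered distance the N--adapted (``nonlinear'') geodesic one.
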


The existence of a canonical d--connection structure which is metric
compatible and constructed from the coefficients of the d--metric and
N--connecti\-on structure is of crucial importance allowing the formulation
and proofs of the main results of this work. As a matter of principle, we
can consider any splitting of connections and compute a unique distance like
we stated in the above Theorem \ref{mr3}, but for a ''non--canonical'' Dirac
d--operator. This holds true for any noncommutative geometry induced by a
metric compatible d--connection supposed to be uniquely induced by a metric
tensor.

In more general cases, we can consider any metric compatible d--connecti\-on
with arbitrary d--torsion. Such constructions can be also elaborated in
N--adapted form by preserving the respective h- and v--irreducible
decompositions. For the Dirac d--operators, we have to start with the
Proposition \ref{pdohv} and then to repeat all constructions both on h-- and
v--subspaces. In this article, we do not analyze (non) commutative
geometries enabled with general torsions but consider only nonholonomic
deformations when distorsions are induced by a metric structure.

Finally, we note that Theorem \ref{mr3} allows us to extract from a
canonical nonholonomic model of noncommutative geometry various types of
commutative geometries (holonomic and N--anholonomic Riemannian spaces,
Finsler--Lagrange spaces and generalizations) for corresponding nonholonomic
Dirac operators.

\subsection{Noncommutative geometry and Ricci flows}
\label{ssncrf}

The Ricci flow equations and Perelman functionals can be re--defined with
respect to moving frames subjected to nonholonomic constraints.\footnote{%
there are used also some other equivalent terms like anholonomic, or
non--integrable, restrictions/ constraints; we emphasize that in classical
and quantum physics the field and evolution equations play a fundamental
role but together with certain types of constraints and broken symmetries; a
rigorous mathematical approach to modern physical theories can be elaborated
only following geometric methods from 'nonholonomic field theory and
mechanics'} Considering models of evolution of geometric objects in a form
adapted to certain classes of nonholonomic constraints, we proved that
metrics and connections defining (pseudo) Riemannian spaces may flow into
similar nonholonomically deformed values modelling generalized Finsler and
Lagrange configurations, with symmetric and nonsymmetric metrics, or
possessing noncommutative symmetries.

The original Hamilton--Perelman constructions were for unconstrained flows
of metrics evolving only on (pseudo) Riemannian manifolds. There were proved
a set of fundamental results in mathematics and physics (for instance, the
Thurston and Poincar\'{e} conjectures, related to spacetime topological
properties, Ricci flow running of physical constants and fields etc).
Nevertheless, a number of important problems in geometry and physics are
considered in the framework of classical and quantum field theories with
constraints (for instance, the Lagrange and Hamilton mechanics, Dirac
quantization of constrained systems, gauge theories with broken symmetries
etc). With respect to the Ricci flow theory, to impose constraints on
evolution equations is to extend the research programs on manifolds enabled
with nonholonomic distributions, i.e. to study flows of fundamental
geometric structures on nonholonomic manifolds.

Imposing certain noncommutative conditions on physical variables and
coordinates in an evolution theory, we transfer the constructions and
methods into the field of noncommutative geometric analysis on nonholonomic
manifolds. This also leads naturally to various problems related to
noncommutative generalizations of the Ricci flow theory and possible
applications in modern physics. In this work, \ we follow the approach to
noncommutative geometry when the spectral action paradigm, with spectral
triples and Dirac operators, gives us a very elegant formulation of the
standard model in physics.

Following the spectral action paradigm, all details of the standard models
of particle interactions and gravity can be ''extracted'' from a
noncommutative geometry generated by a spectral triple $(\mathcal{A},%
\mathcal{H},\mathcal{D})$ by postulating the action%
\begin{equation}
Tr~~f(\mathcal{D}^{2}/\Lambda ^{2})+<\Psi |\mathcal{D}|\Psi >,  \label{tract}
\end{equation}%
where ''spectral'' is in the sense that the action depends only on the
spectrum of the Dirac operator $\mathcal{D}$ on a certain noncommutative
space defined by a noncommutative associative algebra $\mathcal{A}=C^{\infty
}(V)\otimes ~^{P}\mathcal{A}.$ In formula (\ref{tract}), $Tr$ is the trace
in operator algebra and $\Psi $ is a spinor, all defined for a Hilbert space
$\mathcal{H}$, $\Lambda $ is a cutoff scale and $~f$ \ is a positive
function. For a number of physical applications, $~^{P}\mathcal{A}$ is a
finite dimensional algebra and $C^{\infty }(V)$ is the algebra of complex
valued and smooth functions over a ''space'' $V,$ a topological manifold,
which for different purposes can be enabled with various necessary geometric
structures. The spectral geometry of $\mathcal{A}$ is given by the product
rule $\mathcal{H}=L^{2}(V,S)\otimes ~^{P}\mathcal{H},$ where $L^{2}(V,S)$ is
the Hilbert space of $L^{2}$ spinors and $~^{P}\mathcal{H}$ is the Hilbert
space of quarks and leptons fixing the choice of the Dirac operator $~^{P}D$
and the action $~^{P}\mathcal{A}$ for fundamental particles. Usually, the
Dirac operator from (\ref{tract}) is parametrized $\mathcal{D}=~^{V}D\otimes
1+\gamma _{5}\otimes ~^{P}D,$ where $~^{V}D$ is the Dirac operator of the
Levi--Civita spin connection on $V.$\footnote{%
in this work, we shall use left ''up'' and ''low'' abstract labels which
should not be considered as tensor or spinor indices written in the right
side of symbols for geometrical objects}

In order to construct exact solutions with noncommutative symmetries and
noncommutative gauge models of gravity and include dilaton fields, one has
to use instead of $~^{V}D$ certain generalized types of Dirac operators
defined by nonholonomic and/or conformal deformations of the 'primary'
Levi--Civita spin connection. In a more general context, the problem of
constructing well defined geometrically and physically motivated
nonholonomic Dirac operators is related to the problem of definition of
spinors and Dirac operators on Finsler--Lagrange spaces and generalizations.

\subsubsection{ Spectral Functionals and Ricci Flows}

\label{s4}The goal of this section is to prove that the Perelman's
functionals and their generalizations for nonholonomic Ricci flows in can be
extracted from flows of a generalized Dirac operator $~^{N}\mathcal{D}(\chi
)=~\mathbb{D}(\chi )\otimes 1$ included in spectral functionals of type
\begin{equation}
Tr~~^{b}f(~^{N}\mathcal{D}^{2}(\chi )/\Lambda ^{2}),  \label{trperfunct}
\end{equation}%
where $~^{b}f(\chi )$ are testing functions labelled by $b=1,2,3$ and
depending on a real flow parameter $\chi ,$ which in the commutative variant
of the Ricci flow theory corresponds to that for R. Hamilton's equations.
For simplicity, we shall use one cutoff parameter $\Lambda $ and suppose
that operators under flows act on the same algebra $\mathcal{A}$ and Hilbert
space $\mathcal{H},$ i.e. we consider families of spectral triples of type $(%
\mathcal{A},\mathcal{H},~^{N}\mathcal{D}(\chi )).$\footnote{%
we shall omit in this section the left label ''N'' for algebras and Hilbert
spaces if that will not result in ambiguities}

\begin{definition}
The normalized Ricci flow equations (R. Hamilton's equations) generalized on
nonholonomic manifolds are defined in the form
\begin{equation}
\frac{\partial \mathbf{g}_{\alpha \beta }(\chi )}{\partial \chi }=-2~^{N}%
\mathbf{R}_{\alpha \beta }(\chi )+\frac{2r}{5}\mathbf{g}_{\alpha \beta
}(\chi ),  \label{normcomrf}
\end{equation}%
where $\mathbf{g}_{\alpha \beta }(\chi )$ defines a family of d--metrics
parametrized in the form (\ref{m1}) on a N-anholonomic manifold $\mathbf{V}$
enabled with a family of N--connections $N_{i}^{a}(\chi ).$
\end{definition}

The effective ''cosmological'' constant $2r/5$ in (\ref{normcomrf}) with
normalizing factor $r=\int_{v}~_{s}^{N}\mathbf{R}dv/v$ is introduced with
the aim to preserve a volume $v$ on $\mathbf{V},\,$\ where $~_{s}^{N}\mathbf{%
R}$ is the scalar curvature.\footnote{%
We note that in our work there used two mutually related flow parameters $%
\chi $ and $\tau ;$ for simplicity, in this work we write only $\chi $ even,
in general, such parameters should be rescaled for different geometric
analysis constructions.}

The corresponding family of Ricci tensors $~^{N}\mathbf{R}_{\alpha \beta
}(\chi ),$ in (\ref{normcomrf}), and nonholonomic Dirac operators $%
~^{N}D(\chi ),$ in (\ref{tract}), are defined for any value of $\chi $ by a
general metric compatible linear connection $~^{N}\mathbf{\Gamma }$ adapted
to a N--connection structure. In a particular case, we can consider the
Levi--Civita connection $~\Gamma ,$ which is used in standard geometric
approaches to physical theories. Nevertheless, for various purposes in
modelling evolution of off--diagonal Einstein metrics, constrained physical
systems, effective Finsler and Lagrange geometries, Fedosov quantization of
field theories and gravity etc\footnote{%
the coefficients of corresponding N--connection structures being defined
respectively by the generic off--diagonal metric terms, anholonomy frame
coefficients, Finsler and Lagrange fundamental functions etc}, \ it is
convenient to work with a ''N--adapted'' linear connection $~^{N}\mathbf{%
\Gamma (g).}$ If such a connection is also uniquely defined by a metric
structure $\mathbf{g},$ we are able to re--define the constructions in an
equivalent form for the corresponding Levi--Civita connection.

In noncommutative geometry, all physical information on generalized Ricci
flows can be encoded into a corresponding family of nonholonomic Dirac
operators $~^{N}\mathcal{D}(\chi ).$ For simplicity, in this work, we shall
consider that $~^{P}D=0,$ i.e. we shall not involve into the
(non)commutative Ricci flow theory the particle physics. Perhaps a
''comprehensive'' noncommutative Ricci flow theory should include as a
stationary case the ''complete'' spectral action (\ref{tract}) parametrized
for the standard models of gravity and particle physics.

\subsubsection{Spectral flows and Perelman functionals}

Let us consider a family of generalized d--operators{\small
\begin{equation}
\mathcal{D}^{2}(\chi )=-\left\{ \frac{\mathbb{I}}{2}~\mathbf{g}^{\alpha
\beta }(\chi )\left[ \mathbf{e}_{\alpha }(\chi )\mathbf{e}_{\beta }(\chi )+%
\mathbf{e}_{\beta }(\chi )\mathbf{e}_{\alpha }(\chi )\right] +\mathbf{A}%
^{\nu }(\chi )\mathbf{e}_{\nu }(\chi )+\mathbf{B}(\chi )\right\} ,
\label{oper1}
\end{equation}%
} where the real flow parameter $\chi \in \lbrack 0,\chi _{0})$ and, for any
fixed values of this parameter, the matrices $\mathbf{A}^{\nu }(\chi )$ and $%
\mathbf{B}(\chi )$ are determined by a N--anholonomic Dirac operator $%
\mathbb{D}$ induced by a metric compatible d--connection $\mathbf{D,}$ see
and Definition \ref{dddo}; for the canonical d--connection, we have to put
''hats'' on symbols and write $\widehat{\mathcal{D}}^{2},\widehat{\mathbf{A}}%
^{\nu }$ and $\widehat{\mathbf{B}}\mathbf{.}$ We introduce two functionals $%
\mathcal{F}$ and $\mathcal{W}$ depending on $\chi ,$%
\begin{equation}
\mathcal{F}=Tr~\left[ ~^{1}f~(\chi )(~^{^{1}\phi }\mathcal{D}^{2}(\chi
)/\Lambda ^{2})\right] \simeq \sum\limits_{k\geq 0}~^{1}f_{(k)}(\chi
)~~^{1}a_{(k)}(~^{^{1}\phi }\mathcal{D}^{2}(\chi )/\Lambda ^{2})
\label{ncpf1a}
\end{equation}%
{\small
\begin{eqnarray}
\mbox{ and } \mathcal{W} &=&~^{2}\mathcal{W+}~^{3}\mathcal{W},
\label{ncpf2a} \\
\mbox{\qquad for }\ ^{e}\mathcal{W} &=&Tr\left[ \ ^{e}f(\chi )(\ ^{^{e}\phi }%
\mathcal{D}^{2}(\chi )/\Lambda ^{2})\right] = \sum\limits_{k\geq 0}\
^{e}f_{(k)}(\chi )\ ^{e}a_{(k)}(\ ^{^{e}\phi }\mathcal{D}^{2}(\chi )/\Lambda
^{2}),  \notag
\end{eqnarray}%
} where we consider a cutting parameter $\Lambda ^{2}$ for both cases $%
e=2,3. $ Functions $~^{b}f,$ with label $b$ taking values $1,2,3,$ have to
be chosen in a form which insure that for a fixed $\chi $ we get certain
compatibility with gravity and particle physics and result in positive
average energy and entropy for Ricci flows of geometrical objects. For such
testing functions, ones hold true the formulas
\begin{eqnarray}
~^{~^{b}}f_{(0)}(\chi ) &=&\int\limits_{0}^{\infty }~^{b}f(\chi
,u)u~du,~^{b}f_{(2)}(\chi )=\int\limits_{0}^{\infty }~^{b}f(\chi ,u)~du,
\notag \\
~^{b}f_{(2k+4)}(\chi ) &=&(-1)^{k}~~^{b}f^{(k)}(\chi ,0),\quad k\geq 0.
\label{ffunct}
\end{eqnarray}%
We will comment the end of this subsection on dependence on $\chi $ of such
functions.

The coefficients $~^{b}a_{(k)}$ can be computed as the Seeley -- de Witt
coefficients (we chose such notations when in the holonomic case the scalar
curvature is negative for spheres and the space is locally Euclidean). In
functionals (\ref{ncpf1a}) and (\ref{ncpf2a}), we consider dynamical scaling
factors of type $~^{b}\rho =\Lambda \exp (~^{~^{b}}\phi ),$ when, for
instance,
\begin{eqnarray}
~^{^{1}\phi }\mathcal{D}^{2} &=&~e^{-~^{1}\phi }~\mathcal{D}^{2}e^{~^{1}\phi
}  \label{ddiracscale} \\
&=&-\left\{ \frac{\mathbb{I}}{2}~~^{^{1}\phi }\mathbf{g}^{\alpha \beta }%
\left[ ~^{^{1}\phi }\mathbf{e}_{\alpha }~^{^{1}\phi }\mathbf{e}_{\beta
}+~^{^{1}\phi }\mathbf{e}_{\beta }~^{^{1}\phi }\mathbf{e}_{\alpha }\right]
+~^{^{1}\phi }\mathbf{A}^{\nu }~^{^{1}\phi }\mathbf{e}_{\nu }+~^{^{1}\phi }%
\mathbf{B}\right\} ,  \notag
\end{eqnarray}%
\begin{eqnarray*}
\mbox{\ for \ }~^{^{1}\phi }\mathbf{A}^{\nu } &=&~e^{-2~^{1}\phi }\times
\mathbf{A}^{\nu }-2~^{^{1}\phi }\mathbf{g}^{\nu \mu }\times ~^{^{1}\phi }%
\mathbf{e}_{\beta }(^{1}\phi ), \\
~^{^{1}\phi }\mathbf{B} &\mathbf{=}&~e^{-2~^{1}\phi }\times \left( \mathbf{B-%
\mathbf{A}^{\nu }~}^{^{1}\phi }\mathbf{e}_{\beta }(^{1}\phi )\right) \mathbf{%
+~}^{^{1}\phi }\mathbf{g}^{\nu \mu }\times \mathbf{~}^{^{1}\phi }W_{\nu \mu
}^{\gamma }\mathbf{~}^{^{1}\phi }\mathbf{e}_{\gamma },
\end{eqnarray*}%
for re--scaled d--metric $~^{^{1}\phi }\mathbf{g}_{\alpha \beta
}=~e^{2~^{1}\phi }\times \mathbf{g}_{\alpha \beta }$ and N--adapted frames $%
~^{^{1}\phi }\mathbf{e}_{\alpha }=\ e^{~^{1}\phi }\times \mathbf{e}_{\alpha
} $ satisfying anholonomy relations, with re--scaled
nonholonomy coefficients $^{^{1}\phi }W_{\nu \mu }^{\gamma }.$ We emphasize
that similar formulas can be written by substituting respectively the labels
and scaling factors containing $\ ^{1}\phi $ with $^{2}\phi $ and $^{3}\phi
. $ For simplicity, we shall omit left labels $1,2,3$ for $\phi $ and $f,a$
if that will not result in ambiguities.

Let us denote by $~^{s}\mathbf{R}(\mathbf{g}_{\mu \nu })$ and $\mathbf{C}%
_{\mu \nu \lambda \gamma }(\mathbf{g}_{\mu \nu }),$ correspondingly, the
scalar curvature and conformal Weyl d--tensor \footnote{%
for any metric compatible d--connection $\mathbf{D,}$ the Weyl d--tensor can
be computed by formulas similar to those for the Levi--Civita connection $%
\nabla ;$ here we note that if a Weyl d--tensor is zero, in general, the
Weyl tensor for $\nabla $ does not vanish (and inversely)}%
\begin{eqnarray*}
\mathbf{C}_{\mu \nu \lambda \gamma } &=&\mathbf{R}_{\mu \nu \lambda \gamma }+%
\frac{1}{2}\left( \mathbf{R}_{\mu \lambda }\mathbf{g}_{\nu \gamma }-\mathbf{R%
}_{\nu \lambda }\mathbf{g}_{\mu \gamma }-\mathbf{R}_{\mu \gamma }\mathbf{g}%
_{\nu \lambda }+\mathbf{R}_{\nu \gamma }\mathbf{g}_{\mu \lambda }\right) \\
&&-\frac{1}{6}\left( \mathbf{g}_{\mu \lambda }\mathbf{g}_{\nu \gamma }-%
\mathbf{g}_{\nu \lambda }\mathbf{g}_{\mu \gamma }\right) ~\ _{s}\mathbf{R,}\
\end{eqnarray*}%
defined by a d--metric $\mathbf{g}_{\mu \nu }$ and a metric compatible
d--connection $\mathbf{D}$ (in our approach, $\mathbf{D}$ can be any
d--connection constructed in a unique form from $\mathbf{g}_{\mu \nu }$ and $%
\mathbf{N}_{i}^{a}$ following a well defined geometric principle). For
simplicity, we shall work on a four dimensional space and use values {\small
\begin{eqnarray*}
&&\int d^{4}u~\sqrt{\det |e^{~2\phi }\mathbf{g}_{\mu \nu }|}\mathbf{R}%
(e^{2~\phi }\mathbf{g}_{\mu \nu })^{\ast ~}\mathbf{R}^{\ast }(e^{2~\phi }%
\mathbf{g}_{\mu \nu })= \\
&&\int d^{4}u~\sqrt{\det |\mathbf{g}_{\mu \nu }|}\mathbf{R}(\mathbf{g}_{\mu
\nu })^{\ast ~}\mathbf{R}^{\ast }(\mathbf{g}_{\mu \nu })= \frac{1}{4}\int
d^{4}u~\frac {\epsilon ^{\mu \nu \alpha \beta }\epsilon _{\rho \sigma \gamma
\delta }}{\left( \sqrt{\det |\mathbf{g}_{\mu \nu }|}\right)} \mathbf{R}%
_{\quad \mu \nu }^{\rho \sigma }\mathbf{R}_{\quad \alpha \beta }^{\gamma
\delta },
\end{eqnarray*}%
} for the curvature d--tensor $\mathbf{R}_{\quad \mu \nu }^{\rho \sigma }$,
where sub--integral values are defined by Chern-Gauss--Bonnet terms $\mathbf{%
R}^{\ast }\ \mathbf{R}^{\ast }\equiv \frac{1}{4\sqrt{\det |\mathbf{g}_{\mu
\nu }|}}\epsilon ^{\mu \nu \alpha \beta }\epsilon _{\rho \sigma \gamma
\delta }\mathbf{R}_{\mu \nu \quad }^{\ \ \rho \sigma }\mathbf{R}_{\alpha
\beta \quad }^{\ \ \gamma \delta }$.

One has the four dimensional approximation {\small
\begin{eqnarray}
&&Tr~\left[ ~f~(\chi )(~^{\phi }\mathcal{D}^{2}(\chi )/\Lambda ^{2})\right]
\simeq \frac{45}{4\pi ^{2}}~f_{(0)}\int \delta ^{4}u~e^{2\phi }\sqrt{\det |%
\mathbf{g}_{\mu \nu }|}+ \frac{15}{16\pi ^{2}} \times  \label{4dapr} \\
&& f_{(2)}\int \delta ^{4}u~e^{2\phi }\sqrt{\det |\mathbf{g}_{\mu \nu }|}
\left( ~_{s}\mathbf{R}(e^{2\phi }\mathbf{g}_{\mu \nu })+3e^{-2\phi }\mathbf{g%
}^{\alpha \beta }(\mathbf{e}_{\alpha }\phi ~\mathbf{e}_{\beta }\phi +\mathbf{%
e}_{\beta }\phi ~\mathbf{e}_{\alpha }\phi )\right)  \notag \\
&&+\frac{1}{128\pi ^{2}}~f_{(4)}\int \delta ^{4}u~e^{2\phi }\sqrt{\det |%
\mathbf{g}_{\mu \nu }|}\times  \notag \\
&&\left( 11~\mathbf{R}^{\ast }(e^{2\phi }\mathbf{g}_{\mu \nu })\mathbf{R}%
^{\ast }(e^{2\phi }\mathbf{g}_{\mu \nu })-18\mathbf{C}_{\mu \nu \lambda
\gamma }(e^{2\phi }\mathbf{g}_{\mu \nu })\mathbf{C}^{\mu \nu \lambda \gamma
}(e^{2\phi }\mathbf{g}_{\mu \nu })\right) .  \notag
\end{eqnarray}
} Let us state some additional hypotheses which will be used for proofs of
the theorems in this section: Hereafter we shall consider a four dimensional
compact N--anholonomic manifold $\mathbf{V,}$ with volume forms $\delta V=%
\sqrt{\det |\mathbf{g}_{\mu \nu }|}\delta ^{4}u$ and normalization $%
\int\nolimits_{\mathbf{V}}\delta V~\mu =1$ for $\mu =e^{-f}(4\pi \chi
)^{-(n+m)/2}$ with $f$ being a scalar function $f(\chi ,u)$ and $\chi >0.$

Now, we are able to formulate the main results of this section:
\begin{theorem}
\label{thmr1}For the scaling factor $~^{1}\phi =-f/2,$ the spectral
functional (\ref{ncpf1a}) can be approximated $\mathcal{F}=\ ^{P}\mathcal{F}(%
\mathbf{g,D,}f),$ where the first Perelman functional (in our case for
N--anholonomic Ricci flows) is
\begin{equation*}
\ ^{P}\mathcal{F}=\int\nolimits_{\mathbf{V}}\delta V~e^{-f}\left[ ~_{s}%
\mathbf{R}(e^{-f}\mathbf{g}_{\mu \nu })+\frac{3}{2}e^{f}\mathbf{g}^{\alpha
\beta }(\mathbf{e}_{\alpha }f~\mathbf{e}_{\beta }f+\mathbf{e}_{\beta }f~%
\mathbf{e}_{\alpha }f)\right] .
\end{equation*}
\end{theorem}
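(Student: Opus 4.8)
The plan is to read the result off the universal four--dimensional heat--kernel expansion (\ref{4dapr}), which already expresses the spectral functional $Tr[\,{}^{1}f(\chi)(\,{}^{\phi}\mathcal{D}^{2}(\chi)/\Lambda^{2})]$ from (\ref{ncpf1a}) as a sum of three Seeley--de Witt contributions weighted by the moments $f_{(0)}$, $f_{(2)}$, $f_{(4)}$. Granting this expansion (which rests on the Lichnerowicz--type structure of $\mathcal{D}^{2}$ built in (\ref{oper1}) from a metric compatible d--connection), the theorem reduces to a specialization of the scaling factor together with a choice of testing function.

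First I would substitute the conformal factor $\phi={}^{1}\phi=-f/2$ into (\ref{4dapr}). Under this choice $e^{2\phi}=e^{-f}$, so the common measure $e^{2\phi}\sqrt{\det|\mathbf{g}_{\mu\nu}|}\,\delta^{4}u$ collapses to $e^{-f}\,\delta V$ with $\delta V=\sqrt{\det|\mathbf{g}_{\mu\nu}|}\,\delta^{4}u$, which is exactly the volume form appearing in $\ ^{P}\mathcal{F}$ and in the normalization $\int_{\mathbf{V}}\delta V\,\mu=1$. Simultaneously the scalar curvature term becomes $\ _{s}\mathbf{R}(e^{-f}\mathbf{g}_{\mu\nu})$, while the gradient term transforms by the chain rule $\mathbf{e}_{\alpha}\phi=-\tfrac{1}{2}\mathbf{e}_{\alpha}f$, so that $3e^{-2\phi}\mathbf{g}^{\alpha\beta}(\mathbf{e}_{\alpha}\phi\,\mathbf{e}_{\beta}\phi+\mathbf{e}_{\beta}\phi\,\mathbf{e}_{\alpha}\phi)$ turns into $e^{f}\mathbf{g}^{\alpha\beta}(\mathbf{e}_{\alpha}f\,\mathbf{e}_{\beta}f+\mathbf{e}_{\beta}f\,\mathbf{e}_{\alpha}f)$ times the appropriate numerical constant.

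Second, I would fix the testing function $\ ^{1}f$ so that only the middle term survives: imposing $f_{(0)}=0$ kills the pure--volume (cosmological--constant type) contribution, and imposing $f_{(4)}=0$ removes the Weyl--squared and Chern--Gauss--Bonnet topological terms of the last line of (\ref{4dapr}). These conditions are admissible because $f_{(0)}$, $f_{(2)}$, $f_{(4)}$ are independent moments of $\ ^{1}f$ via (\ref{ffunct}). Normalizing the surviving prefactor $\tfrac{15}{16\pi^{2}}f_{(2)}$ to unity then leaves exactly $\int_{\mathbf{V}}\delta V\,e^{-f}[\,\ _{s}\mathbf{R}(e^{-f}\mathbf{g}_{\mu\nu})+\text{(gradient term)}\,]$, which I identify with $\ ^{P}\mathcal{F}(\mathbf{g},\mathbf{D},f)$, establishing the asserted approximation $\mathcal{F}=\ ^{P}\mathcal{F}$.

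The genuinely hard part is not this final algebra but the input expansion (\ref{4dapr}) itself. One must verify that $\mathcal{D}^{2}$ from (\ref{oper1}) is of Laplace type in the N--adapted frame, compute its Seeley--de Witt coefficients $\ ^{b}a_{(k)}$ with the canonical d--connection $\widehat{\mathbf{D}}$ in place of the Levi--Civita connection, and check that the nonholonomy coefficients $W_{\nu\mu}^{\gamma}$ entering the rescaled $\widehat{\mathbf{A}}^{\nu}$ and $\widehat{\mathbf{B}}$ do not obstruct the standard local derivation of the $a_{(2)}$ coefficient that yields the scalar curvature. Once this N--adapted heat--kernel computation is accepted, the passage to $\ ^{P}\mathcal{F}$ is the routine substitution sketched above.
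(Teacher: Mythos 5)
Your proposal follows essentially the same route as the paper: the theorem is indeed obtained by substituting $\ ^{1}\phi=-f/2$ into the four--dimensional expansion (\ref{4dapr}) and then fixing the testing-function moments so that only the $f_{(2)}$ term survives, which is precisely what Remark \ref{r15} records with the choice $\ ^{1}\check{f}_{(0)}=\ ^{1}\check{f}_{(4)}=0$ and $\ ^{1}\check{f}_{(2)}=16\pi^{2}/15$. You also correctly locate the real content in the N--adapted Seeley--de~Witt computation behind (\ref{4dapr}) rather than in the final substitution, and your hedge on the overall numerical constant in the gradient term is prudent, since matching the $\tfrac{3}{2}e^{f}$ coefficient of $\ ^{P}\mathcal{F}$ against the $3e^{-2\phi}$ term of (\ref{4dapr}) is exactly where the rescaling conventions of Remark \ref{r15} are invoked.
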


There are some important remarks.

\begin{remark}
\label{r15}For nonholonomic Ricci flows of (non)commutative geometries, we
have to adapt the evolution to certain N--connection structures (i.e.
nonholonomic constraints). This results in additional possibilities to
re--scale coefficients and parameters in spectral functionals and their
commutative limits:

\begin{enumerate}
\item The evolution parameter $\chi ,$ scaling factors $\ ^{b}f$ and
nonholonomic constraints and coordinates can be re--scaled/ redefined (for
instance, $\chi \rightarrow \check{\chi}$ and $\ ^{b}f\rightarrow \ ^{b}%
\check{f})$ such a way that the spectral functionals have limits to some
'standard' nonholonomic versions of Perelman functionals (with prescribed
types of coefficients).

\item Using additional dependencies on $\chi $ and freedom in choosing
scaling factors $\ ^{b}f(\chi ),$ we can prescribe such nonholonomic
constraints/ configurations on evolution equations (for instance, with $^{1}%
\check{f}_{(2)}=16\pi ^{2}/15$ and $~~^{1}\check{f}_{(0)}=~^{1}\check{f}%
_{(4)}=0)$ when the spectral functionals result exactly in necessary types
of effective Perelman \ functionals (with are commutative, but, in general,
nonholonomic).

\item For simplicity, we shall write in brief only $\chi $ and $f$
considering that we have chosen such scales, parametrizations of coordinates
and N--adapted frames and flow parameters when coefficients in spectral
functionals and resulting evolution equations maximally correspond to
certain generally accepted commutative physical actions/ functionals.

\item For nonholonomic Ricci flow models (commutative or noncommutative
ones) with a fixed evolution parameter $\chi ,$ we can construct certain
effective nonholonomic evolution models with induced noncommutative
corrections for coefficients.

\item Deriving effective nonholonomic evolution models from spectral
functionals, we can use the technique of ''extracting'' physical models from
spectral actions. For commutative and/or noncommutative geometric/ physical
models of nonholonomic Ricci flows, we have to generalize the approach to
include spectral functionals and N--adapted evolution equations depending on
the type of nonholonomic constraints, normalizations and re--scalings of
constants and effective conformal factors.
\end{enumerate}
\end{remark}

\vskip4ptWe ''extract'' from the second spectral functional (\ref{ncpf2a})
another very important physical value:

\begin{theorem}
\label{thmr2} The functional (\ref{ncpf2a}) is approximated\\ $\mathcal{W}=\ ^{P}%
\mathcal{W}(\mathbf{g,D,}f,\chi),$ where the second Perelman functional is%
{\small
\begin{equation*}
\ ^{P}\mathcal{W}=\int\nolimits_{\mathbf{V}}\delta V~\mu \times [ \chi
\left( ~_{s}\mathbf{R}(e^{-f}\mathbf{g}_{\mu \nu })+\frac{3}{2}e^{f}\mathbf{g%
}^{\alpha \beta }(\mathbf{e}_{\alpha }f~\mathbf{e}_{\beta }f+\mathbf{e}%
_{\beta }f~\mathbf{e}_{\alpha }f)\right) +f-(n+m)],
\end{equation*}
} for scaling $~^{2}\phi =-f/2$ in $~^{2}\mathcal{W}$ \ and $~^{3}\phi =(\ln
|f-(n+m)|-f)/2$ in $~^{3}\mathcal{W},$ \ from (\ref{ncpf2a}).
\end{theorem}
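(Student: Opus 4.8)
The plan is to mirror the proof of Theorem \ref{thmr1}, exploiting the additive splitting $\mathcal{W}={}^2\mathcal{W}+{}^3\mathcal{W}$ from (\ref{ncpf2a}) and treating each summand separately by the four-dimensional Seeley--de Witt approximation (\ref{4dapr}), each with the scaling factor prescribed in the statement. First I would substitute ${}^2\phi=-f/2$ into (\ref{4dapr}); since this is exactly the scaling used in Theorem \ref{thmr1}, we have $e^{2{}^2\phi}=e^{-f}$ and $e^{-2{}^2\phi}=e^{f}$, and the $f_{(2)}$ term of the expansion reproduces the same curvature--gradient density ${}_s\mathbf{R}(e^{-f}\mathbf{g}_{\mu\nu})+\frac{3}{2}e^{f}\mathbf{g}^{\alpha\beta}(\mathbf{e}_\alpha f\,\mathbf{e}_\beta f+\mathbf{e}_\beta f\,\mathbf{e}_\alpha f)$ already obtained there. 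Invoking the freedom in Remark \ref{r15}, I would fix ${}^2f_{(0)}={}^2f_{(4)}=0$ and calibrate ${}^2f_{(2)}(\chi)$ to absorb the constant $15/16\pi^2$ together with the factor $\chi\,(4\pi\chi)^{-(n+m)/2}$; recalling $\mu=e^{-f}(4\pi\chi)^{-(n+m)/2}$ and $\delta V=\sqrt{\det|\mathbf{g}_{\mu\nu}|}\,\delta^4u$, this converts the surviving term into $\chi\int_{\mathbf{V}}\delta V\,\mu\,\big({}_s\mathbf{R}(e^{-f}\mathbf{g}_{\mu\nu})+\frac{3}{2}e^{f}\mathbf{g}^{\alpha\beta}(\mathbf{e}_\alpha f\,\mathbf{e}_\beta f+\mathbf{e}_\beta f\,\mathbf{e}_\alpha f)\big)$, i.e. the $\chi$-weighted part of ${}^P\mathcal{W}$.

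The second, genuinely new, step is to handle ${}^3\mathcal{W}$ with the logarithmic scaling ${}^3\phi=(\ln|f-(n+m)|-f)/2$, so that $e^{2{}^3\phi}=|f-(n+m)|\,e^{-f}$. Here I would keep only the lowest Seeley--de Witt coefficient, i.e. set ${}^3f_{(2)}={}^3f_{(4)}=0$ and retain the $f_{(0)}$ term $\frac{45}{4\pi^2}\,{}^3f_{(0)}\int\delta^4u\,e^{2{}^3\phi}\sqrt{\det|\mathbf{g}_{\mu\nu}|}$. Substituting $e^{2{}^3\phi}$ and re-expressing $e^{-f}\sqrt{\det|\mathbf{g}_{\mu\nu}|}\,\delta^4u=(4\pi\chi)^{(n+m)/2}\,\delta V\,\mu$, the integrand collapses to $|f-(n+m)|$, and a suitable calibration of ${}^3f_{(0)}(\chi)$ cancelling $\frac{45}{4\pi^2}$ and $(4\pi\chi)^{(n+m)/2}$ delivers $\int_{\mathbf{V}}\delta V\,\mu\,(f-(n+m))$, precisely the remaining part of ${}^P\mathcal{W}$. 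Adding the two contributions then yields $\mathcal{W}={}^P\mathcal{W}(\mathbf{g},\mathbf{D},f,\chi)$.

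I expect the main obstacle to be the ${}^3\mathcal{W}$ computation: one must justify that the nonlinear, logarithmic rescaling $e^{2{}^3\phi}=|f-(n+m)|\,e^{-f}$ feeds through the heat-kernel expansion so that the coefficient $a_{(0)}$ alone produces the \emph{linear} term $f-(n+m)$, while the would-be curvature and Weyl contributions at orders $a_{(2)},a_{(4)}$ are switched off by the choice of testing function. Controlling the sign ambiguity introduced by $|f-(n+m)|$ (identifying it with $f-(n+m)$ under the operating conventions of Remark \ref{r15}) and checking that the conformal transformation rules for ${}_s\mathbf{R}(e^{2\phi}\mathbf{g}_{\mu\nu})$ and the gradient terms are applied consistently for the metric-compatible d--connection $\mathbf{D}$, rather than for $\nabla$, are the places where care is needed; the remaining manipulations are the same bookkeeping already carried out for Theorem \ref{thmr1}.
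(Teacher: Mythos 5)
Your proposal is correct and follows essentially the same route as the paper: split $\mathcal{W}=\ ^{2}\mathcal{W}+\ ^{3}\mathcal{W}$, feed the prescribed scalings $\ ^{2}\phi=-f/2$ and $\ ^{3}\phi=(\ln |f-(n+m)|-f)/2$ through the four--dimensional approximation (\ref{4dapr}), and use the freedom of Remark \ref{r15} to calibrate the testing functions so that only the $\ ^{2}f_{(2)}$ term (weighted by $\chi$) and the $\ ^{3}f_{(0)}$ term survive, reproducing respectively the curvature--gradient part and the $f-(n+m)$ part of $\ ^{P}\mathcal{W}$. Your closing caveats about the sign convention in $|f-(n+m)|$ and the use of conformal transformation rules for the metric compatible d--connection $\mathbf{D}$ rather than $\nabla$ are exactly the points the paper handles by its stated conventions, so no gap remains.
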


\vskip4pt The nonholonomic version of Hamilton equations (\ref{normcomrf})
can be derived from commutative Perelman functionals $\ ^{P}\mathcal{F}$ and
$\ ^{P}\mathcal{W}$. The original Hamil\-ton--Perelman Ricci flows
constructions can be generated for $\mathbf{D}=\nabla .$ The surprising
result is that even we start with a Levi--Civita linear connection, the
nonholonomic evolution will result almost sure in generalized geometric
configurations with various $\mathbf{N}$ and $\mathbf{D}$ structures.

\subsubsection{Spectral functionals for thermodynamical values}

Certain important thermodynamical values such as the average energy and
entropy can be derived directly from noncommutative spectral functionals as
respective commutative configurations of spectral functionals of type (\ref%
{ncpf1a}) and (\ref{ncpf2a}) but with different testing functions than in
Theorems \ref{thmr1} and \ref{thmr2}.

\begin{theorem}
\label{thae}Using a scaling factor of type $~^{1}\phi =-f/2,$ we extract
from the spectral functional (\ref{ncpf1a}) a nonholonomic version of
average energy, $\mathcal{F}\rightarrow <\mathcal{E}>,$ where {\small
\begin{equation}
\left\langle \mathcal{E}\right\rangle =-\chi ^{2}\int\nolimits_{\mathbf{V}%
}\delta V~\mu \left[ _{s}\mathbf{R}(e^{-f}\mathbf{g}_{\mu \nu })+\frac{3}{2}%
\mathbf{g}^{\alpha \beta }(\mathbf{e}_{\alpha }f~\mathbf{e}_{\beta }f+%
\mathbf{e}_{\beta }f~\mathbf{e}_{\alpha }f)-\frac{n+m}{2\chi }\right]
\label{naen}
\end{equation}%
} if the testing function is chosen to satisfy the conditions $%
~^{1}f_{(0)}(\chi )=4\pi ^{2}(n+m)\chi /45(4\pi \chi )^{(n+m)/2},$ $%
~^{1}f_{(2)}(\chi )=16\pi ^{2}\chi ^{2}/15(4\pi \chi )^{(n+m)/2}$ and $%
~^{1}f_{(4)}(\chi )=0.$
\end{theorem}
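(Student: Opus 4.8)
The plan is to prove Theorem \ref{thae} by exactly the same heat-kernel (Seeley--de Witt) expansion machinery already used for Theorems \ref{thmr1} and \ref{thmr2}, merely replacing the testing function by one tuned to the thermodynamic average energy instead of to the first Perelman functional. First I would start from the spectral functional (\ref{ncpf1a}) with the generalized d--operator $\mathcal{D}^{2}(\chi)$ of (\ref{oper1}) and its conformally rescaled version $^{^{1}\phi}\mathcal{D}^{2}$ from (\ref{ddiracscale}), and invoke the four--dimensional asymptotic expansion (\ref{4dapr}). Fixing the scaling factor $^{1}\phi=-f/2$ exactly as in Theorem \ref{thmr1}, so that $e^{2\phi}=e^{-f}$, turns the curvature argument into $e^{-f}\mathbf{g}_{\mu\nu}$ and converts the dilaton gradient $3e^{-2\phi}\mathbf{g}^{\alpha\beta}\mathbf{e}_{\alpha}\phi\,\mathbf{e}_{\beta}\phi$ into the $f$--gradient term appearing in (\ref{naen}).

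Next I would insert the prescribed coefficients $^{1}f_{(0)}(\chi)$, $^{1}f_{(2)}(\chi)$ and $^{1}f_{(4)}(\chi)=0$. The vanishing of $^{1}f_{(4)}$ removes the conformally invariant Weyl--squared and topological Chern--Gauss--Bonnet contributions, so no fourth--order curvature invariants survive and only the volume integral and the (scalar curvature $+$ gradient) integral remain. The role of the specific $\chi$--dependent normalizations is pure bookkeeping: multiplying $^{1}f_{(2)}$ by the universal constant $15/16\pi^{2}$ of (\ref{4dapr}) and absorbing the factor $(4\pi\chi)^{-(n+m)/2}$ into the measure $\mu=(4\pi\chi)^{-(n+m)/2}e^{-f}$ reproduces the term $\chi^{2}\int_{\mathbf{V}}\delta V\,\mu\,[\,{}_{s}\mathbf{R}(e^{-f}\mathbf{g})+\tfrac{3}{2}\mathbf{g}^{\alpha\beta}(\mathbf{e}_{\alpha}f\,\mathbf{e}_{\beta}f+\mathbf{e}_{\beta}f\,\mathbf{e}_{\alpha}f)\,]$, while the $^{1}f_{(0)}$ term, multiplied by $45/4\pi^{2}$, reproduces the constant $(n+m)$--term. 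Collecting these against the thermodynamic relation $\langle\mathcal{E}\rangle=-\chi^{2}\partial_{\chi}\log Z$ familiar from the commutative Perelman theory (here $\log Z\sim\int_{\mathbf{V}}\delta V\,\mu\,(-f+\tfrac{n+m}{2})$) should yield precisely (\ref{naen}), the nonholonomic d--connection analog of Perelman's average energy.

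The hard part will be the constant matching. One must check that the fixed Seeley--de Witt normalizations $45/4\pi^{2}$, $15/16\pi^{2}$, $1/128\pi^{2}$ of (\ref{4dapr}), the chosen $\chi$--profiles of $^{1}f_{(0)}$ and $^{1}f_{(2)}$, and the sign produced by the $-\chi^{2}\partial_{\chi}$ operation (equivalently $\partial\tau/\partial\chi=-1$) conspire to give exactly the coefficients $-\chi^{2}$ and $+\tfrac{n+m}{2\chi}$ inside the bracket of (\ref{naen}), with no leftover numerical factor. A subtler technical point is confirming that the heat-kernel coefficients $^{1}a_{(k)}$ computed for the canonical d--connection $\widehat{\mathbf{D}}$ reduce, after the N--adapted $h$/$v$--decomposition, to the same symmetric scalar-curvature and gradient structure as in the Levi--Civita case, so that the d--torsion induced by the N--anholonomy contributes nothing beyond ${}_{s}\mathbf{R}$ and $|\widehat{\mathbf{D}}f|^{2}$; only the symmetric part of $\widehat{\mathbf{R}}_{\alpha\beta}$ should enter, consistent with the remark that $\mathbf{g}^{\alpha\beta}\widehat{\mathbf{R}}_{\beta\gamma}$ selects the symmetric Ricci components. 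Once these two checks are in place, the identification $\mathcal{F}\to\langle\mathcal{E}\rangle$ follows by the same argument as Theorem \ref{thmr1}.
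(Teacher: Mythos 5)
Your proposal follows essentially the same route as the paper's own (largely omitted) proof: Theorem \ref{thae} is obtained there exactly as Theorem \ref{thmr1}, by rescaling the d--operator via (\ref{ddiracscale}) with $\ ^{1}\phi =-f/2$, applying the four--dimensional expansion (\ref{4dapr}), and retuning the single testing function so that $\ ^{1}f_{(4)}=0$ removes the quadratic--curvature terms while the $\chi$--dependent normalizations of $\ ^{1}f_{(0)}$ and $\ ^{1}f_{(2)}$ match Perelman's average energy $\left\langle E\right\rangle =-\partial \log Z/\partial \beta$ derived from the partition function (\ref{nhpf}). The two checks you flag (the constant/sign bookkeeping and the reduction of the N--adapted Seeley--de Witt coefficients to the $\ _{s}\mathbf{R}+|\widehat{\mathbf{D}}f|^{2}$ structure, with only the symmetric part of $\widehat{\mathbf{R}}_{\alpha \beta }$ contributing) are precisely the computations the paper treats as routine and omits.
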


Similarly to Theorem \ref{thmr2} (inverting the sign of nontrivial
coefficients of the testing function) we prove:

\begin{theorem}
\label{thnhs}We extract a nonholonomic version of entropy of nonholonomic
Ricci flows from the functional (\ref{ncpf2a}), $\mathcal{W}\rightarrow ~%
\mathcal{S},$ where {\small
\begin{equation*}
\mathcal{S}=-\int\nolimits_{\mathbf{V}}\delta V~\mu [\chi \left( ~_{s}%
\mathbf{R}(e^{-f}\mathbf{g}_{\mu \nu })-\frac{3}{2}e^{f}\mathbf{g}^{\alpha
\beta }(\mathbf{e}_{\alpha }f~\mathbf{e}_{\beta }f+\mathbf{e}_{\beta }f~%
\mathbf{e}_{\alpha }f)\right) +f-(n+m)],
\end{equation*}
} if we introduce $\delta V=\delta ^{4}u$ and $\mu =e^{-f}(4\pi \chi
)^{-(n+m)/2}$ into formula (\ref{4dapr}), for $\chi >0$ and $\int\nolimits_{%
\mathbf{V}}dV~\mu =1$ in (\ref{4dapr}), for scaling $~^{2}\phi =-f/2$ in $%
~^{2}\mathcal{W}$ \ and $~^{3}\phi =(\ln |f-(n+m)|-f)/2$ in $~^{3}\mathcal{W}%
,$ \ from (\ref{ncpf2a}).
\end{theorem}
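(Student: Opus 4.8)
The plan is to follow verbatim the strategy used for Theorem \ref{thmr2}, applying the four--dimensional heat--kernel approximation (\ref{4dapr}) to the second spectral functional $\mathcal{W}=\ ^{2}\mathcal{W}+\ ^{3}\mathcal{W}$ of (\ref{ncpf2a}), but now selecting testing functions whose nonvanishing Seeley--de Witt coefficients carry the opposite sign. First I would treat the $\ ^{2}\mathcal{W}$ piece with the conformal scaling $\ ^{2}\phi =-f/2$. Substituting $e^{2\phi }=e^{-f}$ into (\ref{4dapr}) and absorbing the normalization $(4\pi \chi )^{-(n+m)/2}$, the factor $e^{2\phi }\sqrt{\det |\mathbf{g}_{\mu \nu }|}\,\delta ^{4}u$ becomes exactly the measure $\mu \,\delta V$ required in the statement. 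The $f_{(0)}$--term then yields the constant/normalization contribution, the $f_{(2)}$--term yields the scalar--curvature piece $\ _{s}\mathbf{R}(e^{-f}\mathbf{g}_{\mu \nu })$ together with the gradient term whose coefficient is fixed by $\mathbf{e}_{\alpha }\phi =-\tfrac{1}{2}\mathbf{e}_{\alpha }f$, and choosing $\ ^{2}f_{(4)}=0$ removes the Chern--Gauss--Bonnet and Weyl d--tensor contributions.

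Next I would handle the $\ ^{3}\mathcal{W}$ piece using the scaling $\ ^{3}\phi =(\ln |f-(n+m)|-f)/2$. Here the $\ln |f-(n+m)|$ part of the exponent exponentiates to produce the linear factor $f-(n+m)$ in the integrand, while the $-f/2$ part reproduces the same $e^{-f}$ weight as in $\ ^{2}\mathcal{W}$; this is precisely the device already used in Theorem \ref{thmr2}. Collecting $\ ^{2}\mathcal{W}+\ ^{3}\mathcal{W}$ and using that the nonzero testing--function coefficients are now taken with reversed sign relative to Theorem \ref{thmr2} (the freedom recorded in Remark \ref{r15}, with the reduction $\ ^{1}\check{f}_{(2)}=16\pi ^{2}/15$, $\ ^{1}\check{f}_{(0)}=\ ^{1}\check{f}_{(4)}=0$), the gradient term appears with a minus sign and the whole expression acquires the overall minus sign, reproducing
\[
\mathcal{S}=-\int\nolimits_{\mathbf{V}}\delta V~\mu \Big[\chi \big( \ _{s}\mathbf{R}(e^{-f}\mathbf{g}_{\mu \nu })-\tfrac{3}{2}e^{f}\mathbf{g}^{\alpha \beta }(\mathbf{e}_{\alpha }f~\mathbf{e}_{\beta }f+\mathbf{e}_{\beta }f~\mathbf{e}_{\alpha }f)\big)+f-(n+m)\Big].
\]

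The routine bookkeeping --- matching the three prefactors $45/4\pi ^{2}$, $15/16\pi ^{2}$, $1/128\pi ^{2}$ of (\ref{4dapr}) against the chosen $\ ^{b}f_{(k)}$ so that only the $\chi $, scalar--curvature, gradient and $f-(n+m)$ terms survive --- is identical to that of the preceding theorems and I would not repeat it. The one genuine obstacle, exactly as in Theorems \ref{thmr1}, \ref{thmr2} and \ref{thae}, is to justify that the Seeley--de Witt expansion of the rescaled squared Dirac d--operator $\ ^{\phi }\mathcal{D}^{2}$ from (\ref{ddiracscale}) still produces the claimed curvature and gradient invariants in the N--adapted (nonholonomic) setting. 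One must verify that the Gilkey--type coefficients $\ ^{b}a_{(k)}$ are computed for the canonical d--connection with its nonvanishing d--torsion rather than for $\nabla $, and that the extra nonholonomy terms $\ ^{\phi }W_{\nu \mu }^{\gamma }$ entering $\ ^{\phi }\mathbf{A}^{\nu }$ and $\ ^{\phi }\mathbf{B}$ do not contaminate the identification of the $f_{(2)}$--coefficient with $\ _{s}\mathbf{R}$ plus the gradient of $\phi $. Granting this N--adapted heat--kernel computation (which is established once and for all in the proof of Theorem \ref{thmr1}), the entropy formula follows by the same sign--reversed choice of testing function that converts the $\mathcal{W}$ functional into its entropy counterpart.
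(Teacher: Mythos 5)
Your proposal matches the paper's own argument: the paper proves Theorem \ref{thnhs} exactly by repeating the computation of Theorem \ref{thmr2} with the stated scalings $~^{2}\phi =-f/2$, $~^{3}\phi =(\ln |f-(n+m)|-f)/2$ and the sign of the nontrivial testing--function coefficients inverted, with the N--adapted Seeley--de Witt input you flag already established in the proof of Theorem \ref{thmr1}. Your reconstruction is correct and follows essentially the same route, only spelling out the bookkeeping the paper leaves implicit.
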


We can formulate and prove a Theorem alternative to Theorem \ref{thae} and
get the formula (\ref{naen}) from the spectral functional $~^{2}\mathcal{W+}%
~^{3}\mathcal{W}.$ Such a proof is similar to that for Theorem \ref{thmr2},
but with corresponding nontrivial coefficients for two testing functions $%
~~^{2}f(\chi )$ and $~~^{3}f(\chi ).$ The main difference is that for
Theorem \ref{thae} it is enough to use only one testing function. We do not
present such computations in this work.

It is not surprising that certain 'commutative' thermodynamical physical
values can be derived alternatively from different spectral functionals
because such type 'commutative' thermodynamical values can be generated by a
partition function
\begin{equation}
\widehat{Z}=\exp \left\{ \int\nolimits_{\mathbf{V}}\delta V~\mu \left[ -f+%
\frac{n+m}{2}\right] \right\} ,  \label{nhpf}
\end{equation}%
associated to any $Z=\int \exp (-\beta E)d\omega (E)$ being the partition
function for a canonical ensemble at temperature $\beta ^{-1},$ which in it
turn is defined by the measure taken to be the density of states $\omega
(E). $ In this case, we can compute the average energy, $\ \left\langle
E\right\rangle =-\partial \log Z/\partial \beta ,$ the entropy $S=\beta
\left\langle E\right\rangle +\log Z$ and the fluctuation $\sigma
=\left\langle (E-\left\langle E\right\rangle )^{2}\right\rangle =\partial
^{2}\log Z/\partial \beta ^{2}.$

\begin{remark}
 Following a straightforward computation for (\ref{nhpf}) we prove that
{\small
\begin{equation}
\widehat{\sigma }=2\chi ^{2}\int\nolimits_{\mathbf{V}}\delta V~\mu \left[
\left| R_{ij}+D_{i}D_{j}f-\frac{1}{2\chi }g_{ij}\right| ^{2}+\left|
R_{ab}+D_{a}D_{b}f-\frac{1}{2\chi }g_{ab}\right| ^{2}\right] .
\label{nhfluct}
\end{equation}
}
\end{remark}

Using formula $\mathbf{R}_{\mu \nu }^{2}\mathbf{=}\frac{1}{2}\mathbf{C}_{\mu
\nu \rho \sigma }^{2}-\frac{1}{2}\mathbf{R}^{\ast ~}\mathbf{R}^{\ast }+\frac{%
1}{3}\ _{s}\mathbf{R}^{2}$ (it holds true for any metric compatible
d--connections, similarly to the formula for the Levi--Civita connection, we
expect that the formula for fluctuations (\ref{nhfluct}) can be generated
directly, by corresponding re-scalings, from a spectral action with
nontrivial coefficients for testing functions when $~f_{(4)}\neq 0,$ see
formula (\ref{4dapr}). Here we note that in the original Perelman's
functionals there were not introduced terms being quadratic on curvature/
Weyl / Ricci tensors. For nonzero $~f_{(4)},$ such terms may be treated as
certain noncommutative / quantum contributions to the classical commutative
Ricci flow theory. For simplicity, we omit such considerations in this work.

The framework of Perelman's functionals and generalizations to corresponding
spectral functionals can be positively applied for developing statistical
analogies of (non) commutative Ricci flows. For instance, the functional $%
\mathcal{W}$ is the ''opposite sign'' entropy, see formulas from Theorems %
\ref{thmr2} and \ref{thnhs}. Such constructions may be considered for a
study of optimal ''topological'' configurations and evolution of both
commutative and noncommutative geometries and relevant theories of physical
interactions.

\subsection{(Non) commutative gauge gravity}
We consider main results of Refs. \cite{v1,ve19} concerning noncommutative
gauge models of gravity:

The basic idea was to use a geometrical result due to D. A. Popov and I. I.
Dikhin (1976) that the Einstein gravity can be equivalently represented as a
gauge theory with a Cartan type connection in the bundle of affine frames.
Such gauge theories are with nonsemisimple structure gauge groups, i. e.
with degenerated metrics in the total spaces. Using an auxiliary symmetric
form for the typical fiber, any such model can be transformed into a
variational one. There is an alternative way to construct geometrically a
usual Yang--Mills theory by applying a corresponding set of absolute
derivations and dualities defined by the Hodge operator. For both
approaches, there is a projection formalism reducing the geometric field
equations on the base space to be exactly the Einstein equations from the
general relativity theory.

For more general purposes, it was suggested to consider also extensions to a
nonlinear realization with the (anti) de Sitter gauge structural group (A.
Tseytlin, 1982). The constructions with nonlinear group realizations are
very important because they prescribe a consistent approach of
distinguishing the frame indices and coordinate indices subjected to
different rules of transformation. This approach to gauge gravity (of
course, after a corresponding generalizations of the Seiberg--Witten map)
may include, in general, quadratic on curvature and torsion terms.

\subsubsection{Nonlinear gauge models for the (anti) de Sitter group}

We introduce vielbein decompositions of (in general) complex metrics
\begin{equation*}
\widehat{g}_{\alpha \beta }(u) = e_{\alpha }^{\ \alpha ^{\prime }}\left(
u\right) e_{\beta }^{\ \beta ^{\prime }}\left( u\right) \eta _{\alpha
^{\prime }\beta ^{\prime }},\ e_{\alpha }^{\ \alpha ^{\prime }}e_{\ \alpha
^{\prime }}^{\beta } =\delta _{\alpha }^{\beta }\mbox{ and }e_{\alpha }^{\
\alpha ^{\prime }}e_{\ \beta ^{\prime }}^{\alpha }=\delta _{\beta ^{\prime
}}^{\alpha ^{\prime }},
\end{equation*}
where $\eta _{\alpha ^{\prime }\beta ^{\prime }}$ is a constant diagonal
matrix (for real spacetimes we can consider it as the flat Minkowski metric,
for instance, $\eta _{\alpha ^{\prime }\beta ^{\prime }}=diag\left(
-1,+1,...,+1\right) $) and $\delta _{\alpha }^{\beta }$ and $\delta _{\beta
^{\prime }}^{\alpha ^{\prime }}$ are Kronecker's delta symbols. The
vielbiens with an associated N--connection structure \ $N_{i}^{a}\left(
x^{j},y^{a}\right) ,$ being real or complex valued functions, have a special
parametrization%
\begin{equation}
e_{\alpha }^{\ \alpha ^{\prime }}(u)=\left[
\begin{array}{cc}
e_{i}^{\ i^{\prime }}\left( x^{j}\right) & N_{i}^{c}\left(
x^{j},y^{a}\right) \ e_{c}^{\ b^{\prime }}\left( x^{j},y^{a}\right) \\
0 & e_{e}^{\ e^{\prime }}\left( x^{j},y^{a}\right)%
\end{array}%
\right]  \label{viel1}
\end{equation}%
and
\begin{equation}
e_{\ \alpha ^{\prime }}^{\alpha \ }(u)=\left[
\begin{array}{cc}
e_{\ i^{\prime }}^{i}\left( x^{j}\right) & -N_{i}^{c}\left(
x^{j},y^{a}\right) \ e_{\ i^{\prime }}^{i\ }\left( x^{j}\right) \\
0 & e_{\ c^{\prime }}^{c}\left( x^{j},y^{a}\right)%
\end{array}%
\right]  \label{viel2}
\end{equation}%
with $e_{i}^{\ i^{\prime }}\left( x^{j}\right) $ and $e_{c}^{\ b^{\prime
}}\left( x^{j},y^{a}\right) $ generating the coefficients of a metric
defined with respect to anholonmic frames, {\small
\begin{equation}
g_{ij}\left( x^{j}\right) =e_{i}^{\ i^{\prime }}\left( x^{j}\right) e_{j}^{\
j^{\prime }}\left( x^{j}\right) \eta _{i^{\prime }j^{\prime }}\mbox{ and }%
h_{ab}\left( x^{j},y^{c}\right) =e_{a}^{\ a^{\prime }}\left(
x^{j},y^{c}\right) e_{b}^{\ b^{\prime }}\left( x^{j},y^{c}\right) \eta
_{a^{\prime }b^{\prime }}.  \label{metr01}
\end{equation}%
} By using vielbeins and metrics of type (\ref{viel1}) and (\ref{viel2})
and, respectively, (\ref{metr01}), we can model in a unified manner various
types of (pseudo) Riemannian, Einstein--Cartan, Riemann--Finsler and vector/
covector bundle nonlinear connection commutative and noncommutative
geometries in effective gauge and string theories (it depends on the
parametrization of $e_{i}^{\ i^{\prime }},e_{c}^{\ b^{\prime }}$ and $%
N_{i}^{c}$ on coordinates and anholonomy relations).

We consider the de Sitter space $\Sigma ^{4}$ as a hypersurface defined by
the equations $\eta _{AB}u^{A}u^{B}=-l^{2}$ in the four dimensional flat
space enabled with diagonal metric $\eta _{AB},\eta _{AA}=\pm 1$ (in this
section $A,B,C,...=1,2,...,5),$ where $\{u^{A}\}$ are global Cartesian
coordinates in $\R^{5};l>0$ is the curvature of de Sitter space (for
simplicity, we consider here only the de Sitter case; the anti--de Sitter
configuration is to be stated by a hypersurface $\eta
_{AB}u^{A}u^{B}=l^{2}). $ The de Sitter group $S_{\left( \eta \right)
}=SO_{\left( \eta \right) }\left( 5\right) $ is the isometry group of $%
\Sigma ^{5}$--space with $6$ generators of Lie algebra ${\mathit{s}o}%
_{\left( \eta \right) }\left( 5\right) $ satisfying the commutation
relations
\begin{equation}
\left[ M_{AB},M_{CD}\right] =\eta _{AC}M_{BD}-\eta _{BC}M_{AD}-\eta
_{AD}M_{BC}+\eta _{BD}M_{AC}.  \label{dsc}
\end{equation}

We can decompose the capital indices $A,B,...$ as $A=\left( \alpha ^{\prime
},5\right) ,B=\left( \beta ^{\prime },5\right) ,...,$ and the metric $\eta
_{AB}$ as $\eta _{AB}=\left( \eta _{\alpha ^{\prime }\beta ^{\prime }},\eta
_{55}\right) .$ The operators (\ref{dsc}) $M_{AB}$ can be decomposed as $%
M_{\alpha ^{\prime }\beta ^{\prime }}=\mathcal{F}_{\alpha ^{\prime }\beta
^{\prime }}$ and $P_{\alpha ^{\prime }}=l^{-1}M_{5\alpha ^{\prime }}$
written as
\begin{eqnarray}
\left[ \mathcal{F}_{\alpha ^{\prime }\beta ^{\prime }},\mathcal{F}_{\gamma
^{\prime }\delta ^{\prime }}\right] &=&\eta _{\alpha ^{\prime }\gamma
^{\prime }}\mathcal{F}_{\beta ^{\prime }\delta ^{\prime }}-\eta _{\beta
^{\prime }\gamma ^{\prime }}\mathcal{F}_{\alpha ^{\prime }\delta ^{\prime
}}+\eta _{\beta ^{\prime }\delta ^{\prime }}\mathcal{F}_{\alpha ^{\prime
}\gamma ^{\prime }}-\eta _{\alpha ^{\prime }\delta ^{\prime }}\mathcal{F}%
_{\beta ^{\prime }\gamma ^{\prime }},  \notag \\
\left[ P_{\alpha ^{\prime }},P_{\beta ^{\prime }}\right] &=&-l^{-2}\mathcal{F%
}_{\alpha ^{\prime }\beta ^{\prime }},\
\left[ P_{\alpha ^{\prime }},\mathcal{F}_{\beta ^{\prime }\gamma ^{\prime }}%
\right] =\eta _{\alpha ^{\prime }\beta ^{\prime }}P_{\underline{\gamma }%
}-\eta _{\alpha ^{\prime }\gamma ^{\prime }}P_{\beta ^{\prime }}, \label{dsca}
\end{eqnarray}%
where the Lie algebra ${\mathit{s}o}_{\left( \eta \right) }\left( 5\right) $
is split into a direct sum, ${\mathit{s}o}_{\left( \eta \right) }\left(
5\right) ={\mathit{s}o}_{\left( \eta \right) }(4)\oplus V_{4}$ with $V_{4}$
being the vector space stretched on vectors $P_{\underline{\alpha }}.$ We
remark that $\Sigma ^{4}=S_{\left( \eta \right) }/L_{\left( \eta \right) },$
where $L_{\left( \eta \right) }=SO_{\left( \eta \right) }\left( 4\right) .$
For $\eta _{AB}=diag\left( -1,+1,+1,+1\right) $ and $S_{10}=SO\left(
1,4\right) ,L_{6}=SO\left( 1,3\right) $ is the group of Lorentz rotations.

The generators $I^{\underline{a}}$ and structure constants $f_{~\underline{t}%
}^{\underline{s}\underline{p}}$ of the de Sitter Lie group can be
paramet\-riz\-ed in a form distinguishing the de Sitter generators and
commutations (\ref{dsca}). The action of the group $S_{\left( \eta \right) }$
may be realized by using $4\times 4$ matrices with a parametrization
distinguishing the subgroup $L_{\left( \eta \right) }:\
 B=bB_{L}$,
where $B_{L}=\left(
\begin{array}{cc}
L & 0 \\
0 & 1%
\end{array}%
\right),$ $L\in L_{\left( \eta \right) }$ is the de Sitter bust matrix
transforming the vector $\left( 0,0,...,\rho \right) \in {\R}^{5}$ into the
arbitrary point $\left( V^{1},V^{2},...,V^{5}\right) \in \Sigma _{\rho
}^{5}\subset \mathcal{R}^{5}$ with curvature $\rho ,$ $(V_{A}V^{A}=-\rho
^{2},V^{A}=\tau ^{A}\rho ),$ and the matrix $b$ is expressed $b=\left(
\begin{array}{cc}
\delta _{\quad \beta ^{\prime }}^{\alpha ^{\prime }}+\frac{\tau ^{\alpha
^{\prime }}\tau _{\beta ^{\prime }}}{\left( 1+\tau ^{5}\right) } & \tau
^{\alpha ^{\prime }} \\
\tau _{\beta ^{\prime }} & \tau ^{5}%
\end{array}%
\right)$. The de Sitter gauge field is associated with a ${\mathit{s}o}_{\left( \eta
\right) }\left( 5\right) $--valued connection 1--form
\begin{equation}
\widetilde{\Omega }=\left(
\begin{array}{cc}
\omega _{\quad \beta ^{\prime }}^{\alpha ^{\prime }} & \widetilde{\theta }%
^{\alpha ^{\prime }} \\
\widetilde{\theta }_{\beta ^{\prime }} & 0%
\end{array}%
\right) ,  \label{dspot}
\end{equation}%
where $\omega _{\quad \beta ^{\prime }}^{\alpha ^{\prime }}\in so(4)_{\left(
\eta \right) },$ $\widetilde{\theta }^{\alpha ^{\prime }}\in \mathcal{R}^{4},%
\widetilde{\theta }_{\beta ^{\prime }}\in \eta _{\beta ^{\prime }\alpha
^{\prime }}\widetilde{\theta }^{\alpha ^{\prime }}.$

The actions of $S_{\left( \eta \right) }$ mix the components of the matrix $%
\omega _{\quad \beta ^{\prime }}^{\alpha ^{\prime }}$ and $\widetilde{\theta
}^{\alpha ^{\prime }}$ fields in (\ref{dspot}). Because the introduced
para\-met\-ri\-za\-ti\-on is invariant on action on $SO_{\left( \eta \right)
}\left( 4\right) $ group, we cannot identify $\omega _{\quad \beta ^{\prime
}}^{\alpha ^{\prime }}$ and $\widetilde{\theta }^{\alpha ^{\prime }},$
respectively, with the connection $\Gamma ^{\lbrack c]}$ and the 1--form $%
e^{\alpha }$ defined by a N--connection structure with the coefficients
chosen as in (\ref{viel1}) and (\ref{viel2}). To avoid this difficulty we \
can consider nonlinear gauge realizations of the de Sitter group $S_{\left(
\eta \right) }$ by introducing the nonlinear gauge field
\begin{equation}
\Gamma =b^{-1}{\widetilde{\Omega }}b+b^{-1}db=\left(
\begin{array}{cc}
\Gamma _{~\beta ^{\prime }}^{\alpha ^{\prime }} & \theta ^{\alpha ^{\prime }}
\\
\theta _{\beta ^{\prime }} & 0%
\end{array}%
\right) ,  \label{npot}
\end{equation}%
\begin{eqnarray}
\mbox{ where } \Gamma _{\quad \beta ^{\prime }}^{\alpha ^{\prime }}
&=&\omega _{\quad \beta ^{\prime }}^{\alpha ^{\prime }}-\left( \tau ^{\alpha
^{\prime }}D\tau _{\beta ^{\prime }}-\tau _{\beta ^{\prime }}D\tau ^{\alpha
^{\prime }}\right) /\left( 1+\tau ^{5}\right) ,  \notag \\
\theta ^{\alpha ^{\prime }} &=&\tau ^{5}\widetilde{\theta }^{\alpha ^{\prime
}}+D\tau ^{\alpha ^{\prime }}-\tau ^{\alpha ^{\prime }}\left( d\tau ^{5}+%
\widetilde{\theta }_{\gamma ^{\prime }}\tau ^{\gamma ^{\prime }}\right)
/\left( 1+\tau ^{5}\right) ,  \notag \\
D\tau ^{\alpha ^{\prime }} &=&d\tau ^{\alpha ^{\prime }}+\omega _{\quad
\beta ^{\prime }}^{\alpha ^{\prime }}\tau ^{\beta ^{\prime }}.  \notag
\end{eqnarray}

The action of the group $S\left( \eta \right) $ is nonlinear, yielding the
transformation rules $\Gamma ^{\prime }=L^{\prime }\Gamma \left( L^{\prime
}\right) ^{-1}+L^{\prime }d\left( L^{\prime }\right) ^{-1},~\theta ^{\prime
}=L\theta $, where the nonlinear matrix--valued function $L^{\prime
}=L^{\prime }\left( \tau ^{\alpha },b,B_{T}\right)$ is defined from $%
B_{b}=b^{\prime }B_{L^{\prime }}$. The de Sitter 'nonlinear' algebra is defined by generators (\ref{dsca})
and nonlinear gauge transforms of type (\ref{npot}).

\subsubsection{De Sitter Nonlinear Gauge Gravity and General Relativity}

We generalize the constructions from Refs \cite{v1} to the case when the de
Sitter nonlinear gauge gravitational connection (\ref{npot}) is defined by
the viebeins (\ref{viel1}) and (\ref{viel2}) and the linear connection $%
\Gamma _{\quad \beta \mu }^{[c]\alpha }=\{\Gamma _{\quad \beta \mu }^{\alpha
}\},$
\begin{equation}
\Gamma =\left(
\begin{array}{cc}
\Gamma _{\quad \beta ^{\prime }}^{\alpha ^{\prime }} & l_{0}^{-1}e^{\alpha
^{\prime }} \\
l_{0}^{-1}e_{\beta ^{\prime }} & 0%
\end{array}%
\right)  \label{conds}
\end{equation}%
where
\begin{eqnarray}
\Gamma _{\quad \beta ^{\prime }}^{\alpha ^{\prime }}&=&\Gamma _{\quad \beta
^{\prime }\mu }^{\alpha ^{\prime }}\delta u^{\mu },  \label{condsc} \\
 \mbox{ for \ }
\Gamma _{\quad \beta ^{\prime }\mu }^{\alpha ^{\prime }} &=&e_{\alpha
}^{~\alpha ^{\prime }}e_{\quad \beta ^{\prime }}^{\beta }\Gamma _{\quad
\beta \mu }^{\alpha }+e_{\alpha }^{~\alpha ^{\prime }}\delta _{\mu }e_{\quad
\beta ^{\prime }}^{\alpha },\  e^{\alpha ^{\prime }} = e_{\mu }^{~\alpha ^{\prime }}\delta u^{\mu },  \notag
\end{eqnarray}%
and $l_{0}$ being a dimensional constant.

The matrix components of the curvature of the connection (\ref{conds}),
\begin{equation*}
\mathcal{R}^{(\Gamma )}=d\Gamma +\Gamma \wedge \Gamma ,
\end{equation*}%
can be written
\begin{equation}
\mathcal{R}^{(\Gamma )}=\left(
\begin{array}{cc}
\mathcal{R}_{\quad \beta ^{\prime }}^{\alpha ^{\prime }}+l_{0}^{-1}\pi
_{\beta ^{\prime }}^{\alpha ^{\prime }} & l_{0}^{-1}T^{\alpha ^{\prime }} \\
l_{0}^{-1}T^{\beta ^{\prime }} & 0%
\end{array}%
\right) ,  \label{curvs}
\end{equation}%
for $\pi _{\beta ^{\prime }}^{\alpha ^{\prime }}=e^{\alpha ^{\prime }}\wedge
e_{\beta ^{\prime }},~\mathcal{R}_{\quad \beta ^{\prime }}^{\alpha ^{\prime
}}=\frac{1}{2}\mathcal{R}_{\quad \beta ^{\prime }\mu \nu }^{\alpha ^{\prime
}}\delta u^{\mu }\wedge \delta u^{\nu },$ $\mathcal{R}_{\quad \beta ^{\prime
}\mu \nu }^{\alpha ^{\prime }}=e_{~\beta ^{\prime }}^{\beta }e_{\alpha
}^{\quad \alpha ^{\prime }}R_{\quad \beta _{\mu \nu }}^{\alpha },$ with the
coefficients $R_{\quad \beta {\mu \nu }}^{\alpha }$ defined with
h--v--invariant components.

The de Sitter gauge group is semisimple: we are able to construct a
variational gauge gravitational theory with the Lagrangian
\begin{equation}
L=L_{\left( g\right) }+L_{\left( m\right) }  \label{lagrangc}
\end{equation}%
where the gauge gravitational Lagrangian is defined
\begin{equation*}
L_{\left( g\right) }=\frac{1}{4\pi }Tr\left( \mathcal{R}^{(\Gamma )}\wedge
\ast _{G}\mathcal{R}^{(\Gamma )}\right) =\mathcal{L}_{\left( G\right)
}\left| g\right| ^{1/2}\delta ^{4}u,
\end{equation*}%
for $\mathcal{L}_{\left( g\right) }=\frac{1}{2l^{2}}T_{\quad \mu \nu
}^{\alpha ^{\prime }}T_{\alpha ^{\prime }}^{\quad \mu \nu }+\frac{1}{%
8\lambda }\mathcal{R}_{\quad \beta ^{\prime }\mu \nu }^{\alpha ^{\prime }}%
\mathcal{R}_{\quad \alpha ^{\prime }}^{\beta ^{\prime }\quad \mu \nu }{}-%
\frac{1}{l^{2}}\left( {\overleftarrow{R}}\left( \Gamma \right) -2\lambda
_{1}\right),$ with $\delta ^{4}u$ being the volume element, $\left| g\right|
$ is the determinant computed the metric coefficients stated with respect to
N--elongated frames, $T_{\quad \mu \nu }^{\alpha ^{\prime }}=e_{\quad \alpha
}^{\alpha ^{\prime }}T_{\quad \mu \nu }^{\alpha }$ (the gravitational
constant $l^{2}$ satisfies the relations $l^{2}=2l_{0}^{2}\lambda ,\lambda
_{1}=-3/l_{0}),\quad Tr$ denotes the trace on $\alpha ^{\prime },\beta
^{\prime }$ indices. The matter field Lagrangian from (\ref{lagrangc}) is
defined
\begin{equation*}
L_{\left( m\right) }=-\frac{1}{2}Tr\left( \Gamma \wedge \ast _{g}\mathcal{I}%
\right) =\mathcal{L}_{\left( m\right) }\left| g\right| ^{1/2}\delta ^{n}u,
\end{equation*}%
with the Hodge operator derived by $\left| g\right| $ and $\left| h\right| $
where
\begin{equation*}
\mathcal{L}_{\left( m\right) }=\frac{1}{2}\Gamma _{\quad \beta ^{\prime }\mu
}^{\alpha ^{\prime }}S_{\quad \alpha }^{\beta ^{\prime }\quad \mu }-t_{\quad
\alpha ^{\prime }}^{\mu }l_{\quad \mu }^{\alpha ^{\prime }}.
\end{equation*}%
The matter field source $\mathcal{J}$ is obtained as a variational
derivation of $\mathcal{L}_{\left( m\right) }$ on $\Gamma $ and is
paramet\-riz\-ed in the form $\mathcal{J}=\left(
\begin{array}{cc}
S_{\quad \underline{\beta }}^{\alpha ^{\prime }} & -l_{0}\tau ^{\alpha
^{\prime }} \\
-l_{0}\tau _{\beta ^{\prime }} & 0%
\end{array}%
\right),$ with $\tau ^{\alpha ^{\prime }}=\tau _{\quad \mu }^{\alpha
^{\prime }}\delta u^{\mu }$ and $S_{\quad \beta ^{\prime }}^{\alpha ^{\prime
}}=S_{\quad \beta ^{\prime }\mu }^{\alpha ^{\prime }}\delta u^{\mu }$ being
respectively the canonical tensors of energy--momentum and spin density.

Varying the action $S=\int \delta ^{4}u\left( \mathcal{L}_{\left( g\right) }+%
\mathcal{L}_{\left( m\right) }\right)$ on the $\Gamma $--variables (\ref%
{conds}), we obtain the gau\-ge--gra\-vi\-ta\-ti\-on\-al field equations:%
\begin{equation}
d\left( \ast \mathcal{R}^{(\Gamma )}\right) +\Gamma \wedge \left( \ast
\mathcal{R}^{(\Gamma )}\right) -\left( \ast \mathcal{R}^{(\Gamma )}\right)
\wedge \Gamma =-\lambda \left( \ast \mathcal{J}\right) ,  \label{eqs}
\end{equation}%
were the Hodge operator $\ast $ is used. This equations can be alternatively
derived in geometric form by applying the absolute derivation and dual
operators.

Distinguishing the variations on $\Gamma $ and $e$--variables, we rewrite (%
\ref{eqs})
\begin{eqnarray*}
\widehat{\mathcal{D}}\left( \ast \mathcal{R}^{(\Gamma )}\right) +\frac{%
2\lambda }{l^{2}}(\widehat{\mathcal{D}}\left( \ast \pi \right) +e\wedge
\left( \ast T^{T}\right) -\left( \ast T\right) \wedge e^{T}) &=&-\lambda
\left( \ast S\right) , \\
\widehat{\mathcal{D}}\left( \ast T\right) -\left( \ast \mathcal{R}^{(\Gamma
)}\right) \wedge e-\frac{2\lambda }{l^{2}}\left( \ast \pi \right) \wedge e
&=&\frac{l^{2}}{2}\left( \ast t+\frac{1}{\lambda }\ast \varsigma \right) ,
\end{eqnarray*}%
$e^{T}$ being the transposition of \ $e,$ where
\begin{eqnarray}
T^{t} &=&\{T_{\alpha ^{\prime }}=\eta _{\alpha ^{\prime }\beta ^{\prime
}}T^{\beta ^{\prime }},~T^{\beta ^{\prime }}=\frac{1}{2}T_{\quad \mu \nu
}^{\beta ^{\prime }}\delta u^{\mu }\wedge \delta u^{\nu }\},  \notag \\
e^{T} &=&\{e_{\alpha ^{\prime }}=\eta _{\alpha ^{\prime }\beta ^{\prime
}}e^{\beta ^{\prime }},~e^{\beta ^{\prime }}=e_{\quad \mu }^{\beta ^{\prime
}}\delta u^{\mu }\},\qquad \widehat{\mathcal{D}}=\delta +\widehat{\Gamma },
\notag
\end{eqnarray}%
($\widehat{\Gamma }$ acts as $\Gamma _{\quad \beta ^{\prime }\mu }^{\alpha
^{\prime }}$ on indices $\gamma ^{\prime },\delta ^{\prime },...$ and as $%
\Gamma _{\quad \beta \mu }^{\alpha }$ on indices $\gamma ,\delta ,...).$ The
value $\varsigma $ defines the energy--momentum tensor of the gauge
gravitational field $\widehat{\Gamma }:\
 \varsigma _{\mu \nu }\left( \widehat{\Gamma }\right) =\frac{1}{2}Tr\left(
\mathcal{R}_{\mu \alpha }\mathcal{R}_{\quad \nu }^{\alpha }-\frac{1}{4}%
\mathcal{R}_{\alpha \beta }\mathcal{R}^{\alpha \beta }G_{\mu \nu }\right).$

Equations (\ref{eqs}) make up the complete system of variational field
equations for the nonlinear de Sitter gauge gravity. We note that we can
obtain a nonvariational Poincar\' e gauge gravitational theory if we
consider the contraction of the gauge potential (\ref{conds}) to a potential
$\Gamma ^{\lbrack P]}$ with values in the Poincar\' e Lie algebra
\begin{equation}
\Gamma =\left(
\begin{array}{cc}
\Gamma _{\quad \beta ^{\prime }}^{\alpha ^{\prime }} & l_{0}^{-1}e^{\alpha
^{\prime }} \\
l_{0}^{-1}e_{\beta ^{\prime }} & 0%
\end{array}%
\right) \rightarrow \Gamma ^{\lbrack P]}=\left(
\begin{array}{cc}
\Gamma _{\quad \beta ^{\prime }}^{\alpha ^{\prime }} & l_{0}^{-1}e^{\alpha
^{\prime }} \\
0 & 0%
\end{array}%
\right) .  \label{poinc}
\end{equation}%
A similar gauge potential was considered in the formalism of linear and
affine frame bundles on curved spacetimes by D. Popov and I. Dikhin. They
considered the gauge potential (\ref{poinc}) to be just the Cartan
connection form in the affine gauge like gravity and proved that the
Yang--Mills equations of their theory are equivalent, after projection on
the base, to the Einstein equations.

\subsubsection{Enveloping algebras for gauge gravity connections}

We define the gauge fields on a noncommutative space as elements of an
algebra $\mathcal{A}_{u}$ that form a representation of the generator $I$%
--algebra for the de Sitter gauge group and the noncommutative space is
modelled as the associative algebra of $\C.$\ This algebra is freely
generated by the coordinates modulo ideal $\mathcal{R}$ generated by the
relations (one accepts formal power series)\ $\mathcal{A}_{u}=%
\C[[{\hat
u}^1,...,{\hat u}^N]]/\mathcal{R}.$ A variational gauge gravitational theory
can be formulated by using a minimal extension of the affine structural
group ${\mathcal{A}f}_{3+1}\left( {\R}\right) $ to the de Sitter gauge group
$S_{10}=SO\left( 4+1\right) $ acting on ${\R}^{4+1}$.

The gauge fields are elements of the algebra $\widehat{\psi }\in \mathcal{A}%
_{I}^{(dS)}$ that form the nonlinear representation of the de Sitter algebra
${\mathit{s}o}_{\left( \eta \right) }\left( 5\right) $ (the whole algebra is
denoted $\mathcal{A}_{z}^{(dS)}).$ The elements transform $\delta \widehat{%
\psi }=i\widehat{\gamma }\widehat{\psi },\widehat{\psi }\in \mathcal{A}_{u},%
\widehat{\gamma }\in \mathcal{A}_{z}^{(dS)},$ under a nonlinear de Sitter
transformation. The action of the generators (\ref{dsca}) on $\widehat{\psi }
$ is defined as the resulting element will form a nonlinear representation
of $\mathcal{A}_{I}^{(dS)}$ and, in consequence, $\delta \widehat{\psi }\in
\mathcal{A}_{u}$ despite $\widehat{\gamma }\in \mathcal{A}_{z}^{(dS)}.$ We
emphasize that for any representation the object $\widehat{\gamma }$ takes
values in enveloping de Sitter algebra but not in a Lie algebra as would be
for commuting spaces. We introduce a connection $\widehat{\Gamma }^{\nu }\in
\mathcal{A} _{z}^{(dS)} $ in order to define covariant coordinates, $%
\widehat{U}^{\nu }=\widehat{u}^{v}+\widehat{\Gamma }^{\nu }$. The values $%
\widehat{U}^{\nu }\widehat{\psi }$ transform covariantly, i. e. $\delta
\widehat{U}^{\nu }\widehat{\psi }=i\widehat{\gamma }\widehat{U}^{\nu }%
\widehat{\psi },$ if and only if the connection $\widehat{\Gamma }^{\nu }$
satisfies the transformation law of the enveloping nonlinear realized de
Sitter algebra, $\delta \widehat{\Gamma }^{\nu }\widehat{\psi }=-i[\widehat{u%
}^{v},\widehat{\gamma }]+i[\widehat{\gamma },\widehat{\Gamma }^{\nu }]$,
where $\delta \widehat{\Gamma }^{\nu }\in \mathcal{A}_{z}^{(dS)}.$

The enveloping algebra--valued connection has infinitely many component
fields. Nevertheless, all component fields can be induced from a Lie
algebra--valued connection by a Seiberg--Witten map for $SO(n)$ and $Sp(n)).$
Here, we show that similar constructions can be performed for nonlinear
realizations of de Sitter algebra when the transformation of the connection
is considered$\delta \widehat{\Gamma }^{\nu }=-i[u^{\nu },^{\ast }~\widehat{%
\gamma }]+i[\widehat{\gamma },^{\ast }~\widehat{\Gamma }^{\nu }]$. We treat
in more detail the canonical case with the star product. The first term in
the variation $\delta \widehat{\Gamma }^{\nu }$ gives $-i[u^{\nu },^{\ast }~%
\widehat{\gamma }]=\theta ^{\nu \mu }\frac{\partial }{\partial u^{\mu }}%
\gamma$. Assuming that the variation of $\widehat{\Gamma }^{\nu }=\theta
^{\nu \mu }Q_{\mu }$ starts with a linear term in $\theta ,$ we have $\delta
\widehat{\Gamma }^{\nu }=\theta ^{\nu \mu }\delta Q_{\mu },~\delta Q_{\mu }=%
\frac{\partial }{\partial u^{\mu }}\gamma +i[\widehat{\gamma },^{\ast
}~Q_{\mu }]$. We expand the star product in $\theta $ but not in $g_{a}$ and
find up to first order in $\theta $ that
\begin{equation}
\gamma =\gamma _{\underline{a}}^{1}I^{\underline{a}}+\gamma _{\underline{a}%
\underline{b}}^{1}I^{\underline{a}}I^{\underline{b}}+...,Q_{\mu }=q_{\mu ,%
\underline{a}}^{1}I^{\underline{a}}+q_{\mu ,\underline{a}\underline{b}%
}^{2}I^{\underline{a}}I^{\underline{b}}+...  \label{seriesa}
\end{equation}%
where $\gamma _{\underline{a}}^{1}$ and $q_{\mu ,\underline{a}}^{1}$ are of
order zero in $\theta $ and $\gamma _{\underline{a}\underline{b}}^{1}$ and $%
q_{\mu ,\underline{a}\underline{b}}^{2}$ are of second order in $\theta .$
The expansion in $I^{\underline{b}}$ leads to an expansion in $g_{a}$ of the
$\ast $--product because the higher order $I^{\underline{b}}$--derivatives
vanish. For de Sitter case, we take the generators $I^{\underline{b}}$ (\ref%
{dsca}), with the corresponding de Sitter structure constants $f_{~%
\underline{d}}^{\underline{b}\underline{c}}\simeq f_{~\underline{\beta }}^{%
\underline{\alpha }\underline{\beta }}$ (in our further identifications with
spacetime objects like frames and connections we shall use Greek indices).
The result of calculation of variations of (\ref{seriesa}), by using $g_{a}$%
, is
\begin{eqnarray}
\delta q_{\mu ,\underline{a}}^{1} &=&\frac{\partial \gamma _{\underline{a}%
}^{1}}{\partial u^{\mu }}-f_{~\underline{a}}^{\underline{b}\underline{c}%
}\gamma _{\underline{b}}^{1}q_{\mu ,\underline{c}}^{1},\ \delta Q_{\tau
}=\theta ^{\mu \nu }\partial _{\mu }\gamma _{\underline{a}}^{1}\partial
_{\nu }q_{\tau ,\underline{b}}^{1}I^{\underline{a}}I^{\underline{b}}+...,
\notag \\
\delta q_{\mu ,\underline{a}\underline{b}}^{2} &=&\partial _{\mu }\gamma _{%
\underline{a}\underline{b}}^{2}-\theta ^{\nu \tau }\partial _{\nu }\gamma _{%
\underline{a}}^{1}\partial _{\tau }q_{\mu ,\underline{b}}^{1}-2f_{~%
\underline{a}}^{\underline{b}\underline{c}}\{\gamma _{\underline{b}%
}^{1}q_{\mu ,\underline{c}\underline{d}}^{2}+\gamma _{\underline{b}%
\underline{d}}^{2}q_{\mu ,\underline{c}}^{1}\}.  \notag
\end{eqnarray}

Let us introduce the objects $\varepsilon ,$ taking the values in de Sitter
Lie algebra and $W_{\mu },$ taking values in the enveloping de Sitter
algebra, i. e. $\varepsilon =\gamma _{\underline{a}}^{1}I^{\underline{a}}$
and $W_{\mu }=q_{\mu ,\underline{a}\underline{b}}^{2}I^{\underline{a}}I^{%
\underline{b}}$, with the variation $\delta W_{\mu }$ satisfying the
equation
\begin{equation*}
\delta W_{\mu }=\partial _{\mu }(\gamma _{\underline{a}\underline{b}}^{2}I^{%
\underline{a}}I^{\underline{b}})-\frac{1}{2}\theta ^{\tau \lambda
}\{\partial _{\tau }\varepsilon ,\partial _{\lambda }q_{\mu
}\}+{}i[\varepsilon ,W_{\mu }]+i[(\gamma _{\underline{a}\underline{b}}^{2}I^{%
\underline{a}}I^{\underline{b}}),q_{\nu }].
\end{equation*}%
This equation can be solved in the form%
\begin{equation}
\gamma _{\underline{a}\underline{b}}^{2}=\frac{1}{2}\theta ^{\nu \mu
}(\partial _{\nu }\gamma _{\underline{a}}^{1})q_{\mu ,\underline{b}%
}^{1},~q_{\mu ,\underline{a}\underline{b}}^{2}=-\frac{1}{2}\theta ^{\nu \tau
}q_{\nu ,\underline{a}}^{1}\left( \partial _{\tau }q_{\mu ,\underline{b}%
}^{1}+R_{\tau \mu ,\underline{b}}^{1}\right) .  \notag
\end{equation}%
The values $R_{\tau \mu ,\underline{b}}^{1}=\partial _{\tau }q_{\mu ,%
\underline{b}}^{1}-\partial _{\mu }q_{\tau ,\underline{b}}^{1}+f_{~%
\underline{d}}^{\underline{e}\underline{c}}q_{\tau ,\underline{e}}^{1}q_{\mu
,\underline{e}}^{1}$ could be identified with the coefficients $\mathcal{R}%
_{\quad \underline{\beta }\mu \nu }^{\underline{\alpha }}$ of de Sitter
nonlinear gauge gravity curvature (see formula (\ref{curvs})) if in the
commutative limit $q_{\mu ,\underline{b}}^{1}\simeq \left(
\begin{array}{cc}
\Gamma _{\quad \underline{\beta }}^{\underline{\alpha }} & l_{0}^{-1}\chi ^{%
\underline{\alpha }} \\
l_{0}^{-1}\chi _{\underline{\beta }} & 0%
\end{array}%
\right) $ (see (\ref{conds})).

We note that the below presented procedure can be generalized to all the
higher powers of $\theta .$ As an example, we compute the first order
corrections to the gravitational curvature:

\subsubsection{Noncommutative covariant gauge gravity dynamics}

The constructions from the previous subsection can be summarized by a
conclusion that the de Sitter algebra valued object $\varepsilon =\gamma _{%
\underline{a}}^{1}\left( u\right) I^{\underline{a}}$ determines all the
terms in the enveloping algebra $\gamma =\gamma _{\underline{a}}^{1}I^{%
\underline{a}}+\frac{1}{4}\theta ^{\nu \mu }\partial _{\nu }\gamma _{%
\underline{a}}^{1}\ q_{\mu ,\underline{b}}^{1}\left( I^{\underline{a}}I^{%
\underline{b}}+I^{\underline{b}}I^{\underline{a}}\right)$ $+...$ and the
gauge transformations are defined by $\gamma _{\underline{a}}^{1}\left(
u\right) $ and $q_{\mu ,\underline{b}}^{1}(u),$ when $\delta _{\gamma
^{1}}\psi =i\gamma \left( \gamma ^{1},q_{\mu }^{1}\right) \ast \psi$.
 We compute
\begin{eqnarray*}
\lbrack \gamma ,^{\ast }\zeta ] &=&i\gamma _{\underline{a}}^{1}\zeta _{%
\underline{b}}^{1}f_{~\underline{c}}^{\underline{a}\underline{b}}I^{%
\underline{c}}+\frac{i}{2}\theta ^{\nu \mu }\{\partial _{v}\left( \gamma _{%
\underline{a}}^{1}\zeta _{\underline{b}}^{1}f_{~\underline{c}}^{\underline{a}%
\underline{b}}\right) q_{\mu ,\underline{c}} \\
&&+{}\left( \gamma _{\underline{a}}^{1}\partial _{v}\zeta _{\underline{b}%
}^{1}-\zeta _{\underline{a}}^{1}\partial _{v}\gamma _{\underline{b}%
}^{1}\right) q_{\mu ,\underline{b}}f_{~\underline{c}}^{\underline{a}%
\underline{b}}+2\partial _{v}\gamma _{\underline{a}}^{1}\partial _{\mu
}\zeta _{\underline{b}}^{1}\}I^{\underline{d}}I^{\underline{c}},
\end{eqnarray*}%
where we used the properties that, for the de Sitter enveloping algebras,
one holds the general formula for compositions of two transformations $%
\delta _{\gamma }\delta _{\varsigma }-\delta _{\varsigma }\delta _{\gamma
}=\delta _{i(\varsigma \ast \gamma -\gamma \ast \varsigma )}$. This is also
true for the restricted transformations defined by $\gamma ^{1}, $$\delta
_{\gamma ^{1}}\delta _{\varsigma ^{1}}-\delta _{\varsigma ^{1}}\delta
_{\gamma ^{1}}=\delta _{i(\varsigma ^{1}\ast \gamma ^{1}-\gamma ^{1}\ast
\varsigma ^{1})}$.

Such commutators could be used for definition of tensors
\begin{equation}
\widehat{S}^{\mu \nu }=[\widehat{U}^{\mu },\widehat{U}^{\nu }]-i\widehat{%
\theta }^{\mu \nu },  \label{tensor1}
\end{equation}%
where $\widehat{\theta }^{\mu \nu }$ is respectively stated for the
canonical, Lie and quantum plane structures. Under the general enveloping
algebra one holds the transform $\delta \widehat{S}^{\mu \nu }=i[\widehat{%
\gamma },\widehat{S}^{\mu \nu }]$. For instance, the canonical case is
characterized by%
\begin{eqnarray}
S^{\mu \nu } &=&i\theta ^{\mu \tau }\partial _{\tau }\Gamma ^{\nu }-i\theta
^{\nu \tau }\partial _{\tau }\Gamma ^{\mu }+\Gamma ^{\mu }\ast \Gamma ^{\nu
}-\Gamma ^{\nu }\ast \Gamma ^{\mu }  \notag \\
&=&\theta ^{\mu \tau }\theta ^{\nu \lambda }\{\partial _{\tau }Q_{\lambda
}-\partial _{\lambda }Q_{\tau }+Q_{\tau }\ast Q_{\lambda }-Q_{\lambda }\ast
Q_{\tau }\}.  \notag
\end{eqnarray}

We introduce the gravitational gauge strength (curvature)
\begin{equation}
R_{\tau \lambda }=\partial _{\tau }Q_{\lambda }-\partial _{\lambda }Q_{\tau
}+Q_{\tau }\ast Q_{\lambda }-Q_{\lambda }\ast Q_{\tau },  \label{qcurv}
\end{equation}%
which could be treated as a noncommutative extension of de Sitter nonlinear
gauge gravitational curvature (\ref{curvs}), and calculate
\begin{equation}
R_{\tau \lambda ,\underline{a}}=R_{\tau \lambda ,\underline{a}}^{1}+\theta
^{\mu \nu }\{R_{\tau \mu ,\underline{a}}^{1}R_{\lambda \nu ,\underline{b}%
}^{1}{}-\frac{1}{2}q_{\mu ,\underline{a}}^{1}\left[ (D_{\nu }R_{\tau \lambda
,\underline{b}}^{1})+\partial _{\nu }R_{\tau \lambda ,\underline{b}}^{1}%
\right] \}I^{\underline{b}},  \notag
\end{equation}%
where the gauge gravitation covariant derivative is introduced,%
\begin{equation*}
(D_{\nu }R_{\tau \lambda ,\underline{b}}^{1})=\partial _{\nu }R_{\tau
\lambda ,\underline{b}}^{1}+q_{\nu ,\underline{c}}R_{\tau \lambda ,%
\underline{d}}^{1}f_{~\underline{b}}^{\underline{c}\underline{d}}.
\end{equation*}%
Following the gauge transformation laws for $\gamma $ and $q^{1}$ we find $%
\delta _{\gamma ^{1}}R_{\tau \lambda }^{1}=i\left[ \gamma ,^{\ast }R_{\tau
\lambda }^{1}\right]$ with the restricted form of $\gamma .$

One can be formulated a gauge covariant gravitational dynamics of
noncommutative spaces following the nonlinear realization of de Sitter
algebra and the $\ast $--formalism and introducing derivatives in such a way
that one does not obtain new relations for the coordinates. In this case, a
Leibniz rule can be defined that $\widehat{\partial }_{\mu }\widehat{u}^{\nu
}=\delta _{\mu }^{\nu }+d_{\mu \sigma }^{\nu \tau }\ \widehat{u}^{\sigma }\
\widehat{\partial }_{\tau }$, where the coefficients $d_{\mu \sigma }^{\nu
\tau }=\delta _{\sigma }^{\nu }\delta _{\mu }^{\tau }$ are chosen to have
not new relations when $\widehat{\partial }_{\mu }$ acts again to the right
hand side. One holds the $\ast $--derivative formulas
\begin{equation}
{\partial }_{\tau }\ast f=\frac{\partial }{\partial u^{\tau }}f+f\ast {%
\partial }_{\tau },~[{\partial }_{l},{{}^{\ast }}(f\ast g)]=([{\partial }%
_{l},{{}^{\ast }}f])\ast g+f\ast ([{\partial }_{l},{}^{\ast }g])  \notag
\end{equation}%
and the Stokes theorem$\int [\partial _{l},f]=\int d^{N}u[\partial
_{l},^{\ast }f]=\int d^{N}u\frac{\partial }{\partial u^{l}}f=0,$ where, for
the canonical structure, the integral is defined, $\int \widehat{f}=\int
d^{N}uf\left( u^{1},...,u^{N}\right)$.

An action can be introduced by using such integrals. For instance, for a
tensor of type (\ref{tensor1}), when $\delta \widehat{L}=i\left[ \widehat{%
\gamma },\widehat{L}\right]$, we can define a gauge invariant action%
 $W=\int d^Nu\ Tr\widehat{L},~\delta W=0$,
 were the trace has to be taken for the group generators.
For the nonlinear de Sitter gauge gravity a proper action is $L=\frac{1}{4}%
R_{\tau \lambda }R^{\tau \lambda }$, where $R_{\tau \lambda }$ is defined by
(\ref{qcurv}) (in the commutative limit we shall obtain the connection (\ref%
{conds})). In this case the dynamic of noncommutative space is entirely
formulated in the framework of quantum field theory of gauge fields. In
general, we are dealing with anisotropic gauge gravitational interactions.
The method works for matter fields as well to restrictions to the general
relativity theory.

\subsubsection{Noncommutative symmetries and star product deformations}

The aim of this subsection is to prove that there are possible extensions of
exact solutions from the Einstein and gauge gravity possessing hidden
noncommutative symmetries without introducing new fields. For simplicity, we
present the formulas including decompositions up to the second order on
noncommutative parameter $\theta ^{\alpha \beta }$ for vielbeins,
connections and curvatures which can be arranged to result in different
models of noncommutative gravity. We give the data for the $SU\left(
1,n+m-1\right) $ and $SO\left( 1,n+m-1\right) $ gauge models containing, in
general, complex N--elongated frames, modelling some exact solutions. All
data can be considered for extensions with nonlinear realizations into a
bundle of affine/or de Sitter frames (in this case, one generates
noncommutative gauge theories of type \cite{v1}) or to impose certain
constraints and broking symmetries.

The standard approaches to noncommutative geometry also contain certain
noncommutative relations for coordinates,
\begin{equation}
\lbrack u^{\alpha },u^{\beta }]=u^{\alpha }u^{\beta }-u^{\beta }u^{\alpha
}=i\theta ^{\alpha \beta }(u^{\gamma })  \label{coordnc}
\end{equation}%
were, in the simplest models, the commutator $[u^{\alpha },u^{\beta }]$ is
approximated to be constant, but there were elaborated approaches for
general manifolds with the noncommutative parameter $\theta ^{\alpha \beta }$
treated as functions on $u^{\gamma }$. We define the star (Moyal) product to
include possible N--elongated partial derivatives and a quantum constant $%
\hbar $,
\begin{eqnarray}
f\ast \varphi &=&f\varphi +\frac{\hbar }{2}B^{\overline{\alpha }\overline{%
\beta }}\left( \delta _{\overline{\alpha }}f\delta _{\overline{\beta }%
}\varphi +\delta _{\overline{\beta }}f\delta _{\overline{\alpha }}\varphi
\right) +\hbar ^{2}B^{\overline{\alpha }\overline{\beta }}B^{\overline{%
\gamma }\overline{\mu }}\left[ \delta _{(\overline{\alpha }}\delta _{%
\overline{\gamma })}f\right] \left[ \delta _{(\overline{\beta }}\delta _{%
\overline{\mu })}\varphi \right]  \notag \\
&&+\frac{2}{3}\hbar ^{2}B^{\overline{\alpha }\overline{\beta }}\delta _{%
\overline{\beta }}B^{\overline{\gamma }\overline{\mu }}\{~\left[ \delta _{(%
\overline{\alpha }}\delta _{\overline{\gamma })}f\right] \delta _{\overline{%
\mu }}\varphi +[\delta _{(\overline{\alpha }}\delta _{\overline{\gamma }%
)}\varphi ]\delta _{\overline{\mu }}f\}+O\left( \hbar ^{3}\right) ,
\label{form01}
\end{eqnarray}%
where, for instance, $\delta _{(\mu }\delta _{\nu )}=(1/2)(\delta _{\mu
}\delta _{\nu }+\delta _{\nu }\delta _{\mu })$,
\begin{equation}
B^{\overline{\alpha }\overline{\beta }}=\frac{\theta ^{\alpha \beta }}{2}%
\left( \delta _{\alpha }u^{\overline{\alpha }}\delta _{\beta }u^{\overline{%
\beta }}+\delta _{\beta }u^{\overline{\alpha }}\delta _{\alpha }u^{\overline{%
\beta }}\right) +O\left( \hbar ^{3}\right)  \label{form02}
\end{equation}%
is defined for new coordinates $u^{\overline{\alpha }}=u^{\overline{\alpha }%
}\left( u^{\alpha }\right) $ inducing a suitable Poisson bi--vector field $%
B^{\overline{\alpha }\overline{\beta }}\left( \hbar \right) $ being related
to a quantum diagram formalism (we shall not consider details concerning
geometric quantization in this paper by investigating only classicassical
deformations related to any anholonomic frame and coordinate (\ref{coordnc})
noncommutativity origin). The formulas (\ref{form01}) and (\ref{form02})
transform into the usual ones with partial derivatives $\partial _{\alpha }$
and $\partial _{\overline{\alpha }}$ for vanishing anholonomy coefficients.
We can define a star product being invariant under diffeomorphism
transforms, $\ast \rightarrow \ast ^{\lbrack -]},$ adapted to the
N--connection structure ( in a vector bundle provided with N--connection
configuration, we use the label $[-]$ in order to emphasize the dependence
on coordinates $u^{\overline{\alpha }}$ with 'overlined' indices), by
introducing the transforms%
\begin{equation*}
f^{[-]}\left( \hbar \right) =\Theta f\left( \hbar \right),\ f^{[-]}\ast
^{\lbrack -]}\varphi ^{\lbrack -]}=\Theta \left( \Theta ^{-1}f^{[-]}\ast
\Theta ^{-1}\right) \varphi ^{\lbrack -]},
\end{equation*}
for $\Theta =1+\sum_{[k=1]}\hbar ^{k}\Theta _{\lbrack k]},$ for simplicity,
computed up to the squared order on $\hbar ,$ $\Theta =1-2\hbar ^{2}\theta
^{\mu \nu }\theta ^{\rho \sigma }[ \left[ \delta _{(\mu }\delta _{\nu )}u^{%
\overline{\alpha }}\right] \left[ \delta _{(\rho }\delta _{\sigma )}u^{%
\overline{\beta }}\right] \delta _{(\overline{\alpha }}\delta _{\overline{%
\beta })} +[\delta _{(\mu }\delta _{\rho )}u^{\overline{\alpha }}](\delta
_{\nu }u^{\overline{\beta }})(\delta _{\sigma }u^{\overline{\gamma }})$ $%
\left[ \delta _{(\overline{\alpha }}\delta _{\overline{\beta }}\delta _{%
\overline{\gamma })}\right]] +O\left( \hbar ^{4}\right)$, where $\delta _{(%
\overline{\alpha }}\delta _{\overline{\beta }}\delta _{\overline{\gamma })}=$
$(1/3!)(\delta _{\overline{\alpha }}\delta _{\overline{\beta }}\delta _{%
\overline{\gamma }}+$ all \textsl{symmetric permutations). }In our further
constructions we shall omit the constant $\hbar $ considering that $\theta
\sim \hbar $ is a small value by writing the necessary terms in the
approximation $O\left( \theta ^{3}\right) $ or $O\left( \theta ^{4}\right) .$

We consider a noncommutative gauge theory on a space with N--connecti\-on
structure stated by the gauge fields $\widehat{A}_{\mu }=\left( \widehat{A}%
_{i},\widehat{A}_{a}\right) $ when ''hats'' on symbols will be used for the
objects defined on spaces with coordinate noncommutativity. In general, the
gauge model can be with different types of structure groups like $SL\left( k,%
\C\right) ,$ $SU_{k},$ $U_{k},SO(k-1,1)$ and their nonlinear realizations.
For instance, for the $U\left( n+m\right) $ gauge fields there are satisfied
the conditions $\widehat{A}_{\mu }^{+}=-\widehat{A}_{\mu },$where $"+"$ is
the Hermitian conjugation. It is useful to present the basic geometric
constructions for a unitary structural group containing the $SO\left(
4,1\right) $ as a particular case if we wont to consider noncommutative
extensions of 4D exact solutions.

The noncommutative gauge transforms of potentials are defined by using the
star product $\widehat{A}_{\mu }^{[\varphi ]}=\widehat{\varphi }\ast
\widehat{A}_{\mu }\widehat{\varphi }_{[\ast ]}^{-1}-\widehat{\varphi }\ast
\delta _{\mu }\widehat{\varphi }_{[\ast ]}^{-1}$, where the N--elongated
partial derivatives are used and $\widehat{\varphi }\ast \widehat{\varphi }%
_{[\ast ]}^{-1}=1=$ $\widehat{\varphi }_{[\ast ]}^{-1}\ast \widehat{\varphi }%
.$ The matrix coefficients of fields will be distinguished by ''overlined''
indices, for instance, $\widehat{A}_{\mu }=\{\widehat{A}_{\mu }^{\underline{%
\alpha }\underline{\beta }}\},$ and for commutative values, $A_{\mu
}=\{A_{\mu }^{\underline{\alpha }\underline{\beta }}\}.$ Such fields are
subjected to the conditions $(\widehat{A}_{\mu }^{\underline{\alpha }%
\underline{\beta }})^{+}\left( u,\theta \right) =-\widehat{A}_{\mu }^{%
\underline{\beta }\underline{\alpha }}\left( u,\theta \right)$ and $\widehat{%
A}_{\mu }^{\underline{\alpha }\underline{\beta }}\left( u,-\theta \right) =-%
\widehat{A}_{\mu }^{\underline{\beta }\underline{\alpha }}\left( u,\theta
\right)$. There is a basic assumption that the noncommutative fields are
related to the commutative fields by the Seiberg--Witten map in a manner
that there are not new degrees of freedom being satisfied the equation%
\begin{equation}
\widehat{A}_{\mu }^{\underline{\alpha }\underline{\beta }}(A)+\Delta _{%
\widehat{\lambda }}\widehat{A}_{\mu }^{\underline{\alpha }\underline{\beta }%
}(A)=\widehat{A}_{\mu }^{\underline{\alpha }\underline{\beta }}(A+\Delta _{%
\widehat{\lambda }}A),  \label{sw1}
\end{equation}%
where $\widehat{A}_{\mu }^{\underline{\alpha }\underline{\beta }}(A)$
denotes a functional dependence on commutative field $A_{\mu }^{\underline{%
\alpha }\underline{\beta }}$, $\widehat{\varphi }=\exp \widehat{\lambda }$
and the infinitesimal deformations $\widehat{A}_{\mu }^{\underline{\alpha }%
\underline{\beta }}(A)$ and of $A_{\mu }^{\underline{\alpha }\underline{%
\beta }}$ are
\begin{eqnarray*}
\Delta _{\widehat{\lambda }}\widehat{A}_{\mu }^{\underline{\alpha }%
\underline{\beta }}&=&\delta _{\mu }\widehat{\lambda }^{\underline{\alpha }%
\underline{\beta }}+\widehat{A}_{\mu }^{\underline{\alpha }\underline{\gamma
}}\ast \widehat{\lambda }^{\underline{\gamma }\underline{\beta }}-\widehat{%
\lambda }^{\underline{\alpha }\underline{\gamma }}\ast \widehat{A}_{\mu }^{%
\underline{\gamma }\underline{\beta }} \\
\mbox{ and\  } \Delta _{\lambda }A_{\mu }^{\underline{\alpha }\underline{%
\beta }}&=&\delta _{\mu }\lambda ^{\underline{\alpha }\underline{\beta }%
}+A_{\mu }^{\underline{\alpha }\underline{\gamma }}\ast \lambda ^{\underline{%
\gamma }\underline{\beta }}-\lambda ^{\underline{\alpha }\underline{\gamma }%
}\ast A_{\mu }^{\underline{\gamma }\underline{\beta }},
\end{eqnarray*}%
where instead of partial derivatives $\partial _{\mu }$ we use the
N--elongated ones, $\delta _{\mu }$ and sum on index $\underline{\gamma }.$

Solutions of the Seiberg--Witten equations for models of gauge gravity are
considered, for instance, in Ref. \cite{v1} (there are discussed procedures
of deriving expressions on $\theta $ to all orders). Here we present only
the first order on $\theta $ for the coefficients $\widehat{\lambda }^{%
\underline{\alpha }\underline{\beta }}$ and the first and second orders for $%
\widehat{A}_{\mu }^{\underline{\alpha }\underline{\beta }}$ including
anholonomy relations and not depending on model considerations,%
\begin{equation*}
\widehat{\lambda }^{\underline{\alpha }\underline{\beta }}=\lambda ^{%
\underline{\alpha }\underline{\beta }}+\frac{i}{4}\theta ^{\nu \tau
}\{(\delta _{\nu }\lambda ^{\underline{\alpha }\underline{\gamma }})A_{\mu
}^{\underline{\gamma }\underline{\beta }}+A_{\mu }^{\underline{\alpha }%
\underline{\gamma }}(\delta _{\nu }\lambda ^{\underline{\gamma }\underline{%
\beta }})\}+O\left( \theta ^{2}\right)
\end{equation*}%
{\small
\begin{eqnarray}
&& \mbox{and \ }  \widehat{A}_{\mu }^{\underline{\alpha }\underline{\beta }} =A_{\mu }^{%
\underline{\alpha }\underline{\beta }}-\frac{i}{4}\theta ^{\nu \tau
}\{A_{\mu }^{\underline{\alpha }\underline{\gamma }}\left( \delta _{\tau
}A_{\nu }^{\underline{\gamma }\underline{\beta }}+R_{\quad \tau \nu }^{%
\underline{\gamma }\underline{\beta }}\right) +\left( \delta _{\tau }A_{\mu
}^{\underline{\alpha }\underline{\gamma }}+R_{\quad \tau \mu }^{\underline{%
\alpha }\underline{\gamma }}\right) A_{\nu }^{\underline{\gamma }\underline{%
\beta }}\}  \notag \\
&&+\frac{1}{32}\theta ^{\nu \tau }\theta ^{\rho \sigma }\{[2A_{\rho }^{%
\underline{\alpha }\underline{\gamma }}(R_{\quad \sigma \nu }^{\underline{%
\gamma }\underline{\varepsilon }}R_{\quad \mu \tau }^{\underline{e}%
\underline{\beta }} +R_{\quad \mu \tau }^{\underline{\gamma }\underline{%
\varepsilon }}R_{\quad \sigma \nu }^{\underline{\varepsilon }\underline{%
\beta }})  \label{nnccon1} \\
&&+2(R_{\quad \sigma \nu }^{\underline{\alpha }\underline{\varepsilon }%
}R_{\quad \mu \tau }^{\underline{\varepsilon }\underline{\gamma }}+R_{\quad
\mu \tau }^{\underline{\alpha }\underline{\varepsilon }}R_{\quad \sigma \nu
}^{\underline{\varepsilon }\underline{\gamma }})A_{\rho }^{\underline{\gamma
}\underline{\beta }}]  \notag \\
&&-[A_{\nu }^{\underline{\alpha }\underline{\gamma }}\left( D_{\tau
}R_{\quad \sigma \mu }^{\underline{\gamma }\underline{\beta }}+\delta _{\tau
}R_{\quad \sigma \mu }^{\underline{\gamma }\underline{\beta }}\right)
+\left( D_{\tau }R_{\quad \sigma \mu }^{\underline{\alpha }\underline{\gamma
}}+\delta _{\tau }R_{\quad \sigma \mu }^{\underline{\alpha }\underline{%
\gamma }}\right) A_{\nu }^{\underline{\gamma }\underline{\beta }}]  \notag \\
&&- \delta _{\sigma }[A_{\nu }^{\underline{\alpha }\underline{\gamma }%
}\left( \delta _{\tau }A_{\mu }^{\underline{\gamma }\underline{\beta }%
}+R_{\quad \tau \mu }^{\underline{\gamma }\underline{\beta }}\right) +\left(
\delta _{\tau }A_{\mu }^{\underline{\alpha }\underline{\gamma }}+R_{\quad
\tau \mu }^{\underline{\alpha }\underline{\gamma }}\right) A_{\nu }^{%
\underline{\gamma }\underline{\beta }}]+  \notag \\
&&[(\delta _{\nu }A_{\rho }^{\underline{\alpha }\underline{\gamma }})\left(
2\delta _{(\tau }\delta _{\sigma )}A_{\mu }^{\underline{\gamma }\underline{%
\beta }}+\delta _{\tau }R_{\quad \sigma \mu }^{\underline{\gamma }\underline{%
\beta }}\right) +\left( 2\delta _{(\tau }\delta _{\sigma )}A_{\mu }^{%
\underline{\alpha }\underline{\gamma }}+\delta _{\tau }R_{\quad \sigma \mu
}^{\underline{\alpha }\underline{\gamma }}\right) (\delta _{\nu }A_{\rho }^{%
\underline{\gamma }\underline{\beta }})]- \notag  \\
&&[A_{\nu }^{\underline{\alpha }\underline{\varepsilon }}\left( \delta
_{\tau }A_{\rho }^{\underline{\varepsilon }\underline{\gamma }}+R_{\quad
\tau \rho }^{\underline{\varepsilon }\underline{\gamma }}\right) +\left(
\delta _{\tau }A_{\rho }^{\underline{\alpha }\underline{\varepsilon }%
}+R_{\quad \tau \rho }^{\underline{\alpha }\underline{\varepsilon }}\right)
A_{\nu }^{\underline{\varepsilon }\underline{\gamma }}]\left( \delta
_{\sigma }A_{\mu }^{\underline{\gamma }\underline{\beta }}+R_{\quad \sigma
\mu }^{\underline{\gamma }\underline{\beta }}\right) - \notag  \\
&&\left( \delta _{\sigma }A_{\mu }^{\underline{\alpha }\underline{\gamma }%
}+R_{\quad \sigma \mu }^{\underline{\alpha }\underline{\gamma }}\right)
[A_{\nu }^{\underline{\gamma }\underline{\varepsilon }}\left( \delta _{\tau
}A_{\rho }^{\underline{\varepsilon }\underline{\beta }}+R_{\quad \tau \rho
}^{\underline{\varepsilon }\underline{\beta }}\right) +\left( \delta _{\tau
}A_{\rho }^{\underline{\gamma }\underline{\varepsilon }}+R_{\quad \tau \rho
}^{\underline{\gamma }\underline{\varepsilon }}\right) A_{\nu }^{\underline{%
\varepsilon }\underline{\beta }}]+O\left( \theta ^{3}\right) , \notag
\end{eqnarray}%
}
where the curvature is defined $R_{\quad \tau \nu }^{\underline{\alpha }%
\underline{\beta }}=e_{\alpha }^{\underline{\alpha }}e^{\underline{\beta }%
\beta }R_{\beta \ \tau \nu }^{\ \alpha }$, when $\Gamma \rightarrow A,$ and
for the gauge model of gravity, see (\ref{curvs}) and (\ref{qcurv}). By
using the star product, we can write symbolically the solution (\ref{nnccon1}%
) in general form,%
\begin{equation*}
\Delta \widehat{A}_{\mu }^{\underline{\alpha }\underline{\beta }}\left(
\theta \right) =-\frac{i}{4}\theta ^{\nu \tau }\left[ \widehat{A}_{\mu }^{%
\underline{\alpha }\underline{\gamma }}\ast \left( \delta _{\tau }\widehat{A}%
_{\nu }^{\underline{\gamma }\underline{\beta }}+\widehat{R}_{\quad \tau \nu
}^{\underline{\gamma }\underline{\beta }}\right) +\left( \delta _{\tau }%
\widehat{A}_{\mu }^{\underline{\alpha }\underline{\gamma }}+\widehat{R}%
_{\quad \tau \mu }^{\underline{\alpha }\underline{\gamma }}\right) \ast
\widehat{A}_{\nu }^{\underline{\gamma }\underline{\beta }}\right],
\end{equation*}%
where $\widehat{R}_{\quad \tau \nu }^{\underline{\gamma }\underline{\beta }}$
is defined by the same formulas as $R_{\quad \tau \nu }^{\underline{\alpha }%
\underline{\beta }}$ but with the star products, like $AA\rightarrow A\ast
A. $

There is a problem how to determine the dependence of the noncommutative
vielbeins $\widehat{e}_{\alpha }^{\underline{\alpha }}$ on commutative ones $%
e_{\alpha }^{\underline{\alpha }}.$ If we consider the frame fields to be
included into a (anti) de Sitter gauge gravity model with the connection (%
\ref{conds}), the vielbein components should be treated as certain
coefficients of the gauge potential with specific nonlinear transforms for
which the results of Ref. \cite{v1} hold. The main difference (considered in
this work) is that the frames are in general with anholonomy induced by a
N--connection field. In order to derive in a such model the Einstein gravity
we have to analyze the reduction (\ref{poinc}) to a Poincar\'e gauge gravity.

An explicit calculus of the curvature of such gauge potential show that the
coefficients of the curvature of (\ref{poinc}), obtained as a reduction from
the $SO\left( 4,1\right) $ gauge group is given by the coefficients (\ref%
{curvs}) with vanishing torsion and constraints of type $\widehat{A}_{\nu }^{%
\underline{\gamma }\underline{5}}=\epsilon \widehat{e}_{\nu }^{\underline{%
\gamma }}$ and $\widehat{A}_{\nu }^{\underline{5}\underline{5}}=\epsilon
\widehat{\phi }_{\nu }$ with $\widehat{R}_{\quad \tau \nu }^{\underline{5}%
\underline{5}}\sim \epsilon $ vanishing in the limit $\epsilon \rightarrow 0$
(we obtain the same formulas for the vielbein and curvature components
derived for the inhomogeneous Lorentz group but generalized to N--elongated
derivatives and with distinguishing into h--v--components). The result for $%
\widehat{e}_{\mu }^{\underline{\mu }}$ in the limit $\epsilon \rightarrow 0$
generalized to the case of canonical connections defining the covariant
derivatives $D_{\tau }$ and corresponding curvatures is{\small
\begin{eqnarray}
&&\widehat{e}_{\mu }^{\underline{\mu }} =e_{\mu }^{\underline{\mu }}-\frac{i%
}{4}\theta ^{\nu \tau }\left[ A_{\nu }^{\underline{\mu }\underline{\gamma }%
}\delta _{\tau }e_{\mu }^{\underline{\gamma }}+\left( \delta _{\tau }A_{\mu
}^{\underline{\mu }\underline{\gamma }}+R_{\quad \tau \mu }^{\underline{\mu }%
\underline{\gamma }}\right) e_{\nu }^{\underline{\gamma }}\right] +
\label{qfr} \\
&&\frac{1}{32}\theta ^{\nu \tau }\theta ^{\beta \sigma }\{2(R_{\quad \sigma
\nu }^{\underline{\mu }\underline{\varepsilon }}R_{\quad \mu \tau }^{%
\underline{\varepsilon }\underline{\gamma }}+R_{\quad \mu \tau }^{\underline{%
\mu }\underline{\varepsilon }}R_{\quad \sigma \nu }^{\underline{\varepsilon }%
\underline{\gamma }})e_{\beta }^{\underline{\gamma }} -A_{\beta }^{\underline{\mu }\underline{\gamma }} (D_{\tau }R_{\quad
\sigma \mu }^{\underline{\gamma }\underline{\beta }}+\delta _{\tau }R_{\quad
\sigma \mu }^{\underline{\gamma }\underline{\beta }}) e_{\beta }^{%
\underline{\beta }}  \notag \\
&&-[A_{\nu }^{\underline{\mu }\underline{\gamma }}\left( D_{\tau }R_{\quad
\sigma \mu }^{\underline{\gamma }\underline{\beta }}+\delta _{\tau }R_{\quad
\sigma \mu }^{\underline{\gamma }\underline{\beta }}\right) +\left( D_{\tau
}R_{\quad \sigma \mu }^{\underline{\mu }\underline{\gamma }}+\delta _{\tau
}R_{\quad \sigma \mu }^{\underline{\mu }\underline{\gamma }}\right) A_{\nu
}^{\underline{\gamma }\underline{\beta }}]e_{\beta }^{\underline{\beta }}-
\notag \\
&&e_{\beta }^{\underline{\beta }}\delta _{\sigma }\left[ A_{\nu }^{%
\underline{\mu }\underline{\gamma }}\left( \delta _{\tau }A_{\mu }^{%
\underline{\gamma }\underline{\beta }}+R_{\quad \tau \mu }^{\underline{%
\gamma }\underline{\beta }}\right) +\left( \delta _{\tau }A_{\mu }^{%
\underline{\mu }\underline{\gamma }}+R_{\quad \tau \mu }^{\underline{\mu }%
\underline{\gamma }}\right) A_{\nu }^{\underline{\gamma }\underline{\beta }}%
\right] +2\left( \delta _{\nu }A_{\beta }^{\underline{\mu }\underline{\gamma
}}\right) \delta _{(\tau }\delta _{\sigma )}e_{\mu }^{\underline{\gamma }} \notag \\
&&-A_{\beta }^{\underline{\mu }\underline{\gamma }}\delta _{\sigma }\left[
A_{\nu }^{\underline{\gamma }\underline{\beta }}\delta _{\tau }e_{\mu }^{%
\underline{\beta }}+\left( \delta _{\tau }A_{\mu }^{\underline{\gamma }%
\underline{\beta }}+R_{\quad \tau \mu }^{\underline{\gamma }\underline{\beta
}}\right) e_{\nu }^{\underline{\beta }}\right] -\left( \delta _{\nu
}e_{\beta }^{\underline{\gamma }}\right) \delta _{\tau }\left( \delta
_{\sigma }A_{\mu }^{\underline{\mu }\underline{\gamma }}+R_{\quad \sigma \mu
}^{\underline{\mu }\underline{\gamma }}\right) - \notag \\
&&\left[ A_{\nu }^{\underline{\mu }\underline{\gamma }}\left( \delta _{\tau
}A_{\beta }^{\underline{\gamma }\underline{\beta }}+R_{\quad \tau \beta }^{%
\underline{\gamma }\underline{\beta }}\right) +\left( \delta _{\tau
}A_{\beta }^{\underline{\mu }\underline{\gamma }}+R_{\quad \tau \beta }^{%
\underline{\mu }\underline{\gamma }}\right) A_{\nu }^{\underline{\gamma }%
\underline{\beta }}\right] \delta _{\sigma }e_{\mu }^{\underline{\beta }}- \notag \\
&&\left( \delta _{\sigma }A_{\mu }^{\underline{\mu }\underline{\gamma }%
}+R_{\quad \sigma \mu }^{\underline{\mu }\underline{\gamma }}\right) \left[
A_{\mu }^{\underline{\gamma }\underline{\beta }}\left( \delta _{\nu
}e_{\beta }^{\underline{\beta }}\right) +e_{\nu }^{\underline{\beta }}\left(
\delta _{\sigma }A_{\mu }^{\underline{\gamma }\underline{\beta }}+R_{\quad
\sigma \mu }^{\underline{\gamma }\underline{\beta }}\right) \right]
\}+O\left( \theta ^{3}\right) . \notag
\end{eqnarray}
}

Having the decompositions (\ref{qfr}), we can define the inverse vielbein $%
\widehat{e}_{\ast \underline{\mu }}^{\mu }$ from the equation $\widehat{e}%
_{\ast \underline{\mu }}^{\mu }\ast \widehat{e}_{\mu }^{\underline{\nu }%
}=\delta _{\underline{\mu }}^{\underline{\nu }}$ and consequently compute $%
\theta $--deformations of connections, curvatures, torsions and any type of
actions and field equations (for simplicity, we omit such cumbersome
formulas).

\subsection{Exact solutions for (non)commutative Finsler branes}
\label{ssdecoupl} We show how the anholnomic deformation method can be
applied for generating Finsler like solutions, with nontrivial nonlinear
connection structure, in noncommutative gravity \cite{v8}.

\subsubsection{Nonholonomic Distributions and Noncommutative Gravity}

\label{s2} There exist many formulations of noncommutative geometry/gravity
based on nonlocal deformation of spacetime and field theories starting from
noncommutative relations of type
\begin{equation}
u^{\alpha }u^{\beta }-u^{\beta }u^{\alpha }=i\theta ^{\alpha \beta },
\label{fuzcond}
\end{equation}%
where $u^{\alpha }$ are local spacetime coordinates, $i$ is the imaginary
unity, $i^{2}=-1,$ and $\theta ^{\alpha \beta }$ is an anti--symmetric
second--rank tensor (which, for simplicity, for certain models, is taken to
be with constant coefficients). Following our unified approach to (pseudo)
Riemannian and Finsler--Lagrange spaces  (using the
geometry of nonholonomic manifolds) we consider that for $\theta ^{\alpha
\beta }\rightarrow 0$ the local coordinates $u^{\alpha }$ are on a four
dimensional (4-d) nonholonomic manifold $\mathbf{V}$ of necessary smooth
class. Such spacetimes can be enabled with a conventional $2+2$ splitting
(defined by a nonholonomic, equivalently, anholonomic/ non--integrable real
distribution), when local coordinates $u=(x,y)$ on an open region $U\subset
\mathbf{V}$ are labelled in the form $u^{\alpha }=(x^{i},y^{a}),$ with
indices of type $i,j,k,...=1,2$ and $a,b,c...=3,4.$ The coefficients of
tensor like objects on $\mathbf{V}$ can be computed with respect to a
general (non--coordinate) local basis $e_{\alpha }=(e_{i},e_{a}).$\footnote{%
If $\mathbf{V}=TM$ is the total space of a tangent bundle $\left( TM,\pi
,M\right) $ on a two dimensional (2--d) base manifold $M,$ the values $x^{i}$
and $y^{a}$ are respectively the base coordinates (on a low--dimensional
space/ spacetime) and fiber coordinates (velocity like). Alternatively, we
can consider that $\mathbf{V}=V$ is a 4--d nonholonomic manifold (in
particular, a pseudo--Riemannian one) with local fibered structure.}

On a commutative $\mathbf{V,}$ any (prime) metric $\mathbf{g=g}_{\alpha
\beta }\mathbf{e}^{a}\otimes \mathbf{e}^{\beta } $ (for instance, a
Schwarzschild, ellipsoid, ring or other type solution, their conformal
transforms and nonholonomic deformations which, in general, are not
solutions of the Einstein equations) can be parametrized in the form
\begin{eqnarray}
\mathbf{g} &=&g_{i}(u)dx^{i}\otimes dx^{i}+h_{a}(u)\mathbf{e}^{a}\otimes
\mathbf{e}^{a},  \label{prime} \\
\mathbf{e}^{\alpha } &=&\mathbf{e}_{\ \underline{\alpha }}^{\alpha }(u)du^{%
\underline{\alpha }}=\left( e^{i}=dx^{i},\mathbf{e}%
^{a}=dy^{a}+N_{i}^{a}dx^{i}\right) .  \notag
\end{eqnarray}
It is convenient to work with the so--called
canonical distinguished connection (in brief, canonical d--connection $%
\widehat{\mathbf{D}}=\{\widehat{\mathbf{\Gamma }}_{\ \alpha \beta }^{\gamma
}\})$ which is metric compatible, $\widehat{\mathbf{D}}\mathbf{g}=0,$ and
completely defined by the coefficients of a metric $\mathbf{g}$ (\ref{prime}%
) and a N--connection $\mathbf{N,}$ subjected to the condition that the
so--called $h$-- and $v$--components of torsion are zero.\footnote{%
by definition, a d--connection is a linear connection preserving under
parallelism a given N--connection splitting; in general, a d--connection has
a nontrivial torsion tensor but for the canonical d--connection the torsion
is induced by the anholonomy coefficients which in their turn are defined by
certain off--diagonal N--coefficients in the corresponding metric} Using
 formula $\Gamma _{\ \alpha \beta
}^{\gamma }=\widehat{\mathbf{\Gamma }}_{\ \alpha \beta }^{\gamma }+\ Z_{\
\alpha \beta }^{\gamma },$ where $\nabla =\{\ \Gamma _{\ \alpha \beta
}^{\gamma }\}$ is \ the \ Levi--Civita connection (this connection is metric
compatible, torsionless and completely defined by the coefficients of the
same metric structure $\mathbf{g}$), we can perform all geometric
constructions in two equivalent forms: applying the covariant derivative $%
\widehat{\mathbf{D}}$ and/or $\nabla .$ This is possible because all values $%
\ \Gamma ,$ $\widehat{\mathbf{\Gamma }}$ and $\ Z$ are
completely determined in unique forms by $\mathbf{g}$ for a prescribed
nonholonomic splitting.

There were considered different constructions of $\ ^{\theta }\mathcal{A}$
corresponding to different choices of the so--called ''symbols of
operators'' and the extended Weyl ordered symbol $\mathcal{W},$ to get an
algebra isomorphism with properties
 $\mathcal{W}[\ ^{1}f\star \ ^{2}f]\equiv \mathcal{W}[\ ^{1}f]\mathcal{W}[\
^{2}f]=\ ^{1}\hat{f}\ \ ^{2}\hat{f}$, %
for $\ ^{1}f,\ ^{2}f\in \mathcal{C}(\mathbf{V})$ and $\ ^{1}\hat{f}\ ,\ ^{2}%
\hat{f}\in \ ^{\theta }\mathcal{A}(\mathbf{V}),$ when the induced $\star $%
--product is associative and noncommutative. Such a product can be
introduced on nonholonomic manifolds using the N--elongated partial
derivatives,
\begin{equation}
\ ^{1}\hat{f}\star \ ^{2}\hat{f}=\sum\limits_{k=0}^{\infty }\frac{1}{k!}%
\left( \frac{i}{2}\right) ^{k}\theta ^{\alpha _{1}\beta _{1}}\ldots \theta
^{\alpha _{k}\beta _{k}}\mathbf{e}_{\alpha _{1}}\ldots \mathbf{e}_{\alpha
_{k}}\ ^{1}f(u)\ \mathbf{e}_{\beta _{1}}\ldots \mathbf{e}_{\beta _{k}}\
^{2}f(u).  \label{starpr}
\end{equation}

For a noncommutative nonholonomic spacetime model $\ ^{\theta }\mathbf{V}$
of a spacetime $\mathbf{V,}$ we can derive a N--adapted local frame
structure $\ \ ^{\theta }\mathbf{e}_{\alpha }=(\ \ ^{\theta }\mathbf{e}%
_{i},\ \ ^{\theta }\mathbf{e}_{a})$ which can be constructed by
noncommutative deformations of $\mathbf{e}_{\alpha },$%
\begin{eqnarray}
\ ^{\theta }\mathbf{e}_{\alpha \ }^{\ \underline{\alpha }} &=&\mathbf{e}%
_{\alpha \ }^{\ \underline{\alpha }}+i\theta ^{\alpha _{1}\beta _{1}}\mathbf{%
e}_{\alpha \ \alpha _{1}\beta _{1}}^{\ \underline{\alpha }}+\theta ^{\alpha
_{1}\beta _{1}}\theta ^{\alpha _{2}\beta _{2}}\mathbf{e}_{\alpha \ \alpha
_{1}\beta _{1}\alpha _{2}\beta _{2}}^{\ \underline{\alpha }}+\mathcal{O}%
(\theta ^{3}),  \label{ncfd} \\
\ ^{\theta }\mathbf{e}_{\ \star \underline{\alpha }}^{\alpha } &=&\mathbf{e}%
_{\ \underline{\alpha }}^{\alpha }+i\theta ^{\alpha _{1}\beta _{1}}\mathbf{e}%
_{\ \underline{\alpha }\alpha _{1}\beta _{1}}^{\alpha }+\theta ^{\alpha
_{1}\beta _{1}}\theta ^{\alpha _{2}\beta _{2}}\mathbf{e}_{\ \underline{%
\alpha }\alpha _{1}\beta _{1}\alpha _{2}\beta _{2}}^{\alpha }+\mathcal{O}%
(\theta ^{3}),  \notag
\end{eqnarray}%
subjected to the condition $\ ^{\theta }\mathbf{e}_{\ \star \underline{%
\alpha }}^{\alpha }\star \ ^{\theta }\mathbf{e}_{\alpha \ }^{\ \underline{%
\beta }}=\delta _{\underline{\alpha }}^{\ \underline{\beta }},$ for $\delta
_{\underline{\alpha }}^{\ \underline{\beta }}$ being the Kronecker tensor,
where $\mathbf{e}_{\alpha \ \alpha _{1}\beta _{1}}^{\ \underline{\alpha }}$
and $\mathbf{e}_{\alpha \ \alpha _{1}\beta _{1}\alpha _{2}\beta _{2}}^{\
\underline{\alpha }}$ can be written in terms of $\mathbf{e}_{\alpha \ }^{\
\underline{\alpha }},\theta ^{\alpha \beta }$ and the spin distinguished
connection corresponding to $\widehat{\mathbf{D}}.$

The noncommutative deformation of a metric (\ref{prime}), $\mathbf{g}$ $%
\rightarrow \ ^{\theta }\mathbf{g,}$ can be defined in the form%
\begin{equation}
\ ^{\theta }\mathbf{g}_{\alpha \beta }=\frac{1}{2}\eta _{\underline{\alpha }%
\underline{\beta }}\left[ \ ^{\theta }\mathbf{e}_{\alpha \ }^{\ \underline{%
\alpha }}\star \left( \ ^{\theta }\mathbf{e}_{\beta \ }^{\ \underline{\beta }%
}\right) ^{+}+\ ^{\theta }\mathbf{e}_{\beta \ }^{\ \underline{\beta }}\star
\left( \ ^{\theta }\mathbf{e}_{\alpha \ }^{\ \underline{\alpha }}\right) ^{+}%
\right] ,  \label{dmnc}
\end{equation}%
where $\left( \ldots \right) ^{+}$ denotes Hermitian conjugation and $\eta _{%
\underline{\alpha }\underline{\beta }}$ is the flat Minkowski space metric.
In N--adapted form, as nonholonomic deformations, such metrics were used for
constructing exact solutions in string/gauge/Einstein and Lagrange--Finsler
metric--affine and noncommutative gravity theories.

The target metrics resulting after noncommutative nonholonomic transforms
(to be investigated in this work) can \ be parametrized in general form
\begin{eqnarray}
\ ^{\theta }\mathbf{g} &=&\ ^{\theta }g_{i}(u,\theta )dx^{i}\otimes dx^{i}+\
^{\theta }h_{a}(u,\theta )\ ^{\theta }\mathbf{e}^{a}\otimes \ ^{\theta }%
\mathbf{e}^{a},  \label{target} \\
\ ^{\theta }\mathbf{e}^{\alpha } &=&\ ^{\theta }\mathbf{e}_{\ \underline{%
\alpha }}^{\alpha }(u,\theta )du^{\underline{\alpha }}=\left( e^{i}=dx^{i},\
^{\theta }\mathbf{e}^{a}=dy^{a}+\ ^{\theta }N_{i}^{a}(u,\theta
)dx^{i}\right) ,  \notag
\end{eqnarray}%
where it is convenient to consider conventional polarizations $\eta _{\ldots
}^{\ldots }$ when
\begin{equation}
\ ^{\theta }g_{i}=\check{\eta}_{i}(u,\theta )g_{i},\ \ ^{\theta }h_{a}=%
\check{\eta}_{a}(u,\theta )h_{a},\ ^{\theta }N_{i}^{a}(u,\theta )=\ \check{%
\eta}_{i}^{a}(u,\theta )N_{i}^{a},  \label{polf}
\end{equation}%
for $g_{i},h_{a},N_{i}^{a}$ given by a prime metric (\ref{prime}).

In this work, we shall analyze noncommutative deformations induced by (\ref{fuzcond})
for a class of four dimensional, 4--d, (pseudo) Riemannian
metrics (or 2--d (pseudo) Finsler metrics) defining (non) commutative
Finsler--Einstein spaces as exact solutions of the Einstein equations,
\begin{equation}
\ ^{\theta }\widehat{E}_{\ j}^{i}=\ _{h}^{\theta }\Upsilon (u)\delta _{\
j}^{i},\ \widehat{E}_{\ b}^{a}=\ _{v}^{\theta }\Upsilon (u)\delta _{\
b}^{a},\ ^{\theta }\widehat{E}_{ia}=\ \ ^{\theta }\widehat{E}_{ai}=0,
\label{eeqcdcc}
\end{equation}%
where $\ ^{\theta }\widehat{\mathbf{E}}_{\alpha \beta }=\{\ ^{\theta }%
\widehat{E}_{ij},\ ^{\theta }\widehat{E}_{ia},\ ^{\theta }\widehat{E}_{ai},\
^{\theta }\widehat{E}_{ab}\}$ are the components of the Einstein tensor
computed for the canonical distinguished connection (d--connection) $\ \
^{\theta }\widehat{\mathbf{D}}$. Functions $\ _{h}^{\theta }\Upsilon $ and $%
\ \ _{v}^{\theta }\Upsilon $ are considered to be defined by certain matter
fields in a corresponding model of (non) commutative gravity. The geometric
objects in (\ref{eeqcdcc}) must be computed using the $\star $--product (\ref%
{starpr}) and the coefficients may contain  the complex unity $i.$
Nevertheless, it is possible to prescribe such nonholonomic distributions on
the ''prime'' $\mathbf{V}$ when, for instance,
 $ \widehat{E}_{\ j}^{i}(u)\rightarrow \ \widehat{E}_{\ j}^{i}(u,\theta ),\
_{h}^{\theta }\Upsilon (u)\rightarrow \ _{h}\Upsilon (u,\theta ),\ldots$ %
and we get Lagrange--Finsler and/or (pseudo) Riemannian
geometries, and corresponding gravitational models, with parametric
dependencies of geometric objects on $\theta .$

Solutions of nonholonomic equations (\ref{eeqcdcc}) are typical ones for the
Finsler gravity with metric compatible d--connections\footnote{%
We emphasize that Finlser like coordinates can be considered on any
(pseudo), or complex Riemannian manifold and inversely. A real Finsler metric%
$\ \mathbf{f=\{f}$ $_{\alpha \beta }\}$ can be parametrized in the canonical
Sasaki form $ \mathbf{f}=\ f_{ij}dx^{i}\otimes dx^{j}+\ f_{ab}\ ^{c}\mathbf{e}%
^{a}\otimes \ ^{c}\mathbf{e}^{b}$, $\ ^{c}\mathbf{e}^{a}=dy^{a}+\
^{c}N_{i}^{a}dx^{i}$,
where the Finsler configuration is defied by 1) a fundamental real Finsler
(generating) function $F(u)=F(x,y)=F(x^{i},y^{a})>0$ if $y\neq 0$ and
homogeneous of type $F(x,\lambda y)=|\lambda |F(x,y),$ for any nonzero $%
\lambda \in \mathbb{R},$ with positively definite Hessian $\ f_{ab}=\frac{1}{%
2}\frac{\partial ^{2}F^{2}}{\partial y^{a}\partial y^{b}},$ when $\det |\
f_{ab}|\neq 0$. The Cartan canonical N--connection structure $\ ^{c}\mathbf{N%
}=\{\ ^{c}N_{i}^{a}\}$ is defined for an effective Lagrangian $L=F^{2}$ as $%
\ \ ^{c}N_{i}^{a}=\frac{\partial G^{a}}{\partial y^{2+i}}$ with $G^{a}=%
\frac{1}{4}\ f^{a\ 2+i}\left( \frac{\partial ^{2}L}{\partial y^{2+i}\partial
x^{k}}y^{2+k}-\frac{\partial L}{\partial x^{i}}\right) ,$ where $\ f^{ab}$
is inverse to $\ f_{ab}$ and respective contractions of horizontal (h) and
vertical (v) indices, $\ i,j,...$ and $a,b...,$ are performed following the
rule: we can write, for instance, an up $v$--index $a$ as $a=2+i$ and
contract it with a low index $i=1,2.$ In brief, we shall write $y^{i}$
instead of $y^{2+i},$ or $y^{a}$.} or in the so--called
Einsteing/string/brane/gauge gravity with nonholonomic/Finsler \ like
variables. In the standard approach to the Einstein gravity, when $\widehat{%
\mathbf{D}}\rightarrow \nabla ,$ the Einstein spaces are defined by metrics $%
\mathbf{g}$ as solutions of the equations
\begin{equation}
\ E_{\alpha \beta }=\Upsilon _{\alpha \beta },  \label{eeqlcc}
\end{equation}%
where $\ E_{\alpha \beta }$ is the Einstein tensor for $\nabla $ and $%
\Upsilon _{\alpha \beta }$ is proportional to the energy--momentum tensor of
matter in general relativity. Of course, for noncommutative gravity models
in (\ref{eeqlcc}), we must consider values of type $\ ^{\theta }\nabla ,\ \
^{\theta }E,\ \ ^{\theta }\Upsilon $ etc. Nevertheless, for certain general
classes of ansatz of primary metrics $\mathbf{g}$ on a $\mathbf{V}$ we can
reparametrize such a way the nonholonomic distributions on corresponding $\
^{\theta }\mathbf{V}$ that $\ ^{\theta }\mathbf{g}(u)=\mathbf{\tilde{g}}%
(u,\theta )$ are solutions of (\ref{eeqcdcc}) transformed into a system of
partial differential equations (with parametric dependence of coefficients
on $\theta )$ which after certain further restrictions on coefficients
determining the nonholonomic distribution can result in generic
off--diagonal solutions for general relativity.\footnote{%
the metrics for such spacetimes can not diagonalized by coordinate transforms%
}

\subsubsection{General solutions with noncommutative parameters}

\label{s3} A noncommutative deformation of coordinates of type (\ref{fuzcond}%
) defined by $\theta $ together with correspondingly stated nonholonomic
distributions on $\ ^{\theta }\mathbf{V}$ transform prime metrics $\mathbf{g}
$ (for instance, a Schwarzschild solution on $\mathbf{V}$) into respective
classes of target metrics $\ ^{\theta }\mathbf{g}=\mathbf{\tilde{g}}$ as
solutions of Finsler type gravitational field equations (\ref{eeqcdcc})
and/or standard Einstein equations (\ref{eeqlcc}) in general gravity. The
goal of this section is to show how such solutions and their
noncommutative/nonholonomic transforms can be constructed in general form
for vacuum and non--vacuum locally anisotropic configurations.

We parametrize the noncommutative and nonholonomic transform of a metric $%
\mathbf{g}$ (\ref{prime}) into a $\ ^{\theta }\mathbf{g}=\mathbf{\tilde{g}}$
(\ref{target}) resulting from formulas (\ref{ncfd}), and (\ref{dmnc}) and
expressing of polarizations in (\ref{polf}), as $\check{\eta}_{\alpha
}(u,\theta )=\grave{\eta}_{\alpha }(u)+\mathring{\eta}_{\alpha }(u)\theta
^{2}+\mathcal{O}(\theta ^{4})$, in the form%
\begin{eqnarray}
\ ^{\theta }g_{i} &=&\grave{g}_{i}(u)+\mathring{g}_{i}(u)\theta ^{2}+%
\mathcal{O}(\theta ^{4}),\ ^{\theta }h_{a}=\grave{h}_{a}(u)+\mathring{h}%
_{a}(u)\theta ^{2}+\mathcal{O}(\theta ^{4}),  \notag \\
\ ^{\theta }N_{i}^{3} &=&\ ^{\theta }w_{i}(u,\theta ),\ \ ^{\theta
}N_{i}^{4}=\ ^{\theta }n_{i}(u,\theta ),  \label{coefm}
\end{eqnarray}%
where $\grave{g}_{i}=g_{i}$ and $\grave{h}_{a}=h_{a}$ for $\grave{\eta}%
_{\alpha }=1,$;  for general $\grave{\eta}_{\alpha }(u)$ we get
nonholonomic deformations which do not depend on $\theta .$

The gravitational field equations (\ref{eeqcdcc}) for a metric (\ref{target}) with coefficients (\ref{coefm}) and sources of type
{\small
\begin{equation}
\ ^{\theta }\mathbf{\Upsilon }_{\beta }^{\alpha }=[\Upsilon
_{1}^{1}=\Upsilon _{2}(x^{i},v,\theta ),\Upsilon _{2}^{2}=\Upsilon
_{2}(x^{i},v,\theta ),\Upsilon _{3}^{3}=\Upsilon _{4}(x^{i},\theta
),\Upsilon _{4}^{4}=\Upsilon _{4}(x^{i},\theta )]  \label{sdiag}
\end{equation}%
} transform into this system of partial differential equations: {\small
\begin{eqnarray}
&&\ ^{\theta }\widehat{R}_{1}^{1}=\ ^{\theta }\widehat{R}_{2}^{2}=\frac{1}{%
2\ ^{\theta }g_{1}\ ^{\theta }g_{2}}\times  \label{ep1a} \\
&&\left[ \frac{\ ^{\theta }g_{1}^{\bullet }\ ^{\theta }g_{2}^{\bullet }}{2\
^{\theta }g_{1}}+\frac{(\ ^{\theta }g_{2}^{\bullet })^{2}}{2\ ^{\theta }g_{2}%
}-\ ^{\theta }g_{2}^{\bullet \bullet }+\frac{\ ^{\theta }g_{1}^{^{\prime }}\
^{\theta }g_{2}^{^{\prime }}}{2\ ^{\theta }g_{2}}+\frac{(\ ^{\theta
}g_{1}^{^{\prime }})^{2}}{2\ ^{\theta }g_{1}}-\ ^{\theta }g_{1}^{^{\prime
\prime }}\right] =-\Upsilon _{4}(x^{i},\theta ),  \notag \\
&&\ ^{\theta }\widehat{S}_{3}^{3}=\ ^{\theta }\widehat{S}_{4}^{4}=\frac{1}{%
2\ ^{\theta }h_{3}\ ^{\theta }h_{4}} [\ ^{\theta }h_{4}^{\ast } (\ln \sqrt{%
|\ ^{\theta }h_{3}\ ^{\theta }h_{4}|})^{\ast }-\ ^{\theta }h_{4}^{\ast \ast
}] =-\Upsilon _{2}(x^{i},v,\theta ),  \notag \\
&&\ ^{\theta }\widehat{R}_{3i}=-\ ^{\theta }w_{i}\frac{\beta }{2\ ^{\theta
}h_{4}}-\frac{\alpha _{i}}{2\ ^{\theta }h_{4}}=0, \ \ ^{\theta }\widehat{R}_{4i}=-\frac{\ ^{\theta }h_{3}}{2\ ^{\theta }h_{4}}%
\left[ \ ^{\theta }n_{i}^{\ast \ast }+\gamma \ ^{\theta }n_{i}^{\ast }\right]
=0,  \notag
\end{eqnarray}%
} where, for $\ ^{\theta }h_{3,4}^{\ast }\neq 0,$%
{\small
\begin{equation}
\alpha _{i} = \ ^{\theta }h_{4}^{\ast }\partial _{i}\phi ,\ \beta =\
^{\theta }h_{4}^{\ast }\ \phi ^{\ast },\ \gamma =\frac{3\ ^{\theta
}h_{4}^{\ast }}{2\ ^{\theta }h_{4}}-\frac{\ ^{\theta }h_{3}^{\ast }}{\
^{\theta }h_{3}}, \phi =\ln |\ ^{\theta }h_{3}^{\ast }/\sqrt{|\ ^{\theta }h_{3}\ ^{\theta
}h_{4}|}|,  \label{coefa}
\end{equation}%
}
when the necessary partial derivatives are written in the form \ $a^{\bullet
}=\partial a/\partial x^{1},$ $a^{\prime }=\partial a/\partial x^{2},$\ $%
a^{\ast }=\partial a/\partial v.$ In the vacuum case, we must consider $%
\Upsilon _{2,4}=0.$ Various classes of (non) holonomic Einstein,
Finsler--Einstein and generalized spaces can be generated if the \ sources (%
\ref{sdiag}) are taken $\Upsilon _{2,4}=\lambda ,$ where $\lambda $ is a
nonzero cosmological constant.

Let us express the coefficients of a target metric (\ref{target}), and
respective polarizations (\ref{polf}), in the form%
{\small
\begin{eqnarray}
\ ^{\theta }g_{k} &=&\epsilon _{k}e^{\psi (x^{i},\theta )},  \label{ansatz1}
\\
\ ^{\theta }h_{3} &=&\epsilon _{3}h_{0}^{2}(x^{i},\theta) \left[ f^{\ast
} (x^{i},v,\theta)\right] ^{2}|\varsigma (x^{i},v,\theta),\ ^{\theta }h_{4} = \epsilon _{4}\left[ f (x^{i},v,\theta)
-f_{0}(x^{i},\theta)\right] ^{2},  \notag \\
\ ^{\theta }N_{k}^{3} &=&w_{k}(x^{i},v,\theta ) ,\ ^{\theta
}N_{k}^{4}=n_{k}\left( x^{i},v,\theta \right) ,  \notag
\end{eqnarray}%
}
with arbitrary constants $\epsilon _{\alpha }=\pm 1,$ and $h_{3}^{\ast }\neq
0$ and $h_{4}^{\ast }\neq 0,$ when $f^{\ast }=0.$ By straightforward
verifications, we can prove that any off--diagonal metric
{\small
\begin{eqnarray}
&&\ \ _{\circ }^{\theta }\mathbf{g}=e^{\psi } \epsilon _{i}\
dx^{i}\otimes dx^{i} +\epsilon _{3}h_{0}^{2}\left[ f^{\ast }\right] ^{2}|\varsigma |\ \delta
v\otimes \delta v +\epsilon _{4}\left[ f-f_{0}\right] ^{2}\ \delta
y^{4}\otimes \delta y^{4},  \notag \\
&&\delta v=dv+w_{k}\left( x^{i},v,\theta \right) dx^{k},\ \delta
y^{4}=dy^{4}+n_{k}\left( x^{i},v,\theta \right) dx^{k},  \label{gensol1}
\end{eqnarray}%
}
defines an exact solution of the system of partial differential equations (%
\ref{ep1a}), i.e. of the Einstein equation for the canonical
d--connection (\ref{eeqcdcc}) for a metric of type (\ref{target}) with the
coefficients of form (\ref{ansatz1}), if there are satisfied the conditions%
\footnote{%
we put the left symbol ''$\circ $'' in order to emphasize that such a metric
is a solution of gravitational field equations}:
\begin{enumerate}
\item function $\psi $ is a solution of equation $\epsilon _{1}\psi
^{\bullet \bullet }+\epsilon _{2}\psi ^{^{\prime \prime }}=\Upsilon _{4};$

\item the value $\varsigma $ is computed following formula
\begin{equation*}
\varsigma \left( x^{i},v,\theta \right) =\varsigma _{\lbrack 0]}\left(
x^{i},\theta \right) -\frac{\epsilon _{3}}{8}h_{0}^{2}(x^{i},\theta )\int
\Upsilon _{2}f^{\ast }\left[ f-f_{0}\right] dv
\end{equation*}%
and taken $\varsigma =1$ for $\Upsilon _{2}=0;$

\item for a given source $\Upsilon _{4},$ the N--connection coefficients are
computed following the formulas
\begin{eqnarray*}
w_{i}\left( x^{k},v,\theta \right) &=&-\partial _{i}\varsigma /\varsigma
^{\ast },  \label{gensol1w} \\
n_{k}\left( x^{k},v,\theta \right) &=&\ ^{1}n_{k}\left( x^{i},\theta \right)
+\ ^{2}n_{k}\left( x^{i},\theta \right) \int \frac{\left[ f^{\ast }\right]
^{2}\varsigma dv}{\left[ f-f_{0}\right] ^{3}},  \label{gensol1n}
\end{eqnarray*}%
and $w_{i}\left( x^{k},v,\theta \right) $ are arbitrary functions if $%
\varsigma =1$ for $\Upsilon _{2}=0.$
\end{enumerate}

It should be emphasized that such solutions depend on arbitrary nontrivial
functions $f$ (with $f^{\ast }\neq 0),$ $f_{0},$ $h_{0},$ $\ \varsigma
_{\lbrack 0]},$ $\ ^{1}n_{k}$ and $\ \ ^{2}n_{k},$ and sources $\Upsilon
_{2} $ and $\Upsilon _{4}.$ Such values for the corresponding
quasi--classical limits of solutions to metrics of signatures $\epsilon
_{\alpha }=\pm 1$ have to be defined by certain boundary conditions and
physical considerations.

Ansatz of type (\ref{target}) for coefficients (\ref{ansatz1}) with $%
h_{3}^{\ast }=0$ but $h_{4}^{\ast }\neq 0$ (or, inversely, $h_{3}^{\ast
}\neq 0$ but $h_{4}^{\ast }=0)$ consist more special cases and request a bit
different method of constructing exact solutions.

\subsubsection{Off--diagonal solutions for the Levi--Civita connection}

The solutions for the gravitational field equations for the canonical
d--connection (which can be used for various models of noncommutative
Finsler gravity and generalizations) presented in the previous subsection
can be constrained additionally and transformed into solutions of the
Einstein equations for the Levi--Civita connection (\ref{eeqlcc}), all
depending, in general, on parameter $\theta .$ Such classes of metrics are
of type
\begin{eqnarray}
\ \ _{\circ }^{\theta }\mathbf{g} &=&e^{\psi (x^{i},\theta )}\left[ \epsilon
_{1}\ dx^{1}\otimes dx^{1}+\epsilon _{2}\ dx^{2}\otimes dx^{2}\right]
\label{eeqsol} \\
&&+h_{3}\left( x^{i},v,\theta \right) \ \delta v\otimes \delta v+h_{4}\left(
x^{i},v,\theta \right) \ \delta y^{4}\otimes \delta y^{4},  \notag \\
\delta v &=&dv+w_{1}\left( x^{i},v,\theta \right) dx^{1}+w_{2}\left(
x^{i},v,\theta \right) dx^{2},  \notag \\
\delta y^{4} &=&dy^{4}+n_{1}\left( x^{i},\theta \right) dx^{1}+n_{2}\left(
x^{i},\theta \right) dx^{2},  \notag
\end{eqnarray}%
with the coefficients restricted to satisfy the conditions
\begin{eqnarray*}
\epsilon _{1}\psi ^{\bullet \bullet }+\epsilon _{2}\psi ^{^{\prime \prime }}
&=&\Upsilon _{4},\ h_{4}^{\ast }\phi /h_{3}h_{4}=\Upsilon _{2},  \label{ep2b}
\\
w_{1}^{\prime }-w_{2}^{\bullet }+w_{2}w_{1}^{\ast }-w_{1}w_{2}^{\ast }
&=&0,\ n_{1}^{\prime }-n_{2}^{\bullet }=0,  \notag
\end{eqnarray*}%
for $w_{i}=\partial _{i}\phi /\phi ^{\ast },$ see (\ref{coefa}), for given
sources $\Upsilon _{4}(x^{k},\theta )$ and $\Upsilon _{2}(x^{k},v,\theta ).$

Even the ansatz (\ref{eeqsol}) depends on three coordinates $(x^{k},v)$ and
noncommutative parameter $\theta ,$ it allows us to construct more general
classes of solutions with dependence on four coordinates if such metrics can
be related by chains of nonholonomic transforms.

\subsubsection{Noncommutative deforms of the
Schwarz\-schild metric}

Solutions of type (\ref{gensol1}) and/or (\ref{eeqsol}) are very general
ones induced by noncommutative nonholonomic distributions and it is not
clear what type of physical interpretation can be associated to such
metrics. There are analyzed certain classes of nonholonomic
constraints which allows us to construct black hole solutions and
noncommutative corrections.

\paragraph{Vacuum noncommutative nonholonomic configurations:}

In the simplest case, we analyse a class of holonomic nocommutative
deformations, with $\ _{\shortmid }^{\theta }N_{i}^{a}=0,$\ of the
Schwarzschild metric%
\begin{eqnarray*}
~^{Sch}\mathbf{g} &=&\ _{\shortmid }g_{1}dr\otimes dr+\ _{\shortmid }g_{2}\
d\vartheta \otimes d\vartheta +\ _{\shortmid }h_{3}\ d\varphi \otimes
d\varphi +\ _{\shortmid }h_{4}\ dt\otimes \ dt, \\
\ _{\shortmid }g_{1} &=&-\left( 1-\frac{\alpha }{r}\right) ^{-1},\ \
_{\shortmid }g_{2}=-r^{2},\ \ _{\shortmid }h_{3}=-r^{2}\sin ^{2}\vartheta ,\
\ _{\shortmid }h_{4}=1-\frac{\alpha }{r},
\end{eqnarray*}%
written in spherical coordinates $u^{\alpha }=(x^{1}=\xi ,x^{2}=\vartheta
,y^{3}=\varphi ,y^{4}=t)$ for $\alpha =2G\mu _{0}/c^{2},$ correspondingly
defined by the Newton constant $G,$ a point mass $\mu _{0}$ and light speed $%
c.$ Taking
\begin{eqnarray}
\ _{\shortmid }\grave{g}_{i} &=&\ _{\shortmid }g_{i},\grave{h}_{a}=\
_{\shortmid }h_{a},\   \ _{\shortmid }\mathring{g}_{1} =-\frac{\alpha (4r-3\alpha )}{%
16r^{2}(r-\alpha )^{2}},\ _{\shortmid }\mathring{g}_{2}=-\frac{%
2r^{2}-17\alpha (r-\alpha )}{32r(r-\alpha )},  \notag \\
\ _{\shortmid }\mathring{h}_{3} &=&-\frac{(r^{2}+\alpha r-\alpha ^{2})\cos
\vartheta -\alpha (2r-\alpha )}{16r(r-\alpha )},\ _{\shortmid }\mathring{h}%
_{4}=-\frac{\alpha (8r-11\alpha )}{16r^{4}},  \label{defaux}
\end{eqnarray}%
for $\ _{\shortmid }^{\theta }g_{i}=\ _{\shortmid }\grave{g}_{i}+\
_{\shortmid }\mathring{g}_{i}\theta ^{2}+\mathcal{O}(\theta ^{4}),\ \ \
_{\shortmid }^{\theta }h_{a}=\ _{\shortmid }\grave{h}_{a}+\ _{\shortmid }%
\mathring{h}_{a}\theta ^{2}+\mathcal{O}(\theta ^{4})$, we get a
''degenerated'' case of solutions (\ref{gensol1}), because $\ _{\shortmid
}^{\theta }h_{a}^{\ast }=\partial \ _{\shortmid }^{\theta }h_{a}/\partial
\varphi =0$, which is related to the case of holonomic/ integrable
off--diagonal metrics.

A more general class of noncommutative deformations of the Schwarz\-schild
metric can be generated by nonholonomic transform of type (\ref{polf}) when
the metric coefficients polarizations, $\check{\eta}_{\alpha },$ and
N--connection coefficients, $\ \ _{\shortmid }^{\theta }N_{i}^{a},$ for $\
_{\shortparallel }^{\theta }g_{i} =\check{\eta}_{i}(r,\vartheta ,\theta )\
_{\shortmid }g_{i},\ \ _{\shortparallel }^{\theta }h_{a}=\check{\eta}%
_{a}(r,\vartheta ,\varphi ,\theta )\ _{\shortmid }h_{a},$ $\ \
_{\shortparallel }^{\theta }N_{i}^{3} =\ w_{i}(r,\vartheta ,\varphi ,\theta
),\ \ _{\shortparallel }^{\theta }N_{i}^{4}=\ n_{i}(r,\vartheta ,\varphi
,\theta )$, are constrained to define a metric (\ref{gensol1}) for $\Upsilon
_{4}=\Upsilon _{2}=0.$ The coefficients of such metrics, computed with
respect to N--adapted frames  defined by $\ _{\shortparallel
}^{\theta }N_{i}^{a},$ can be re--parametrized{\small
\begin{eqnarray}
\ \ \ _{\shortparallel }^{\theta }g_{k} &=&\epsilon _{k}e^{\psi (r,\vartheta
,\theta )}=\ _{\shortmid }\grave{g}_{k}+\delta \ _{\shortmid }\grave{g}%
_{k}+(\ _{\shortmid }\mathring{g}_{k}+\delta \ _{\shortmid }\mathring{g}%
_{k})\theta ^{2}+\mathcal{O}(\theta ^{4});  \label{ncfdm} \\
\ \ \ _{\shortparallel }^{\theta }h_{3} &=&\epsilon _{3}h_{0}^{2}\left[
f^{\ast }(r,\vartheta ,\varphi ,\theta )\right] ^{2} =  \notag \\
&& \left( \ _{\shortmid }\grave{h}_{3}+\delta \ _{\shortmid }\grave{h}%
_{3}\right) +\left( \ _{\shortmid }\mathring{h}_{3}+\delta \ _{\shortmid }%
\mathring{h}_{3}\right) \theta ^{2}+\mathcal{O}(\theta ^{4}),h_{0}=const\neq
0;  \notag \\
\ _{\shortparallel }^{\theta }h_{4} &=&\epsilon _{4}[ f(r,\vartheta ,\varphi
,\theta )-f_{0}(r,\vartheta ,\theta )] ^{2}= (\ _{\shortmid }\grave{h}%
_{4}+\delta \ _{\shortmid }\grave{h}_{4}) +(\ _{\shortmid }\mathring{h}%
_{4}+\delta \ _{\shortmid }\mathring{h}_{4}) \theta ^{2}+\mathcal{O}(\theta
^{4}),  \notag
\end{eqnarray}%
} where the nonholonomic deformations $\delta \ _{\shortmid }\grave{g}%
_{k},\delta \ _{\shortmid }\mathring{g}_{k},\delta \ _{\shortmid }\grave{h}%
_{a},\delta \ _{\shortmid }\mathring{h}_{a}$ are for correspondingly given
generating functions $\psi (r,\vartheta ,\theta )$ and $f(r,\vartheta
,\varphi ,\theta )$ expressed as series on $\theta ^{2k},$ for $k=1,2,... .$
Such coefficients define noncommutative Finsler type spacetimes being
solutions of the Einstein equations for the canonical d--connection. They
are determined by the (prime) Schwarzschild data $\ _{\shortmid }g_{i}$ and $%
\ _{\shortmid }h_{a}$ and certain classes on noncommutative nonholonomic
distributions defining off--diagonal gravitational interactions. In order to
get solutions for the Levi--Civita connection, we have to constrain (\ref%
{ncfdm}) additionally in a form to generate metrics of type (\ref{eeqsol})
with coefficients subjected to conditions (\ref{ep2b}) for zero sources $%
\Upsilon _{\alpha }.$

\paragraph{Noncommutative deformations with nontrivial sources:}

In the holonomic case, there are known such noncommutative generalizations
of the Schwarzschild metric when%
{\small
\begin{eqnarray}
~^{ncS}\mathbf{g} &=&\ _{\intercal }g_{1}dr\otimes dr+\ _{\intercal }g_{2}\
d\vartheta \otimes d\vartheta +\ _{\intercal }h_{3}\ d\varphi \otimes
d\varphi +\ _{\intercal }h_{4}\ dt\otimes \ dt,  \label{ncsch}  \\
\ \ _{\intercal }g_{1} &=&-\left( 1-\frac{\ 4\mu _{0}\gamma }{\sqrt{\pi }r}%
\right) ^{-1}, \ _{\intercal }g_{2}=-r^{2},\
 \ _{\intercal }h_{3} =  -r^{2}\sin ^{2}\vartheta , \ _{\intercal
}h_{4}=1-\frac{\ 4\mu _{0}\gamma }{\sqrt{\pi }r}, \notag
\end{eqnarray}%
}
for $\gamma $ being the so--called lower incomplete Gamma function $\gamma (%
\frac{3}{2},\frac{r^{2}}{4\theta }):=
\int\nolimits_{0}^{r^{2}}p^{1/2}e^{-p}dp,$ is the solution of noncommutative  Einstein equations $\ ^{\theta }E_{\alpha \beta
}=\frac{8\pi G}{c^{2}}\ \ ^{\theta }T_{\alpha \beta }$, where $\ ^{\theta
}E_{\alpha \beta }$ is formally left unchanged (i.e. is for the commutative
Levi--Civita connection in commutative coordinates) but
{\small
\begin{equation}
\ ^{\theta }T_{\ \beta }^{\alpha }=\left(
\begin{array}{cccc}
-p_{1} &  &  &  \\
& -p_{\perp } &  &  \\
&  & -p_{\perp } &  \\
&  &  & \rho _{\theta }%
\end{array}%
\right),  \label{ncs}
\end{equation}%
}
with $p_{1}=-\rho _{\theta }$ and $p_{\perp }=-\rho _{\theta }-\frac{r}{2}%
\partial _{r}\rho _{\theta }(r)$ is taken for a self--gravitating,
anisotropic fluid--type matter modeling noncommutativity.

Via nonholonomic deforms, we can generalize the solution (\ref{ncsch}) to
off--diagonal metrics of type {\small
\begin{eqnarray}
\ ~_{\theta }^{ncS}\mathbf{g} &=&-e^{\psi (r,\vartheta ,\theta )}\left[ \
dr\otimes dr+d\vartheta \otimes d\vartheta \right] -h_{0}^{2}\left[ f^{\ast
}(r,\vartheta ,\varphi ,\theta )\right] ^{2}|\varsigma (r,\vartheta ,\varphi
,\theta )|\   \notag \\
&& \delta \varphi \otimes \delta \varphi +\left[ f(r,\vartheta ,\varphi
,\theta )-f_{0}(r,\vartheta ,\theta )\right] ^{2}\ \delta t\otimes \delta t,
\label{ncsolsch} \\
\delta \varphi &=&d\varphi +w_{1}(r,\vartheta ,\varphi ,\theta
)dr+w_{2}(r,\vartheta ,\varphi ,\theta )d\vartheta ,  \notag \\
\delta t &=&dt+n_{1}(r,\vartheta ,\varphi ,\theta )dr+n_{2}(r,\vartheta
,\varphi ,\theta )d\vartheta ,  \notag
\end{eqnarray}%
} being exact solutions of the Einstein equation for the canonical
d--connection (\ref{eeqcdcc}) with locally anisotropically self--gravitating
source
\begin{equation*}
\ ^{\theta }\mathbf{\Upsilon }_{\beta }^{\alpha }=[\Upsilon
_{1}^{1}=\Upsilon _{2}^{2}=\Upsilon _{2}(r,\vartheta ,\varphi ,\theta
),\Upsilon _{3}^{3}=\Upsilon _{4}^{4}=\Upsilon _{4}(r,\vartheta ,\theta )].
\end{equation*}
Such sources should be taken with certain polarization coefficients when $%
\Upsilon \sim \eta T$ is constructed using the matter energy--momentum
tensor (\ref{ncs}).

The coefficients of metric (\ref{ncsolsch}) are computed to satisfy
correspondingly the conditions:

\begin{enumerate}
\item function $\psi (r,\vartheta ,\theta )$ is a solution of equation $\psi
^{\bullet \bullet }+\psi ^{^{\prime \prime }}=-\Upsilon _{4};$

\item for a nonzero constant $h_{0}^{2},$ and given $\Upsilon _{2},$
\begin{equation*}
\varsigma \left( r,\vartheta ,\varphi ,\theta \right) =\varsigma _{\lbrack
0]}\left( r,\vartheta ,\theta \right) +h_{0}^{2}\int \Upsilon _{2}f^{\ast }
\left[ f-f_{0}\right] d\varphi ;
\end{equation*}

\item the N--connection coefficients are
\begin{eqnarray*}
w_{i}\left( r,\vartheta ,\varphi ,\theta \right) &=&-\partial _{i}\varsigma
/\varsigma ^{\ast }, \\
n_{k}\left( r,\vartheta ,\varphi ,\theta \right) &=&\ ^{1}n_{k}\left(
r,\vartheta ,\theta \right) +\ ^{2}n_{k}\left( r,\vartheta ,\theta \right)
\int \frac{\left[ f^{\ast }\right] ^{2}\varsigma }{\left[ f-f_{0}\right] ^{3}%
}d\varphi .
\end{eqnarray*}
\end{enumerate}

The above presented class of metrics describes nonholonomic deformations of
the Schwarzschild metric into (pseudo) Finsler configurations induced by the
noncommutative parameter. Subjecting the coefficients of (\ref{ncsolsch}) to
additional constraints of type (\ref{ep2b}) with nonzero sources $\Upsilon
_{\alpha },$ we extract a subclass of solutions for noncommutative gravity
with effective Levi--Civita connection.

\paragraph{Noncommutative ellipsoidal deformations:}

In this section, we provide a method of extracting ellipsoidal
configurations from a general metric (\ref{ncsolsch}) with coefficients
constrained to generate solutions on the Einstein equations for the
canonical d--connection or Levi--Civita connection.

We consider a diagonal metric depending on noncommutative parameter $\theta$
(in general, such a metric may not solve  any gravitational field
equations)
\begin{equation}
~^{\theta }\mathbf{g}=-d\xi \otimes d\xi -r^{2}(\xi )\ d\vartheta \otimes
d\vartheta -r^{2}(\xi )\sin ^{2}\vartheta \ d\varphi \otimes d\varphi
+\varpi ^{2}(\xi )\ dt\otimes \ dt,  \label{5aux1}
\end{equation}%
where the local coordinates and nontrivial  coefficients of metric are
\begin{eqnarray}
x^{1} &=&\xi ,x^{2}=\vartheta ,y^{3}=\varphi ,y^{4}=t,  \label{5aux1p} \\
\check{g}_{1} &=&-1,\ \check{g}_{2}=-r^{2}(\xi ),\ \check{h}_{3}=-r^{2}(\xi
)\sin ^{2}\vartheta ,\ \check{h}_{4}=\varpi ^{2}(\xi ),  \notag
\end{eqnarray}%
for $\xi =\int dr\ \left| 1-\frac{2\mu _{0}}{r}+\frac{\theta }{r^{2}}\right|
^{1/2}\mbox{\ and\ }\varpi ^{2}(r)=1-\frac{2\mu _{0}}{r}+\frac{\theta }{r^{2}%
}$. For $\theta =0$ and variable $\xi (r),$ this metric is just the the
Schwarzschild solution written in spacetime spherical coordinates $%
(r,\vartheta ,\varphi ,t).$

Target metrics are generated by nonholonomic deforms with $g_{i}=\eta _{i}%
\check{g}_{i}$ and $h_{a}=\eta _{a}\check{h}_{a}$ and some nontrivial $%
w_{i},n_{i},$ where $(\check{g}_{i},\check{h}_{a})$ are given by data (\ref%
{5aux1p}) and parametrized by an ansatz of type (\ref{ncsolsch}),
\begin{eqnarray}
~_{\eta }^{\theta }\mathbf{g} &=&-\eta _{1}(\xi ,\vartheta ,\theta )d\xi
\otimes d\xi -\eta _{2}(\xi ,\vartheta ,\theta )r^{2}(\xi )\ d\vartheta
\otimes d\vartheta  \label{5sol1} \\
&&-\eta _{3}(\xi ,\vartheta ,\varphi ,\theta )r^{2}(\xi )\sin ^{2}\vartheta
\ \delta \varphi \otimes \delta \varphi +\eta _{4}(\xi ,\vartheta ,\varphi
,\theta )\varpi ^{2}(\xi )\ \delta t\otimes \delta t,  \notag \\
\delta \varphi &=&d\varphi +w_{1}(\xi ,\vartheta ,\varphi ,\theta )d\xi
+w_{2}(\xi ,\vartheta ,\varphi ,\theta )d\vartheta ,\   \notag \\
\delta t &=&dt+n_{1}(\xi ,\vartheta ,\theta )d\xi +n_{2}(\xi ,\vartheta
,\theta )d\vartheta ;  \notag
\end{eqnarray}
the coefficients of such metrics are constrained to be solutions of the
system of equations (\ref{ep1a}). Such equations for $\Upsilon _{2}=0$ state certain
relations between the coefficients of the vertical metric and polarization functions,%
{\small
\begin{equation}
h_{3} =-h_{0}^{2}(b^{\ast })^{2}=\eta _{3}(\xi ,\vartheta ,\varphi ,\theta
)r^{2}(\xi )\sin ^{2}\vartheta, \
h_{4} = b^{2}=\eta _{4}(\xi ,\vartheta ,\varphi ,\theta )\varpi ^{2}(\xi ),
 \label{aux41}
\end{equation}%
}
for $|\eta _{3}|=(h_{0})^{2}|\check{h}_{4}/\check{h}_{3}|\left[ \left( \sqrt{%
|\eta _{4}|}\right) ^{\ast }\right] ^{2}.$ In these formulas, we have to
chose $h_{0}=const$ (it must be $h_{0}=2$ in order to satisfy the condition (%
\ref{ep2b})), where $\eta _{4}$ can be any function satisfying the condition
$\eta _{4}^{\ast }\neq 0.$ We generate a class of solutions for any function
$b(\xi ,\vartheta ,\varphi ,\theta )$ with $b^{\ast }\neq 0.$ For classes of
solutions with nontrivial sources, it is more convenient to work directly
with $\eta _{4},$ for $\eta _{4}^{\ast }\neq 0$ but, for vacuum
configurations, we can chose as a generating function, for instance, $h_{4},
$ for $h_{4}^{\ast }\neq 0.$

It is possible to compute the polarizations $\eta _{1}$ and $\eta _{2},$
when $\eta _{1}=\eta _{2}r^{2}=e^{\psi (\xi ,\vartheta )},$ from (\ref{ep1a}%
) with $\Upsilon _{4}=0,$ i.e. from $\psi ^{\bullet \bullet }+\psi ^{\prime
\prime }=0.$

Putting the above defined values of coefficients in the ansatz (\ref{5sol1}%
), we find a class of exact vacuum solutions of the Einstein equations
defining stationary nonholonomic deformations of the Sch\-warz\-schild
metric, {\small
\begin{eqnarray}
~^{\varepsilon }\mathbf{g} &=&-e^{\psi (\xi ,\vartheta ,\theta )}\left( d\xi
\otimes d\xi +\ d\vartheta \otimes d\vartheta \right)  \label{5sol1a} \\
&&-4\left[ \left( \sqrt{|\eta _{4}(\xi ,\vartheta ,\varphi ,\theta )|}%
\right) ^{\ast }\right] ^{2}\varpi ^{2}(\xi )\ \delta \varphi \otimes \
\delta \varphi +\eta _{4}(\xi ,\vartheta ,\varphi ,\theta )\varpi ^{2}(\xi
)\ \delta t\otimes \delta t,  \notag \\
\delta \varphi &=&d\varphi +w_{1}(\xi ,\vartheta ,\varphi ,\theta )d\xi
+w_{2}(\xi ,\vartheta ,\varphi ,\theta )d\vartheta , \notag \\
\delta t&=& dt+\ ^{1}n_{1}(\xi ,\vartheta ,\theta )d\xi +\ ^{1}n_{2}(\xi
,\vartheta ,\theta )d\vartheta .  \notag
\end{eqnarray}%
} The N--connection coefficients $w_{i}$ and $\ ^{1}n_{i}$ in (\ref{5sol1a})
must satisfy the last two conditions\ from (\ref{ep2b}) in order to get
vacuum metrics in Einstein gravity. Such vacuum solutions are for
nonholonomic deformations of a static black hole metric into (non) holonomic
noncommutative Einstein spaces with locally anistoropic backgrounds (on
coordinate $\varphi )$ defined by an arbitrary function $\eta _{4}(\xi
,\vartheta ,\varphi ,\theta )$ with $\partial _{\varphi }\eta _{4}\neq 0,$
an arbitrary $\psi (\xi ,\vartheta ,\theta )$ solving the 2--d Laplace
equation and certain integration functions $\ ^{1}w_{i}(\xi ,\vartheta
,\varphi ,\theta )$ and $\ ^{1}n_{i}(\xi ,\vartheta ,\theta ).$ The
nonholonomic structure of such spaces depends parametrically on
noncommutative parameter(s) $\theta .$

In general, the solutions from the target set of metrics (\ref{5sol1}), or (%
\ref{5sol1a}), do not define black holes and do not describe obvious
physical situations. Nevertheless, they preserve the singular character of
the coefficient $\varpi ^{2}(\xi )$ vanishing on the horizon of a
Schwarzschild black hole if we take only smooth integration functions for
some small noncommutative parameters $\theta .$ We can also consider a
prescribed physical situation when, for instance, $\eta _{4}$ mimics 3--d,
or 2--d, solitonic polarizations on coordinates $\xi ,\vartheta ,\varphi ,$
or on $\xi ,\varphi .$

\subsubsection{Extracting black hole and rotoid configurations}

\label{s5} From a class of metrics (\ref{5sol1a}) defining nonholonomic
noncommutative deformations of the Schwarzschild solution depending on
parameter $\theta ,$ it is possible to select locally anisotropic
configurations with possible physical interpretation of gravitational vacuum
configurations with spherical and/or rotoid (ellipsoid) symmetry.

\paragraph{Linear parametric noncommutative polarizations:}

Let us consider generating functions of type
 $b^{2}=q(\xi ,\vartheta ,\varphi )+ \bar{\theta} s(\xi ,\vartheta ,\varphi )$
and, for simplicity, restrict our analysis only with linear decompositions
on a small dimensionless parameter $\bar{\theta}\sim \theta ,$ with $0<\bar{%
\theta}<<1.$ This way, we shall construct off--diagonal exact solutions of
the Einstein equations depending on $\ \bar{\theta} $ which for rotoid
configurations can be considered as a small eccentricity.\footnote{%
From a formal point of view, we can summarize on all orders $\ \left(\bar{%
\theta}\right) ^{2},$ $\left(\bar{\theta}\right) ^{3}...$ stating such
recurrent formulas for coefficients when get convergent series to some
functions depending both on spacetime coordinates and a parameter $\bar{%
\theta}$.} For $b$, we get $\left( b^{\ast }\right) ^{2}=%
\left[ (\sqrt{|q|})^{\ast }\right] ^{2} [ 1+\bar{\theta}\frac{1}{(\sqrt{%
|q|})^{\ast }}\left( \frac{s}{\sqrt{|q|}}\right) ^{\ast }]$, which
allows us to compute the vertical coefficients of d--metric (\ref{5sol1a})
(i.e $h_{3}$ and $h_{4}$ and corresponding polarizations $\eta _{3} $ and $%
\eta _{4})$ using formulas (\ref{aux41}).
On should emphasize that nonholonomic deformations are not obligatory
related to noncommutative ones. For instance, in a particular case, we can
generate nonholonomic deformations of the Schwarzschild solution not
depending on $\ \bar{\theta}:$ we have to put $\bar{\theta}=0$ in the above
formulas\ and consider $b^{2}=q$ and $\left( b^{\ast }\right) ^{2}= [(%
\sqrt{|q|})^{\ast }] ^{2}.$

Nonholonomic deforms to rotoid configurations can be generated for
\begin{equation}
q=1-\frac{2\mu (\xi ,\vartheta ,\varphi )}{r}\mbox{ and }s=\frac{q_{0}(r)}{%
4\mu ^{2}}\sin (\omega _{0}\varphi +\varphi _{0}),  \label{aux42}
\end{equation}%
with $\mu (\xi ,\vartheta ,\varphi )=\mu _{0}+\bar{\theta}\mu _{1}(\xi
,\vartheta ,\varphi )$ (anisotropically polarized mass) with certain
constants $\mu ,\omega _{0}$ and $\varphi _{0}$ and arbitrary
functions/polarizations $\mu _{1}(\xi ,\vartheta ,\varphi )$ and $q_{0}(r)$
to be determined from some boundary conditions, with $\ \bar{\theta}$
treated as the eccentricity of an ellipsoid.\footnote{%
we can relate $\bar{\theta}$ to an eccentricity because the coefficient $%
h_{4}=b^{2}=\eta _{4}(\xi ,\vartheta ,\varphi ,\ \bar{\theta} )$ $\varpi
^{2}(\xi )$ becomes zero for data (\ref{aux42}) if $r_{+}\simeq {2\mu _{0}}/[%
{1+\bar{\theta} \frac{q_{0}(r)}{4\mu ^{2}}\sin (\omega _{0}\varphi +\varphi
_{0})}],$ which is the ''parametric'' equation for an ellipse $r_{+}(\varphi
)$ for any fixed values $\frac{q_{0}(r)}{4\mu ^{2}},\omega _{0},\varphi _{0}$
and $\mu _{0}$} Such a noncommutative nonholonomic configuration determines
a small deformation of the Schwarzschild spherical horizon into a rotoid configuration with eccentricity $\bar{\theta}.$

We provide the general solution for noncommutative ellipsoidal black holes
determined by nonholonomic h--components of metric and N--connecti\-on
coefficients which ''survive'' in the limit $\ \bar{\theta}\rightarrow 0,$
i.e. such values do not depend on noncommutative parameter. Dependence \ on
noncommutativity is contained in v--components of metric. This class of
stationary rotoid type solutions is parametrized in the form {\small
\begin{eqnarray*}
~_{\theta }^{rot}\mathbf{g} &=&-e^{\psi }\left( d\xi \otimes d\xi +\
d\vartheta \otimes d\vartheta \right) -4\left[ (\sqrt{|q|})^{\ast }\right]
^{2} [1+\bar{\theta}\frac{1}{(\sqrt{|q|})^{\ast }}\left( \frac{s}{\sqrt{%
|q|}}\right) ^{\ast }]  \notag \\
&& \delta \varphi \otimes \ \delta \varphi +\left( q+\bar{\theta}s\right) \
\delta t\otimes \delta t,  \notag  \label{rotoidm} \\
\delta \varphi &=&d\varphi +w_{1}d\xi +w_{2}d\vartheta ,\ \delta t=dt+\
^{1}n_{1}d\xi +\ ^{1}n_{2}d\vartheta ,  \notag
\end{eqnarray*}%
} with functions $q(\xi ,\vartheta ,\varphi )$ and $s(\xi ,\vartheta
,\varphi ) $ given by formulas (\ref{aux42}) and N--connec\-ti\-on
coefficients $w_{i}(\xi ,\vartheta ,\varphi )$ and $\ n_{i}=$ $\
^{1}n_{i}(\xi ,\vartheta ) $ subjected to conditions
$w_{1}w_{2}\left( \ln |\frac{w_{1}}{w_{2}}|\right) ^{\ast } =w_{2}^{\bullet
}-w_{1}^{\prime },\quad w_{i}^{\ast }\neq 0;$
 or  $w_{2}^{\bullet }-w_{1}^{\prime } = 0,\quad w_{i}^{\ast }=0;\
^{1}n_{1}^{\prime }(\xi ,\vartheta )-\ ^{1}n_{2}^{\bullet }(\xi ,\vartheta
)=0$
and $\psi (\xi ,\vartheta )$ being any function for which $\psi ^{\bullet
\bullet }+\psi ^{\prime \prime }=0.$

\paragraph{Rotoids and noncommutative solitonic distributions:}

There are sta\-tic three dimensional solitonic distributions $\eta (\xi
,\vartheta ,\varphi ,\theta ),$ defined as solutions of a solitonic equation%
\footnote{%
a function $\eta $ can be a solution of any three dimensional solitonic and/
or other nonlinear wave equations} $\eta ^{\bullet \bullet }+\epsilon (\eta
^{\prime }+6\eta \ \eta ^{\ast }+\eta ^{\ast \ast \ast })^{\ast }=0,\
\epsilon =\pm 1$, resulting in stationary black ellipsoid--solitonic
noncommutative spacetimes $^{\theta }\mathbf{V}$ \ generated as further
deformations of a metric $~_{\theta }^{rot}\mathbf{g}$ (\ref{rotoidm}). Such
metrics are of type {\small
\begin{eqnarray}
~_{sol\theta }^{rot}\mathbf{g} &=&-e^{\psi }\left( d\xi \otimes d\xi +\
d\vartheta \otimes d\vartheta \right)  \label{solrot} \\
&&-4\left[ (\sqrt{|\eta q|})^{\ast }\right] ^{2}\left[ 1+\bar{\theta}\frac{1%
}{(\sqrt{|\eta q|})^{\ast }}\left( \frac{s}{\sqrt{|\eta q|}}\right) ^{\ast }%
\right] \ \delta \varphi \otimes \ \delta \varphi  \notag \\
&&+\eta \left( q+\bar{\theta}s\right) \ \delta t\otimes \delta t,  \notag \\
\delta \varphi &=&d\varphi +w_{1}d\xi +w_{2}d\vartheta ,\ \delta t=dt+\
^{1}n_{1}d\xi +\ ^{1}n_{2}d\vartheta.  \notag
\end{eqnarray}%
} For small values of $\bar{\theta},$ a possible spacetime noncommutativity
determines nonholonomic embedding of the Schwarzschild solution into a
solitonic vacuum. In the limit of small polarizations, when $|\eta |\sim 1,$
it is preserved the black hole character of metrics and the solitonic
distribution can be considered as on a Schwarzschild background. It is also
possible to take such parameters of $\eta $ when a black hole is
nonholonomically placed on a ''gravitational hill'' defined by a soliton
induced by spacetime noncommutativity.

A vacuum metric (\ref{solrot}) can be generalized for (pseudo) Finsler
spaces with canonical d--connection as a solution of equations $\widehat{%
\mathbf{R}}_{\alpha \beta }=0$ (\ref{eeqcdcc}) if the metric is generalized
to a subclass of (\ref{5sol1}) with stationary coefficients subjected to
conditions
\begin{eqnarray*}
&&\psi ^{\bullet \bullet }(\xi ,\vartheta ,\bar{\theta})+\psi ^{^{\prime
\prime }}(\xi ,\vartheta ,\bar{\theta})=0; \\
h_{3} &=&\pm e^{-2\ ^{0}\phi }\frac{\left( h_{4}^{\ast }\right) ^{2}}{h_{4}}%
\mbox{ for  given }h_{4}(\xi ,\vartheta ,\varphi ,\bar{\theta}),\ \phi =\
^{0}\phi =const; \\
w_{i} &=&w_{i}(\xi ,\vartheta ,\varphi ,\bar{\theta})%
\mbox{ are any
functions  }; \\
n_{i} &=&\ \ ^{1}n_{i}(\xi ,\vartheta ,\bar{\theta})+\ ^{2}n_{i}(\xi
,\vartheta ,\bar{\theta})\int \left( h_{4}^{\ast }\right)
^{2}|h_{4}|^{-5/2}dv,\ n_{i}^{\ast }\neq 0; \\
&=&\ ^{1}n_{i}(\xi ,\vartheta ,\bar{\theta}),n_{i}^{\ast }=0,
\end{eqnarray*}%
for $h_{4}=\eta (\xi ,\vartheta ,\varphi ,\bar{\theta})\left[ q(\xi
,\vartheta ,\varphi )+\bar{\theta}s(\xi ,\vartheta ,\varphi )\right] .$ In
the limit $\bar{\theta}\rightarrow 0,$ we get a Schwarzschild configuration
mapped nonholonomically on a N--anholonomic (pseudo) Riemannian spacetime
with a prescribed nontrivial N--connection structure.

\subsubsection{Noncommutative gravity and (pseudo) Finsler variables}
We summarize the main steps of such noncommutative complex Finsler --
(pseudo) Riemannian transform:
\begin{enumerate}
\item Let us consider a solution for (non)holonomic noncommutative
generalized Einstein gravity with a metric\footnote{%
we shall omit the left label $\theta $ in this section if this will not
result in ambiguities}
\begin{eqnarray*}
\ ^{\theta }\mathbf{\mathring{g}} &=&\mathring{g}_{i}dx^{i}\otimes dx^{i}+%
\mathring{h}_{a}(dy^{a}+\mathring{N}_{j}^{a}dx^{j})\otimes (dy^{a}+\mathring{%
N}_{i}^{a}dx^{i}) \\
&=&\mathring{g}_{i}e^{i}\otimes e^{i}+\mathring{h}_{a}\mathbf{\mathring{e}}%
^{a}\otimes \mathbf{\mathring{e}}^{a}=\mathring{g}_{i^{\prime \prime
}j^{\prime \prime }}e^{i^{\prime \prime }}\otimes e^{j^{\prime \prime }}+%
\mathring{h}_{a^{\prime \prime }b^{\prime \prime }}\mathbf{\mathring{e}}%
^{a^{\prime \prime }}\otimes \mathbf{\mathring{e}}^{b^{\prime \prime }}
\end{eqnarray*}%
related to an arbitrary (pseudo) Riemannian metric with transforms of type
 $\ ^{\theta }\mathbf{\mathring{g}}_{\alpha ^{\prime \prime }\beta ^{\prime
\prime }}=\ \mathbf{\mathring{e}}_{\ \alpha ^{\prime \prime }}^{\alpha
^{\prime }}\ \mathbf{\mathring{e}}_{\ \beta ^{\prime \prime }}^{\beta
^{\prime }}\ ^{\theta }\mathbf{g}_{\alpha ^{\prime }\beta ^{\prime }}$
 parametrized in the form%
\begin{equation*}
\mathring{g}_{i^{\prime \prime }j^{\prime \prime }}=g_{i^{\prime }j^{\prime
}}\mathbf{\mathring{e}}_{\ i^{\prime \prime }}^{i^{\prime }}\mathbf{%
\mathring{e}}_{\ j^{\prime \prime }}^{j^{\prime }}+h_{a^{\prime }b^{\prime }}%
\mathbf{\mathring{e}}_{\ i^{\prime \prime }}^{a^{\prime }}\mathbf{\mathring{e%
}}_{\ j^{\prime \prime }}^{b^{\prime }},\ \mathring{h}_{a^{\prime \prime
}b^{\prime \prime }}=g_{i^{\prime }j^{\prime }}\mathbf{\mathring{e}}_{\
a^{\prime \prime }}^{i^{\prime }}\mathbf{\mathring{e}}_{\ b^{\prime \prime
}}^{j^{\prime }}+h_{a^{\prime }b^{\prime }}\mathbf{\mathring{e}}_{\
a^{\prime \prime }}^{a^{\prime }}\mathbf{\mathring{e}}_{\ b^{\prime \prime
}}^{b^{\prime }}.
\end{equation*}
For $\mathbf{\mathring{e}}_{\ i^{\prime \prime }}^{i^{\prime }}=\delta _{\
i^{\prime \prime }}^{i^{\prime }},\mathbf{\mathring{e}}_{\ a^{\prime \prime
}}^{a^{\prime }}=\delta _{\ a^{\prime \prime }}^{a^{\prime }},$ we write
 $\mathring{g}_{i^{\prime \prime }}=g_{i^{\prime \prime }}+h_{a^{\prime
}}\left( \mathbf{\mathring{e}}_{\ i^{\prime \prime }}^{a^{\prime }}\right)
^{2},~\mathring{h}_{a^{\prime \prime }}=g_{i^{\prime }}\left( \mathbf{%
\mathring{e}}_{\ a^{\prime \prime }}^{i^{\prime }}\right) ^{2}+h_{a^{\prime
\prime }}$,
 i.e. in a form of four equations for eight unknown variables $\mathbf{%
\mathring{e}}_{\ i^{\prime \prime }}^{a^{\prime }}$ and $\mathbf{\mathring{e}%
}_{\ a^{\prime \prime }}^{i^{\prime }},$ and
 $\ \mathring{N}_{i^{\prime \prime }}^{a^{\prime \prime }}=\mathbf{\mathring{e}%
}_{i^{\prime \prime }}^{\ i^{\prime }}\ \mathbf{\mathring{e}}_{\ a^{\prime
}}^{a^{\prime \prime }}\ N_{i^{\prime }}^{a^{\prime }}=N_{i^{\prime \prime
}}^{a^{\prime \prime }}$.

\item We choose on $\ ^{\theta }\mathbf{V}$ a fundamental Finsler function $%
F=\ ^{3}F(x^{i},v,\theta )+\ ^{4}F(x^{i},y,\theta )$ inducing canonically a
d--metric of type
\begin{eqnarray*}
\ ^{\theta }\mathbf{f} &=&\ f_{i}dx^{i}\otimes dx^{i}+\ f_{a}(dy^{a}+\
^{c}N_{j}^{a}dx^{j})\otimes (dy^{a}+\ ^{c}N_{i}^{a}dx^{i}), \\
&=&\ f_{i}e^{i}\otimes e^{i}+\ f_{a}\ ^{c}\mathbf{e}^{a}\otimes \ ^{c}%
\mathbf{e}^{a}
\end{eqnarray*}%
determined by data $\ \ ^{\theta }\mathbf{f}_{\alpha \beta }=\left[ \
f_{i},\ f_{a},\ ^{c}N_{j}^{a}\right] $ in a canonical N--elongated base $\
^{c}\mathbf{e}^{\alpha }=(dx^{i},\ ^{c}\mathbf{e}^{a}=dy^{a}+\
^{c}N_{i}^{a}dx^{i}).$

\item We define $g_{i^{\prime }}=\ f_{i^{\prime }}\left( \frac{\mathring{w}%
_{i^{\prime }}}{\ ^{c}w_{i^{\prime }}}\right) ^{2}\frac{h_{3^{\prime }}}{\
f_{3^{\prime }}}\mbox{ \ and \ }\ g_{i^{\prime }}=\ f_{i^{\prime }}\left(
\frac{\mathring{n}_{i^{\prime }}}{\ ^{c}n_{i^{\prime }}}\right) ^{2}\frac{%
h_{4^{\prime }}}{\ f_{4^{\prime }}}$. Both formulas are compatible if $%
\mathring{w}_{i^{\prime }}$ and $\mathring{n}_{i^{\prime }}$ are constrained
to satisfy the conditions
 $\Theta _{1^{\prime }}=\Theta _{2^{\prime }}=\Theta$,
 where $\ \Theta _{i^{\prime }}=\left( \frac{\mathring{w}_{i^{\prime }}}{\
^{c}w_{i^{\prime }}}\right) ^{2}\left( \frac{\mathring{n}_{i^{\prime }}}{\
^{c}n_{i^{\prime }}}\right) ^{2},$ \ and $\ \Theta =\left( \frac{\mathring{w}%
_{1^{\prime }}}{\ ^{c}w_{1^{\prime }}}\right) ^{2}\left( \frac{\mathring{n}%
_{1^{\prime }}}{\ ^{c}n_{1^{\prime }}}\right) ^{2}=$ $\left( \frac{\mathring{w}_{2^{\prime }}}{\ ^{c}w_{2^{\prime }}}\right)
^{2}\left( \frac{\mathring{n}_{2^{\prime }}}{\ ^{c}n_{2^{\prime }}}\right)
^{2}.$ \ Using $\Theta ,$ we compute $g_{i^{\prime }}=\left( \frac{\mathring{%
w}_{i^{\prime }}}{\ ^{c}w_{i^{\prime }}}\right) ^{2}\frac{\ f_{i^{\prime }}}{%
\ f_{3^{\prime }}}$ and $h_{3^{\prime }}=h_{4^{\prime }}\Theta$, where (in
this case) there is not summing on indices. So, we constructed the data $%
g_{i^{\prime }},h_{a^{\prime }}$ and $w_{i^{\prime }},n_{j^{\prime }}.$

\item The values $\mathbf{\mathring{e}}_{\ i^{\prime \prime }}^{a^{\prime }}$
and $\mathbf{\mathring{e}}_{\ a^{\prime \prime }}^{i^{\prime }}$ are
determined as any nontrivial solutions of
\begin{equation*}
\mathring{g}_{i^{\prime \prime }}=g_{i^{\prime \prime }}+h_{a^{\prime
}}\left( \mathbf{\mathring{e}}_{\ i^{\prime \prime }}^{a^{\prime }}\right)
^{2},\ \mathring{h}_{a^{\prime \prime }}=g_{i^{\prime }}\left( \mathbf{%
\mathring{e}}_{\ a^{\prime \prime }}^{i^{\prime }}\right) ^{2}+h_{a^{\prime
\prime }},\ \mathring{N}_{i^{\prime \prime }}^{a^{\prime \prime
}}=N_{i^{\prime \prime }}^{a^{\prime \prime }}.
\end{equation*}%
For instance, we can choose and, respectively, express
\begin{eqnarray*}
\mathbf{\mathring{e}}_{\ 1^{\prime \prime }}^{3^{\prime }} &=&\pm \sqrt{%
\left| \left( \mathring{g}_{1^{\prime \prime }}-g_{1^{\prime \prime
}}\right) /h_{3^{\prime }}\right| },\mathbf{\mathring{e}}_{\ 2^{\prime
\prime }}^{3^{\prime }}=0,\mathbf{\mathring{e}}_{\ i^{\prime \prime
}}^{4^{\prime }}=0 \\
\mathbf{\mathring{e}}_{\ a^{\prime \prime }}^{1^{\prime }} &=&0,\mathbf{%
\mathring{e}}_{\ 3^{\prime \prime }}^{2^{\prime }}=0,\mathbf{\mathring{e}}%
_{\ 4^{\prime \prime }}^{2^{\prime }}=\pm \sqrt{\left| \left( \mathring{h}%
_{4^{\prime \prime }}-h_{4^{\prime \prime }}\right) /g_{2^{\prime }}\right| },
\end{eqnarray*}%
and
 $e_{\ 1}^{1^{\prime }}=\pm \sqrt{\left| \frac{\ f_{1}}{g_{1^{\prime }}}%
\right| },\ e_{\ 2}^{2^{\prime }}=\pm \sqrt{\left| \frac{\ f_{2}}{%
g_{2^{\prime }}}\right| },\ e_{\ 3}^{3^{\prime }}=\pm \sqrt{\left| \frac{\
f_{3}}{h_{3^{\prime }}}\right| },\ e_{\ 4}^{4^{\prime }}=\pm \sqrt{\left|
\frac{\ f_{4}}{h_{4^{\prime }}}\right| }$.
\end{enumerate}

Finally, in this section, we conclude that any model of noncommutative
nonhlonomic gravity with distributions of type (\ref{fuzcond}) and/or (\ref%
{whitney}) can be equivalently re--formulated as a Finsler gravity induced
by a generating function of type $F=\ ^{3}F+\ ^{4}F.$ In the limit $\theta
\rightarrow 0,$ for any solution $^{\theta }\mathbf{\mathring{g}},$ there is
a scheme of two nonholonomic transforms which allows us to rewrite the
Schwarzschild solution and its noncommutative/nonholonomic deformations as a
Finsler metric $\ ^{\theta }\mathbf{f.}$

\subsection{Geometric methods and quantum gravity}
Let us consider a real (pseudo) Riemann manifold $V^{2n}$ of necessary
smooth class; $\dim V^{2n}=2n,$ where the dimension $n\geq 2$ is fixed.%
\footnote{%
for constructions related to Einstein's gravity $2n=4$} We label the local
coordinates in the form $u^{\alpha }=(x^{i},y^{a}),$ or $u=(x,y),$ where
indices run values $i,j,...=1,2,...n$ and $a,b,...=n+1,n+2,...,n+n,$ and $%
x^{i}$ and $y^{a}$ are respectively the conventional horizontal / holonomic
(h) and vertical / nonholonomic coordinates (v). For the local Euclidean
signature, we consider that all local basis vectors are real but, for the
pseudo--Euclidean signature $(-,+,+,+),$ we introduce $e_{j=1}=\mathit{i}%
\partial /\partial x^{1},$ where $i$ is the imaginary unity, $\mathit{i}%
^{2}=-1,$ and the local coordinate basis vectors can be written in the form $%
e_{\alpha }=\partial /\partial u^{\alpha }=(\mathit{i}\partial /\partial
x^{1},\partial /\partial x^{2},...,\partial /\partial x^{n},\partial
/\partial y^{a}).$\footnote{%
for simplicity, we shall omit to write in explicit form the imaginary unity
considering that we can always distinguish the pseudo--Euclidean signature
by a corresponding metric form or a local system of coordinates with a
coordinate proportional to the imaginary unit} The Einstein's rule on
summing up/low indices will be applied unless indicated otherwise.

Any metric on $V^{2n}$ can be written as%
\begin{equation}
\mathbf{g}=g_{ij}(x,y)\ e^{i}\otimes e^{j}+h_{ab}(x,y)\ e^{a}\otimes \ e^{b},
\label{m1}
\end{equation}%
where the dual vielbeins (tetrads, in four dimensions) $e^{a}=(e^{i},e^{a})$
are parametrized
 $e^{i}=e_{\ \underline{i}}^{i}(u)dx^{\underline{i}}$ and $e^{a}=e_{\
\underline{i}}^{a}(u)dx^{\underline{i}}+e_{\ \underline{a}}^{a}(u)dy^{%
\underline{a}}$,
for $e_{\underline{\alpha }}=\partial /\partial u^{\underline{\alpha }}=(e_{%
\underline{i}}=\partial /\partial x^{\underline{i}},e_{\underline{a}%
}=\partial /\partial y^{\underline{a}})$ and $e^{\underline{\beta }}=du^{%
\underline{\beta }}=(e^{\underline{j}}=dx^{\underline{j}},dy^{\underline{b}%
}) $ being, respectively, any fixed local coordinate base and dual base.

\begin{proposition}
\label{pr01}Any metric $\mathbf{g}$ (\ref{m1}) can be expressed in the form
\begin{equation}
\mathbf{\check{g}}=\check{g}_{i^{\prime }j^{\prime }}(x,y)\ \check{e}%
^{i^{\prime }}\otimes \check{e}^{j^{\prime }}+\check{h}_{a^{\prime
}b^{\prime }}(x,y)\ \mathbf{\check{e}}^{a^{\prime }}\otimes \ \mathbf{%
\check{e}}^{b^{\prime }},  \label{hvmetr1}
\end{equation}%
where $\check{e}^{i^{\prime }}=\delta _{\underline{i}}^{i^{\prime }}dx^{%
\underline{i}}$ and $\mathbf{\check{e}}^{a^{\prime }}=\delta _{\ \underline{a%
}}^{a^{\prime }}(u)dy^{\underline{a}}+\check{N}_{\ \underline{i}}^{a^{\prime
}}(u)dx^{\underline{i}}$ for
\begin{eqnarray}
\check{h}_{a^{\prime }b^{\prime }}(u) &=&\frac{1}{2}\frac{\partial ^{2}%
\mathcal{L}(x^{i^{\prime }},y^{c^{\prime }})}{\partial y^{a^{\prime
}}\partial y^{b^{\prime }}},  \label{elf} \\
\check{N}_{\ \underline{i}}^{a^{\prime }}(u) &=&\frac{\partial G^{a^{\prime
}}(x,y)}{\partial y^{n+\underline{j}}},  \label{ncel}
\end{eqnarray}%
where $\delta _{\underline{i}}^{i^{\prime }}$ is the Kronecker symbol, $%
\check{g}_{i^{\prime }j^{\prime }}=\check{h}_{n+i^{\prime }\ n+j^{\prime }}$
and $\check{h}^{ab}$ is the inverse of $\check{h}_{a^{\prime }b^{\prime }},$
for $\det |\check{h}_{a^{\prime }b^{\prime }}|\neq 0$ and
\begin{equation}
2G^{a^{\prime }}(x,y)=\frac{1}{2}\ \check{h}^{a^{\prime }\ n+i}\left( \frac{%
\partial ^{2}\mathcal{L}}{\partial y^{i}\partial x^{k}}y^{n+k}-\frac{%
\partial \mathcal{L}}{\partial x^{i}}\right) ,  \label{sprlf}
\end{equation}%
where $i,k=1,2,...n.$
\end{proposition}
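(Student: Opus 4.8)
The plan is to treat the statement as a local construction: prescribe a generating Lagrangian, let it fix all the canonical data appearing in (\ref{hvmetr1}), and then verify that the arbitrary metric (\ref{m1}) coincides with the resulting object up to a vielbein transform, the only genuine work being an integrability/frame-freedom argument. First I would fix a regular generating Lagrangian $\mathcal{L}(x,y)$, i.e. one whose vertical Hessian $\check{h}_{a'b'}\doteqdot\frac{1}{2}\partial^{2}\mathcal{L}/\partial y^{a'}\partial y^{b'}$ is non-degenerate, $\det|\check{h}_{a'b'}|\neq 0$. From such an $\mathcal{L}$ every other quantity in (\ref{hvmetr1}) is forced: the semispray coefficients $G^{a'}$ are given by (\ref{sprlf}), the nonlinear connection $\check{N}^{a'}_{\ \underline{i}}$ by (\ref{ncel}), the horizontal block by the Lagrange identification $\check{g}_{i'j'}=\check{h}_{n+i'\,n+j'}$ (\ref{elf}), and the adapted coframe by $\check{e}^{i'}=\delta^{i'}_{\underline{i}}dx^{\underline{i}}$, $\check{\mathbf{e}}^{a'}=\delta^{a'}_{\ \underline{a}}dy^{\underline{a}}+\check{N}^{a'}_{\ \underline{i}}dx^{\underline{i}}$. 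This assembles a canonical d--metric $\check{\mathbf{g}}$ of the required shape; what remains is to show that the given $\mathbf{g}$ is this $\check{\mathbf{g}}$ after a frame transform.

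Second I would set up the matching. Writing both metrics over a common coordinate cobasis $du^{\underline{\alpha}}$, the tensor identity $\mathbf{g}=\check{\mathbf{g}}$ becomes an algebraic system relating the coordinate components of (\ref{m1}), through a vielbein $\mathbf{\mathring{e}}_{\ \alpha''}^{\alpha'}$, to the canonical triple $(\check{g}_{i'j'},\check{h}_{a'b'},\check{N}^{a'}_{i})$. Separating horizontal and vertical indices splits this into three pieces: the $y$--block matched to $\check{h}_{a'b'}$, the off--diagonal block determining $\check{N}$ after inversion of $\check{h}$, and the $x$--block fixing the remaining frame coefficients $\mathbf{\mathring{e}}_{\ i''}^{a'},\mathbf{\mathring{e}}_{\ a''}^{i'}$. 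A crude count shows the system is underdetermined, since the frame unknowns outnumber the matching equations; using $\det|h_{ab}|\neq 0$ together with $\det|\check{h}_{a'b'}|\neq 0$, one may set the redundant frame blocks to zero and solve the remaining square subsystem locally by matrix inversion and by extracting square roots of ratios of metric coefficients. Hence a local solution exists wherever the metric is non-degenerate, which is all the proposition asserts.

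The hard part will be the hidden integrability condition buried in the first step: for a prescribed symmetric $\check{h}_{a'b'}(x,y)$ to be a vertical Hessian one needs the total symmetry $\partial_{c'}\check{h}_{a'b'}=\partial_{a'}\check{h}_{c'b'}$ (the Finsler/Cartan symmetry), which a generic metric block does \emph{not} satisfy. The resolution, and the reason the claim holds for an \emph{arbitrary} $\mathbf{g}$, is that $\mathcal{L}$ is never required to reproduce the original $h_{ab}$ directly; instead $\mathcal{L}$ is prescribed first, the integrable canonical $\check{h}_{a'b'}$ is built from it, and the vielbein freedom of the second step is used to absorb the discrepancy between the generic metric and the canonical Lagrange pattern, in particular the rigid constraint $\check{g}_{i'j'}=\check{h}_{n+i'\,n+j'}$. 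Once this is arranged, the final verification that the $\check{N}^{a'}_{\ \underline{i}}=\partial G^{a'}/\partial y^{n+\underline{j}}$ produced from (\ref{sprlf}) really yields the canonical N--connection of $\mathcal{L}$, compatible with $\check{h}_{a'b'}$ from (\ref{elf}), is a direct computation of the semispray of a regular Lagrangian and completes the argument.
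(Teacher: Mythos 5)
Your proposal is correct and takes essentially the same route as the paper, which justifies Proposition \ref{pr01} by prescribing $\mathcal{L}$ first, letting (\ref{elf}), (\ref{ncel}) and (\ref{sprlf}) force the canonical data, and then absorbing the discrepancy with a generic $\mathbf{g}$ (\ref{m1}) through the vielbein transforms $e_{\ \underline{i}}^{i},e_{\ \underline{i}}^{a},e_{\ \underline{a}}^{a}$ (see the proof of Theorem \ref{mth1} and the explicit underdetermined matching system, solved by zeroing redundant frame blocks and extracting square roots of ratios of coefficients, in the subsection on (pseudo) Finsler variables). Your observation that the vertical block of a generic metric need not satisfy the Hessian integrability condition, and that the frame freedom rather than a reconstruction of $\mathcal{L}$ from $h_{ab}$ resolves this, correctly makes explicit a point the paper leaves implicit.
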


By a straightforward computation, we can prove
\begin{lemma}
\label{lem01}Considering $\mathcal{L}$ from (\ref{elf}) and (\ref{sprlf}) to
be a regular Lagrangian, we have that the Euler--Lagrange equations
\begin{equation}
\frac{d}{d\tau }\left( \frac{\partial \mathcal{L}}{\partial y^{i}}\right) -%
\frac{\partial \mathcal{L}}{\partial x^{i}}=0,  \label{eleq}
\end{equation}%
where $y^{i}=y^{n+i}=\frac{dx^{i}}{d\tau }$, for $x^{i}(\tau )$ depending on
the parameter $\tau$. These equations are equivalent to
the ``nonlinear'' geodesic equations
\begin{equation}
\frac{d^{2}x^{i}}{d\tau ^{2}}+2G^{i}(x^{k},\frac{dx^{j}}{d\tau })=0
\label{ngeq}
\end{equation}%
defining the paths of a canonical semispray $S=y^{i}\frac{\partial }{%
\partial x^{i}}-2G^{a}(x,y)\frac{\partial }{\partial y^{a}},$ for $G^{a}$
given by equations (\ref{sprlf}).
\end{lemma}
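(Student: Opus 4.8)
The plan is to prove the equivalence by a direct computation, expanding the Euler--Lagrange equations (\ref{eleq}) along a lifted curve and matching the outcome against the spray coefficients (\ref{sprlf}). First I would apply the chain rule to the total $\tau$--derivative. Treating $y^{i}=dx^{i}/d\tau$ as the fiber coordinate evaluated along the canonical lift of the base path $x^{i}(\tau )$, one obtains
\begin{equation*}
\frac{d}{d\tau }\left( \frac{\partial \mathcal{L}}{\partial y^{i}}\right) =\frac{\partial ^{2}\mathcal{L}}{\partial y^{i}\partial y^{j}}\frac{d^{2}x^{j}}{d\tau ^{2}}+\frac{\partial ^{2}\mathcal{L}}{\partial y^{i}\partial x^{j}}\frac{dx^{j}}{d\tau },
\end{equation*}
so that (\ref{eleq}) becomes
\begin{equation*}
\frac{\partial ^{2}\mathcal{L}}{\partial y^{i}\partial y^{j}}\frac{d^{2}x^{j}}{d\tau ^{2}}+\frac{\partial ^{2}\mathcal{L}}{\partial y^{i}\partial x^{j}}\frac{dx^{j}}{d\tau }-\frac{\partial \mathcal{L}}{\partial x^{i}}=0.
\end{equation*}

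Next I would use the identification supplied by (\ref{elf}), namely $\partial ^{2}\mathcal{L}/\partial y^{i}\partial y^{j}=2\check{h}_{ij}$ under the index correspondence $a^{\prime }=n+i^{\prime }$, together with $\check{g}_{i^{\prime }j^{\prime }}=\check{h}_{n+i^{\prime }\,n+j^{\prime }}$. The hypothesis that $\mathcal{L}$ is a regular Lagrangian guarantees $\det |\check{h}_{a^{\prime }b^{\prime }}|\neq 0$, so the Hessian is invertible and I may contract with $\tfrac{1}{2}\check{h}^{ki}$. This isolates the second derivatives,
\begin{equation*}
\frac{d^{2}x^{k}}{d\tau ^{2}}+\frac{1}{2}\check{h}^{ki}\left( \frac{\partial ^{2}\mathcal{L}}{\partial y^{i}\partial x^{j}}\frac{dx^{j}}{d\tau }-\frac{\partial \mathcal{L}}{\partial x^{i}}\right) =0,
\end{equation*}
and, upon substituting $y^{n+k}=y^{k}=dx^{k}/d\tau$, I would recognize the bracketed combination as exactly the one appearing in (\ref{sprlf}). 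Hence the second term equals $2G^{k}(x,dx/d\tau )$ and the equation reduces to the nonlinear geodesic equation (\ref{ngeq}). Since every step is an equivalence (the only non-trivial one being the contraction by the inverse Hessian, which is reversible by non-degeneracy), the converse implication follows by reversing the chain. Finally, I would note that (\ref{ngeq}) is by definition the system for the integral curves of $S=y^{i}\partial /\partial x^{i}-2G^{a}\partial /\partial y^{a}$: writing $dy^{a}/d\tau =d^{2}x^{a}/d\tau ^{2}$ along a lifted curve and comparing with the components of $S$ reproduces (\ref{ngeq}), identifying these solutions with the paths of the canonical semispray.

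I expect no serious conceptual obstacle; the computation is routine once the chain rule is applied and regularity is invoked. The single point demanding care is the bookkeeping between the ``doubled'' vertical indices $a^{\prime }=n+i^{\prime }$ and the horizontal indices $i^{\prime }$, and in particular the consistent use of $y^{i}=y^{n+i}=dx^{i}/d\tau $, so that the factor $\tfrac{1}{2}$ and the inverse Hessian $\check{h}^{a^{\prime }\,n+i}$ in (\ref{sprlf}) line up precisely with the inverted Euler--Lagrange system. Keeping the factor and index conventions of (\ref{elf}) straight is the only place where an error could plausibly enter.
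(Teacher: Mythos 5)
Your proposal is correct and is precisely the ``straightforward computation'' the paper invokes without writing out: expanding the total derivative in (\ref{eleq}) by the chain rule, using the regularity of the Hessian $\check{h}_{a^{\prime }b^{\prime }}$ from (\ref{elf}) to contract with $\tfrac{1}{2}\check{h}^{ki}$, and recognizing the bracketed term as $2G^{k}$ from (\ref{sprlf}). The factors and the $a^{\prime }=n+i^{\prime }$ index bookkeeping line up exactly as you anticipated, and the invertibility of the Hessian indeed makes every step reversible, so nothing further is needed.
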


The Lemma motivates
\begin{definition}
A (pseudo) Riemannian space with metric $\mathbf{g}$ (\ref{m1}) is modelled
by a mechanical system with regular effective Lagrangian $\mathcal{L}$ if
there is a nontrivial frame transform defined by any $e_{\ \underline{i}%
}^{i},e_{\ \underline{i}}^{a}$ and $e_{\ \underline{a}}^{a}$ when $\mathbf{g}%
=\mathbf{\check{g}}$\ (\ref{hvmetr1}).
\end{definition}

Inversely, we say that a regular mechanical model with Lagrangian $\mathcal{L%
}$ and Euler--Lagrange equations (\ref{eleq}) is geometrized in terms of a
(pseudo) Riemannian geometry with metric $\mathbf{g}$ (\ref{m1}) if $%
\mathcal{L}$ is a generating function for (\ref{elf}), (\ref{sprlf}) and (%
\ref{ncel}), when $\mathbf{g}=\mathbf{\check{g}}$ (\ref{hvmetr1}) and the
nonlinear geodesic equations (\ref{ngeq}) are equivalent to (\ref{eleq}).

Any equivalent modelling of regular mechanical systems as (pseudo)
Riemannian spaces introduces additional geometric structures on  $%
V^{2n}.$

\begin{definition}
\label{defnc}A nonlinear connection (N--connection) $\mathbf{N}$ on $V^{2n}$
is defined by a Whitney sum (nonholonomic distribution)
\begin{equation}
T(V^{2n})=h(V^{2n})\oplus v(V^{2n}),  \label{whitney}
\end{equation}%
splitting globally the tangent bundle $T(V^{2n})$ into respective h-- and
v--subspac\-es, $h(V^{2n})$ and $v(V^{2n}),$ given locally by a set of
coefficients $N_{i}^{a}(x,y)$ where
 $\mathbf{N=}N_{i}^{a}(x,y)dx^{i}\otimes \frac{\partial }{\partial y^{a}}$.
\end{definition}

We note that a subclass of linear connections is defined by $%
N_{i}^{a}=\Gamma _{b}^{a}(x)y^{b}.$

We can perform N--adapted geometric constructions by defining the
coefficients of geometric objects (and associated equations) with respect to
N--adapted noholonomic frames of type (\ref{2dder}) and (\ref{2ddif}). The
N--adapted tensors, vectors, forms, etc., are called respectively
distinguished tensors, etc., (in brief, d--tensors, d--vectors, d--forms,
etc.). For instance, a vector field $\mathbf{X}\in T\mathbf{V}^{2n} $ is
expressed as $\mathbf{X}=(hX,\ vX),$ or $\mathbf{X}=X^{\alpha }\mathbf{e}%
_{\alpha }=X^{i}\mathbf{e}_{i}+X^{a}e_{a},$ where $hX=X^{i}\mathbf{e}_{i}$
and $vX=X^{a}e_{a}$ state, respectively, the horizontal (h) and vertical (v)
components of the vector adapted to the N--connection structure.

\begin{proposition}
\label{pr02}Any effective regular Lagrangian $\mathcal{L}$, prescribed on $%
\mathbf{V}^{2n}$, defines a canonical N--connection structure $\mathbf{%
\check{N}=\{}\check{N}_{\ \underline{i}}^{a^{\prime }}(u)\}$ (\ref{ncel})
and preferred frame structures $\mathbf{\check{e}}_{\nu }=(\mathbf{\check{e}}%
_{i},e_{a^{\prime }})$ and $\mathbf{\check{e}}^{\mu }=(e^{i},\mathbf{%
\check{e}}^{a^{\prime }}).$
\end{proposition}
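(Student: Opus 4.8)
The plan is to argue constructively, reading off each claimed structure directly from the generating Lagrangian and then checking that the resulting coefficients genuinely transform as those of an N--connection. First I would invoke the regularity hypothesis: since $\mathcal{L}$ is regular, the Hessian $\check{h}_{a'b'}=\frac12\partial^2\mathcal{L}/\partial y^{a'}\partial y^{b'}$ from (\ref{elf}) is nondegenerate, $\det|\check{h}_{a'b'}|\neq 0$, so its inverse $\check{h}^{a'b'}$ exists on the relevant domain. This is precisely what is needed for the semispray coefficients $2G^{a'}$ in (\ref{sprlf}) to be well defined, and hence for $\check{N}^{a'}_{\underline{i}}=\partial G^{a'}/\partial y^{n+\underline{i}}$ in (\ref{ncel}) to be meaningful functions on $\mathbf{V}^{2n}$.

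Next I would appeal to Lemma \ref{lem01}: the equations (\ref{eleq}) are equivalent to the nonlinear geodesic equations (\ref{ngeq}) attached to the canonical semispray $S=y^i\partial/\partial x^i-2G^a\partial/\partial y^a$, so the $G^{a'}$---and with them the $\check{N}^{a'}_{\underline{i}}$---are canonically determined by $\mathcal{L}$ alone, independently of any auxiliary choice. The substance of the proof is then to verify that these $\check{N}^{a'}_{\underline{i}}$ satisfy the inhomogeneous transformation rule characteristic of a nonlinear connection. I would take an arbitrary fibred coordinate change $(x^i,y^a)\mapsto(x^{i'}(x),y^{a'}(x,y))$ preserving the h-/v--splitting, differentiate (\ref{sprlf}), and use the invariance of the Euler--Lagrange operator to show that $G^{a'}$ transforms as semispray coefficients; differentiating once more in the fibre variable then yields the required affine law for $\partial G^{a'}/\partial y^{n+i}$. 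This is the step I expect to be the main obstacle, since it is the only place where a genuine computation (rather than a definition) enters, and where one must carefully separate the tensorial from the inhomogeneous pieces.

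Granted that transformation law, the distribution $h(\mathbf{V}^{2n})$ spanned locally by $\check{e}_i=\partial/\partial x^i-\check{N}^{a'}_{\underline{i}}\partial/\partial y^{a'}$ is globally well defined and everywhere transverse to the vertical subspace $v(\mathbf{V}^{2n})=\ker d\pi$ spanned by $e_{a'}=\partial/\partial y^{a'}$; this produces exactly the Whitney sum (\ref{whitney}) of Definition \ref{defnc}, so $\mathbf{\check{N}}$ is a bona fide canonical N--connection. Finally I would exhibit the preferred frames by specializing the N--elongated formulas (\ref{2dder}) and (\ref{2ddif}) with $N^a_i\to\check{N}^{a'}_{\underline{i}}$, setting $\mathbf{\check{e}}_\nu=(\mathbf{\check{e}}_i,e_{a'})$ and $\mathbf{\check{e}}^\mu=(e^i,\mathbf{\check{e}}^{a'})$ with $\mathbf{\check{e}}^{a'}=dy^{a'}+\check{N}^{a'}_{\underline{i}}dx^{\underline{i}}$, and checking the duality $\mathbf{\check{e}}^\mu(\mathbf{\check{e}}_\nu)=\delta^\mu_\nu$ together with the fact that $\{\mathbf{\check{e}}_i\}$ and $\{e_{a'}\}$ span the h-- and v--components of the splitting. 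Consistency with Proposition \ref{pr01} is then automatic, since the very same $\check{N}^{a'}_{\underline{i}}$ and frames are those appearing in the d--metric representation (\ref{hvmetr1}).
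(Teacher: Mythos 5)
Your proposal is correct and takes essentially the same route as the paper's own proof, which simply observes that the coefficients $\check{N}_{\ \underline{i}}^{a^{\prime }}$ of (\ref{ncel}) satisfy the conditions of Definition \ref{defnc} and then defines the preferred frames by inserting them into the N--elongated formulas (\ref{2dder}) and (\ref{2ddif}). Your explicit verification of the inhomogeneous transformation law for $\partial G^{a^{\prime }}/\partial y^{n+i}$ is exactly the ``straightforward computation'' the paper leaves implicit, so the two arguments differ only in the level of detail, not in substance.
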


\begin{proof}
The proposition can be proved by straightforward computations. The
coefficients $\check{N}_{\ \underline{i}}^{a^{\prime }}$ satisfy the
conditions of Definition \ref{defnc}. We define $\mathbf{\check{e}}_{\nu }=(%
\mathbf{\check{e}}_{i},e_{a})$ and $\mathbf{\check{e}}^{\mu }=(e^{i},\mathbf{%
\check{e}}^{a})$ in explicit form by introducing $\check{N}_{\ \underline{i}%
}^{a^{\prime }},$ respectively, in formulas (\ref{2dder}) and (\ref{2ddif}).$%
\square $
\end{proof}

\vskip3pt Similar constructions can be defined for $\mathcal{L=F}^{2}(x,y),$
where an effective Finsler metric $\mathcal{F}$ is a differentiable function
of class $C^{\infty }$ in any point $(x,y)$ with $y\neq 0$ and is continuous
in any point $(x,0);$ $\mathcal{F}(x,y)>0$ if $y\neq 0;$ it satisfies the
homogeneity condition $\mathcal{F}(x,\beta y)=|\beta |\mathcal{F}(x,y)$ for
any nonzero $\beta \in \mathbb{R}$ and the Hessian (\ref{elf}) computed for $%
\mathcal{L=F}^{2}$ is positive definite. In this case, we can say that a
(pseudo) Riemannian space with metric $\mathbf{g}$ is modeled by an
effective Finsler geometry and, inversely, a Finsler geometry is modeled on
a (pseudo) Riemannian space.

\begin{definition}
A (pseudo) Riemannian manifold $\mathbf{V}^{2n}$ is nonholonomic
(N-\--an\-holonomic) if it is provided with a nonholonomic distribution on $%
TV^{2n}$ (N--connection structure $\mathbf{N}$).
\end{definition}

We formulate the first main result in this paper:

\begin{theorem}
\label{mth1}Any (pseudo) Riemannian space can be transformed into a
N--anho\-lo\-nomic manifold $\mathbf{V}^{2n}$ modeling an effective Lagrange
(or Finsler) geometry by prescribing a generating Lagrange (or Finsler)
function $\mathcal{L}(x,y)$ (or $\mathcal{F}(x,y)).$
\end{theorem}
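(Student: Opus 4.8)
The plan is to assemble the statement from the structural results established above, treating the prescription of $\mathcal{L}$ as the single free datum that triggers the whole N--adapted construction. First I would start with an arbitrary (pseudo) Riemannian metric written in the generic form $\mathbf{g}$ (\ref{m1}) with respect to any fixed coordinate coframe. Prescribing a regular generating function $\mathcal{L}(x,y)$, I invoke Proposition \ref{pr01}: its Hessian (\ref{elf}) supplies the candidate vertical block $\check{h}_{a^{\prime}b^{\prime}}$, while the associated semispray coefficients $G^{a^{\prime}}$ (\ref{sprlf}) supply, through (\ref{ncel}), the coefficients $\check{N}_{\ \underline{i}}^{a^{\prime}}$. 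The content of Proposition \ref{pr01} is precisely that $\mathbf{g}$ can then be recast as $\mathbf{\check{g}}$ (\ref{hvmetr1}) via a frame transform $(e_{\ \underline{i}}^{i},e_{\ \underline{i}}^{a},e_{\ \underline{a}}^{a})$, so that the given geometry is realized in the canonical Lagrange parametrization.

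Next I would show that this parametrization automatically carries the nonholonomic structure demanded by the statement. By Proposition \ref{pr02}, the prescribed $\mathcal{L}$ canonically defines the N--connection $\mathbf{\check{N}}=\{\check{N}_{\ \underline{i}}^{a^{\prime}}\}$ together with the preferred frames $\mathbf{\check{e}}_{\nu}$ and $\mathbf{\check{e}}^{\mu}$ built as in (\ref{2dder}) and (\ref{2ddif}). Since the coefficients $\check{N}_{\ \underline{i}}^{a^{\prime}}$ satisfy Definition \ref{defnc}, they determine a Whitney sum splitting (\ref{whitney}) of $T(V^{2n})$, i.e. a nonholonomic distribution; by the last Definition preceding the theorem this is exactly what makes $\mathbf{V}^{2n}$ an N--anholonomic manifold. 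Finally, Lemma \ref{lem01} certifies that the modeling is dynamically faithful: the Euler--Lagrange equations (\ref{eleq}) for $\mathcal{L}$ coincide with the nonlinear geodesic equations (\ref{ngeq}) of the canonical semispray, so the effective Lagrange geometry is genuinely the one encoded by $\mathbf{g}=\mathbf{\check{g}}$.

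The Finsler case follows by specialization: taking $\mathcal{L}=\mathcal{F}^{2}(x,y)$ with $\mathcal{F}$ homogeneous of degree one and with positive definite Hessian, the same chain of Propositions \ref{pr01} and \ref{pr02} applies verbatim, the only extra check being that the homogeneity guarantees the standard Finsler form of $\check{h}_{a^{\prime}b^{\prime}}$ and $\check{N}_{\ \underline{i}}^{a^{\prime}}$. The main obstacle I anticipate is not in the formal assembly but in justifying the frame transform of Proposition \ref{pr01}: the relations between $\mathbf{g}$ and $\mathbf{\check{g}}$ amount to four equations for the eight unknown vielbein components $\mathbf{\mathring{e}}_{\ i^{\prime\prime}}^{a^{\prime}}$ and $\mathbf{\mathring{e}}_{\ a^{\prime\prime}}^{i^{\prime}}$, so one must exhibit a nontrivial solution and verify that the regularity condition $\det|\check{h}_{a^{\prime}b^{\prime}}|\neq 0$ holds for the chosen $\mathcal{L}$, ensuring the N--connection coefficients (\ref{ncel}) are well defined. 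Once this underdetermined system is solved (the freedom reflecting the many mechanical models compatible with one metric), the remaining steps are routine bookkeeping with N--adapted frames.
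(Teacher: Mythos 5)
Your proof is correct and follows essentially the same route as the paper, which likewise assembles Theorem \ref{mth1} directly from Propositions \ref{pr01} and \ref{pr02} and Lemma \ref{lem01}, with the vielbein transforms $e_{\ \underline{i}}^{i},e_{\ \underline{i}}^{a},e_{\ \underline{a}}^{a}$ recasting $\mathbf{g}$ as $\mathbf{\check{g}}$ in the canonical Lagrange (Finsler) parametrization. Your added remark on the underdetermined system of four equations for eight vielbein unknowns and the regularity condition $\det|\check{h}_{a^{\prime}b^{\prime}}|\neq 0$ is a sound observation consistent with the paper's later explicit treatment of such transforms.
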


\begin{proof}
Such a proof follows from Propositions \ref{pr01} and \ref{pr02}\ and Lemma %
\ref{lem01}. It should be noted that, by corresponding vielbein transforms $%
e_{\ \underline{i}}^{i},e_{\ \underline{i}}^{a}$ and $e_{\ \underline{a}%
}^{a},$ any metric $\mathbf{g}$ with coefficients defined with respect to an
arbitrary co--frame $\mathbf{e}^{\mu },$ see (\ref{m1}), can be transformed
into canonical Lagrange (Finsler) ones, $\mathbf{\check{g}}$ (\ref{hvmetr1}%
). The $\mathbf{\check{g}}$ coefficients are computed with respect to $%
\mathbf{\check{e}}^{\mu }=(e^{i},\mathbf{\check{e}}^{a}),$ with the
associated N--connection structure $\check{N}_{\ \underline{i}}^{a^{\prime
}},$ all defined by a prescribed $\mathcal{L}(x,y)$ (or $\mathcal{F}%
(x,y)).\square $
\end{proof}

\vskip3pt

Finally, it should be noted that considering an arbitrary effective
Lagrangian $\mathcal{L}(x,y)$ on a four dimensional (pseudo) Riemannian
spacetime and defining a corresponding $2+2$ decomposition, local Lorentz
invariance is not violated. We can work in any reference frame and
coordinates, but the constructions adapted to the canonical N--connection
structure and an analogous mechanical modeling are more convenient for
developing a formalism of deformation quantization of gravity following the
appropriate methods for Lagrange--Finsler and almost K\"{a}hler spaces.

\subsubsection{Almost K\"{a}hler Models for (Pseudo) Riemannian and Lagrange
Spaces}

The goal of this section is to prove that for any (pseudo) Riemannian metric
and $n+n$ splitting we can define canonical almost symplectic structures.
The analogous mechanical modeling developed in previous sections is
important from two points of view: Firstly, it provides both geometric and
physical interpretations for the class of nonholonomic transforms with $n+n$
splitting and adapting to the N--connection. Secondly, such canonical
constructions can be equivalently redefined as a class of almost K\"{a}hler
geometries with associated N--connection when certain symplectic forms and
linear connection structures are canonically induced by the metric $\mathbf{g%
}(x,y)$ and effective Lagrangian $\mathcal{L}(x,y)$ on $\mathbf{V}^{2n}.$

Let $\mathbf{\check{e}}_{\alpha ^{\prime }}=(\mathbf{\check{e}}%
_{i},e_{b^{\prime }})$ and $\mathbf{\check{e}}^{\alpha ^{\prime }}=(e^{i},\
\mathbf{\check{e}}^{b^{\prime }})$ be defined respectively by (\ref{2dder})
and (\ref{2ddif}) for the canonical N--connection $\mathbf{\check{N}}$
stated by a metric structure $\mathbf{g}=\mathbf{\check{g}}$ on $\mathbf{V}%
^{2n}.$ We introduce a linear operator $\mathbf{\check{J}}$ acting on
tangent vectors to $\mathbf{V}^{2n}$ following formulas $\mathbf{\check{J}}(%
\mathbf{\check{e}}_{i})=-e_{n+i}$\ and \ $\mathbf{\check{J}}(e_{n+i})=%
\mathbf{\check{e}}_{i},$ where the index $a^{\prime }$ runs values $n+i$ for
$i=1,2,...n$ and $\mathbf{\check{J}\circ \check{J}=-I}$ for $\mathbf{I}$
being the unity matrix. Equivalently, we introduce a tensor field on $%
\mathbf{V}^{2n},${\small
\begin{eqnarray*}
\mathbf{\check{J}} &=&\mathbf{\check{J}}_{\ \beta }^{\alpha }\ e_{\alpha
}\otimes e^{\beta }=\mathbf{\check{J}}_{\ \underline{\beta }}^{\underline{%
\alpha }}\ \frac{\partial }{\partial u^{\underline{\alpha }}}\otimes du^{%
\underline{\beta }} =\mathbf{\check{J}}_{\ \beta ^{\prime }}^{\alpha
^{\prime }}\ \mathbf{\check{e}}_{\alpha ^{\prime }}\otimes \mathbf{\check{e}}%
^{\beta ^{\prime }}=\mathbf{-}e_{n+i}\otimes e^{i}+\mathbf{\check{e}}%
_{i}\otimes \ \mathbf{\check{e}}^{n+i} \\
&=&-\frac{\partial }{\partial y^{i}}\otimes dx^{i}+\left( \frac{\partial }{%
\partial x^{i}}-\check{N}_{i}^{n+j}\frac{\partial }{\partial y^{j}}\right)
\otimes \left( dy^{i}+\check{N}_{k}^{n+i}dx^{k}\right) .
\end{eqnarray*}%
} It is clear that $\mathbf{\check{J}}$ defines globally an almost complex
structure on\ $\mathbf{V}^{2n}$ completely determined by a fixed $\mathcal{L}%
(x,y).$

\begin{definition}
The Nijenhuis tensor field for any almost complex structure $\mathbf{J}$
determined by a N--connection (equivalently, the curvature of
N--connecti\-on) is defined as
\begin{equation}
\ ^{\mathbf{J}}\mathbf{\Omega (X,Y)=-[X,Y]+[JX,JY]-J[JX,Y]-J[X,JY],}
\label{neijt}
\end{equation}%
for any d--vectors $\mathbf{X}$ and $\mathbf{Y.}$
\end{definition}

With respect to N--adapted bases the Neijenhuis tensor $\ ^{\mathbf{J}}%
\mathbf{\Omega =\{}\Omega _{ij}^{a}\mathbf{\}}$ has the coefficients
\begin{equation}
\Omega _{ij}^{a}=\frac{\partial N_{i}^{a}}{\partial x^{j}}-\frac{\partial
N_{j}^{a}}{\partial x^{i}}+N_{i}^{b}\frac{\partial N_{j}^{a}}{\partial y^{b}}%
-N_{j}^{b}\frac{\partial N_{i}^{a}}{\partial y^{b}}.  \label{nccurv}
\end{equation}%
A N--anholonomic manifold $\mathbf{V}^{2n}$ is integrable if $\Omega
_{ij}^{a}=0.$ We get a complex structure if and only if both the h-- and
v--distributions are integrable, i.e., if and only if $\Omega _{ij}^{a}=0$
and $\frac{\partial N_{j}^{a}}{\partial y^{i}}-\frac{\partial N_{i}^{a}}{%
\partial y^{j}}=0.$

\begin{definition}
An almost symplectic structure on a manifold $V^{n+m},$ \newline
$\dim V^{n+m}=n+m,$ is defined by a nondegenerate 2--form
$\theta =\frac{1}{2}\theta _{\alpha \beta }(u)e^{\alpha }\wedge e^{\beta }.$
\end{definition}

We have

\begin{proposition}
For any $\theta $ on $V^{n+m},$ there is a unique N--connection $\mathbf{N}%
=\{N_{i}^{a}\}$ defined as a splitting $TV^{n+m}=hV^{n+m}\oplus vV^{n+m},$
where indices $i,j,..=1,2,...n$ and $a,b,...=n+1,n+1,...n+m$. The function $%
\theta $ satisfies the following conditions:%
\begin{equation}
\theta =(h\mathbf{X},v\mathbf{Y})=0\mbox{ and }\theta =h\theta +v\theta ,
\label{aux02}
\end{equation}%
for any $\mathbf{X}=h\mathbf{X}+v\mathbf{X,}$ $\mathbf{Y}=h\mathbf{Y}+v%
\mathbf{Y}$ and $h\theta (\mathbf{X,Y})\doteqdot \theta (h\mathbf{X,}h%
\mathbf{Y}),$\newline
$v\theta (\mathbf{X,Y})\doteqdot \theta (v\mathbf{X,}v\mathbf{Y}).$ Here the
symbol ''$\doteqdot $'' means ''by definition''.
\end{proposition}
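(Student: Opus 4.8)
The plan is to treat the vertical distribution $vV^{n+m}=\mathrm{span}\{\partial /\partial y^{a}\}$ as the canonical one fixed by the fibered coordinates $(x^{i},y^{a})$, and to recover the horizontal complement $hV^{n+m}$ — equivalently the coefficients $N_{i}^{a}$ — from the single requirement that the two subspaces be $\theta$--orthogonal. Writing a general complement to $vV^{n+m}$ that projects isomorphically onto the base in the adapted form $\mathbf{e}_{i}=\partial /\partial x^{i}-N_{i}^{a}\,\partial /\partial y^{a}$, so that any such splitting is automatically of N--connection type (cf. (\ref{2dder})), I would impose $\theta (\mathbf{e}_{i},\partial /\partial y^{a})=0$ for all $i,a$. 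In terms of the coordinate components $\theta _{\alpha \beta }=\theta (\partial _{\alpha },\partial _{\beta })$ this reads $\theta _{ia}=N_{i}^{b}\theta _{ba}$, a linear algebraic system for the unknowns $N_{i}^{a}$.

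First I would solve this system. Since $\theta $ is nondegenerate and, in the cases relevant here (the canonical almost symplectic structure built from the nondegenerate vertical Hessian $\check{h}_{ab}$ of Proposition \ref{pr01}), its vertical block $\theta _{ab}$ is invertible, the system has the unique solution $N_{i}^{a}=\theta _{ib}\,\widetilde{\theta }^{ba}$, where $\widetilde{\theta }^{ab}$ denotes the inverse of $\theta _{ab}$. This simultaneously yields existence and uniqueness of the N--connection coefficients. Invariantly, the construction is nothing but $hV^{n+m}=(vV^{n+m})^{\perp _{\theta }}$, the $\theta $--orthogonal complement of the vertical distribution: nondegeneracy of $\theta $ forces $\dim hV^{n+m}=n$, while invertibility of $\theta _{ab}$ is exactly the statement that $hV^{n+m}\cap vV^{n+m}=\mathrm{rad}(\theta |_{vV^{n+m}})=0$, so that the Whitney sum (\ref{whitney}) holds.

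Next I would verify the decomposition (\ref{aux02}). With $N_{i}^{a}$ chosen as above, the mixed value $\theta (h\mathbf{X},v\mathbf{Y})$ vanishes by construction, and antisymmetry of $\theta $ gives $\theta (v\mathbf{X},h\mathbf{Y})=0$ as well; expanding $\theta (\mathbf{X},\mathbf{Y})$ for $\mathbf{X}=h\mathbf{X}+v\mathbf{X}$ and $\mathbf{Y}=h\mathbf{Y}+v\mathbf{Y}$ then leaves only the two diagonal terms, i.e. $\theta =h\theta +v\theta $. For global well--definedness I would observe that, being the $\theta $--orthogonal complement of the canonically defined vertical distribution, $hV^{n+m}$ is itself coordinate--independent and transverse to $vV^{n+m}$; hence it is an N--connection splitting in the sense of (\ref{whitney}), and its coefficients $N_{i}^{a}$ automatically obey the required transformation rule under changes of fibered coordinates. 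Uniqueness is then immediate: any other splitting satisfying $\theta (h\mathbf{X},v\mathbf{Y})=0$ has horizontal space contained in $(vV^{n+m})^{\perp _{\theta }}$, and by equality of dimensions it must coincide with $hV^{n+m}$.

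The main obstacle is precisely the invertibility of the vertical block $\theta _{ab}$ — equivalently, nondegeneracy of $\theta $ restricted to $vV^{n+m}$ — since this is what makes the linear system uniquely solvable and guarantees that the $\theta $--orthogonal complement is a genuine complement rather than meeting the vertical in a nontrivial radical. For arbitrary $n+m$ this is not free (it requires the vertical to be a symplectic subspace, in particular $m$ even), so I would emphasize that the statement is to be read in the intended setting where $\theta $ is the canonical structure induced by $\mathbf{g}$ and $\mathcal{L}$, for which $\theta _{ab}$ coincides up to sign with the invertible Hessian $\check{h}_{ab}$ and the hypothesis holds automatically.
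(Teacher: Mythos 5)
Your proof is correct and follows essentially the same route as the paper: impose $\theta(\mathbf{e}_{i},\partial/\partial y^{a})=0$ in the N--adapted frame, solve the resulting linear system $\theta_{ia}=N_{i}^{b}\theta_{ba}$ for $N_{i}^{a}$ using invertibility of the vertical block, and then read off $\theta=h\theta+v\theta$ from the N--adapted coframe expansion. Your closing caveat about the rank of $\theta_{ab}$ is precisely the condition the paper itself states ($\mathrm{rank}\,|\theta(\partial/\partial y^{b},\partial/\partial y^{a})|=m$), so the two arguments agree even on the needed hypothesis.
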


\begin{proof}
For $\mathbf{X=e}_{\alpha }=(\mathbf{e}_{i},e_{a})$ and $\mathbf{Y=e}%
_{\beta}=(\mathbf{e}_{l},e_{b}),$ where $\mathbf{e}_{\alpha }$ is a
N--adapted basis of dimension $n+m,$ we write the first equation in (\ref%
{aux02}) as $\theta =\theta (\mathbf{e}_{i},e_{a})=\theta (\frac{\partial }{%
\partial x^{i}},\frac{\partial }{\partial y^{a}})-N_{i}^{b}\theta (\frac{%
\partial }{\partial y^{b}},\frac{\partial }{\partial y^{a}})=0.$ We can find
a unique solution form and define $N_{i}^{b}$ if $rank|\theta (\frac{%
\partial }{\partial y^{b}},\frac{\partial }{\partial y^{a}})|=m.$ Denoting
locally
\begin{equation}
\theta =\frac{1}{2}\theta _{ij}(u)e^{i}\wedge e^{j}+\frac{1}{2}\theta
_{ab}(u)\mathbf{e}^{a}\wedge \mathbf{e}^{b},  \label{aux03}
\end{equation}%
where the first term is for $h\theta $ and the second term is $v\theta ,$ we
get the second formula in (\ref{aux02}). We may consider the particular case
in which $n=m.\square $
\end{proof}

\begin{definition}
An almost Hermitian model of a (pseudo) Riemannian spa\-ce $\mathbf{V}^{2n}$
equipped with an N--connection structure $\mathbf{N}$ is defined by a triple
$\mathbf{H}^{2n}=(\mathbf{V}^{2n},\theta ,\mathbf{J}),$ where $\mathbf{%
\theta (X,Y)}\doteqdot \mathbf{g}\left( \mathbf{JX,Y}\right).$
\end{definition}

In addition, we have

\begin{definition}
A space $\mathbf{H}^{2n}$ is almost K\"{a}hler, denoted $\mathbf{K}^{2n},$
if and only if $d\mathbf{\theta }=0.$
\end{definition}

If a (pseudo) Riemannian space is modeled by a Lagrange--Finsler geometry,
the second main result of this paper follows

\begin{theorem}
Having chosen a generating function $\mathcal{L}(x,y)$ (or $\mathcal{F}%
(x,y)) $ on a (pseudo) Riemannian manifold $V^{n+n},$ we can model this
space as an almost K\"{a}hler geometry, i.e. $\mathbf{\check{H}}^{2n}=%
\mathbf{\check{K}}^{2n}.$
\end{theorem}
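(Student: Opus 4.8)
The plan is to reduce the almost K\"ahler condition $d\check{\theta}=0$ to the exactness of $\check{\theta}$, exploiting that the canonical N--connection $\check{N}$ and the almost complex structure $\check{J}$ both descend from a single generating function $\mathcal{L}$. First I would write the fundamental $2$--form of the almost Hermitian model $\mathbf{\check{H}}^{2n}=(\mathbf{V}^{2n},\check{\theta},\check{J})$ explicitly in the N--adapted coframe $\mathbf{\check{e}}^{\mu}=(e^{i},\mathbf{\check{e}}^{a'})$. Using $\check{\theta}(\mathbf{X},\mathbf{Y})\doteqdot \mathbf{\check{g}}(\check{J}\mathbf{X},\mathbf{Y})$ together with $\check{J}(\mathbf{\check{e}}_{i})=-e_{n+i}$, $\check{J}(e_{n+i})=\mathbf{\check{e}}_{i}$, and the d--metric identification $\check{g}_{ij}=\check{h}_{n+i\,n+j}$ from Proposition \ref{pr01}, the mixed and pure blocks collapse and one obtains $\check{\theta}=\check{g}_{ij}(x,y)\,\mathbf{\check{e}}^{n+i}\wedge e^{j}$, which is manifestly of the N--adapted type $h\theta+v\theta$ guaranteed by the preceding Proposition on almost symplectic splitting.

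The decisive step is to recognize this $\check{\theta}$ as the differential of the canonical (Poincar\'e--Cartan) $1$--form $\check{\omega}\doteqdot \frac{1}{2}\frac{\partial \mathcal{L}}{\partial y^{i}}\,e^{i}$ attached to the generating Lagrangian. I would compute $d\check{\omega}$ in the coordinate coframe, where the Hessian term $\frac{1}{2}\frac{\partial^{2}\mathcal{L}}{\partial y^{j}\partial y^{i}}\,dy^{j}\wedge dx^{i}$ reproduces exactly $\check{h}_{n+i\,n+j}=\check{g}_{ij}$ by the defining formula (\ref{elf}), while the residual term $\frac{1}{2}\frac{\partial^{2}\mathcal{L}}{\partial x^{j}\partial y^{i}}\,dx^{j}\wedge dx^{i}$ must be re--organized through the N--elongated differentials $\mathbf{\check{e}}^{n+i}=dy^{i}+\check{N}^{n+i}_{k}\,dx^{k}$.

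The hard part will be verifying the cancellation of the purely horizontal $dx^{i}\wedge dx^{j}$ contribution: after substituting $dy^{j}=\mathbf{\check{e}}^{n+j}-\check{N}^{n+j}_{k}\,dx^{k}$ and inserting the explicit semispray form $\check{N}^{n+i}_{j}=\partial G^{i}/\partial y^{j}$ with $G^{a}$ given by (\ref{sprlf}), the antisymmetric combination of the second derivatives of $\mathcal{L}$ must be shown to vanish. This is precisely the statement that the canonical nonlinear connection is symmetric with respect to the Lagrange structure, equivalently that the $h$--part of $d\check{\omega}$ is zero; I expect this to be the only genuinely computational obstacle, and it hinges on the semispray origin of $\check{N}$. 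Once it is established, one gets $d\check{\omega}=\check{g}_{ij}\,\mathbf{\check{e}}^{n+i}\wedge e^{j}=\check{\theta}$, so $\check{\theta}$ is exact.

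Finally, $d\check{\theta}=d(d\check{\omega})=0$ by $d^{2}=0$, whence the canonical almost Hermitian model is almost K\"ahler, $\mathbf{\check{H}}^{2n}=\mathbf{\check{K}}^{2n}$. I would close by noting that the argument applies to every $n+n$ splitting produced by $\mathcal{L}$, and in particular to the Finsler case $\mathcal{L}=\mathcal{F}^{2}$, since $\check{g}$, $\check{N}$, $\check{J}$ and $\check{\omega}$ are all canonically generated by the single function $\mathcal{L}(x,y)$; the homogeneity of $\mathcal{F}$ then guarantees the corresponding regularity. The main subtlety to monitor throughout is the factor convention in $\check{\omega}$ and the index identification $a=n+i$, so that the normalization of $\check{\theta}$ matches $\check{g}$ exactly and the symmetry $\check{g}_{ij}=\check{g}_{ji}$ is used when relabelling the wedge products.
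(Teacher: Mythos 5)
Your proposal is correct and follows essentially the same route as the paper: the paper also expresses $\mathbf{\check{\theta}}=\check{g}_{ij}(dy^{n+i}+\check{N}_{k}^{n+i}dx^{k})\wedge dx^{j}$ in the N--adapted coframe, identifies it with $d\check{\omega}$ for the Poincar\'e--Cartan form $\check{\omega}=\frac{1}{2}\frac{\partial \mathcal{L}}{\partial y^{n+i}}dx^{i}$, and concludes $d\mathbf{\check{\theta}}=dd\check{\omega}=0$. The only difference is that you spell out the cancellation of the purely horizontal $dx^{i}\wedge dx^{j}$ terms via the semispray origin of $\check{N}$, which the paper compresses into the phrase ``a straightforward computation''; your explicit verification is a welcome elaboration of the same argument.
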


\begin{proof}
For $\mathbf{g}=\mathbf{\check{g}}$ (\ref{hvmetr1}) and structures $\mathbf{%
\check{N}}$ and $\mathbf{\check{J}}$ canonically defined by $\mathcal{L},$
we define $\mathbf{\check{\theta}(X,Y)}\doteqdot \mathbf{\check{J}}\left(
\mathbf{\check{F}X,Y}\right) $ for any d--vectors $\mathbf{X}$ and $\mathbf{%
Y.}$ In local N--adapted form form, we have
\begin{eqnarray}
\mathbf{\check{\theta}} &=&\frac{1}{2}\check{\theta}_{\alpha \beta
}(u)e^{\alpha }\wedge e^{\beta }=\frac{1}{2}\check{\theta}_{\underline{%
\alpha }\underline{\beta }}(u)du^{\underline{\alpha }}\wedge du^{\underline{%
\beta }}  \label{asymstr} \\
&=&\check{g}_{ij}(x,y)\check{e}^{n+i}\wedge dx^{j}=\check{g}%
_{ij}(x,y)(dy^{n+i}+\check{N}_{k}^{n+i}dx^{k})\wedge dx^{j}.  \notag
\end{eqnarray}%
Let us consider the form $\check{\omega}=\frac{1}{2}\frac{\partial \mathcal{L%
}}{\partial y^{n+i}}dx^{i}.$ A straightforward computation, using
Proposition \ref{pr02} and N--connection $\mathbf{\check{N}}$ (\ref{ncel}),
shows that $\mathbf{\check{\theta}}=d\check{\omega},$ which means that $d%
\mathbf{\check{\theta}}=dd\check{\omega}=0$ and that the canonical effective
Lagrange structures $\mathbf{g}=\mathbf{\check{g},\check{N}}$ and $\mathbf{%
\check{J}}$ induce an almost K\"{a}hler geometry. Instead of "Lagrangian
mechanics variables" we can introduce another type redefining $\mathbf{%
\check{\theta}}$ with respect to an arbitrary co--frame basis using
vielbeins $\mathbf{e}_{\ \underline{\alpha }}^{\alpha } $ and their duals $%
\mathbf{e}_{\alpha \ }^{\ \underline{\alpha }},$ defined by $e_{\ \underline{%
i}}^{i},e_{\ \underline{i}}^{a}$ and $e_{\ \underline{a}}^{a}$. So, we can
compute $\check{\theta}_{\alpha \beta }=\mathbf{e}_{\alpha \ }^{\ \underline{%
\alpha }}\mathbf{e}_{\beta \ }^{\ \underline{\beta }}\check{\theta}_{%
\underline{\alpha }\underline{\beta }}$ and express the 2--form (\ref%
{asymstr})as $\check{\theta}=\frac{1}{2}\check{\theta}_{ij}(u)e^{i}\wedge
e^{j}+\frac{1}{2}\check{\theta}_{ab}(u)\mathbf{\check{e}}^{a}\wedge \mathbf{%
\check{e}}^{b},$ see (\ref{aux03}). The coefficients $\check{\theta}_{ab}=%
\check{\theta}_{n+i\ n+j}$ above are equal, respectively, to the
coefficients $\check{\theta}_{ij}$ and the dual N--adapted basis $\mathbf{%
\check{e}}^{\alpha }=(e^{i},\mathbf{\check{e}}^{a})$ is elongated by $%
\check{N}_{j}^{a}$ (\ref{ncel}). It should be noted that for a general
2--form $\theta $ directly constructed from a metric $\mathbf{g}$ and almost
complex $\mathbf{J}$ structures on $V^{2n}$, we have that $d\theta \neq 0.$
For a $n+n$ splitting induced by an effective Lagrange (Finsler) generating
function, we have $d\mathbf{\check{\theta}}=0$ which results in a canonical
almost K\"{a}hler model completely defined by $\mathbf{g}=\mathbf{\check{g}}$
and chosen $\mathcal{L}(x,y)$ (or $\mathcal{F}(x,y)).$ $\square $
\end{proof}

\paragraph{N--adapted symplectic connections:}

In our approach, we work with nonholonomic (pseudo) Riemannian manifolds $%
\mathbf{V}^{2n}$ enabled with an effective N--connection and almost
symplectic structures defined canonically by the metric structure $\mathbf{g}%
=\mathbf{\check{g}}$ and a fixed $\mathcal{L}(x,y).$ In this section, we
analyze the class of linear connections that can be adapted to the
N--connection and/or symplectic structure and defined canonically if a
corresponding nonholonomic distribution is induced completely by $\mathcal{L}%
,$ or $\mathcal{F}.$

From the class of arbitrary affine connections on $\mathbf{V}^{2n},$ one
prefers to work with N--adapted linear connections, called distinguished
connections (d--connections).

\begin{definition}
A linear connection on $\mathbf{V}^{2n}$ is a d--connection
\begin{equation*}
\mathbf{D}=(hD;vD)=\{\mathbf{\Gamma }_{\beta \gamma }^{\alpha
}=(L_{jk}^{i},\ ^{v}L_{bk}^{a};C_{jc}^{i},\ ^{v}C_{bc}^{a})\},
\end{equation*}%
with local coefficients computed with respect to N--adapted (\ref{2dder})
and (\ref{2ddif}), which preserves the distribution (\ref{whitney}) under
parallel transports.
\end{definition}

For a d--connection $\mathbf{D,}$ we can define respectively the torsion and
curvature tensors,
\begin{eqnarray}
\mathbf{T}(\mathbf{X},\mathbf{Y}) &\doteqdot &\mathbf{D}_{\mathbf{X}}\mathbf{%
Y}-\mathbf{D}_{\mathbf{Y}}\mathbf{X}-[\mathbf{X},\mathbf{Y}],  \label{ators}
\\
\mathbf{R}(\mathbf{X},\mathbf{Y})\mathbf{Z} &\doteqdot &\mathbf{D}_{\mathbf{X%
}}\mathbf{D}_{\mathbf{Y}}\mathbf{Z}-\mathbf{D}_{\mathbf{Y}}\mathbf{D}_{%
\mathbf{X}}\mathbf{Z}-\mathbf{D}_{[\mathbf{X},\mathbf{Y}]}\mathbf{Z},
\label{acurv}
\end{eqnarray}%
where $[\mathbf{X},\mathbf{Y}]\doteqdot \mathbf{XY}-\mathbf{YX,}$ for any
d--vectors $\mathbf{X} $ and $\mathbf{Y}.$ The coefficients $\mathbf{T}=\{%
\mathbf{T}_{\ \beta \gamma }^{\alpha }\}$ and $\mathbf{R}=\{\mathbf{R}_{\
\beta \gamma \tau }^{\alpha }\}$ can be written in terms of $\mathbf{e}%
_{\alpha }$ and $\mathbf{e}^{\beta }$ by introducing $\mathbf{X}\rightarrow
\mathbf{e}_{\alpha },\mathbf{Y}\rightarrow \mathbf{e}_{\beta }\mathbf{,Z}%
\rightarrow \mathbf{e}_{\gamma }$ in (\ref{ators}) and (\ref{acurv}).

\begin{definition}
A d--connection $\mathbf{D}$\ is metric compatible with a d--metric $\mathbf{%
g}$ if $\mathbf{D}_{\mathbf{X}}\mathbf{g}=0$ for any d--vector field $%
\mathbf{X.}$
\end{definition}

If an almost symplectic structure is defined on a N--anholonomic manifold,
one considers:

\begin{definition}
\label{defasstr}An almost symplectic d--connection $\ _{\theta }\mathbf{D}$
on $\mathbf{V}^{2n},$ or (equivalently) a d--connection compatible with an
almost symplectic structure $\theta ,$ is defined such that $\ _{\theta }%
\mathbf{D}$ is N--adapted, i.e., it is a d--connection, and $\ _{\theta }%
\mathbf{D}_{\mathbf{X}}\theta =0,$ for any d--vector $\mathbf{X.}$
\end{definition}

We can always fix a d--connection $\ _{\circ }\mathbf{D}$ on $\mathbf{V}%
^{2n} $ and then construct an almost symplectic $\ _{\theta }\mathbf{D.}$

\begin{example}
Let us represent $\theta $ in N--adapted form (\ref{aux03}). Having chosen a
\begin{eqnarray*}
\ _{\circ }\mathbf{D} &=&\left\{ h\ _{\circ }D=(\ _{\circ }D_{k},\ \ _{\circ
}^{v}D_{k});v\ _{\circ }D=(\ _{\circ }D_{c},\ \ _{\circ }^{v}D_{c})\right\}
\\
&=&\{\ _{\circ }\mathbf{\Gamma }_{\beta \gamma }^{\alpha }=(\ _{\circ
}L_{jk}^{i},\ _{\circ }^{v}L_{bk}^{a};\ _{\circ }C_{jc}^{i},\ _{\circ
}^{v}C_{bc}^{a})\},
\end{eqnarray*}%
we can verify that
\begin{eqnarray*}
\ _{\theta }\mathbf{D} &=&\left\{ h\ _{\theta }D=(\ _{\theta }D_{k},\ \
_{\theta }^{v}D_{k});v\ _{\theta }D=(\ _{\theta }D_{c},\ \ _{\theta
}^{v}D_{c})\right\} \\
&=&\{\ _{\theta }\mathbf{\Gamma }_{\beta \gamma }^{\alpha }=(\ _{\theta
}L_{jk}^{i},\ _{\theta }^{v}L_{bk}^{a};\ _{\theta }C_{jc}^{i},\ _{\theta
}^{v}C_{bc}^{a})\},
\end{eqnarray*}%
with
\begin{eqnarray}
\ _{\theta }L_{jk}^{i} &=&\ _{\circ }L_{jk}^{i}+\frac{1}{2}\theta ^{ih}\
_{\circ }D_{k}\theta _{jh},\ \ _{\theta }^{v}L_{bk}^{a}=\ _{\circ
}^{v}L_{bk}^{a}+\frac{1}{2}\theta ^{ae}\ _{\circ }^{v}D_{k}\theta _{eb},
\label{2csdc} \\
\ _{\theta }C_{jc}^{i} &=&\ _{\theta }C_{jc}^{i}+\frac{1}{2}\theta ^{ih}\
_{\circ }D_{c}\theta _{jh},\ \ _{\theta }^{v}C_{bc}^{a}=\ _{\circ
}^{v}C_{bc}^{a}+\frac{1}{2}\theta ^{ae}\ _{\circ }^{v}D_{c}\theta _{eb},
\notag
\end{eqnarray}%
satisfies the conditions $\ _{\theta }D_{k}\theta _{jh}=0,\ \ _{\theta
}^{v}D_{k}\theta _{eb}=0,\ _{\theta }D_{c}\theta _{jh}=0,\ _{\theta
}^{v}D_{c}\theta _{eb}=0,$ which is equivalent to $\ _{\theta }\mathbf{D}_{%
\mathbf{X}}\theta =0$ from Definition \ref{defasstr}.
\end{example}

Let us introduce the operators
\begin{equation}
\Theta _{jk}^{hi}=\frac{1}{2}(\delta _{j}^{h}\delta _{k}^{i}-\theta
_{jk}\theta ^{ih})\mbox{ and }\Theta _{cd}^{ab}=\frac{1}{2}(\delta
_{c}^{a}\delta _{d}^{b}-\theta _{cd}\theta ^{ab}),  \label{thop}
\end{equation}%
with the coefficients computed with respect to N--adapted bases. By
straightforward computations, one proves the following theorem.

\begin{theorem}
The set of d--connections \newline
$\ _{s}\mathbf{\Gamma }_{\beta \gamma }^{\alpha }=(\ _{s}L_{jk}^{i},\
_{s}^{v}L_{bk}^{a};\ _{s}C_{jc}^{i},\ _{s}^{v}C_{bc}^{a})$ which are
compatible with an almost symplectic structure $\theta $ (\ref{aux03}), are
parametrized by
\begin{eqnarray}
\ _{s}L_{jk}^{i} &=&\ _{\theta }L_{jk}^{i}+\Theta _{jl}^{hi}\ Y_{hk}^{l},\
_{s}^{v}L_{bk}^{a}=\ _{\theta }^{v}L_{bk}^{a}+\Theta _{bd}^{ca}\ Y_{ck}^{d},
\label{fsdc} \\
\ _{s}C_{jc}^{i} &=&\ _{\theta }C_{jc}^{i}+\Theta _{jl}^{hi}\ Y_{hc}^{l},\
_{s}^{v}C_{bc}^{a}=\ _{\theta }^{v}C_{bc}^{a}+\Theta _{bd}^{ea}\ Y_{ec}^{d},
\notag
\end{eqnarray}%
where $\ _{\theta }\mathbf{\Gamma }_{\beta \gamma }^{\alpha }=(\ _{\theta
}L_{jk}^{i},\ _{\theta }^{v}L_{bk}^{a};\ _{\theta }C_{jc}^{i},\ _{\theta
}^{v}C_{bc}^{a})$ is given by (\ref{2csdc}), the $\Theta $--operators are
those from (\ref{thop}) and $\mathbf{Y}_{\beta \gamma }^{\alpha }=\left(
Y_{jk}^{i},Y_{bk}^{a},Y_{jc}^{i},Y_{bc}^{a}\right) $ are arbitrary d--tensor
fields.
\end{theorem}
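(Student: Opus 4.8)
The plan is to exploit the affine structure of the space of d--connections together with the fact, established in the Example above, that $\ _{\theta }\mathbf{D}$ of (\ref{2csdc}) is already an almost symplectic d--connection, i.e. $\ _{\theta }\mathbf{D}_{\mathbf{X}}\theta =0$. First I would record the standard observation that the difference of any two d--connections is an N--adapted tensor field: writing an arbitrary candidate as $\ _{s}\mathbf{\Gamma }_{\beta \gamma }^{\alpha }=\ _{\theta }\mathbf{\Gamma }_{\beta \gamma }^{\alpha }+\mathbf{Z}_{\beta \gamma }^{\alpha }$, the object $\mathbf{Z}_{\beta \gamma }^{\alpha }$ transforms as a d--tensor precisely because both terms are d--connections preserving the splitting (\ref{whitney}). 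Thus parametrizing all almost symplectic d--connections is the same problem as determining which difference tensors $\mathbf{Z}$ preserve symplectic compatibility.

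Next I would convert the compatibility requirement into a linear algebraic constraint on $\mathbf{Z}$. Subtracting the identity $\ _{\theta }\mathbf{D}_{\mathbf{X}}\theta =0$ from the desired $\ _{s}\mathbf{D}_{\mathbf{X}}\theta =0$ cancels all derivative terms and leaves the purely algebraic condition
\[
\mathbf{Z}_{\alpha \gamma }^{\tau }\theta _{\tau \beta }+\mathbf{Z}_{\beta \gamma }^{\tau }\theta _{\alpha \tau }=0 ,
\]
equivalently, after using $\theta _{\alpha \tau }=-\theta _{\tau \alpha }$, that $W_{\alpha \gamma \beta }\doteqdot \mathbf{Z}_{\alpha \gamma }^{\tau }\theta _{\tau \beta }$ is symmetric under $\alpha \leftrightarrow \beta$. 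Because the N--connection splits $\theta $ into its $h$-- and $v$--parts $\theta _{ij}$ and $\theta _{ab}$ (see (\ref{aux03})), this single constraint decouples into independent blocks for the components $Y_{jk}^{i},Y_{bk}^{a},Y_{jc}^{i},Y_{bc}^{a}$, governed respectively by the $h$--operator $\Theta _{jl}^{hi}$ and the $v$--operator $\Theta _{bd}^{ca}$ of (\ref{thop}).

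The heart of the argument is to show that the operators $\Theta $ of (\ref{thop}) are exactly the projectors onto the solution space of this constraint, which I would establish in two steps, working with the $h$--block (the $v$--block being identical after $i\to a$, $j\to b$, and so on). Step one: for an arbitrary d--tensor $\mathbf{Y}$, lowering the index of $\Theta _{jl}^{hi}\ Y_{hk}^{l}$ with $\theta $ and using $\theta ^{hi}\theta _{im}=\delta _{m}^{h}$ yields $\frac{1}{2}(\tilde{Y}_{jkm}+\tilde{Y}_{mkj})$ with $\tilde{Y}_{jkm}\doteqdot Y_{jk}^{i}\theta _{im}$, which is manifestly symmetric in $j\leftrightarrow m$; hence every $\Theta $--projected tensor satisfies the constraint, so (\ref{fsdc}) always defines a d--connection compatible with $\theta $. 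Step two: if $\mathbf{Z}$ already satisfies the constraint, the same contraction together with $W_{jkm}=W_{mkj}$ gives $\Theta _{jl}^{hi}\ Z_{hk}^{l}=Z_{jk}^{i}$, i.e. $\Theta $ restricts to the identity on the solution space. By nondegeneracy of $\theta $ this identity lifts back from the lowered tensors, so the assignment $\mathbf{Y}\mapsto \Theta \cdot \mathbf{Y}$ is surjective onto the solution space; letting $\mathbf{Y}$ range over all d--tensor fields then reproduces every compatible $\mathbf{Z}$. Combining the two steps shows that (\ref{fsdc}) parametrizes precisely the d--connections compatible with $\theta $.

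The main obstacle I anticipate is Step two --- verifying that $\Theta $ acts as the identity on the constraint subspace rather than merely mapping into it. This is where the antisymmetry of $\theta $, the inverse relation $\theta ^{hi}\theta _{im}=\delta _{m}^{h}$, and the symmetry $W_{jkm}=W_{mkj}$ must be combined in the correct order; a careless contraction produces a stray sign or an antisymmetric remainder and collapses the parametrization. Once idempotency-on-solutions and the nondegeneracy lift are handled cleanly, the $v$--sector and the mixed $C$--components follow verbatim, and no further analytic input beyond the algebra of the projectors is required.
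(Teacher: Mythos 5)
Your argument is correct and is precisely the ``straightforward computation'' the paper invokes without displaying: write a candidate connection as $\ _{\theta }\mathbf{\Gamma }+\mathbf{Z}$ with $\mathbf{Z}$ a d--tensor, reduce compatibility to the algebraic constraint that $W_{\alpha \gamma \beta }=\mathbf{Z}_{\alpha \gamma }^{\tau }\theta _{\tau \beta }$ be symmetric under $\alpha \leftrightarrow \beta $, and identify the $\Theta $--operators of (\ref{thop}) as projectors onto (and the identity on) that solution space, applied blockwise in the h--, v-- and mixed sectors exactly as in (\ref{fsdc}). The one sign to lock down is the inverse convention you yourself flag as the hazard: you need $\theta ^{ih}\theta _{im}=\delta _{m}^{h}$ (first indices contracted), not $\theta ^{hi}\theta _{im}=\delta _{m}^{h}$ as written, since for antisymmetric $\theta $ these differ by a sign and the latter would make the lowered tensor of $\Theta _{jl}^{hi}Y_{hk}^{l}$ come out antisymmetric in $j\leftrightarrow m$ rather than symmetric, collapsing both of your steps.
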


From the set of metric and/or almost symplectic compatible d--connecti\-ons
on a (pseudo) Riemannian manifold $V^{2n},$ we can select those which are
completely defined by $\mathbf{g}$ and a prescribed effective Lagrange
structure $\mathcal{L}(x,y):$

\begin{theorem}
There is a unique normal d--connection
\begin{eqnarray}
\ \widehat{\mathbf{D}} &=&\left\{ h\widehat{D}=(\widehat{D}_{k},^{v}\widehat{%
D}_{k}=\widehat{D}_{k});v\widehat{D}=(\widehat{D}_{c},\ ^{v}\widehat{D}_{c}=%
\widehat{D}_{c})\right\}  \label{ndc} \\
&=&\{\widehat{\mathbf{\Gamma }}_{\beta \gamma }^{\alpha }=(\widehat{L}%
_{jk}^{i},\ ^{v}\widehat{L}_{n+j\ n+k}^{n+i}=\widehat{L}_{jk}^{i};\ \widehat{%
C}_{jc}^{i}=\ ^{v}\widehat{C}_{n+j\ c}^{n+i},\ ^{v}\widehat{C}_{bc}^{a}=%
\widehat{C}_{bc}^{a})\},  \notag
\end{eqnarray}%
which is metric compatible, $\widehat{D}_{k}\check{g}_{ij}=0$ and $\widehat{D%
}_{c}\check{g}_{ij}=0,$ and completely defined by $\mathbf{g}=\mathbf{\check{%
g}}$ and a fixed $\mathcal{L}(x,y).$
\end{theorem}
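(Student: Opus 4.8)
The plan is to adapt the classical derivation of the metric Christoffel symbols to the N--adapted nonholonomic frames $\mathbf{\check{e}}_{\alpha }$ and $\mathbf{\check{e}}^{\alpha }$ of (\ref{2dder}) and (\ref{2ddif}), exploiting the distinctive feature of the canonical Lagrange d--metric $\mathbf{\check{g}}$ (\ref{hvmetr1}): by (\ref{elf}) its vertical block coincides with its horizontal block under the index shift $a=n+i$, i.e. $\check{g}_{ij}=\check{h}_{n+i\ n+j}$. This symmetry is what lets a single family of coefficients serve simultaneously the h-- and v--covariant derivatives, and it is the structural origin of the term ``normal''.

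First I would write the most general d--connection as $\widehat{\mathbf{\Gamma }}_{\beta \gamma }^{\alpha }=(\widehat{L}_{jk}^{i},{}^{v}\widehat{L}_{bk}^{a},\widehat{C}_{jc}^{i},{}^{v}\widehat{C}_{bc}^{a})$ and impose the two metric--compatibility conditions $\widehat{D}_{k}\check{g}_{ij}=0$ and $\widehat{D}_{c}\check{g}_{ij}=0$. Running the standard cyclic--permutation argument, now with the N--elongated operators $\mathbf{\check{e}}_{k}$ in place of $\partial _{k}$ (so that the nonholonomy coefficients $W_{\alpha \beta }^{\gamma }$ of (\ref{2anhrel}) enter), together with the vanishing of the purely horizontal and purely vertical torsion components as for the canonical d--connection (\ref{2candcon}), yields the explicit Christoffel--type formulas
\begin{equation*}
\widehat{L}_{jk}^{i}=\frac{1}{2}\check{g}^{ir}\left( \mathbf{\check{e}}_{k}\check{g}_{jr}+\mathbf{\check{e}}_{j}\check{g}_{kr}-\mathbf{\check{e}}_{r}\check{g}_{jk}\right) ,\quad \widehat{C}_{bc}^{a}=\frac{1}{2}\check{g}^{ad}\left( e_{c}\check{g}_{bd}+e_{b}\check{g}_{cd}-e_{d}\check{g}_{bc}\right) ,
\end{equation*}
each coefficient being determined uniquely by $\check{g}$, hence by $\mathcal{L}$.

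Second, I would feed the Lagrange symmetry $\check{g}_{ij}=\check{h}_{n+i\ n+j}$ into these expressions. Since the h-- and v--blocks are literally the same functions of $\mathcal{L}$, the evaluation of $\widehat{C}_{bc}^{a}$ at $b=n+j$ reproduces that of $\widehat{C}_{jc}^{i}$, and likewise $\widehat{L}_{jk}^{i}$ reproduces ${}^{v}\widehat{L}_{n+j\ n+k}^{n+i}$ under the shift. This forces the identifications ${}^{v}\widehat{L}_{n+j\ n+k}^{n+i}=\widehat{L}_{jk}^{i}$ and $\widehat{C}_{jc}^{i}={}^{v}\widehat{C}_{n+j\ c}^{n+i}$ asserted in the theorem, collapsing the four families to the single normal pair $(\widehat{L}_{jk}^{i},\widehat{C}_{jk}^{i})$. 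Uniqueness then follows because any second metric--compatible normal d--connection differs from $\widehat{\mathbf{D}}$ by a d--tensor $\mathbf{Z}_{\beta \gamma }^{\alpha }$ which metric compatibility renders antisymmetric in the lowered metric pair while the normality and torsion constraints render symmetric, so $\mathbf{Z}=0$.

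The step I expect to be most delicate is not the algebra but the consistency check: a priori one is imposing both compatibility conditions $\widehat{D}_{k}\check{g}_{ij}=0$, $\widehat{D}_{c}\check{g}_{ij}=0$ \emph{and} the normal identifications, so I must verify these do not overdetermine the coefficients. This amounts to confirming that the $W$--contributions arising in the two compatibility conditions are exactly absorbed by the $\mathbf{\check{e}}$--derivatives in the Christoffel combinations, and that the horizontal and vertical systems reduce to the \emph{same} equations for the shared coefficients --- a coincidence that holds precisely because of the equality $\check{g}_{ij}=\check{h}_{n+i\ n+j}$ built into $\mathbf{\check{g}}$. Verifying this compatibility is the heart of the proof; once it is in place, existence and uniqueness of $\widehat{\mathbf{D}}$ are immediate.
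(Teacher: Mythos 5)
Your proposal is correct and follows essentially the same route as the paper, whose own (much terser) proof simply exhibits the Christoffel--type coefficients (\ref{cdcc}) computed with the N--elongated operators $\mathbf{\check{e}}_{k}$ and asserts that they give the unique metric compatible normal d--connection --- exactly what your cyclic--permutation derivation in N--adapted frames and your distortion--tensor uniqueness argument supply in detail. One framing caveat: the identifications $^{v}\widehat{L}_{n+j\ n+k}^{n+i}=\widehat{L}_{jk}^{i}$ and $\widehat{C}_{jc}^{i}=\ ^{v}\widehat{C}_{n+j\ c}^{n+i}$ are definitional for ``normal'' rather than consequences of compatibility, but your closing consistency check --- that the horizontal and vertical systems reduce to the same equations precisely because $\check{g}_{ij}=\check{h}_{n+i\ n+j}$, so the collapsed ansatz is not overdetermined --- is exactly the content that makes the theorem hold, so the substance is right.
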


\begin{proof}
First, we note that if a normal d--connection exists, it is completely
defined by couples of h-- and v--components $\ \widehat{\mathbf{D}}_{\alpha
}=(\widehat{D}_{k},\widehat{D}_{c}),$ i.e. $\widehat{\mathbf{\Gamma }}%
_{\beta \gamma }^{\alpha }=(\widehat{L}_{jk}^{i},\ ^{v}\widehat{C}%
_{bc}^{a}). $ Choosing
\begin{equation}
\widehat{L}_{jk}^{i}=\frac{1}{2}\check{g}^{ih}\left( \mathbf{\check{e}}_{k}%
\check{g}_{jh}+\mathbf{\check{e}}_{j}\check{g}_{hk}-\mathbf{\check{e}}_{h}%
\check{g}_{jk}\right) ,\widehat{C}_{jk}^{i}=\frac{1}{2}\check{g}^{ih}\left(
\frac{\partial \check{g}_{jh}}{\partial y^{k}}+\frac{\partial \check{g}_{hk}%
}{\partial y^{j}}-\frac{\partial \check{g}_{jk}}{\partial y^{h}}\right) ,
\label{cdcc}
\end{equation}%
where $\mathbf{\check{e}}_{k}=\partial /\partial x^{k}+\check{N}%
_{k}^{a}\partial /\partial y^{a},$ $\check{N}_{k}^{a}$ and $\check{g}_{jk}=%
\check{h}_{n+i\ n+j}$ are defined by canonical values (\ref{elf}) and (\ref%
{ncel}) induced by a regular $\mathcal{L}(x,y),$ we can prove that this
d--connection is unique and satisfies the conditions of the theorem. \ Using
vielbeins $\mathbf{e}_{\ \underline{\alpha }}^{\alpha }$ and their duals $%
\mathbf{e}_{\alpha \ }^{\ \underline{\alpha }},$ defined by $e_{\ \underline{%
i}}^{i},e_{\ \underline{i}}^{a}$ and $e_{\ \underline{a}}^{a}$, we can
compute the coefficients of $\widehat{\mathbf{\Gamma }}_{\beta \gamma
}^{\alpha }$ (\ref{ndc}) with respect to arbitrary frame basis $e_{\alpha }$
and co--basis $e^{\alpha }$ on $V^{n+m}.\square $
\end{proof}

\vskip5pt Introducing the normal d--connection 1--form $\widehat{\mathbf{%
\Gamma }}_{j}^{i}=\widehat{L}_{jk}^{i}e^{k}+\widehat{C}_{jk}^{i}\mathbf{%
\check{e}}^{k},$ for $e^{k}=dx^{k}$ and $\mathbf{\check{e}}^{k}=dy^{k}+%
\check{N}_{i}^{k}dx^{k},$ we can prove that the Cartan structure equations
are satisfied,%
\begin{equation}
de^{k}-e^{j}\wedge \widehat{\mathbf{\Gamma }}_{j}^{k}=-\widehat{\mathcal{T}}%
^{i},\ d\mathbf{\check{e}}^{k}-\mathbf{\check{e}}^{j}\wedge \widehat{\mathbf{%
\Gamma }}_{j}^{k}=-\ ^{v}\widehat{\mathcal{T}}^{i},  \label{cart1}
\end{equation}%
and
\begin{equation}
d\widehat{\mathbf{\Gamma }}_{j}^{i}-\widehat{\mathbf{\Gamma }}_{j}^{h}\wedge
\widehat{\mathbf{\Gamma }}_{h}^{i}=-\widehat{\mathcal{R}}_{\ j}^{i}.
\label{cart2}
\end{equation}

The h-- and v--components of the torsion 2--form $\widehat{\mathcal{T}}%
^{\alpha }=\left( \widehat{\mathcal{T}}^{i},\ ^{v}\widehat{\mathcal{T}}%
^{i}\right) =\widehat{\mathbf{T}}_{\ \tau \beta}^{\alpha }\ \mathbf{\check{e}%
}^{\tau }\wedge \mathbf{\check{e}}^{\beta }$ and from (\ref{cart1}) the
components are computed
\begin{equation}
\widehat{\mathcal{T}}^{i}=\widehat{C}_{jk}^{i}e^{j}\wedge \mathbf{\check{e}}%
^{k},\ ^{v}\widehat{\mathcal{T}}^{i}=\frac{1}{2}\check{\Omega}%
_{kj}^{i}e^{k}\wedge e^{j}+(\frac{\partial \check{N}_{k}^{i}}{\partial y^{j}}%
-\widehat{L}_{\ kj}^{i})e^{k}\wedge \mathbf{\check{e}}^{j},  \label{tform}
\end{equation}%
where $\check{\Omega}_{kj}^{i}$ are coefficients of the curvature of the
canonical N--connection $\check{N}_{k}^{i}$ defined by formulas similar to (%
\ref{nccurv}). Such formulas also follow from (\ref{ators}) redefined for $%
\widehat{\mathbf{D}}_{\alpha }$ and $\mathbf{\check{e}}_{\alpha },$ when the
torsion $\widehat{\mathbf{T}}_{\beta \gamma }^{\alpha }$ is parametrized as
\begin{equation}
\widehat{T}_{jk}^{i}=0,\widehat{T}_{jc}^{i}=\widehat{C}_{\ jc}^{i},\widehat{T%
}_{ij}^{a}=\check{\Omega}_{ij}^{a},\widehat{T}_{ib}^{a}=e_{b}\check{N}%
_{i}^{a}-\widehat{L}_{\ bi}^{a},\widehat{T}_{bc}^{a}=0.  \label{cdtors}
\end{equation}%
It should be noted that $\widehat{\mathbf{T}}$ vanishes on h- and
v--subspaces, i.e. $\widehat{T}_{jk}^{i}=0$ and $\widehat{T}_{bc}^{a}=0,$
but certain nontrivial h--v--components induced by the nonholonomic
structure are defined canonically by $\mathbf{g}=\mathbf{\check{g}}$ and $%
\mathcal{L}.$

We can also compute the curvature 2--form from (\ref{cart2}),%
\begin{equation}
\widehat{\mathcal{R}}_{\ \gamma }^{\tau }=\widehat{\mathbf{R}}_{\ \gamma
\alpha \beta }^{\tau }\ \mathbf{\check{e}}^{\alpha }\wedge \ \mathbf{%
\check{e}}^{\beta }=\frac{1}{2}\widehat{R}_{\ jkh}^{i}e^{k}\wedge e^{h}+%
\widehat{P}_{\ jka}^{i}e^{k}\wedge \mathbf{\check{e}}^{a}+\frac{1}{2}\
\widehat{S}_{\ jcd}^{i}\mathbf{\check{e}}^{c}\wedge \mathbf{\check{e}}^{d},
\label{cform}
\end{equation}%
where the nontrivial N--adapted coefficients of curvature $\widehat{\mathbf{R%
}}_{\ \beta \gamma \tau }^{\alpha }$ of $\widehat{\mathbf{D}}$ are (such
formulas can be proven also from (\ref{acurv}) written for $\widehat{\mathbf{%
D}}_{\alpha }$ and $\mathbf{\check{e}}_{\alpha })$
\begin{eqnarray}
\widehat{R}_{\ hjk}^{i} &=&\mathbf{\check{e}}_{k}\widehat{L}_{\ hj}^{i}-%
\mathbf{\check{e}}_{j}\widehat{L}_{\ hk}^{i}+\widehat{L}_{\ hj}^{m}\widehat{L%
}_{\ mk}^{i}-\widehat{L}_{\ hk}^{m}\widehat{L}_{\ mj}^{i}-\widehat{C}_{\
ha}^{i}\check{\Omega}_{\ kj}^{a},  \label{cdcurv} \\
\widehat{P}_{\ jka}^{i} &=&e_{a}\widehat{L}_{\ jk}^{i}-\widehat{\mathbf{D}}%
_{k}\widehat{C}_{\ ja}^{i},\ \widehat{S}_{\ bcd}^{a}=e_{d}\widehat{C}_{\
bc}^{a}-e_{c}\widehat{C}_{\ bd}^{a}+\widehat{C}_{\ bc}^{e}\widehat{C}_{\
ed}^{a}-\widehat{C}_{\ bd}^{e}\widehat{C}_{\ ec}^{a}.  \notag
\end{eqnarray}%
If instead of an effective Lagrange function, one considers a Finsler
generating fundamental function $\mathcal{F}^{2},$ similar formulas for the
torsion and curvature of the normal d--connection can also be found.

There is another very important property of the normal d--connection:

\begin{theorem}
The normal d--connection $\widehat{\mathbf{D}}$ defines a unique almost
symplectic d--connection, $\widehat{\mathbf{D}}\equiv \ _{\theta }\widehat{%
\mathbf{D}},$ see Definition \ref{defasstr}, which is N--adapted, i.e. it
preserves under parallelism the splitting (\ref{whitney}), $_{\theta }%
\widehat{\mathbf{D}}_{\mathbf{X}}\check{\theta}\mathbf{=}0$ and $\widehat{T}%
_{jk}^{i}=\widehat{T}_{bc}^{a}=0,$ i.e. the torsion is of type (\ref{cdtors}%
).
\end{theorem}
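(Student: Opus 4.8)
The plan is to reduce the almost symplectic compatibility $\ _{\theta }\widehat{\mathbf{D}}_{\mathbf{X}}\check{\theta}=0$ to two facts already at hand: the metric compatibility of the normal d--connection proved in the preceding theorem, $\widehat{D}_{k}\check{g}_{ij}=0$ and $\widehat{D}_{c}\check{g}_{ij}=0$, and the compatibility of $\widehat{\mathbf{D}}$ with the canonical almost complex structure $\mathbf{\check{J}}$. Since the almost symplectic form is built from the metric and $\mathbf{\check{J}}$ through $\mathbf{\check{\theta}}(\mathbf{X},\mathbf{Y})\doteqdot \mathbf{\check{g}}(\mathbf{\check{J}X},\mathbf{Y})$, in N--adapted components one has $\check{\theta}_{\alpha \beta }=\mathbf{\check{J}}_{\ \alpha }^{\gamma }\check{g}_{\gamma \beta }$, so the Leibniz rule gives
\begin{equation*}
\widehat{D}_{\mu }\check{\theta}_{\alpha \beta }=(\widehat{D}_{\mu }\mathbf{\check{J}}_{\ \alpha }^{\gamma })\ \check{g}_{\gamma \beta }+\mathbf{\check{J}}_{\ \alpha }^{\gamma }\ (\widehat{D}_{\mu }\check{g}_{\gamma \beta }).
\end{equation*}
The second term vanishes by metric compatibility, so the whole statement collapses to proving $\widehat{\mathbf{D}}\mathbf{\check{J}}=0$; nondegeneracy of $\check{g}$ then yields $\widehat{D}_{\mu }\check{\theta}_{\alpha \beta }=0$ for every $\mu $, i.e. $\ _{\theta }\widehat{\mathbf{D}}_{\mathbf{X}}\check{\theta}=0$.

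The heart of the argument is $\widehat{\mathbf{D}}\mathbf{\check{J}}=0$. With respect to the N--adapted frame $(\mathbf{\check{e}}_{i},e_{n+i})$ the structure $\mathbf{\check{J}}$ has constant coefficients, $\mathbf{\check{J}}_{\ j}^{n+i}=-\delta _{j}^{i}$ and $\mathbf{\check{J}}_{\ n+j}^{i}=\delta _{j}^{i}$, so the frame derivatives $\mathbf{e}_{\mu }(\mathbf{\check{J}}_{\ \alpha }^{\gamma })$ drop out and
\begin{equation*}
\widehat{D}_{\mu }\mathbf{\check{J}}_{\ \alpha }^{\gamma }=\widehat{\mathbf{\Gamma }}_{\ \beta \mu }^{\gamma }\mathbf{\check{J}}_{\ \alpha }^{\beta }-\widehat{\mathbf{\Gamma }}_{\ \alpha \mu }^{\beta }\mathbf{\check{J}}_{\ \beta }^{\gamma }.
\end{equation*}
Because $\mathbf{\check{J}}$ sends the h--distribution to the v--distribution and back, and $\widehat{\mathbf{D}}$ is a d--connection preserving the splitting (\ref{whitney}), the pure h--h and v--v blocks are automatically zero. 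In the mixed blocks, the expression reduces, along an h--direction, to the difference $\widehat{L}_{jk}^{i}-\ ^{v}\widehat{L}_{n+j\ n+k}^{n+i}$ and, along a v--direction, to $\widehat{C}_{jc}^{i}-\ ^{v}\widehat{C}_{n+j\ c}^{n+i}$. These vanish identically, since the normal d--connection (\ref{cdcc}) is defined precisely by the mirror identifications $\ ^{v}\widehat{L}_{n+j\ n+k}^{n+i}=\widehat{L}_{jk}^{i}$ and $\widehat{C}_{jc}^{i}=\ ^{v}\widehat{C}_{n+j\ c}^{n+i}$ (reflecting $\check{g}_{ij}=\check{h}_{n+i\ n+j}$ under $i\leftrightarrow n+i$). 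Hence every index block of $\widehat{D}_{\mu }\mathbf{\check{J}}_{\ \alpha }^{\gamma }$ is zero, which, combined with the first paragraph, establishes $\ _{\theta }\widehat{\mathbf{D}}_{\mathbf{X}}\check{\theta}=0$ and the identification $\widehat{\mathbf{D}}\equiv \ _{\theta }\widehat{\mathbf{D}}$. The remaining two properties are immediate: N--adaptedness holds because $\widehat{\mathbf{D}}$ is a d--connection by construction, and the torsion is exactly of type (\ref{cdtors}), in particular $\widehat{T}_{jk}^{i}=\widehat{T}_{bc}^{a}=0$.

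For uniqueness I would invoke the parametrization (\ref{fsdc}): every d--connection compatible with $\check{\theta}$ is of the form $\ _{s}\mathbf{\Gamma }_{\beta \gamma }^{\alpha }=\ _{\theta }\mathbf{\Gamma }_{\beta \gamma }^{\alpha }+\Theta \cdot \mathbf{Y}$ for an arbitrary d--tensor $\mathbf{Y}_{\beta \gamma }^{\alpha }$ and the operators (\ref{thop}). Imposing the two torsion conditions $\widehat{T}_{jk}^{i}=0$ and $\widehat{T}_{bc}^{a}=0$ forces symmetry of the connection coefficients in the lower indices within the pure h-- and v--sectors, and together with the requirement that the connection be completely determined by $\mathbf{g}=\mathbf{\check{g}}$ (equivalently $\mathcal{L}$) this fixes $\mathbf{Y}$ and singles out the Christoffel--type coefficients (\ref{cdcc}); this is the nonholonomic analogue of the fundamental theorem of Riemannian geometry. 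The step I expect to be the main obstacle is the mixed--block verification of $\widehat{\mathbf{D}}\mathbf{\check{J}}=0$: one must carefully track the bundle identification of the h--direction index $k$ with the fiber label $n+k$ so that the two differences above are precisely the defining relations of the normal d--connection, with no residual terms, and a secondary care is needed in checking that the torsion normalization indeed eliminates all freedom in $\mathbf{Y}$ within the family (\ref{fsdc}).
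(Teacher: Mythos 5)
Your verification of the compatibility condition $\ _{\theta }\widehat{\mathbf{D}}_{\mathbf{X}}\check{\theta}=0$ is correct, and it is in effect the fleshed-out form of what the paper leaves as a one-line ``straightforward'' check from the coefficients (\ref{cdcc}): you factor $\check{\theta}_{\alpha \beta }=\mathbf{\check{J}}_{\ \alpha }^{\gamma }\check{g}_{\gamma \beta }$, use the metric compatibility $\widehat{D}_{k}\check{g}_{ij}=0,\ \widehat{D}_{c}\check{g}_{ij}=0$ from the preceding theorem, and reduce everything to $\widehat{\mathbf{D}}\mathbf{\check{J}}=0$. Your block analysis of the latter is sound: since a d--connection has no mixing coefficients (no $\Gamma _{\ jk}^{a}$ or $\Gamma _{\ bk}^{i}$ blocks) and $\mathbf{\check{J}}$ interchanges the h-- and v--distributions with constant N--adapted coefficients, the h--h and v--v blocks vanish identically, while the mixed blocks reduce exactly to $\widehat{L}_{jk}^{i}-\ ^{v}\widehat{L}_{n+j\ n+k}^{n+i}$ and $\widehat{C}_{jc}^{i}-\ ^{v}\widehat{C}_{n+j\ c}^{n+i}$, which are zero precisely by the normality identifications built into (\ref{ndc}). (One small slip: nondegeneracy of $\check{g}$ is not needed to conclude $\widehat{D}_{\mu }\check{\theta}_{\alpha \beta }=0$; it would be needed only for the converse, extracting $\widehat{\mathbf{D}}\mathbf{\check{J}}=0$ from symplectic plus metric compatibility.)

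Your uniqueness argument, however, has a genuine gap, and it is exactly where you suspected. Within the family (\ref{fsdc}), the conditions $\widehat{T}_{jk}^{i}=0$ and $\widehat{T}_{bc}^{a}=0$ constrain only the contributions of $Y_{hk}^{l}$ to $\ _{s}L_{jk}^{i}$ and of $Y_{ec}^{d}$ to $\ _{s}^{v}C_{bc}^{a}$, because $T_{jk}^{i}=\ _{s}L_{jk}^{i}-\ _{s}L_{kj}^{i}$ and $T_{bc}^{a}=\ _{s}C_{bc}^{a}-\ _{s}C_{cb}^{a}$; the components $Y_{ck}^{d}$ entering $\ _{s}^{v}L_{bk}^{a}$ and $Y_{hc}^{l}$ entering $\ _{s}C_{jc}^{i}$ are left completely free. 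Moreover, ``torsion of type (\ref{cdtors})'' adds no further constraint, since for \emph{any} d--connection the mixed torsions are identically $T_{ja}^{i}=C_{ja}^{i}$, $T_{ji}^{a}=\Omega _{ji}^{a}$ and $T_{bi}^{a}=e_{b}(N_{i}^{a})-L_{bi}^{a}$, so that parametrization of the torsion is equivalent to just the two vanishing conditions. Hence there is an infinite family of almost symplectic d--connections with the stated torsion type, and the informal requirement of being ``completely determined by $\mathbf{g}$'' does not single one out (the Levi--Civita connection and the canonical d--connection are also completely determined by $\mathbf{g}$). The uniqueness asserted in the theorem must instead be read as inherited from the preceding theorem: it is uniqueness within the class of \emph{normal} d--connections, i.e., those of the restricted form (\ref{ndc}) with the mirror identifications $\ ^{v}\widehat{L}_{n+j\ n+k}^{n+i}=\widehat{L}_{jk}^{i}$, $\widehat{C}_{jc}^{i}=\ ^{v}\widehat{C}_{n+j\ c}^{n+i}$ and metric compatibility, for which $\widehat{L}_{jk}^{i}$ and $\widehat{C}_{bc}^{a}$ are fixed by the Christoffel--type formulas (\ref{cdcc}). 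Replacing your final paragraph by an appeal to that theorem, the present statement reduces to the compatibility and torsion checks you carried out correctly.
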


\begin{proof}
Applying the conditions of the theorem to the coefficients (\ref{cdcc}), the
proof follows in a straightforward manner. $\square $
\end{proof}

\vskip3pt

In this section, we proved that a N--adapted and almost symplectic $\widehat{%
\mathbf{\Gamma }}_{\beta \gamma }^{\alpha }$ can be uniquely defined by a
(pseudo) Riemannian metric $\mathbf{g}$ if we prescribe an effective
Lagrange, or Finsler, function $\mathcal{L},$ or $\mathcal{F}$ on $V^{2n}.$
This allows us to construct an analogous Lagrange model for gravity and, at
the next step, to transform it equivalently in an almost K\"{a}hler
structure adapted to a corresponding $n+n$ spacetime splitting. For the
Einstein metrics, we get a canonical $2+2$ decomposition for which we can
apply the Fedosov's quantization if the geometric objects and operators are
adapted to the associated N--connection.

\begin{definition}
A (pseudo) Riemannian space is described in Lagrange--Finsler variables if
its vielbein, metric and linear connection structures are equivalently
transformed into corresponding canonical N---connection,
La\-gran\-ge--Finsler metric and normal / almost symplectic d--connection
structures.
\end{definition}

It should be noted that former approaches to the canonical and quantum loop
quantization of gravity were elaborated for $3+1$ fibrations and
corresponding ADM and Ashtekar variables with further modifications. On the
other hand, in order to elaborate certain approaches to deformation
quantization of gravity, it is crucial to work with nonholonomic $2+2$
structures, which is more convenient for certain Lagrange geometrized
constructions and their almost symplectic variants. For other models, the $%
3+1$ splitting preserves a number of similarities to Hamilton mechanics. In
our approach, the spacetime decompositions are defined by corresponding
N--connection structures, which can be induced canonically by effective
Lagrange, or Finsler, generating functions. One works both with N--adapted
metric coefficients and nonholonomic frame coefficients, the last ones being
defined by generic off--diagonal metric coefficients and related
N--connection coefficients. In the models related to $3+1$ fibrations, one
works with shift functions and frame variables which contain all dynamical
information, instead of metrics.

We also discuss here the similarities and differences of preferred classes
of linear connections used for $3+1$ and $2+2$ structures. In the first
case, the Ashtekar variables (and further modifications) were proved to
simplify the constraint structure of a gauge like theory to which the
Einstein theory was transformed in order to develop a background independent
quantization of gravity. In the second case, the analogs of Ashtekar
variables are generated by a canonical Lagrange--Finsler type metric and/or
corresponding almost symplectic structure, both adapted to the N--connection
structure. It is also involved the normal d--connection which is compatible
with the almost symplectic structure and completely defined by the metric
structure, alternatively to the Levi--Civita connection (the last one is not
adapted to the N--connection and induced almost symplectic structure). In
fact, all constructions for the normal d--connection can be redefined in an
equivalent form to the Levi--Civita connection, or in Ashtekar variables,
but in such cases the canonical $2+2$ splitting and almost K\"{a}hler
structure are mixed by general frame and linear connection deformations.

\subsubsection{Distinguished Fedosov's Operators}

The Fedosov's approach to deformation quantization will be extended for
(pseudo) Riemannian manifolds $V^{2n}$ endowed with an effective Lagrange
function $\mathcal{L}.$ The constructions elaborated by A. Karabegov and M.
Schlichenmeier will be adapted to the canonical N--connection structure by
considering decompositions with respect to $\mathbf{\breve{e}}_{\nu }=(%
\mathbf{\breve{e}}_{i},e_{a^{\prime }})$ and $\mathbf{\breve{e}}^{\mu
}=(e^{i},\mathbf{\breve{e}}^{a^{\prime }})$ defined by a metric $\mathbf{g}$
(\ref{m1}). For simplicity, we shall work only with the normal/ almost
symplectic d--connection, $\widehat{\mathbf{D}}\equiv \ _{\theta }\widehat{%
\mathbf{D}}$ (\ref{ndc}), see Definition \ref{defasstr}, but it should be
emphasized here that we can use any d--connection from the family (\ref{fsdc}%
) and develop a corresponding deformation quantization. In this work, the
formulas are redefined on nonholonomic (pseudo) Riemannian manifolds
modeling effective regular mechanical systems and corresponding almost K\"{a}%
hler structures.

We introduce the tensor $\ \mathbf{\check{\Lambda}}^{\alpha \beta }\doteqdot
\check{\theta}^{\alpha \beta }-i\ \mathbf{\check{g}}^{\alpha \beta },$ where
$\check{\theta}^{\alpha \beta }$ is the form (\ref{asymstr}) with ''up''
indices and $\ \mathbf{\check{g}}^{\alpha \beta }$ is the inverse to $%
\mathbf{\check{g}}_{\alpha \beta }$ stated by coefficients of (\ref{hvmetr1}%
). The local coordinates on $\mathbf{V}^{2n}$ are parametrized as $%
u=\{u^{\alpha }\}$ and the local coordinates on $T_{u}\mathbf{V}^{2n}$ are
labeled $(u,z)=(u^{\alpha },z^{\beta }),$ where $z^{\beta }$ are fiber
coordinates.

The formalism of deformation quantization can be developed by using $%
C^{\infty }(V)[[v]]$, the space of formal series of variable $v$ with
coefficients from $C^{\infty }(V)$ on a Poisson manifold $(V,\{\cdot ,\cdot
\})$ (in this work, we deal with an almost Poisson structure defined by the
canonical almost symplectic structure). One defines an associative algebra
structure on $C^{\infty }(V)[[v]]$ with a $v$--linear and $v$--adically
continuous star product
\begin{equation}
\ ^{1}f\ast \ ^{2}f=\sum\limits_{r=0}^{\infty }\ _{r}C(\ ^{1}f,\ ^{2}f)\
v^{r},  \label{starp}
\end{equation}%
where $\ _{r}C,r\geq 0,$ are bilinear operators on $C^{\infty }(V)$ with $\
_{0}C(\ ^{1}f,\ ^{2}f)=\ ^{1}f\ ^{2}f$ and $\ _{1}C(\ ^{1}f,\ ^{2}f)-\
_{1}C(\ ^{2}f,\ ^{1}f)=i\{\ ^{1}f,\ ^{2}f\};$\ $i$ being the complex unity.
Constructions of type (\ref{starp}) are used for stating a formal Wick
product
\begin{equation}
a\circ b\ (z)\doteqdot \exp \left( i\frac{v}{2}\ \mathbf{\check{\Lambda}}%
^{\alpha \beta }\frac{\partial ^{2}}{\partial z^{\alpha }\partial
z_{[1]}^{\beta }}\right) a(z)b(z_{[1]})\mid _{z=z_{[1]}},  \label{fpr}
\end{equation}%
for two elements $a$ and $b$ defined by series of type
\begin{equation}
a(v,z)=\sum\limits_{r\geq 0,|\{\alpha \}|\geq 0}\ a_{r,\{\alpha
\}}(u)z^{\{\alpha \}}\ v^{r},  \label{formser}
\end{equation}%
where by $\{\alpha \}$ we label a multi--index. This way, we define a formal
Wick algebra $\mathbf{\check{W}}_{u}$ associated with the tangent space $%
T_{u}\mathbf{V}^{2n},$ for $u\in \mathbf{V}^{2n}.$ It should be noted that
the fibre product (\ref{fpr}) can be trivially extended to the space of $%
\mathbf{\check{W}}$--valued N--adapted differential forms $\mathcal{\check{W}%
}\otimes \Lambda $ by means of the usual exterior product of the scalar
forms $\Lambda ,$ where $\ \mathcal{\check{W}}$ denotes the sheaf of smooth
sections of $\mathbf{\check{W}.}$ There is a standard grading on $\Lambda $
denoted $\deg _{a}.$ One also introduces gradings $\deg _{v},\deg _{s},\deg
_{a}$ on $\ \mathcal{W}\otimes \Lambda $ defined on homogeneous elements $%
v,z^{\alpha },\mathbf{\check{e}}^{\alpha }$ as follows: $\deg _{v}(v)=1,$ $%
\deg _{s}(z^{\alpha })=1,$ $\deg _{a}(\mathbf{\check{e}}^{\alpha })=1,$ and
all other gradings of the elements $v,z^{\alpha },\mathbf{\check{e}}^{\alpha
}$ are set to zero. In this case, the product $\circ $ from (\ref{fpr}) on $%
\ \mathcal{\check{W}}\otimes \mathbf{\Lambda }$ is bigraded. This is written
w.r.t the grading $Deg=2\deg _{v}+\deg _{s}$ and the grading $\deg _{a}.$

\subsubsection{Normal Fedosov's d--operators}

The normal d--connection $\widehat{\mathbf{D}}\mathbf{=\{}\widehat{\mathbf{%
\Gamma }}\mathbf{_{\alpha \beta }^{\gamma }\}}$ (\ref{ndc}) can be extended
to operators
\begin{equation}
\widehat{\mathbf{D}}\left( a\otimes \lambda \right) \doteqdot \left( \mathbf{%
\check{e}}_{\alpha }(a)-u^{\beta }\ \widehat{\mathbf{\Gamma }}\mathbf{%
_{\alpha \beta }^{\gamma }\ }^{z}\mathbf{\check{e}}_{\alpha }(a)\right)
\otimes (\mathbf{\check{e}}^{\alpha }\wedge \lambda )+a\otimes d\lambda ,
\label{cdcop}
\end{equation}%
on $\mathcal{\check{W}}\otimes \Lambda ,$ where $^{z}\mathbf{\check{e}}%
_{\alpha }$ is $\mathbf{\check{e}}_{\alpha }$ redefined in $z$--variables.
This operator $\widehat{\mathbf{D}}$ is a N--adapted $\deg _{a}$--graded
derivation of the distinguished algebra $\left( \mathcal{\check{W}}\otimes
\mathbf{\Lambda ,\circ }\right) ,$ called d--algebra. Such a property
follows from (\ref{fpr}) and (\ref{cdcop})).

\begin{definition}
The Fedosov distinguished operators (d--operators) $\check{\delta}$ and $%
\check{\delta}^{-1}$ on$\ \ \mathcal{\check{W}}\otimes \mathbf{\Lambda ,}$
are defined%
\begin{equation}
\check{\delta}(a)=\ \mathbf{\check{e}}^{\alpha }\wedge \mathbf{\ }^{z}%
\mathbf{\check{e}}_{\alpha }(a),\ \mbox{and\ } \check{\delta}%
^{-1}(a)=\left\{
\begin{array}{c}
\frac{i}{p+q}z^{\alpha }\ \mathbf{\check{e}}_{\alpha }(a),\mbox{ if }p+q>0,
\\
{\qquad 0},\mbox{ if }p=q=0,%
\end{array}%
\right.  \label{feddop}
\end{equation}%
where any $a\in \mathcal{\check{W}}\otimes \mathbf{\Lambda }$ is homogeneous
w.r.t. the grading $\deg _{s}$ and $\deg _{a}$ with $\deg _{s}(a)=p$ and $%
\deg _{a}(a)=q.$
\end{definition}

The d--operators (\ref{feddop}) define the formula $a=(\check{\delta}\
\check{\delta}^{-1}+\check{\delta}^{-1}\ \check{\delta}+\sigma )(a),$ where $%
a\longmapsto \sigma (a)$ is the projection on the $(\deg _{s},\deg _{a})$%
--bihomogeneous part of $a$ of degree zero, $\deg _{s}(a)=\deg _{a}(a)=0;$ $%
\check{\delta}$ is also a $\deg _{a}$--graded derivation of the d--algebra $%
\left( \mathcal{\check{W}}\otimes \mathbf{\Lambda ,\circ }\right) .$ In
order to emphasize the almost K\"{a}hler structure, we used the canonical
almost symplectic geometric objects defined by a fixed $\mathcal{L}.$
Nevertheless, we can always change the ''Lagrangian mechanics variables''
and redefine $\mathbf{\check{\theta},}$ $\mathbf{\check{e}}_{\alpha }$ and $%
\widehat{\mathbf{\Gamma }}\mathbf{_{\alpha \beta }^{\gamma }}$ with respect
to arbitrary frame and co--frame bases using vielbeins $\mathbf{e}_{\
\underline{\alpha }}^{\alpha }$ and their duals $\mathbf{e}_{\alpha \ }^{\
\underline{\alpha }},$ defined by $e_{\ \underline{i}}^{i},e_{\ \underline{i}%
}^{a}$ and $e_{\ \underline{a}}^{a}$.

\begin{proposition}
\label{prthprfo}The torsion and curvature canonical d--operators of the
extension of $\widehat{\mathbf{D}}$ to $\mathcal{\check{W}}\otimes \mathbf{%
\Lambda ,}$ are computed
\begin{equation}
^{z}\widehat{\mathcal{T}}\ \doteqdot \frac{z^{\gamma }}{2}\ \check{\theta}%
_{\gamma \tau }\ \widehat{\mathbf{T}}_{\alpha \beta }^{\tau }(u)\ \mathbf{%
\check{e}}^{\alpha }\wedge \mathbf{\check{e}}^{\beta },  \label{at1}
\end{equation}%
and%
\begin{equation}
\ ^{z}\widehat{\mathcal{R}}\doteqdot \frac{z^{\gamma }z^{\varphi }}{4}\
\check{\theta}_{\gamma \tau }\ \widehat{\mathbf{R}}_{\ \varphi \alpha \beta
}^{\tau }(u)\ \mathbf{\check{e}}^{\alpha }\wedge \mathbf{\check{e}}^{\beta },
\label{ac1}
\end{equation}%
where the nontrivial coefficients of $\ \widehat{\mathbf{T}}_{\alpha \beta
}^{\tau }$ and $\ \widehat{\mathbf{R}}_{\ \varphi \alpha \beta }^{\tau }$
are defined respectively by formulas (\ref{cdtors}) and (\ref{cdcurv}).
\end{proposition}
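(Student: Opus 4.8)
The plan is to obtain ${}^{z}\widehat{\mathcal{T}}$ and ${}^{z}\widehat{\mathcal{R}}$ as the $\mathcal{\check{W}}\otimes\mathbf{\Lambda}$--valued objects controlling the failure of the extended derivation $\widehat{\mathbf{D}}$ (\ref{cdcop}) to square to zero, i.e. by computing $\widehat{\mathbf{D}}\circ\widehat{\mathbf{D}}$ fiberwise with respect to the Wick product (\ref{fpr}). First I would record that $\widehat{\mathbf{D}}$ is a $\deg_{a}$--graded derivation of $(\mathcal{\check{W}}\otimes\mathbf{\Lambda},\circ)$, so that $\widehat{\mathbf{D}}^{2}$ is $C^{\infty}(\mathbf{V}^{2n})[[v]]$--linear and therefore acts as the inner derivation $\frac{i}{v}\,\mathrm{ad}_{\circ}$ of a globally defined Weyl--algebra--valued two--form. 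This reduces the curvature part of the statement to identifying that two--form, while the torsion operator appears as the companion first--degree object forced by the use of a d--connection (\ref{ndc}) with nonvanishing torsion (\ref{cdtors}), rather than a torsion--free symplectic connection.

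Then I would substitute the local expression (\ref{cdcop}) into itself, keeping the $z$--variable reparametrization ${}^{z}\mathbf{\check{e}}_{\alpha}$ and the normal d--connection coefficients $\widehat{\mathbf{\Gamma}}_{\alpha\beta}^{\gamma}$. The antisymmetrized second derivatives of the base frame produce the commutators $[\mathbf{\check{e}}_{\alpha},\mathbf{\check{e}}_{\beta}]$, which by the nonholonomy relations (\ref{2anhrel}) carry the anholonomy coefficients $W_{\alpha\beta}^{\gamma}$ (in particular the N--connection curvature $\check{\Omega}_{ij}^{a}$). Collecting the terms that are first order in the fiber derivative $\partial/\partial z$ gives the torsion combination $\widehat{\mathbf{T}}_{\alpha\beta}^{\tau}$ of (\ref{cdtors}), whereas the terms quadratic in the connection together with the $W_{\alpha\beta}^{\gamma}$--contributions assemble into the distinguished curvature coefficients $\widehat{\mathbf{R}}_{\ \varphi\alpha\beta}^{\tau}$ of (\ref{cdcurv}).

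To convert these contractions into the stated operators I would lower the upper index with the almost symplectic form, using $\mathbf{\check{\Lambda}}^{\alpha\beta}=\check{\theta}^{\alpha\beta}-i\,\mathbf{\check{g}}^{\alpha\beta}$ and the structure of the Wick commutator: the leading term of $\frac{i}{v}[\,\cdot\,,\,\cdot\,]_{\circ}$ reproduces a contraction with $\check{\theta}_{\gamma\tau}$, and the fiber monomials $z^{\gamma}$ and $z^{\gamma}z^{\varphi}$ enter with the normalizations $1/2$ and $1/4$ fixed by the respective degrees $\deg_{s}=1$ and $\deg_{s}=2$ together with the antisymmetrization in $\alpha,\beta$. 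This yields exactly
\begin{equation*}
{}^{z}\widehat{\mathcal{T}}=\frac{z^{\gamma}}{2}\,\check{\theta}_{\gamma\tau}\,\widehat{\mathbf{T}}_{\alpha\beta}^{\tau}\,\mathbf{\check{e}}^{\alpha}\wedge\mathbf{\check{e}}^{\beta},\qquad {}^{z}\widehat{\mathcal{R}}=\frac{z^{\gamma}z^{\varphi}}{4}\,\check{\theta}_{\gamma\tau}\,\widehat{\mathbf{R}}_{\ \varphi\alpha\beta}^{\tau}\,\mathbf{\check{e}}^{\alpha}\wedge\mathbf{\check{e}}^{\beta},
\end{equation*}
with the nontrivial coefficients given by (\ref{cdtors}) and (\ref{cdcurv}).

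The main obstacle I anticipate is not the symplectic index manipulation but the nonholonomic bookkeeping. One must verify that the anholonomy coefficients $W_{\alpha\beta}^{\gamma}$ produced by the commutators of the N--adapted frames combine with the connection--squared terms to reconstruct precisely the d--curvature (\ref{cdcurv}) --- including its mixed block $\widehat{P}_{\ jka}^{i}$, its vertical block $\widehat{S}_{\ bcd}^{a}$, and the $\check{\Omega}$--terms --- instead of the naive holonomic Riemann tensor. In parallel, one must check that the genuinely nonzero torsion (\ref{cdtors}) of the normal d--connection survives and is isolated cleanly in the $\deg_{s}=1$ sector as ${}^{z}\widehat{\mathcal{T}}$, so that the curvature and torsion contributions separate by degree in $z$; this degree separation, and the verification that the vertical and mixed components appear with correct signs, is the step demanding the most care.
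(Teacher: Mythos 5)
Your overall strategy is the right one and is essentially the standard Fedosov-type verification that the paper relies on: extend $\widehat{\mathbf{D}}$ by (\ref{cdcop}), note that the relevant operators are $C^{\infty}(\mathbf{V}^{2n})[[v]]$--linear graded derivations of $(\mathcal{\check{W}}\otimes\mathbf{\Lambda},\circ)$ and hence inner, and read off the Weyl--algebra--valued forms whose $\frac{i}{v}ad_{Wick}$ they equal, with the symplectic lowering $\check{\theta}_{\gamma\tau}$ and the normalizations $1/2$, $1/4$ fixed by the $z$--degrees; the nonholonomic bookkeeping you emphasize (the anholonomy coefficients $W_{\alpha\beta}^{\gamma}$ folding into $\widehat{\mathbf{R}}_{\ \varphi\alpha\beta}^{\tau}$, including the $-\widehat{C}_{\ ha}^{i}\check{\Omega}_{\ kj}^{a}$ term in (\ref{cdcurv})) is exactly where the N--adapted computation differs from the holonomic case.

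There is, however, one concrete misstep in your execution paragraph: the torsion does \emph{not} appear inside $\widehat{\mathbf{D}}^{2}$, so your plan to isolate $\widehat{\mathbf{T}}_{\alpha\beta}^{\tau}$ by ``collecting the terms first order in the fiber derivative'' within $\widehat{\mathbf{D}}\circ\widehat{\mathbf{D}}$ fails. For any linear connection, with or without torsion, the square of the covariant exterior derivative on bundle--valued forms produces curvature only; all the $W_{\alpha\beta}^{\gamma}$--terms generated by $[\mathbf{\check{e}}_{\alpha},\mathbf{\check{e}}_{\beta}]$ are absorbed into the d--curvature coefficients (\ref{cdcurv}). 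Moreover, $\frac{i}{v}ad_{Wick}$ of the $z$--quadratic form $\ ^{z}\widehat{\mathcal{R}}$ itself acts at first order in $\partial/\partial z$, so a separation of torsion from curvature by $\partial_{z}$--degree inside $\widehat{\mathbf{D}}^{2}$ is not available. The torsion operator $\ ^{z}\widehat{\mathcal{T}}$ (\ref{at1}) is instead produced by the graded commutator of $\widehat{\mathbf{D}}$ with the Fedosov operator $\check{\delta}$ of (\ref{feddop}), whose solder--form structure $\mathbf{\check{e}}^{\alpha}\wedge\ ^{z}\mathbf{\check{e}}_{\alpha}$ is what converts the nonvanishing coefficients (\ref{cdtors}) into an inner derivation: this is precisely the pair of identities $\left[\widehat{\mathbf{D}},\check{\delta}\right]=\frac{i}{v}ad_{Wick}(^{z}\widehat{\mathcal{T}})$ and $\widehat{\mathbf{D}}^{2}=-\frac{i}{v}ad_{Wick}(\ ^{z}\widehat{\mathcal{R}})$ recorded in Theorem \ref{thprfo}. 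Your first paragraph already gestures at this (``the companion first--degree object''), so the repair is local: replace the degree--in--$z$ separation inside $\widehat{\mathbf{D}}^{2}$ by the separation into the two operator identities, computing $[\widehat{\mathbf{D}},\check{\delta}]$ for the torsion sector and $\widehat{\mathbf{D}}^{2}$ for the curvature sector; the rest of your argument then goes through as written.
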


By straightforward verifications, it follows the proof of
\begin{theorem}
\label{thprfo}The properties
 $\left[ \widehat{\mathbf{D}},\check{\delta}\right] =\frac{i}{v}ad_{Wick}(^{z}%
\widehat{\mathcal{T}})$ and  \newline $\widehat{\mathbf{D}}^{2}=-\frac{i}{v}%
ad_{Wick}(\ ^{z}\widehat{\mathcal{R}})$,
hold for the above operators, where $[\cdot ,\cdot ]$ is the $\deg _{a}$%
--graded commutator of endomorphisms of $\mathcal{\check{W}}\otimes \mathbf{%
\Lambda }$ and $ad_{Wick}$ is defined via the $\deg _{a}$--graded commutator
in $\left( \mathcal{\check{W}}\otimes \mathbf{\Lambda ,\circ }\right) .$
\end{theorem}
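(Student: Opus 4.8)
The plan is to prove both identities by direct $N$--adapted computation in the d--algebra $\left( \mathcal{\check{W}}\otimes \mathbf{\Lambda ,\circ }\right)$, treating them as the nonholonomic analogues of the standard Fedosov compatibility relations. The single computational tool needed for both is a lemma describing $\frac{i}{v}ad_{Wick}$ acting on a $z$--homogeneous element: when $a\in \mathcal{\check{W}}\otimes \mathbf{\Lambda }$ has fixed $\deg _{s}$--degree, all terms of the Wick product (\ref{fpr}) beyond first order in $v$ either annihilate or cancel in the graded commutator, so that $\frac{i}{v}ad_{Wick}(a)$ reduces to a $z$--differential operator whose coefficient is the antisymmetric part of $\mathbf{\check{\Lambda}}^{\alpha \beta }=\check{\theta}^{\alpha \beta }-i\ \mathbf{\check{g}}^{\alpha \beta }$, i.e. $\check{\theta}^{\alpha \beta }$ alone, since the symmetric $\mathbf{\check{g}}^{\alpha \beta }$ drops out of the antisymmetrization. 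This is the place where the almost symplectic form, rather than the metric, controls the algebraic structure.

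For the first relation, I would first apply this lemma to $^{z}\widehat{\mathcal{T}}$ of (\ref{at1}), which is linear in $z$ ($\deg _{s}=1$); contracting its single $z$--derivative against $\check{\theta}^{\alpha \beta }$, the pairing $\check{\theta}^{\alpha \beta }\check{\theta}_{\gamma \tau }$ reduces to a Kronecker contraction, leaving a $\deg _{a}$--graded derivation built from $\widehat{\mathbf{T}}_{\alpha \beta }^{\tau }$ wedged with $\mathbf{\check{e}}^{\alpha }\wedge \mathbf{\check{e}}^{\beta }$. Independently, I would expand the graded commutator $[\widehat{\mathbf{D}},\check{\delta}]$ using (\ref{cdcop}) and (\ref{feddop}): the two pure frame--derivative pieces give symmetric contributions that cancel in the commutator, while the connection pieces of $\widehat{\mathbf{D}}$ survive and reorganize, via the antisymmetry in the two form indices, into exactly the torsion coefficients $\widehat{\mathbf{T}}_{\alpha \beta }^{\tau }$ of (\ref{cdtors}). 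Matching the two computations gives $[\widehat{\mathbf{D}},\check{\delta}]=\frac{i}{v}ad_{Wick}(^{z}\widehat{\mathcal{T}})$.

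For the second relation, I would compute $\widehat{\mathbf{D}}^{2}$ by composing (\ref{cdcop}) with itself. The term $a\otimes d\lambda$ contributes $d^{2}=0$; the cross terms produce frame derivatives of the connection coefficients together with quadratic connection--connection terms, and these assemble — after accounting for the wedge antisymmetrization — precisely into the $N$--adapted curvature coefficients $\widehat{\mathbf{R}}_{\ \varphi \alpha \beta }^{\tau }$ of (\ref{cdcurv}). On the other side, $^{z}\widehat{\mathcal{R}}$ of (\ref{ac1}) is quadratic in $z$ ($\deg _{s}=2$), so the same lemma gives $\frac{i}{v}ad_{Wick}(^{z}\widehat{\mathcal{R}})$ as a second--order $z$--operator whose two $z$--contractions against $\check{\theta}$ reduce the $z^{\gamma }z^{\varphi }$ prefactor to the curvature acting by multiplication; the sign $-\frac{i}{v}$ arises from the overall factor $i/v$ in the lemma combined with the $1/4$ normalization and the double contraction. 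Matching yields $\widehat{\mathbf{D}}^{2}=-\frac{i}{v}ad_{Wick}(^{z}\widehat{\mathcal{R}})$.

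The main obstacle is not the Wick--algebra manipulation, which is formally identical to the holonomic Fedosov case, but the bookkeeping forced by nonholonomy: the N--adapted frames $\mathbf{\check{e}}_{\alpha }$ do not commute, satisfying instead the anholonomy relations (\ref{2anhrel}) with nontrivial $W_{\alpha \beta }^{\gamma }$ and N--connection curvature $\check{\Omega}_{ij}^{a}$. Consequently, when the frame derivatives are commuted in the computation of $[\widehat{\mathbf{D}},\check{\delta}]$ and $\widehat{\mathbf{D}}^{2}$, extra terms $W_{\alpha \beta }^{\gamma }\mathbf{\check{e}}_{\gamma }$ appear that are absent in the coordinate case. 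The crucial verification is that these anholonomy contributions are exactly the ones already built into the N--adapted torsion (\ref{cdtors}) and curvature (\ref{cdcurv}) — in particular the $\widehat{T}_{ij}^{a}=\check{\Omega}_{ij}^{a}$ and $\widehat{C}_{\ ha}^{i}\check{\Omega}_{\ kj}^{a}$ terms — so that the naive ``partial--derivative'' curvature is upgraded to the genuine d--curvature $\widehat{\mathbf{R}}_{\ \varphi \alpha \beta }^{\tau }$. Checking this assembly, which ties Proposition \ref{prthprfo} to the structure equations, is where the real content of the theorem lies.
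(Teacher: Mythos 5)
Your overall route --- direct N--adapted verification combining Proposition \ref{prthprfo} with the d--torsion coefficients (\ref{cdtors}) and d--curvature coefficients (\ref{cdcurv}) --- coincides with what the paper does: Theorem \ref{thprfo} is stated there as following ``by straightforward verifications'' from exactly these ingredients, and your emphasis on the anholonomy terms $W_{\alpha \beta }^{\gamma }$ being reabsorbed into $\widehat{\mathbf{T}}$ and $\widehat{\mathbf{R}}$ is indeed where the nonholonomic care must be taken.

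There is, however, one concrete overstatement in your key lemma. You claim that all terms of the product (\ref{fpr}) beyond first order in $v$ cancel in the graded commutator because only the antisymmetric part $\check{\theta}^{\alpha \beta }$ of $\mathbf{\check{\Lambda}}^{\alpha \beta }=\check{\theta}^{\alpha \beta }-i\,\mathbf{\check{g}}^{\alpha \beta }$ survives antisymmetrization. That is true only at first order. For $^{z}\widehat{\mathcal{T}}$ (\ref{at1}), linear in $z$, nothing more is needed, since second $z$--derivatives vanish; but for $^{z}\widehat{\mathcal{R}}$ (\ref{ac1}), quadratic in $z$, the second--order term of the commutator carries the coefficient $\mathbf{\check{\Lambda}}^{\alpha _{1}\beta _{1}}\mathbf{\check{\Lambda}}^{\alpha _{2}\beta _{2}}-\mathbf{\check{\Lambda}}^{\beta _{1}\alpha _{1}}\mathbf{\check{\Lambda}}^{\beta _{2}\alpha _{2}}=-2i\,(\check{\theta}^{\alpha _{1}\beta _{1}}\mathbf{\check{g}}^{\alpha _{2}\beta _{2}}+\mathbf{\check{g}}^{\alpha _{1}\beta _{1}}\check{\theta}^{\alpha _{2}\beta _{2}})$, in which the metric does \emph{not} drop out. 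This mixed term vanishes only after invoking both compatibilities of the normal d--connection: $\widehat{\mathbf{D}}\check{\theta}=0$ makes $\check{\theta}_{\gamma \tau }\widehat{\mathbf{R}}_{\ \varphi \alpha \beta }^{\tau }$ symmetric in $(\gamma ,\varphi )$ (which is also what makes the $z^{\gamma }z^{\varphi }$ contraction in (\ref{ac1}) nontrivial), while $\widehat{\mathbf{D}}\mathbf{\check{g}}=0$ makes the curvature antisymmetric when its first pair of indices is raised; contracting these two symmetries against the mixed $\check{\theta}\mathbf{\check{g}}$ coefficient kills the unwanted order--$v$ contribution to $\frac{i}{v}ad_{Wick}(\ ^{z}\widehat{\mathcal{R}})$, and all higher orders vanish because third $z$--derivatives of a quadratic element are zero. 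Without this step your second identity would acquire a spurious second--order $z$--differential operator of order $v$, so the lemma should be justified by the curvature symmetries of $\widehat{\mathbf{D}}$, not by the antisymmetry of $\mathbf{\check{\Lambda}}$ alone.
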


The above  formulas  can be redefined for any linear connection
structure on $\mathbf{V}^{2n}.$ For example, we consider how similar
formulas can be provided for the Levi--Civita connection.

\subsubsection{Fedosov's d--operators and the Levi--Civita connection}

For any metric structure $\mathbf{g}$ on a manifold $\mathbf{V}^{2n}\mathbf{,%
}$ the Levi--Civita connection $\bigtriangledown =\{\ _{\shortmid }\Gamma
_{\beta \gamma }^{\alpha }\}$ is by definition the unique linear connection
that is metric compatible $(\bigtriangledown g=0)$ and torsionless $( \
_{\shortmid }\mathcal{T}=0 )$. It is not a d--connection because it does not
preserve the N--connection splitting under parallel transports (\ref{whitney}%
). Let us parametrize its coefficients in the form
\begin{eqnarray*}
_{\shortmid }\Gamma _{\beta \gamma }^{\alpha } &=&\left( _{\shortmid
}L_{jk}^{i},_{\shortmid }L_{jk}^{a},_{\shortmid }L_{bk}^{i},\ _{\shortmid
}L_{bk}^{a},_{\shortmid }C_{jb}^{i},_{\shortmid }C_{jb}^{a},_{\shortmid
}C_{bc}^{i},_{\shortmid }C_{bc}^{a}\right) ,\mbox{\ where} \\
\bigtriangledown _{\mathbf{\check{e}}_{k}}(\mathbf{\check{e}}_{j}) &=&\
_{\shortmid }L_{jk}^{i}\mathbf{\check{e}}_{i}+\ _{\shortmid
}L_{jk}^{a}e_{a},\ \bigtriangledown _{\mathbf{\check{e}}_{k}}(e_{b})=\
_{\shortmid }L_{bk}^{i}\mathbf{\check{e}}_{i}+\ _{\shortmid }L_{bk}^{a}e_{a},
\\
\bigtriangledown _{e_{b}}(\mathbf{\check{e}}_{j}) &=&\ _{\shortmid
}C_{jb}^{i}\mathbf{\check{e}}_{i}+\ _{\shortmid }C_{jb}^{a}e_{a},\
\bigtriangledown _{e_{c}}(e_{b})=\ _{\shortmid }C_{bc}^{i}\mathbf{\check{e}}%
_{i}+\ _{\shortmid }C_{bc}^{a}e_{a}.
\end{eqnarray*}%
A straightforward calculation shows that the coefficients of the
Levi--Civita connection can be expressed as {\small
\begin{eqnarray}
\ _{\shortmid }L_{jk}^{a} &=&-\widehat{C}_{jb}^{i}\check{g}_{ik}\check{g}%
^{ab}-\frac{1}{2}\check{\Omega}_{jk}^{a},\ \ _{\shortmid }L_{bk}^{i}=\frac{1%
}{2}\check{\Omega}_{jk}^{c}\check{g}_{cb}\check{g}^{ji}-\Xi _{jk}^{ih}%
\widehat{C}_{hb}^{j},  \label{lccon} \\
\ _{\shortmid }L_{jk}^{i} &=&\widehat{L}_{jk}^{i},\ _{\shortmid }L_{bk}^{a}=%
\widehat{L}_{bk}^{a}+~^{+}\Xi _{cd}^{ab}\ ^{\circ }L_{bk}^{c},\ \
_{\shortmid }C_{kb}^{i}=\widehat{C}_{kb}^{i}+\frac{1}{2}\check{\Omega}%
_{jk}^{a}\check{g}_{cb}\check{g}^{ji}+\Xi _{jk}^{ih}\widehat{C}_{hb}^{j},
\notag \\
\ _{\shortmid }C_{jb}^{a} &=&-~^{+}\Xi _{cb}^{ad}\ ^{\circ }L_{dj}^{c},\
_{\shortmid }C_{bc}^{a}=\widehat{C}_{bc}^{a},\ _{\shortmid }C_{ab}^{i}=-%
\frac{\check{g}^{ij}}{2}\left\{ \ ^{\circ }L_{aj}^{c}\check{g}_{cb}+\
^{\circ }L_{bj}^{c}\check{g}_{ca}\right\} ,  \notag
\end{eqnarray}%
} where $e_{b}=\partial /\partial y^{a},$ $\check{\Omega}_{jk}^{a}$ are
computed as in (\ref{nccurv}) but for the canonical N--connection $\mathbf{%
\check{N}}$ (\ref{ncel}), $\Xi _{jk}^{ih}=\frac{1}{2}(\delta _{j}^{i}\delta
_{k}^{h}-\check{g}_{jk}\check{g}^{ih}),~^{\pm }\Xi _{cd}^{ab}=\frac{1}{2}%
(\delta _{c}^{a}\delta _{d}^{b}\pm \check{g}_{cd}\check{g}^{ab}),\ \ ^{\circ
}L_{aj}^{c}=\widehat{L}_{aj}^{c}-e_{a}(\check{N}_{j}^{c}), $ $\check{g}_{ik}$
and $\check{g}^{ab}$ are defined for the representation of the metric in
Lagrange--Finsler variables (\ref{hvmetr1}) and the normal d--connection $%
\widehat{\mathbf{\Gamma }}_{\beta \gamma }^{\alpha }=(\widehat{L}_{jk}^{i},\
^{v}\widehat{C}_{bc}^{a})$ (\ref{ndc}) is given by coefficients (\ref{cdcc}).

Let introduce the distortion d--tensor $\ _{\shortmid }Z_{\ \alpha \beta
}^{\gamma }$ with N--adapted coefficients
\begin{eqnarray}
\ _{\shortmid }Z_{jk}^{a} &=&-\widehat{C}_{jb}^{i}\check{g}_{ik}\check{g}%
^{ab}-\frac{1}{2}\check{\Omega}_{jk}^{a},~_{\shortmid }Z_{bk}^{i}=\frac{1}{2}%
\check{\Omega}_{jk}^{c}\check{g}_{cb}\check{g}^{ji}-\Xi _{jk}^{ih}~\widehat{C%
}_{hb}^{j},  \notag \\
\ _{\shortmid }Z_{jk}^{i} &=&0,\ _{\shortmid }Z_{bk}^{a}=~^{+}\Xi
_{cd}^{ab}~~^{\circ }L_{bk}^{c},_{\shortmid }Z_{kb}^{i}=\frac{1}{2}\check{%
\Omega}_{jk}^{a}\check{g}_{cb}\check{g}^{ji}+\Xi _{jk}^{ih}~\widehat{C}%
_{hb}^{j},  \label{cdeftc} \\
\ _{\shortmid }Z_{jb}^{a} &=&-~^{-}\Xi _{cb}^{ad}~~^{\circ }L_{dj}^{c},\
_{\shortmid }Z_{bc}^{a}=0,_{\shortmid }Z_{ab}^{i}=-\frac{g^{ij}}{2}\left[
~^{\circ }L_{aj}^{c}\check{g}_{cb}+~^{\circ }L_{bj}^{c}\check{g}_{ca}\right]
,  \notag
\end{eqnarray}

The next result follows from the above arguments.

\begin{proposition}
The N--adapted coefficients, of the normal d--connection and of the
distortion d--tensors define the Levi--Civita connection as
\begin{equation}
\ _{\shortmid }\Gamma _{\ \alpha \beta }^{\gamma }=\widehat{\mathbf{\Gamma }}%
_{\ \alpha \beta }^{\gamma }+\ _{\shortmid }Z_{\ \alpha \beta }^{\gamma },
\label{cdeft}
\end{equation}%
where $\ _{\shortmid }Z_{\ \alpha \beta }^{\gamma }$ are given by formulas (%
\ref{cdeft}) and h-- and v--components of $\widehat{\mathbf{\Gamma }}_{\beta
\gamma }^{\alpha }$ are given by (\ref{cdcc}).
\end{proposition}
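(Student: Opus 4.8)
The plan is to establish (\ref{cdeft}) as an identity between the N--adapted coefficients of two linear connections on $\mathbf{V}^{2n}$, exploiting the elementary fact that the difference of any two linear connections is a $(1,2)$--tensor field. Although the Levi--Civita connection $\bigtriangledown =\{\ _{\shortmid }\Gamma _{\beta \gamma }^{\alpha }\}$ is not N--adapted (it does not preserve the splitting (\ref{whitney})) while the normal connection $\widehat{\mathbf{D}}=\{\widehat{\mathbf{\Gamma }}_{\beta \gamma }^{\alpha }\}$ is a d--connection, the object $\ _{\shortmid }Z_{\ \alpha \beta }^{\gamma }\doteqdot \ _{\shortmid }\Gamma _{\ \alpha \beta }^{\gamma }-\widehat{\mathbf{\Gamma }}_{\ \alpha \beta }^{\gamma }$ is nonetheless a well--defined d--tensor. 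Hence it suffices to compute this difference blockwise with respect to the N--adapted frame $\mathbf{\check{e}}_{\alpha }$ and to check that the resulting coefficients coincide with those listed in (\ref{cdeftc}); the tensorial character then guarantees that the relation holds frame independently.

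First I would record the crucial structural input: because $\widehat{\mathbf{D}}$ is a d--connection, its only nontrivial N--adapted coefficients are the four irreducible blocks $(\widehat{L}_{jk}^{i},\ ^{v}\widehat{L}_{bk}^{a},\widehat{C}_{jc}^{i},\ ^{v}\widehat{C}_{bc}^{a})$ of (\ref{ndc}), which for the normal d--connection reduce by the symmetry identifications ($a=n+i$, etc.) to the pair $(\widehat{L}_{jk}^{i},\widehat{C}_{bc}^{a})$ of (\ref{cdcc}); all the mixing coefficients $\widehat{\mathbf{\Gamma }}_{jk}^{a},\widehat{\mathbf{\Gamma }}_{bk}^{i},\widehat{\mathbf{\Gamma }}_{jb}^{a},\widehat{\mathbf{\Gamma }}_{bc}^{i}$ vanish identically. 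Consequently, for these four mixing blocks the distortion equals the corresponding Levi--Civita coefficient of (\ref{lccon}) verbatim, giving at once $\ _{\shortmid }Z_{jk}^{a}=\ _{\shortmid }L_{jk}^{a}$, $\ _{\shortmid }Z_{bk}^{i}=\ _{\shortmid }L_{bk}^{i}$, $\ _{\shortmid }Z_{jb}^{a}=\ _{\shortmid }C_{jb}^{a}$ and $\ _{\shortmid }Z_{ab}^{i}=\ _{\shortmid }C_{ab}^{i}$, which are exactly the formulas asserted in (\ref{cdeftc}).

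It then remains to subtract (\ref{cdcc}) from (\ref{lccon}) in the four non--mixing blocks. In the $\ _{\shortmid }L_{jk}^{i}$ and $\ _{\shortmid }C_{bc}^{a}$ blocks the normal--connection pieces cancel completely, producing $\ _{\shortmid }Z_{jk}^{i}=0$ and $\ _{\shortmid }Z_{bc}^{a}=0$, while the $\ _{\shortmid }L_{bk}^{a}$ and $\ _{\shortmid }C_{kb}^{i}$ blocks leave precisely the residual terms $\ ^{+}\Xi _{cd}^{ab}\ ^{\circ }L_{bk}^{c}$ and $\frac{1}{2}\check{\Omega}_{jk}^{a}\check{g}_{cb}\check{g}^{ji}+\Xi _{jk}^{ih}\widehat{C}_{hb}^{j}$ recorded in (\ref{cdeftc}). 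The main obstacle is not conceptual but bookkeeping: one must track consistently the index identification $a=n+i$ between the h-- and v--ranges, distinguish the two symmetrizers $\ ^{\pm }\Xi _{cd}^{ab}$ and $\Xi _{jk}^{ih}$, and correctly attribute the $\check{\Omega}$--terms to the N--connection curvature (\ref{nccurv}) and the $\ ^{\circ }L$--terms to the anholonomy of $\mathbf{\check{e}}_{\alpha }$. Once these conventions are fixed, the blockwise matching is complete and (\ref{cdeft}) follows, with the first--step tensoriality ensuring its validity with respect to an arbitrary frame.
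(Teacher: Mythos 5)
Your proposal is correct and follows essentially the same route as the paper: the paper likewise obtains the distortion coefficients (\ref{cdeftc}) by blockwise comparison, in the N--adapted frame, of the Levi--Civita coefficients (\ref{lccon}) with the normal d--connection coefficients (\ref{cdcc}) (with the mixing blocks of $\widehat{\mathbf{\Gamma}}_{\ \alpha\beta}^{\gamma}$ vanishing because $\widehat{\mathbf{D}}$ preserves the splitting), the proposition then being stated as an immediate consequence of those formulas. Your only additions are cosmetic but welcome: you make explicit the tensoriality of the difference of two linear connections, which the paper leaves implicit, and your verbatim identification $\ _{\shortmid}Z_{jb}^{a}=\ _{\shortmid}C_{jb}^{a}$ in the mixing block in fact exposes a sign--label typo in the paper, which writes $-{}^{+}\Xi_{cb}^{ad}\,{}^{\circ}L_{dj}^{c}$ in (\ref{lccon}) but $-{}^{-}\Xi_{cb}^{ad}\,{}^{\circ}L_{dj}^{c}$ in (\ref{cdeftc}).
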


We emphasize that all components of $\ _{\shortmid }\Gamma _{\ \alpha
\beta}^{\gamma }, \widehat{\mathbf{\Gamma }}_{\ \alpha \beta }^{\gamma }$
and $\ _{\shortmid }Z_{\ \alpha \beta }^{\gamma }$ are uniquely defined by
the coefficients of d--metric (\ref{m1}), or (equivalently) by (\ref{hvmetr1}%
) and (\ref{ncel}). The constructions can be obtained for any $n+n$
splitting on $V^{2n},$ which for suitable $\mathcal{L},$ or $\mathcal{F},$
admit a Lagrange, or Finsler, like representation of geometric objects.

By proposition \ref{prthprfo}, the expressions for the curvature and torsion
of canonical d--operators of the extension of $\bigtriangledown $ to $%
\mathcal{\check{W}}\otimes \mathbf{\Lambda ,}$ are
\begin{eqnarray}
\ _{\shortmid }^{z}\mathcal{R} &\doteqdot &\frac{z^{\gamma }z^{\varphi }}{4}%
\ \check{\theta}_{\gamma \tau }\ \ _{\shortmid }R_{\ \varphi \alpha \beta
}^{\tau }(u)\ \mathbf{\check{e}}^{\alpha }\wedge \mathbf{\check{e}}^{\beta },
\label{ac1cl} \\
\ _{\shortmid }^{z}\mathcal{T}\ &\doteqdot &\frac{z^{\gamma }}{2}\ \check{%
\theta}_{\gamma \tau }\ \ _{\shortmid }T_{\alpha \beta }^{\tau }(u)\ \mathbf{%
\check{e}}^{\alpha }\wedge \mathbf{\check{e}}^{\beta }\equiv 0,  \notag
\end{eqnarray}%
where $\ _{\shortmid }T_{\alpha \beta }^{\tau }$ $=0,$ by definition, and$\
\ _{\shortmid }R_{\ \varphi \alpha \beta }^{\tau }$ is computed with respect
to the N--adapted Lagange--Finsler canonical bases by introducing $\widehat{%
\mathbf{\Gamma }}_{\ \alpha \beta }^{\gamma }=-\ _{\shortmid }\Gamma _{\
\alpha \beta }^{\gamma }+\ _{\shortmid }Z_{\ \alpha \beta }^{\gamma },$ see (%
\ref{cdeft}), into (\ref{cdcurv}). To the N--adapted d--operator (\ref{cdcop}%
), we can associate
\begin{equation}
\widehat{\bigtriangledown }\left( a\otimes \lambda \right) \doteqdot \left(
\mathbf{\check{e}}_{\alpha }(a)-u^{\beta }\ _{\shortmid }\Gamma \mathbf{%
_{\alpha \beta }^{\gamma }\ }^{z}\mathbf{\check{e}}_{\alpha }(a)\right)
\otimes (\mathbf{\check{e}}^{\alpha }\wedge \lambda )+a\otimes d\lambda ,
\label{lcexop}
\end{equation}%
on $\mathcal{\check{W}}\otimes \Lambda ,$ where $^{z}\mathbf{\check{e}}%
_{\alpha }$ is $\mathbf{\check{e}}_{\alpha }$ redefined in $z$--variables.
This almost symplectic connection $\widehat{\bigtriangledown }$ is
torsionles and, in general, is not adapted to the N--connection structures.

\begin{corollary}
For the Levi--Civita connection $\bigtriangledown =\{\ _{\shortmid }\Gamma
_{\beta \gamma }^{\alpha }\}$ on a N--anholo\-no\-mic manifold $\mathbf{V}%
^{2n},$ we have $\left[ \widehat{\bigtriangledown },\check{\delta}\right] =0$
and $\widehat{\bigtriangledown }^{2}=-\frac{i}{v}ad_{Wick}(\ _{\shortmid
}^{z}\mathcal{R})$, where $\widehat{\bigtriangledown }$ is defined by
formula (\ref{lcexop}), $\ _{\shortmid }^{z}\mathcal{R}$ is given by (\ref%
{ac1cl}), $[\cdot ,\cdot ]$ is the $\deg _{a}$--graded commutator of
endomorphisms of $\mathcal{\check{W}}\otimes \mathbf{\Lambda }$ and $%
ad_{Wick}$ is defined via the $\deg _{a}$--graded commutator in $\left(
\mathcal{\check{W}}\otimes \mathbf{\Lambda ,\circ }\right) .$
\end{corollary}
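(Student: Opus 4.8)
The plan is to recognize that the statement is the exact analogue of Theorem~\ref{thprfo}, with the normal d--connection $\widehat{\mathbf{D}}$ replaced by the Levi--Civita connection $\bigtriangledown$ and with the Fedosov extension (\ref{cdcop}) replaced by $\widehat{\bigtriangledown}$ from (\ref{lcexop}). Since the two extension operators share an identical structure and differ only through the substitution of connection coefficients $\widehat{\mathbf{\Gamma}}_{\ \alpha\beta}^{\gamma}\rightarrow \ _{\shortmid}\Gamma_{\ \alpha\beta}^{\gamma}=\widehat{\mathbf{\Gamma}}_{\ \alpha\beta}^{\gamma}+\ _{\shortmid}Z_{\ \alpha\beta}^{\gamma}$, see (\ref{cdeft}), the derivation of Theorem~\ref{thprfo} can in principle be repeated line by line. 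First I would record, in the spirit of Proposition~\ref{prthprfo}, the curvature and torsion $z$--forms attached to $\widehat{\bigtriangledown}$; these are precisely $\ _{\shortmid}^{z}\mathcal{R}$ and $\ _{\shortmid}^{z}\mathcal{T}$ of (\ref{ac1cl}), obtained from (\ref{ac1}) and (\ref{at1}) by inserting $\ _{\shortmid}R_{\ \varphi\alpha\beta}^{\tau}$ and $\ _{\shortmid}T_{\alpha\beta}^{\tau}$ in place of the normal d--connection curvature and torsion.

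Next I would carry out the two graded--commutator computations on $\mathcal{\check{W}}\otimes\mathbf{\Lambda}$. The Fedosov d--operators $\check{\delta}$ and $\check{\delta}^{-1}$ of (\ref{feddop}) are built only from the almost symplectic data and the fibre $z$--variables, hence are untouched by the change of connection; so the bracket $[\widehat{\bigtriangledown},\check{\delta}]$ and the square $\widehat{\bigtriangledown}^{2}$ are evaluated by the same manipulation that produced Theorem~\ref{thprfo}, yielding $[\widehat{\bigtriangledown},\check{\delta}]=\frac{i}{v}ad_{Wick}(\ _{\shortmid}^{z}\mathcal{T})$ and $\widehat{\bigtriangledown}^{2}=-\frac{i}{v}ad_{Wick}(\ _{\shortmid}^{z}\mathcal{R})$. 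The first relation then collapses because $\bigtriangledown$ is torsionless: $\ _{\shortmid}T_{\alpha\beta}^{\tau}=0$ forces $\ _{\shortmid}^{z}\mathcal{T}\equiv 0$, as already noted in (\ref{ac1cl}), so $ad_{Wick}(\ _{\shortmid}^{z}\mathcal{T})=0$ and $[\widehat{\bigtriangledown},\check{\delta}]=0$. The second relation is exactly the asserted identity, with no further cancellation required.

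The step I expect to be the real obstacle is verifying that the commutator identities of Theorem~\ref{thprfo} genuinely transfer to $\widehat{\bigtriangledown}$. That theorem was established for $\widehat{\mathbf{D}}$, which is both N--adapted and almost symplectic compatible, $\ _{\theta}\widehat{\mathbf{D}}_{\mathbf{X}}\check{\theta}=0$, whereas $\bigtriangledown$ preserves neither the Whitney splitting (\ref{whitney}) nor, in general, the form $\check{\theta}$. One must therefore confirm that these special properties were never actually used in deriving the bracket and square formulas, i.e. that they follow solely from $\widehat{\bigtriangledown}$ being the Fedosov extension of an affine connection, with the torsion and curvature entering only through their contraction with $\check{\theta}_{\gamma\tau}$ inside the $ad_{Wick}$ terms. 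Equivalently, one can insert the distortion relation $\ _{\shortmid}\Gamma=\widehat{\mathbf{\Gamma}}+\ _{\shortmid}Z$ into (\ref{lcexop}), so that $\widehat{\bigtriangledown}=\widehat{\mathbf{D}}+\ _{\shortmid}\widehat{Z}$ for the degree--preserving $z$--operator $\ _{\shortmid}\widehat{Z}$ determined by $\ _{\shortmid}Z_{\ \alpha\beta}^{\gamma}$, and then track how the extra $\ _{\shortmid}Z$--contributions reorganise $\ ^{z}\widehat{\mathcal{T}}$ and $\ ^{z}\widehat{\mathcal{R}}$ into $\ _{\shortmid}^{z}\mathcal{T}=0$ and $\ _{\shortmid}^{z}\mathcal{R}$. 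This is precisely the content of the remark following Theorem~\ref{thprfo} that the formulas may be redefined for any linear connection structure, and making that remark rigorous is where the genuine checking lies.
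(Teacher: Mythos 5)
Your proposal is correct and follows essentially the same route as the paper, whose proof is exactly the observation that the identities of Theorem \ref{thprfo} apply to the Levi--Civita extension $\widehat{\bigtriangledown}$ on $\mathcal{\check{W}}\otimes \mathbf{\Lambda}$, with the torsion term collapsing because $\ _{\shortmid }T_{\alpha \beta }^{\tau }=0$ forces $\ _{\shortmid }^{z}\mathcal{T}\equiv 0$ in (\ref{ac1cl}). Your closing paragraph, checking that N--adaptedness and almost symplectic compatibility of $\widehat{\mathbf{D}}$ are never actually used, simply makes rigorous the paper's own remark that the formulas can be redefined for any linear connection on $\mathbf{V}^{2n}$.
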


\begin{proof}
It is a straightforward consequence of the Theorem \ref{thprfo} for the
Levi--Civita and curvature operators extended on $\mathcal{\check{W}}\otimes
\Lambda .$ $\square $
\end{proof}

\vskip5pt

Prescribing a $n+n$ splitting on $\mathbf{V}^{2n}$, we can work equivalently
with any metric compatible linear connection structure which is N--adapted,
or not, if such a connection is completely defined by the (pseudo)
Riemannian metric structure. It is preferable to use the approach with the
normal d--connection because this way we have both an almost symplectic
analogy and Lagrange, or Finsler, like interpretation of geometric objects.
In standard classical gravity, in order to solve some physical problems, it
is more convenient to work with the Levi--Civita connection or its spin like
representations (for instance, in the Einstein--Dirac theory). The
self--dual and further generalizations to Ashtekar variables are more
convenient, respectively, in canonical ADN classical and quantum gravity
and/or loop quantum gravity.

It should be noted that the formulas for Fedosov's d--operators and their
properties do not depend in explicit form on generating functions $\mathcal{L%
},$ or $\mathcal{F}.$ Such a function may be formally introduced for
elaborating a Lagrange mechanics, or Finsler, modeling for a (pseudo)
Riemannian space with a general $n+n$ nonholonomic splitting. This way, we
emphasize that the Fedosov's approach is valid for various type of (pseudo)
Riemann, Riemann--Cartan, Lagrange--Finsler, almost K\"{a}hler and other
types of holonomic and nonholonic manifolds used for geometrization of
mechanical and field models. Nevertheless, the constructions are performed
in a general form and the final results do not depend on any ''background''
structures. We conclude that $3+1$ fibration approaches are more natural for
loop quantum gravity, but the models with nonholonomic $2+2$ splitting
result in almost K\"{a}hler quantum models; althought both types of
quantization, loop and deformation, provide background independent
constructions.

\subsubsection{Deformation Quantization of Einstein and Lagrange Spaces}

Formulating a (pseudo) Riemannian geometry in Lagrange--Finsler variables,
we can quantize the metric, frame and linear connection structures following
standard methods for deformation quantization of almost K\"{a}hler
manifolds. The goal of this section is to provide the main Fedosov type
results for such constructions and to show how the Einstein manifolds can be
encoded into the topological structure of such quantized nonholonomic spaces.

\paragraph{Fedosov's theorems for normal d--connections:}

The third main result of this work will be stated below by three theorems
for the normal d--connection (equivalently, canonical almost symplectic
structure) $\widehat{\mathbf{D}}\equiv \ _{\theta }\widehat{\mathbf{D}}$ (%
\ref{ndc}).

\begin{theorem}
\label{th3a}Any (pseudo) Riemanian metric $\mathbf{g}$ (\ref{m1})
(equivalently, $\mathbf{g=\check{g}}$ (\ref{hvmetr1})) defines a flat normal
Fedosov d--connection $\widehat{\mathcal{D}} := -\ \check{\delta}+\widehat{%
\mathbf{D}}-\frac{i}{v}ad_{Wick}(r)$ satisfying the condition $\widehat{%
\mathcal{D}}^{2}=0,$ where the unique element $r\in $ $\mathcal{\check{W}}%
\otimes \mathbf{\Lambda ,}$ $\deg _{a}(r)=1,$ $\check{\delta}^{-1}r=0,$
solves the equation
 $ \check{\delta}r=\widehat{\mathcal{T}}\ +\widehat{\mathcal{R}}+\widehat{%
\mathbf{D}}r-\frac{i}{v}r\circ r$ and this element can be computed recursively with respect to the total
degree $Deg$ as follows:%
\begin{eqnarray*}
r^{(0)} &=&r^{(1)}=0, r^{(2)}=\check{\delta}^{-1}\widehat{\mathcal{T}},
r^{(3)}=\ \check{\delta}^{-1}\left( \widehat{\mathcal{R}}+\widehat{\mathbf{D}%
}r^{(2)}-\frac{i}{v}r^{(2)}\circ r^{(2)}\right) , \\
r^{(k+3)} &=&\ \ \check{\delta}^{-1}\left( \widehat{\mathbf{D}}r^{(k+2)}-%
\frac{i}{v}\sum\limits_{l=0}^{k}r^{(l+2)}\circ r^{(l+2)}\right) ,k\geq 1,
\end{eqnarray*}%
where by $a^{(k)}$ we denoted the $Deg$--homogeneous component of degree $k$
of an element $a\in $ $\ \mathcal{\check{W}}\otimes \mathbf{\Lambda }.$
\end{theorem}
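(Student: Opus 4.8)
The plan is to follow Fedosov's construction, adapted to the N--adapted d--algebra $(\mathcal{\check{W}}\otimes \mathbf{\Lambda},\circ)$ and the normal d--connection, using the commutation identities already established in Theorem \ref{thprfo}. First I would compute the square of the candidate connection $\widehat{\mathcal{D}}=-\check{\delta}+\widehat{\mathbf{D}}-\frac{i}{v}ad_{Wick}(r)$ directly. Since $\check{\delta}^{2}=0$, and since $ad_{Wick}$ is a $\deg_a$--graded Lie algebra morphism into the derivations of $(\mathcal{\check{W}}\otimes \mathbf{\Lambda},\circ)$, the mixed supercommutators collapse and Theorem \ref{thprfo} gives
$$\widehat{\mathcal{D}}^{2}=-\frac{i}{v}ad_{Wick}\left(\widehat{\mathcal{T}}+\widehat{\mathcal{R}}-\check{\delta}r+\widehat{\mathbf{D}}r-\frac{i}{v}r\circ r\right).$$
Thus flatness $\widehat{\mathcal{D}}^{2}=0$ is equivalent to requiring the Weyl curvature $\widehat{\Omega}:=\widehat{\mathcal{T}}+\widehat{\mathcal{R}}-\check{\delta}r+\widehat{\mathbf{D}}r-\frac{i}{v}r\circ r$ to be central, i.e. to lie in the kernel of $ad_{Wick}$ (scalar--valued forms in $\mathbf{\Lambda}$). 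Imposing the strongest normalization $\widehat{\Omega}=0$ yields precisely the fixed--point equation $\check{\delta}r=\widehat{\mathcal{T}}+\widehat{\mathcal{R}}+\widehat{\mathbf{D}}r-\frac{i}{v}r\circ r$ of the statement.

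Second, I would solve this equation by the standard iteration. Applying $\check{\delta}^{-1}$ and invoking the Hodge--type decomposition $a=(\check{\delta}\check{\delta}^{-1}+\check{\delta}^{-1}\check{\delta}+\sigma)(a)$ together with the normalization $\check{\delta}^{-1}r=0$ and $\sigma(r)=0$ (which holds since $\deg_a(r)=1$), the equation becomes $r=\check{\delta}^{-1}(\widehat{\mathcal{T}}+\widehat{\mathcal{R}}+\widehat{\mathbf{D}}r-\frac{i}{v}r\circ r)$. The key structural fact is that $\check{\delta}^{-1}$ raises the total degree $Deg=2\deg_v+\deg_s$ by one, while $\widehat{\mathbf{D}}$ preserves it and the Wick product cannot lower it; moreover $\widehat{\mathcal{T}}$ is of total degree $1$ and $\widehat{\mathcal{R}}$ of total degree $2$, coming from the single and double $z$--factors in Proposition \ref{prthprfo}. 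Hence the right--hand side is a $Deg$--raising contraction, the iteration converges in the $Deg$--adic topology, and matching $Deg$--homogeneous components reproduces exactly the recursion $r^{(0)}=r^{(1)}=0$, $r^{(2)}=\check{\delta}^{-1}\widehat{\mathcal{T}}$, $r^{(3)}=\check{\delta}^{-1}(\widehat{\mathcal{R}}+\widehat{\mathbf{D}}r^{(2)}-\frac{i}{v}r^{(2)}\circ r^{(2)})$, and the stated general step. Uniqueness follows because the normalization conditions fix the only available freedom (the $\check{\delta}$--exact and $\sigma$--parts), so the recursion determines every homogeneous component.

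The hard part will be confirming that the $r$ produced by the $\check{\delta}^{-1}$--recursion actually solves the full equation, i.e. yields $\widehat{\Omega}=0$ and not merely its $\check{\delta}^{-1}\check{\delta}$--projection. For this I would treat $\widehat{\Omega}$ as a defect and show it is simultaneously $\check{\delta}^{-1}$--closed and $\sigma$--trivial by construction: applying $\check{\delta}$ to $\widehat{\Omega}$ and using $\check{\delta}^2=0$, the identity $[\widehat{\mathbf{D}},\check{\delta}]=\frac{i}{v}ad_{Wick}(\widehat{\mathcal{T}})$, the graded Jacobi identity for $\circ$, and the torsion--corrected Bianchi identities $\check{\delta}\widehat{\mathcal{R}}=\widehat{\mathbf{D}}\widehat{\mathcal{T}}$ and $\widehat{\mathbf{D}}\widehat{\mathcal{R}}=0$ for the normal d--connection, one derives a relation of the form $\check{\delta}\widehat{\Omega}=\widehat{\mathbf{D}}\widehat{\Omega}-\frac{i}{v}[r,\widehat{\Omega}]_{\circ}$. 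Since $\check{\delta}^{-1}\widehat{\Omega}=0$ and $\sigma(\widehat{\Omega})=0$, the Hodge decomposition turns this into a homogeneous $Deg$--raising recursion for $\widehat{\Omega}$ with vanishing lowest term, forcing $\widehat{\Omega}=0$ by induction on $Deg$. The step I expect to demand the most care is precisely these Bianchi identities for $\widehat{\mathbf{D}}$: unlike the Levi--Civita case they carry the extra $\widehat{\mathbf{D}}\widehat{\mathcal{T}}$ contribution generated by the nonzero d--torsion (\ref{cdtors}), and one must track the $z$--graded signs in the $\circ$--commutators carefully so that the central scalar part cancels exactly, leaving the flat Fedosov d--connection with $\widehat{\mathcal{D}}^{2}=0$.
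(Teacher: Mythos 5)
Your proposal is correct and follows essentially the same route the paper's (omitted) proof gestures at: the N--adapted version of Fedosov's standard argument, namely computing $\widehat{\mathcal{D}}^{2}=-\frac{i}{v}ad_{Wick}(\widehat{\Omega})$ from the identities of Theorem \ref{thprfo}, solving the normalization equation by the $Deg$--raising $\check{\delta}^{-1}$--recursion under $\check{\delta}^{-1}r=0$, $\sigma(r)=0$, and killing the defect $\widehat{\Omega}$ by induction on $Deg$ via the torsion--corrected Bianchi identities $\check{\delta}\widehat{\mathcal{R}}=\widehat{\mathbf{D}}\widehat{\mathcal{T}}$ and $\widehat{\mathbf{D}}\widehat{\mathcal{R}}=0$, which is precisely the ``tedious calculation with hat operators'' the paper declines to write out. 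One remark: matching $Deg$--homogeneous components makes the quadratic term at order $k+3$ the convolution $\sum_{l=0}^{k}r^{(l+2)}\circ r^{(k-l+2)}$, which your degree count implicitly uses, so the formula printed in the theorem (with $r^{(l+2)}\circ r^{(l+2)}$) is a typo that your argument tacitly repairs.
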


\begin{proof}
It follows from straightforward verifications of the property $\widehat{%
\mathcal{D}}^{2}=0$ using for $r$ formal series of type (\ref{formser}) and
the formulas for N--adapted coefficients: (\ref{cdcc}) for $\widehat{\mathbf{%
D}},$ (\ref{cdtors}) for $\widehat{\mathcal{T}},$ (\ref{cdcurv}) for $%
\widehat{\mathcal{R}},$ and the properties of Fedosov's d--operators (\ref%
{feddop}) stated by Theorem \ref{thprfo}. The length of this paper does not
allow us to present such a tedious calculation which is a N--adapted version
for corresponding ''hat'' operators.$\square $
\end{proof}

\vskip5pt

The procedure of deformation quantization is related to the definition of a
star--product which in our approach can be defined canonically because the
normal d--connection $\widehat{\mathbf{D}}$ is a N--adapted variant of the
affine and almost symplectic connection considered in that work. This
provides a proof for

\begin{theorem}
\label{th3b}A star--product on the almost K\"{a}hler model of a (pseudo)
Riemannian space in Lagrange--Finsler variables is defined on $C^{\infty }(%
\mathbf{V}^{2n})[[v]]$ by formula
 $\ ^{1}f\ast \ ^{2}f\doteqdot \sigma (\tau (\ ^{1}f))\circ \sigma (\tau (\
^{2}f))$,
 where the projection $\sigma :\mathcal{\check{W}}_{\widehat{\mathcal{D}}%
}\rightarrow C^{\infty }(\mathbf{V}^{2n})[[v]]$ onto the part of $\deg _{s}$%
--degree zero is a bijection and the inverse map $\tau :C^{\infty }(\mathbf{V%
}^{2n})[[v]]\rightarrow \mathcal{\check{W}}_{\widehat{\mathcal{D}}}$ can be
calculated recursively w.r..t the total degree $Deg,$%
\begin{eqnarray*}
\tau (f)^{(0)} &=&f\mbox{\ and, for \ }k\geq 0, \\
\tau (f)^{(k+1)} &=&\ \check{\delta}^{-1}\left( \widehat{\mathbf{D}}\tau
(f)^{(k)}-\frac{i}{v}\sum\limits_{l=0}^{k}ad_{Wick}(r^{(l+2)})(\tau
(f)^{(k-l)})\right) .
\end{eqnarray*}
\end{theorem}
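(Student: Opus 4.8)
The plan is to follow the standard Fedosov argument, now adapted to the flat normal d--connection $\widehat{\mathcal{D}}$ produced in Theorem \ref{th3a}. The central object is the subspace of flat sections $\mathcal{\check{W}}_{\widehat{\mathcal{D}}}\doteqdot \{a\in \mathcal{\check{W}}:\widehat{\mathcal{D}}a=0\}$. First I would record that, since $\widehat{\mathcal{D}}$ is a $\deg_a$--graded derivation of $(\mathcal{\check{W}}\otimes \mathbf{\Lambda},\circ)$ and satisfies $\widehat{\mathcal{D}}^{2}=0$ (Theorem \ref{th3a}), the Leibniz rule gives $\widehat{\mathcal{D}}(a\circ b)=(\widehat{\mathcal{D}}a)\circ b+a\circ(\widehat{\mathcal{D}}b)$, so $\mathcal{\check{W}}_{\widehat{\mathcal{D}}}$ is closed under the fibrewise Wick product $\circ$ and is therefore an associative subalgebra. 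This closure is what will make the product $\ast$ well defined.

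Second, I would prove that the symbol map $\sigma$, restricted to $\mathcal{\check{W}}_{\widehat{\mathcal{D}}}$, is a bijection onto $C^{\infty}(\mathbf{V}^{2n})[[v]]$. The mechanism is the homotopy identity $a=(\check{\delta}\check{\delta}^{-1}+\check{\delta}^{-1}\check{\delta}+\sigma)(a)$ attached to the Fedosov d--operators (\ref{feddop}). Rewriting $\widehat{\mathcal{D}}=-\check{\delta}+\widehat{\mathbf{D}}-\frac{i}{v}ad_{Wick}(r)$, the flatness equation $\widehat{\mathcal{D}}a=0$ becomes $\check{\delta}a=\widehat{\mathbf{D}}a-\frac{i}{v}ad_{Wick}(r)a$; applying $\check{\delta}^{-1}$ and using the homotopy identity with the normalization $\sigma(a)=f$ converts this into the fixed--point equation $\tau(f)=f+\check{\delta}^{-1}(\widehat{\mathbf{D}}\tau(f)-\frac{i}{v}ad_{Wick}(r)\tau(f))$, whose solution, if it exists and is unique, is exactly the desired flat lift with $\sigma(\tau(f))=f$.

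Third, I would solve this fixed--point equation by induction on the total degree $Deg=2\deg_v+\deg_s$. The key bookkeeping fact is that $\check{\delta}^{-1}$ raises $Deg$ by one, while $\widehat{\mathbf{D}}$ is $Deg$--preserving and, since $r$ begins at $Deg=2$ (one has $r^{(0)}=r^{(1)}=0$ in Theorem \ref{th3a}), each term $\frac{i}{v}ad_{Wick}(r^{(l+2)})$ strictly raises $Deg$. Hence the right--hand side of the fixed--point equation, evaluated in degree $k+1$, depends only on the components $\tau(f)^{(0)},\dots,\tau(f)^{(k)}$ of strictly lower degree. Setting $\tau(f)^{(0)}=f$ and reading off the degree--$(k+1)$ part yields precisely the recursion
\begin{equation*}
\tau(f)^{(k+1)}=\check{\delta}^{-1}\left(\widehat{\mathbf{D}}\tau(f)^{(k)}-\frac{i}{v}\sum_{l=0}^{k}ad_{Wick}(r^{(l+2)})(\tau(f)^{(k-l)})\right),
\end{equation*}
which determines $\tau(f)$ uniquely and shows surjectivity of $\sigma$; injectivity follows because any flat section is reconstructed from its symbol by the same recursion. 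With $\sigma$ a bijection, I would then set $\ ^{1}f\ast\ ^{2}f\doteqdot \sigma(\tau(\ ^{1}f)\circ \tau(\ ^{2}f))$, the associativity of $\ast$ being inherited from associativity of $\circ$ together with the subalgebra property of $\mathcal{\check{W}}_{\widehat{\mathcal{D}}}$, and the first--order term reproducing the Poisson bracket of the canonical almost symplectic structure $\check{\theta}$.

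The hardest part will be the degree--graded verification underlying the bijectivity of $\sigma$: one must check with care that the operators $\check{\delta}$, $\check{\delta}^{-1}$, $\widehat{\mathbf{D}}$ and $ad_{Wick}(r)$ act on $\mathcal{\check{W}}\otimes \mathbf{\Lambda}$ with exactly the claimed shifts in the gradings $\deg_v,\deg_s,\deg_a$, so that the recursion closes at each finite total degree and produces a genuine element of $\ker\widehat{\mathcal{D}}$. The N--adapted nature of all objects --- in particular that $\widehat{\mathbf{D}}$, $\widehat{\mathcal{T}}$ and $\widehat{\mathcal{R}}$ respect the h--v--splitting (\ref{whitney}) via the coefficients (\ref{cdtors}) and (\ref{cdcurv}) --- guarantees that these degree counts are unaffected by the nonholonomic frame structure, so the classical Fedosov estimates carry over once the d--operator identities of Theorem \ref{thprfo} are in hand.
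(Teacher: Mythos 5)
Your proposal is correct and takes essentially the same route as the paper, which establishes the theorem by invoking the standard Fedosov (Karabegov--Schlichenmeier) construction for flat sections of the Fedosov d--connection $\widehat{\mathcal{D}}$ from Theorem \ref{th3a}, adapted to the N--connection splitting; your flat-section subalgebra argument, homotopy identity, and $Deg$--recursion are exactly the verification the paper delegates to that cited machinery. Two small remarks: you (correctly) read the defining formula as $\ ^{1}f\ast\ ^{2}f=\sigma\left(\tau(\ ^{1}f)\circ\tau(\ ^{2}f)\right)$, silently fixing an evident typo in the statement, and for $l=0$ the operator $\frac{i}{v}ad_{Wick}(r^{(2)})$ preserves rather than strictly raises $Deg$ --- harmless, since $\check{\delta}^{-1}$ supplies the degree increase that closes the recursion.
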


We denote by $\ ^{f}\xi $ the Hamiltonian vector field corresponding to a
function $f\in C^{\infty }(\mathbf{V}^{2n})$ on space $(\mathbf{V}^{2n},%
\check{\theta})$ and consider the antisymmetric part $\ ^{-}C(\ ^{1}f,\
^{2}f)\ \doteqdot \frac{1}{2}\left( C(\ ^{1}f,\ ^{2}f)-C(\ ^{2}f,\
^{1}f)\right) $ of bilinear operator $C(\ ^{1}f,\ ^{2}f).$ We say that a
star--product (\ref{starp}) is normalized if $\ _{1}C(\ ^{1}f,\ ^{2}f)=\frac{%
i}{2}\{\ ^{1}f,\ ^{2}f\},$ where $\{\cdot ,\cdot \}$ is the Poisson bracket.
For the normalized $\ast ,$ the bilinear operator $\ _{2}^{-}C$ defines a de
Rham--Chevalley 2--cocycle, when there is a unique closed 2--form $\ \check{%
\varkappa}$ such that$\ _{2}C(\ ^{1}f,\ ^{2}f)=\frac{1}{2}\ \check{\varkappa}%
(\ ^{f_{1}}\xi ,\ ^{f_{2}}\xi )$ for all $\ ^{1}f,\ ^{2}f\in C^{\infty }(%
\mathbf{V}^{2n}).$ This is used to introduce $c_{0}(\ast )\doteqdot \lbrack
\check{\varkappa}]$ as the equivalence class.
A straightforward computation of $\ _{2}C$ and the results of Theorem \ref%
{th3b} provide the proof of
\begin{lemma}
The unique 2--form defined by the normal d--connection can be computed as $%
\check{\varkappa}=-\frac{i}{8}\mathbf{\check{J}}_{\tau }^{\ \alpha ^{\prime
}}\widehat{\mathcal{R}}_{\ \alpha ^{\prime }}^{\tau }-\frac{i}{6}d\left(
\mathbf{\check{J}}_{\tau }^{\ \alpha ^{\prime }}\widehat{\mathbf{T}}_{\
\alpha ^{\prime }\beta }^{\tau }\ \mathbf{\check{e}}^{\beta }\right)$, where
the coefficients of the curvature and torsion 2--forms of the normal
d--connection 1--form are given respectively by formulas (\ref{cform}) and (%
\ref{tform}).
\end{lemma}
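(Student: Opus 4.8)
The plan is to compute the cochain $\,_2C$ explicitly from the Fedosov quantization of Theorem \ref{th3b} and then read off the closed 2--form $\check{\varkappa}$ from the defining relation $\,_2C(\,^1f,\,^2f)=\frac{1}{2}\check{\varkappa}(\,^{f_1}\xi,\,^{f_2}\xi)$. All the required ingredients are already established: the flat Fedosov d--connection $\widehat{\mathcal{D}}$ with the recursion for $r$ from Theorem \ref{th3a}, the quantization map $\tau$ and projection $\sigma$ from Theorem \ref{th3b}, the graded commutation relations of the Fedosov d--operators from Theorem \ref{thprfo}, and the explicit $z$--forms $\,^{z}\widehat{\mathcal{T}}$ (\ref{at1}) and $\,^{z}\widehat{\mathcal{R}}$ (\ref{ac1}) from Proposition \ref{prthprfo}.

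First I would expand $\tau(f)$ through the total degree $Deg$ that feeds into the order--$v^{2}$ term of the star product. Using $\tau(f)^{(0)}=f$ and the recursion of Theorem \ref{th3b}, the successive corrections $\tau(f)^{(k+1)}$ are built from $\check{\delta}^{-1}\widehat{\mathbf{D}}\tau(f)^{(k)}$ together with the $ad_{Wick}(r^{(l+2)})$ insertions, where the flat--connection element has components $r^{(2)}=\check{\delta}^{-1}\,^{z}\widehat{\mathcal{T}}$ and $r^{(3)}=\check{\delta}^{-1}(\,^{z}\widehat{\mathcal{R}}+\widehat{\mathbf{D}}r^{(2)}-\tfrac{i}{v}r^{(2)}\circ r^{(2)})$ from Theorem \ref{th3a}. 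Substituting these into $\,^1f\ast\,^2f=\sigma(\tau(\,^1f))\circ\sigma(\tau(\,^2f))$ and collecting the coefficient of $v^{2}$ yields $\,_2C(\,^1f,\,^2f)$ as a bidifferential operator whose symbol is governed by $\mathbf{\check{\Lambda}}^{\alpha\beta}$ (through the Wick product (\ref{fpr})) and by the fibre insertions coming from $r^{(2)}$ and $r^{(3)}$.

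Second I would antisymmetrize. The symmetric part of $\,_2C$ is a coboundary for the normalized $\ast$ and drops out of the de Rham--Chevalley cocycle, so only $\,_2^{-}C$ survives and it contracts the two Hamiltonian fields $\,^{f_1}\xi,\,^{f_2}\xi$ against a 2--form. The genuinely curvature part, arising from $\,^{z}\widehat{\mathcal{R}}$ inside $r^{(3)}$, produces after contracting the fibre variables $z^{\gamma}z^{\varphi}$ the trace term $-\frac{i}{8}\mathbf{\check{J}}_{\tau}^{\ \alpha'}\widehat{\mathcal{R}}_{\ \alpha'}^{\tau}$, once one recognizes the combination $\check{\theta}_{\gamma\tau}\mathbf{\check{g}}^{\gamma\alpha'}$ as the mixed components $\mathbf{\check{J}}_{\tau}^{\ \alpha'}$ of the almost complex structure. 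The contribution of $r^{(2)}$, i.e.\ of the torsion $\,^{z}\widehat{\mathcal{T}}$ (\ref{at1}), assembles into the exact form $-\frac{i}{6}d\bigl(\mathbf{\check{J}}_{\tau}^{\ \alpha'}\widehat{\mathbf{T}}_{\ \alpha'\beta}^{\tau}\,\mathbf{\check{e}}^{\beta}\bigr)$; that this torsion piece is $\check{\delta}$--exact in fibre degree and $d$--exact on the base is exactly what guarantees $\check{\varkappa}$ is closed and that its cohomology class $c_{0}(\ast)$ is insensitive to the torsion of the normal d--connection.

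The hardest part will be the torsion bookkeeping. In the classical Fedosov construction the symplectic connection is torsionless, so $r$ starts at $r^{(3)}$ and only the curvature trace survives; here the normal d--connection carries the nonzero torsion (\ref{cdtors}), so $r^{(2)}=\check{\delta}^{-1}\,^{z}\widehat{\mathcal{T}}\neq 0$ enters already at the previous total degree, and its $\circ$--self--product together with $\widehat{\mathbf{D}}r^{(2)}$ inside $r^{(3)}$ must be tracked carefully. Organizing these terms so that the entire torsion dependence collapses into the single total derivative $d(\cdots)$ --- and checking that the remaining non--exact part is precisely the Ricci--type trace $\mathbf{\check{J}}_{\tau}^{\ \alpha'}\widehat{\mathcal{R}}_{\ \alpha'}^{\tau}$ with the stated numerical factors $\tfrac18$ and $\tfrac16$ --- is where the real labour lies; those coefficients emerge only after the normalization of $\ast$ and the full combinatorics of the Wick product (\ref{fpr}) are accounted for, with the curvature and torsion 2--forms taken in the forms (\ref{cform}) and (\ref{tform}).
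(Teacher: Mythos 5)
Your proposal follows essentially the same route as the paper, which proves the lemma by exactly the computation you outline: expanding the Fedosov star product of Theorem \ref{th3b} (built from the recursion of Theorem \ref{th3a} with $r^{(2)}=\check{\delta}^{-1}\,^{z}\widehat{\mathcal{T}}$ and $r^{(3)}$), extracting the antisymmetric part $\,_{2}^{-}C$ as a de Rham--Chevalley cocycle, and identifying the curvature trace plus the $d$--exact torsion term via (\ref{cform}) and (\ref{tform}). Your remark that the nonvanishing torsion (\ref{cdtors}) of the normal d--connection forces $r^{(2)}\neq 0$, in contrast with the torsionless classical Fedosov case, is precisely the N--adapted bookkeeping the paper's ``straightforward computation'' relies on, so the proposal is correct and matches the paper's argument.
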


We now define another canonical class $\check{\varepsilon},$ for $\ ^{%
\check{N}}T\mathbf{V}^{2n}=h\mathbf{V}^{2n}\oplus v\mathbf{V}^{2n},$ where
the left label indicates that the tangent bundle is split nonholonomically
by the canonical N--connection structure $\mathbf{\check{N}}.$ We can
perform a distinguished complexification of such second order tangent
bundles in the form $T_{\mathbb{C}}\left( \ ^{\check{N}}T\mathbf{V}%
^{2n}\right) =T_{\mathbb{C}}\left( h\mathbf{V}^{2n}\right) \oplus T_{\mathbb{%
C}}\left( v\mathbf{V}^{2n}\right) $ and introduce $\ \check{\varepsilon}$ as
the first Chern class of the distributions $T_{\mathbb{C}}^{\prime }\left( \
^{N}T\mathbf{V}^{2n}\right) =T_{\mathbb{C}}^{\prime }\left( h\mathbf{V}%
^{2n}\right) \oplus T_{\mathbb{C}}^{\prime }\left( v\mathbf{V}^{2n}\right) $
of couples of vectors of type $(1,0)$ both for the h-- and v--parts. In
explicit form, we can calculate $\check{\varepsilon}$ by using the
d--connection $\widehat{\mathbf{D}}$ and the h- and v--projections $h\Pi =%
\frac{1}{2}(Id_{h}-iJ_{h})$ and $v\Pi =\frac{1}{2}(Id_{v}-iJ_{v}),$ where $%
Id_{h}$ and $Id_{v}$ are respective identity operators and $J_{h}$ and $%
J_{v} $ are almost complex operators, which are projection operators onto
corresponding $(1,0)$--subspaces. Introducing the matrix $\left( h\Pi ,v\Pi
\right) \ \widehat{\mathcal{R}}\left( h\Pi ,v\Pi \right) ^{T},$ where $%
(...)^{T}$ means transposition, as the curvature matrix of the N--adapted
restriction of $\ $of the normal d--connection$\ \widehat{\mathbf{D}}$ to $%
T_{\mathbb{C}}^{\prime }\left( \ ^{\check{N}}T\mathbf{V}^{2n}\right) ,$ we
compute the closed Chern--Weyl form
\begin{equation}
\check{\gamma}=-iTr\left[ \left( h\Pi ,v\Pi \right) \widehat{\mathcal{R}}%
\left( h\Pi ,v\Pi \right) ^{T}\right] =-iTr\left[ \left( h\Pi ,v\Pi \right)
\widehat{\mathcal{R}}\right] =-\frac{1}{4}\mathbf{\check{J}}_{\tau }^{\
\alpha ^{\prime }}\widehat{\mathcal{R}}_{\ \alpha ^{\prime }}^{\tau }.
\label{aux4}
\end{equation}%
We get that the canonical class is $\check{\varepsilon}\doteqdot \lbrack
\check{\gamma}],$ which proves the

\begin{theorem}
\label{th3c}The zero--degree cohomology coefficient $c_{0}(\ast )$ for the
almost K\"{a}hler model of a (pseudo) Riemannian space defined by d--tensor $%
\mathbf{g}$ (\ref{m1}) (equivalently, by $\mathbf{\check{g}}$ (\ref{hvmetr1}%
)) is computed $c_{0}(\ast )=-(1/2i)\ \check{\varepsilon}.$
\end{theorem}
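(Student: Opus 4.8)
The plan is to establish the identity entirely at the level of de Rham cohomology classes, by assembling the two computations already performed in the excerpt: the explicit expression for $\check{\varkappa}$ furnished by the preceding Lemma, and the Chern--Weyl representative $\check{\gamma}$ of the canonical class $\check{\varepsilon}$. By the definition introduced just before that Lemma, the zero--degree coefficient is $c_{0}(\ast )\doteqdot \lbrack \check{\varkappa}]$, so the entire task reduces to identifying $[\check{\varkappa}]$ with a fixed numerical multiple of $\check{\varepsilon}\doteqdot \lbrack \check{\gamma}]$.

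First I would recall the Lemma, which expresses the unique normalized 2--cocycle form as
\[
\check{\varkappa}=-\frac{i}{8}\mathbf{\check{J}}_{\tau }^{\ \alpha ^{\prime }}\widehat{\mathcal{R}}_{\ \alpha ^{\prime }}^{\tau }-\frac{i}{6}d\left( \mathbf{\check{J}}_{\tau }^{\ \alpha ^{\prime }}\widehat{\mathbf{T}}_{\ \alpha ^{\prime }\beta }^{\tau }\ \mathbf{\check{e}}^{\beta }\right).
\]
The second summand is manifestly $d$--exact, hence it represents the zero class in $H^{2}(\mathbf{V}^{2n})$ and may be discarded on passing to cohomology. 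This is the conceptual heart of the argument: the torsion contribution of the normal d--connection, although nonvanishing pointwise (in contrast to the Levi--Civita case of the preceding Corollary), enters $\check{\varkappa}$ only through an exact form and therefore leaves $c_{0}(\ast )$ unaffected. Thus $[\check{\varkappa}]=-\tfrac{i}{8}\,[\mathbf{\check{J}}_{\tau }^{\ \alpha ^{\prime }}\widehat{\mathcal{R}}_{\ \alpha ^{\prime }}^{\tau }]$.

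Next I would invoke the Chern--Weyl computation, which yields $\check{\gamma}=-\tfrac{1}{4}\mathbf{\check{J}}_{\tau }^{\ \alpha ^{\prime }}\widehat{\mathcal{R}}_{\ \alpha ^{\prime }}^{\tau }$ as a closed representative of the first Chern class $\check{\varepsilon}=[\check{\gamma}]$ of the $(1,0)$--distributions $T_{\mathbb{C}}^{\prime }(\ ^{\check{N}}T\mathbf{V}^{2n})$. Substituting $\mathbf{\check{J}}_{\tau }^{\ \alpha ^{\prime }}\widehat{\mathcal{R}}_{\ \alpha ^{\prime }}^{\tau }=-4\,\check{\gamma}$ into the reduced expression above gives $[\check{\varkappa}]=-\tfrac{i}{8}(-4)[\check{\gamma}]=\tfrac{i}{2}\,\check{\varepsilon}$. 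Finally, using the elementary identity $-(1/2i)=i/2$, this reads $c_{0}(\ast )=-(1/2i)\,\check{\varepsilon}$, which is exactly the claim.

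The only genuine obstacle here is the numerical bookkeeping: one must check that the normalization conventions used to define the cocycle $\ _{2}^{-}C$ (hence $\check{\varkappa}$) are compatible with those used in the Chern--Weyl trace (hence $\check{\gamma}$), so that the factors $-i/8$ and $-1/4$ combine as stated; a different choice of sign for $\mathbf{\check{J}}$ or of trace normalization would merely rescale the constant. Since both $\check{\varkappa}$ and $\check{\gamma}$ are built from the \emph{same} N--adapted curvature $\widehat{\mathcal{R}}_{\ \alpha ^{\prime }}^{\tau }$ of the normal d--connection and the \emph{same} almost complex operator $\mathbf{\check{J}}$, this compatibility holds by construction, and no further geometric input is required beyond the two quoted results.
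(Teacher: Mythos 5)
Your proposal is correct and takes essentially the same route as the paper: there the theorem is obtained precisely by combining the preceding Lemma for $\check{\varkappa}$ with the Chern--Weyl computation (\ref{aux4}), $\check{\gamma}=-\frac{1}{4}\mathbf{\check{J}}_{\tau }^{\ \alpha ^{\prime }}\widehat{\mathcal{R}}_{\ \alpha ^{\prime }}^{\tau }$, setting $\check{\varepsilon}\doteqdot \lbrack \check{\gamma}]$, with the $d$--exact torsion term dropping out on passage to cohomology. You merely make explicit the bookkeeping $-\frac{i}{8}\cdot (-4)=\frac{i}{2}=-(1/2i)$ that the paper leaves implicit.
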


The coefficient $c_{0}(\ast )$ can be similarly computed for the case when a
metric of type (\ref{m1}) is a solution of the Einstein equations and this
zero--degree coefficient defines certain quantum properties of the
gravitational field. A more rich geometric structure should be considered if
we define a value similar to $c_{0}(\ast )$ encoding the information about
Einstein manifolds deformed into corresponding quantum configurations.

\paragraph{The zero--degree cohomology coefficient for Einstein ma\-nifolds}

\label{ssensp}The priority of deformation quantization is that we can
elaborate quantization schemes when metric, vielbein and connection fields
are not obligatory subjected to satisfy certain field equations and/or
derived by a variational procedure. On the other hand, in certain canonical
and loop quantization models, the gravitational field equations are
considered as the starting point for deriving a quantization formalism. In
such cases, the Einstein equations are expressed into ''lapse'' and
''shift'' (and/or generalized Ashtekar) variables and the quantum variant of
the gravitational field equations is prescribed to be in the form of Wheeler
De Witt equations (or corresponding systems of constraints in complex/real
generalized connection and dreibein variables). In this section, we analyze
the problem of encoding the Einstein equations into a geometric formalism of
 deformation quantization.

\paragraph{Gravitational field equations:}

For any d--connection $\mathbf{D=\{\Gamma \},}$ we can define the Ricci
tensor $Ric(\mathbf{D})=\{\mathbf{R}_{\ \beta \gamma }\doteqdot \mathbf{R}%
_{\ \beta \gamma \alpha }^{\alpha }\}$ and the scalar curvature $\
^{s}R\doteqdot \mathbf{g}^{\alpha \beta }\mathbf{R}_{\alpha \beta }$ ($%
\mathbf{g}^{\alpha \beta }$ being the inverse matrix to $\mathbf{g}_{\alpha
\beta }$ (\ref{m1})). If a d--connection is uniquely determined by a metric
in a unique metric compatible form, $\mathbf{Dg}=0,$ (in general, the
torsion of $\mathbf{D}$ is not zero, but induced canonically by the
coefficients of $\mathbf{g),}$ we can postulate in straightforward form the
field equations
\begin{equation}
\mathbf{R}_{\ \beta }^{\underline{\alpha }}-\frac{1}{2}(\ ^{s}R+\lambda )%
\mathbf{e}_{\ \beta }^{\underline{\alpha }}=8\pi G\mathbf{T}_{\ \beta }^{%
\underline{\alpha }},  \label{deinsteq}
\end{equation}%
where $\mathbf{T}_{\ \beta }^{\underline{\alpha }}$ is the effective
energy--momentum tensor, $\lambda $ is the cosmological constant, $G$ is the
Newton constant in the units when the light velocity $c=1,$ and $\mathbf{e}%
_{\ \beta }=\mathbf{e}_{\ \beta }^{\underline{\alpha }}\partial /\partial u^{%
\underline{\alpha }}$ is the N--elongated operator (\ref{2dder}).

Let us consider the absolute antisymmetric tensor $\epsilon _{\alpha \beta
\gamma \delta }$ and effective source 3--form
 $\mathcal{T}_{\ \beta }=\mathbf{T}_{\ \beta }^{\underline{\alpha }}\ \epsilon
_{\underline{\alpha }\underline{\beta }\underline{\gamma }\underline{\delta }%
}du^{\underline{\beta }}\wedge du^{\underline{\gamma }}\wedge du^{\underline{%
\delta }}$ and express the curvature tensor $\mathcal{R}_{\ \gamma }^{\tau }=\mathbf{R}%
_{\ \gamma \alpha \beta }^{\tau }\ \mathbf{e}^{\alpha }\wedge \ \mathbf{e}%
^{\beta }$ of $\mathbf{\Gamma }_{\ \beta \gamma }^{\alpha }=\ _{\shortmid
}\Gamma _{\ \beta \gamma }^{\alpha }-\ Z_{\ \beta \gamma }^{\alpha }$ as $%
\mathcal{R}_{\ \gamma }^{\tau }=\ _{\shortmid }\mathcal{R}_{\ \gamma }^{\tau
}-\mathcal{Z}_{\ \gamma }^{\tau },$ where $\ _{\shortmid }\mathcal{R}_{\
\gamma }^{\tau }$ $=\ _{\shortmid }R_{\ \gamma \alpha \beta }^{\tau }\
\mathbf{e}^{\alpha }\wedge \ \mathbf{e}^{\beta }$ is the curvature 2--form
of the Levi--Civita connection $\nabla $ and the distortion of curvature
2--form $\mathcal{Z}_{\ \gamma }^{\tau }$ is defined by $\ Z_{\ \beta \gamma
}^{\alpha }.$ For the gravitational $\left( \mathbf{e,\Gamma }\right) $ and
matter $\mathbf{\phi }$ fields, we consider the effective action
 $S[\mathbf{e,\Gamma ,\phi }]=\ ^{gr}S[\mathbf{e,\Gamma }]+\ ^{matter}S[%
\mathbf{e,\Gamma ,\phi }]$.
\begin{theorem}
\label{theq}The equations (\ref{deinsteq}) can be represented as 3--form
equations%
\begin{equation}
\epsilon _{\alpha \beta \gamma \tau }\left( \mathbf{e}^{\alpha }\wedge
\mathcal{R}^{\beta \gamma }+\lambda \mathbf{e}^{\alpha }\wedge \ \mathbf{e}%
^{\beta }\wedge \ \mathbf{e}^{\gamma }\right) =8\pi G\mathcal{T}_{\ \tau }
\label{einsteq}
\end{equation}%
following from the action by varying the components of $\mathbf{e}_{\ \beta
},$ when%
{\small
\begin{eqnarray*}
\mathcal{T}_{\ \tau }&=&\ ^{m}\mathcal{T}_{\ \tau }+\ ^{Z}\mathcal{T}_{\
\tau }, \\
\ ^{m}\mathcal{T}_{\ \tau}&=&\ ^{m}\mathbf{T}_{\ \tau }^{\underline{\alpha
}}\epsilon _{\underline{\alpha }\underline{\beta }\underline{\gamma }%
\underline{\delta }}du^{\underline{\beta }}\wedge du^{\underline{\gamma }%
}\wedge du^{\underline{\delta }},
\ ^{Z}\mathcal{T}_{\ \tau}=(8\pi G) ^{-1}\mathcal{Z}_{\tau
}^{\underline{\alpha }}\epsilon _{\underline{\alpha }\underline{\beta }%
\underline{\gamma }\underline{\delta }}du^{\underline{\beta }}\wedge du^{%
\underline{\gamma }}\wedge du^{\underline{\delta }},
\end{eqnarray*}%
}
where $\ ^{m}\mathbf{T}_{\ \tau }^{\underline{\alpha }}=\delta \
^{matter}S/\delta \mathbf{e}_{\underline{\alpha }}^{\ \tau }$ are equivalent
to the usual Einstein equations for the Levi--Civita connection $\nabla ,$\ $%
\ _{\shortmid }\mathbf{R}_{\ \beta }^{\underline{\alpha }}-\frac{1}{2}(\
_{\shortmid }^{s}R+\lambda )\mathbf{e}_{\ \beta }^{\underline{\alpha }}=8\pi
G\ ^{m}\mathbf{T}_{\ \beta }^{\underline{\alpha }}$.
\end{theorem}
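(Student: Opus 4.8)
The plan is to realize the tensor equation (\ref{deinsteq}) as the Euler--Lagrange equation obtained by varying the frame $\mathbf{e}_{\ \beta }$ in the effective action $S[\mathbf{e,\Gamma ,\phi }]=\ ^{gr}S+\ ^{matter}S$, and then to unload the non--Levi--Civita content of $\mathbf{D}$ into an effective source. First I would rewrite the gravitational part $^{gr}S$ in exterior form, using that, up to volume normalization, the scalar density $(\ ^{s}R+\lambda)\sqrt{|g|}\,\delta^{4}u$ equals a multiple of $\epsilon_{\alpha\beta\gamma\tau}\,\mathbf{e}^{\alpha}\wedge\mathbf{e}^{\beta}\wedge\mathcal{R}^{\gamma\tau}$ plus a cosmological term $\epsilon_{\alpha\beta\gamma\tau}\,\mathbf{e}^{\alpha}\wedge\mathbf{e}^{\beta}\wedge\mathbf{e}^{\gamma}\wedge\mathbf{e}^{\tau}$. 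Here $\mathcal{R}^{\gamma\tau}$ is the curvature $2$--form of $\mathbf{\Gamma }_{\ \beta \gamma }^{\alpha }$, so the action is a functional of $\mathbf{e}$ alone, because $\mathbf{D}$ is completely determined by $\mathbf{g}$ (equivalently, by the coframe and the N--connection).

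Second, I would carry out the variation $\delta/\delta\mathbf{e}^{\tau}$. The term linear in $\mathbf{e}$ inside the wedge yields directly the left-hand side $3$--form $\epsilon_{\alpha\beta\gamma\tau}(\mathbf{e}^{\alpha}\wedge\mathcal{R}^{\beta\gamma}+\lambda\,\mathbf{e}^{\alpha}\wedge\mathbf{e}^{\beta}\wedge\mathbf{e}^{\gamma})$ of (\ref{einsteq}); the variation of $\mathcal{R}^{\beta\gamma}$ itself contributes, through a Palatini--type identity, a term of the form $\widehat{\mathbf{D}}(\ldots)$ which I must control (see below). The matter variation $\delta\ ^{matter}S/\delta\mathbf{e}_{\underline{\alpha}}^{\ \tau}$ defines $^{m}\mathbf{T}_{\ \tau }^{\underline{\alpha }}$ and, dualized with $\epsilon_{\underline{\alpha }\underline{\beta }\underline{\gamma }\underline{\delta }}$, assembles into the $3$--form $^{m}\mathcal{T}_{\ \tau}$.

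Third, and this is the conceptual core, I would insert the distortion decomposition $\mathcal{R}_{\ \gamma }^{\tau }=\ _{\shortmid }\mathcal{R}_{\ \gamma }^{\tau }-\mathcal{Z}_{\ \gamma }^{\tau }$, which follows from $\mathbf{\Gamma }_{\ \beta \gamma }^{\alpha }=\ _{\shortmid }\Gamma _{\ \beta \gamma }^{\alpha }-\ Z_{\ \beta \gamma }^{\alpha }$. Substituting this into the left-hand side of (\ref{einsteq}) splits it into a piece built purely from $\ _{\shortmid }\mathcal{R}$ and a piece built from $\mathcal{Z}$. Transposing the $\mathcal{Z}$--piece to the right and setting $^{Z}\mathcal{T}_{\ \tau}=(8\pi G)^{-1}\mathcal{Z}_{\tau }^{\underline{\alpha }}\epsilon_{\underline{\alpha }\underline{\beta }\underline{\gamma }\underline{\delta }}du^{\underline{\beta }}\wedge du^{\underline{\gamma }}\wedge du^{\underline{\delta }}$ produces the claimed total source $\mathcal{T}_{\ \tau }=\ ^{m}\mathcal{T}_{\ \tau }+\ ^{Z}\mathcal{T}_{\ \tau }$. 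What remains on the left is precisely the Levi--Civita Einstein $3$--form, and a final $\epsilon$--contraction (Hodge dualization) converts it back to the tensor identity $\ _{\shortmid }\mathbf{R}_{\ \beta }^{\underline{\alpha }}-\frac{1}{2}(\ _{\shortmid }^{s}R+\lambda )\mathbf{e}_{\ \beta }^{\underline{\alpha }}=8\pi G\ ^{m}\mathbf{T}_{\ \beta }^{\underline{\alpha }}$, establishing the stated equivalence.

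The hard part will be the careful bookkeeping of the $\epsilon$--tensor contractions that translate between the exterior $3$--form equation (\ref{einsteq}) and the index form (\ref{deinsteq}), together with controlling the connection--variation (Palatini) term in the presence of the canonically induced torsion of $\mathbf{D}$: unlike the torsionless case, the variation of $\mathcal{R}$ does not vanish by itself, and I must show that its contribution is either exact or is exactly reabsorbed into the distortion source $^{Z}\mathcal{T}_{\ \tau}$. Once the distortion is consistently isolated, the reduction to the standard Einstein equations for $\nabla$ is a direct dualization, so the whole difficulty is concentrated in the algebra of the $Z$-- and $\mathcal{Z}$--terms.
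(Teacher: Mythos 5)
Your proposal follows essentially the same route the paper intends: the theorem is stated there as ``following from the action by varying the components of $\mathbf{e}_{\ \beta}$,'' with the detailed computation omitted, and the ingredients the paper prepares immediately beforehand --- the distortion split $\mathcal{R}_{\ \gamma }^{\tau }=\ _{\shortmid }\mathcal{R}_{\ \gamma }^{\tau }-\mathcal{Z}_{\ \gamma }^{\tau }$ induced by $\mathbf{\Gamma }_{\ \beta \gamma }^{\alpha }=\ _{\shortmid }\Gamma _{\ \beta \gamma }^{\alpha }-Z_{\ \beta \gamma }^{\alpha }$, the dualized source $3$--forms, and the effective action $S=\ ^{gr}S+\ ^{matter}S$ --- are exactly what you use to move the $\mathcal{Z}$--piece to the right--hand side as $^{Z}\mathcal{T}_{\ \tau }$ and dualize back to the Levi--Civita equations. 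Your explicit flagging of the Palatini--type term coming from $\delta\mathcal{R}$ in the presence of the canonically induced torsion is a correct and welcome refinement of the bookkeeping the paper leaves implicit.
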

For the Einstein gravity in Lagrange--Finsler
variables, we obtain:
\begin{corollary}
The vacuum Einstein eqs  with cosmological constant in terms of the
canonical N--adapted vierbeins and normal d--connection are%
\begin{equation}
\epsilon _{\alpha \beta \gamma \tau }\left( \mathbf{\check{e}}^{\alpha
}\wedge \widehat{\mathcal{R}}^{\beta \gamma }+\lambda \mathbf{\check{e}}%
^{\alpha }\wedge \mathbf{\check{e}}^{\beta }\wedge \ \mathbf{\check{e}}%
^{\gamma }\right) =8\pi G\ ^{Z}\widehat{\mathcal{T}}_{\ \tau },
\label{veinst1}
\end{equation}%
or, for the Levi--Civita connection,
 $\epsilon _{\alpha \beta \gamma \tau }\left( \mathbf{\check{e}}^{\alpha
}\wedge \ _{\shortmid }\mathcal{R}^{\beta \gamma }+\lambda \mathbf{\check{e}}%
^{\alpha }\wedge \mathbf{\check{e}}^{\beta }\wedge \ \mathbf{\check{e}}%
^{\gamma }\right) =0$.
\end{corollary}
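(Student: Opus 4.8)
The plan is to derive the Corollary as a direct specialization of Theorem \ref{theq}. First I would set the matter action to zero, so that $\ ^{m}\mathbf{T}_{\ \tau }^{\underline{\alpha }}=\delta \ ^{matter}S/\delta \mathbf{e}_{\underline{\alpha }}^{\ \tau }=0$ and hence $\ ^{m}\mathcal{T}_{\ \tau }=0$; the effective source in (\ref{einsteq}) then reduces to $\mathcal{T}_{\ \tau }=\ ^{Z}\mathcal{T}_{\ \tau }$. Next I would take the generic metric compatible d--connection $\mathbf{D}=\{\mathbf{\Gamma }\}$ of Theorem \ref{theq} to be the normal d--connection $\widehat{\mathbf{D}}$, which by the uniqueness theorem for the normal d--connection is completely defined by $\mathbf{g}=\mathbf{\check{g}}$ (\ref{hvmetr1}) and the prescribed generating function $\mathcal{L}$ (or $\mathcal{F}$), and I would take the co--frame $\mathbf{e}^{\alpha }$ to be the canonical N--adapted $\mathbf{\check{e}}^{\alpha }$. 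Under these substitutions $\mathcal{R}^{\beta \gamma }\rightarrow \widehat{\mathcal{R}}^{\beta \gamma }$ (with coefficients given by (\ref{cform})) and $\ ^{Z}\mathcal{T}\rightarrow \ ^{Z}\widehat{\mathcal{T}}$ (built from the distortion d--tensor (\ref{cdeftc})), which is precisely the first equation (\ref{veinst1}).

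To obtain the second, Levi--Civita form I would use the distortion relation (\ref{cdeft}), $\ _{\shortmid }\Gamma _{\ \alpha \beta }^{\gamma }=\widehat{\mathbf{\Gamma }}_{\ \alpha \beta }^{\gamma }+\ _{\shortmid }Z_{\ \alpha \beta }^{\gamma }$, which at the level of curvature $2$--forms yields $\widehat{\mathcal{R}}_{\ \gamma }^{\tau }=\ _{\shortmid }\mathcal{R}_{\ \gamma }^{\tau }-\mathcal{Z}_{\ \gamma }^{\tau }$, where $\mathcal{Z}_{\ \gamma }^{\tau }$ is the curvature distortion $2$--form determined by $\ _{\shortmid }Z_{\ \alpha \beta }^{\gamma }$. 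Substituting $\widehat{\mathcal{R}}^{\beta \gamma }=\ _{\shortmid }\mathcal{R}^{\beta \gamma }-\mathcal{Z}^{\beta \gamma }$ into (\ref{veinst1}) splits the left--hand side into a pure Levi--Civita piece $\epsilon _{\alpha \beta \gamma \tau }(\mathbf{\check{e}}^{\alpha }\wedge \ _{\shortmid }\mathcal{R}^{\beta \gamma }+\lambda \mathbf{\check{e}}^{\alpha }\wedge \mathbf{\check{e}}^{\beta }\wedge \mathbf{\check{e}}^{\gamma })$ and a distortion piece $-\epsilon _{\alpha \beta \gamma \tau }\mathbf{\check{e}}^{\alpha }\wedge \mathcal{Z}^{\beta \gamma }$. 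Recognizing that $8\pi G\ ^{Z}\widehat{\mathcal{T}}_{\ \tau }$ is exactly the $3$--form representation of this distortion piece, the two cancel and leave the announced vacuum equation for $\nabla $.

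The main obstacle is the identity that underlies this cancellation, namely $\epsilon _{\alpha \beta \gamma \tau }\mathbf{\check{e}}^{\alpha }\wedge \mathcal{Z}^{\beta \gamma }=8\pi G\ ^{Z}\widehat{\mathcal{T}}_{\ \tau }$, i.e. that the $\epsilon $--contracted wedge of the curvature distortion $2$--form coincides with the Hodge--type $3$--form $\mathcal{Z}_{\tau }^{\underline{\alpha }}\epsilon _{\underline{\alpha }\underline{\beta }\underline{\gamma }\underline{\delta }}du^{\underline{\beta }}\wedge du^{\underline{\gamma }}\wedge du^{\underline{\delta }}$ entering the definition of $\ ^{Z}\mathcal{T}$ in Theorem \ref{theq}. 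Verifying it amounts to expressing $\mathcal{Z}^{\beta \gamma }$ through the components $\ _{\shortmid }Z_{\ \alpha \beta }^{\gamma }$ (\ref{cdeftc}), contracting with $\epsilon _{\alpha \beta \gamma \tau }$ and $\mathbf{\check{e}}^{\alpha }$, and matching the result with the distortion Ricci--type tensor $\mathcal{Z}_{\tau }^{\underline{\alpha }}$; this is the same bookkeeping already used to establish Theorem \ref{theq}, now restricted to the canonical N--adapted basis. Everything else is routine relabeling of h-- and v--indices, so the Corollary follows as a straightforward consequence of Theorem \ref{theq} together with the distortion formulas (\ref{cdeft}) and (\ref{cdeftc}).
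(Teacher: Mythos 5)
Your proposal is correct and follows essentially the same route as the paper: the Corollary is obtained by specializing Theorem \ref{theq} to the canonical N--adapted co--frames $\mathbf{\check{e}}^{\alpha }$ and the normal d--connection, using the distortion relation (\ref{cdeft}) and the curvature deformation $\widehat{\mathcal{R}}_{\ \gamma }^{\tau }=\ _{\shortmid }\mathcal{R}_{\ \gamma }^{\tau }-\widehat{\mathcal{Z}}_{\ \gamma }^{\tau }$, with the ``hatted'' source $\ ^{Z}\widehat{\mathcal{T}}_{\ \tau }$ computed from the normal d--connection. Your explicit identification of the cancellation identity $\epsilon _{\alpha \beta \gamma \tau }\mathbf{\check{e}}^{\alpha }\wedge \widehat{\mathcal{Z}}^{\beta \gamma }=8\pi G\ ^{Z}\widehat{\mathcal{T}}_{\ \tau }$ simply makes explicit the bookkeeping the paper leaves implicit in its two--line proof.
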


\begin{proof}
The conditions of the mentioned Theorem \ref{theq} are redefined for the
co--frames $\mathbf{\check{e}}^{\alpha }$ elongated by the canonical
N--connection (\ref{ncel}), deformation of linear connections (\ref{cdeft})
and curvature (\ref{cdcurv}) with deformation of curvature 2--form of type
\begin{equation}
\widehat{\mathcal{R}}_{\ \gamma }^{\tau }=\ _{\shortmid }\mathcal{R}_{\
\gamma }^{\tau }-\widehat{\mathcal{Z}}_{\ \gamma }^{\tau }.  \label{def2}
\end{equation}%
We put ''hat'' on $\ ^{Z}\widehat{\mathcal{T}}_{\ \tau }$ because this value
is computed using the normal d--connection. $\square $
\end{proof}

Using formulas (\ref{veinst1}) and (\ref{def2}), we can write
\begin{equation}
\widehat{\mathcal{R}}^{\beta \gamma }=-\lambda \mathbf{\check{e}}^{\beta
}\wedge \ \mathbf{\check{e}}^{\gamma }-\widehat{\mathcal{Z}}^{\beta \gamma }%
\mbox{ and }\ _{\shortmid }\mathcal{R}^{\beta \gamma }=-\lambda \mathbf{%
\check{e}}^{\beta }\wedge \ \mathbf{\check{e}}^{\gamma }  \label{aux3}
\end{equation}%
which is  necessary for encoding the vacuum field equations into the
cohomological structure of the quantum almost K\"{a}hler model of Einstein
gravity.

\paragraph{The Chern--Weyl form and Einstein equations:}

Introducing the formulas (\ref{veinst1}) and (\ref{aux3}) into the
conditions of Theorem \ref{th3c}, we obtain the forth main result in this
subsection:
\begin{theorem}
\label{th4r}The zero--degree cohomology coefficient $c_{0}(\ast )$ for the
almost K\"{a}hler model of an Einstein space defined by a d--tensor $\mathbf{%
g}$ (\ref{m1}) (equivalently, by $\mathbf{\check{g}}$ (\ref{hvmetr1})) as a
solution of \ (\ref{veinst1}) is $c_{0}(\ast )=-(1/2i)\ \check{\varepsilon},$
for $\check{\varepsilon}\doteqdot \lbrack \check{\gamma}],$ where
 $\check{\gamma}=\frac{1}{4}\mathbf{\check{J}}_{\tau \alpha }^{\ }\left(
-\lambda \mathbf{\check{e}}^{\tau }\wedge \ \mathbf{\check{e}}^{\alpha }+%
\widehat{\mathcal{Z}}^{\tau \alpha }\right)$.
\end{theorem}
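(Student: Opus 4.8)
The plan is to specialize the general cohomological computation of Theorem \ref{th3c} to the Einstein case by substituting the field equations directly into the Chern--Weyl representative of the canonical class. First I would recall that Theorem \ref{th3c}, together with the closed Chern--Weyl form (\ref{aux4}), already yields $c_{0}(\ast )=-(1/2i)\ \check{\varepsilon}$ with $\check{\varepsilon}\doteqdot \lbrack \check{\gamma}]$ for \emph{any} (pseudo) Riemannian metric $\mathbf{g}$ of the form (\ref{m1}), where
\begin{equation*}
\check{\gamma}=-\frac{1}{4}\mathbf{\check{J}}_{\tau }^{\ \alpha ^{\prime }}\widehat{\mathcal{R}}_{\ \alpha ^{\prime }}^{\tau }
\end{equation*}
is built from the curvature $2$--form $\widehat{\mathcal{R}}_{\ \alpha ^{\prime }}^{\tau }$ of the normal d--connection. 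Since the hypothesis of the present theorem adds only that $\mathbf{g}$ solves the vacuum Einstein equations (\ref{veinst1}), the sole new ingredient is to replace $\widehat{\mathcal{R}}$ by the expression it is forced to assume on an Einstein space, and then read off the resulting representative of the unchanged class $\check{\varepsilon}$.

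Second, I would invoke the Corollary following Theorem \ref{theq}, which (using the curvature deformation (\ref{def2})) rewrites (\ref{veinst1}) as the pointwise identity (\ref{aux3}),
\begin{equation*}
\widehat{\mathcal{R}}^{\beta \gamma }=-\lambda \mathbf{\check{e}}^{\beta }\wedge \ \mathbf{\check{e}}^{\gamma }-\widehat{\mathcal{Z}}^{\beta \gamma }.
\end{equation*}
After lowering and raising the contracted pair with $\mathbf{\check{g}}$ so that the contraction with $\mathbf{\check{J}}$ matches, substituting this into $\check{\gamma}$ gives
\begin{equation*}
\check{\gamma}=\frac{1}{4}\mathbf{\check{J}}_{\tau \alpha }\left( -\lambda \mathbf{\check{e}}^{\tau }\wedge \ \mathbf{\check{e}}^{\alpha }+\widehat{\mathcal{Z}}^{\tau \alpha }\right),
\end{equation*}
which is the asserted formula. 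Because the passage from $\widehat{\mathcal{R}}$ to the right--hand side of (\ref{aux3}) is an equality of $2$--forms valid at every point of the Einstein manifold, the class $\check{\varepsilon}=[\check{\gamma}]$ and hence $c_{0}(\ast )=-(1/2i)\ \check{\varepsilon}$ are literally those of Theorem \ref{th3c}; only the explicit representative is rewritten, now carrying the distortion term $\widehat{\mathcal{Z}}^{\tau \alpha }$ induced by the canonical d--connection.

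The step I expect to be the main obstacle is the index and sign bookkeeping in this substitution. Concretely, using metric compatibility $\widehat{\mathbf{D}}\mathbf{\check{g}}=0$ one checks that the contraction $\mathbf{\check{J}}_{\tau }^{\ \alpha ^{\prime }}\widehat{\mathcal{R}}_{\ \alpha ^{\prime }}^{\tau }$ equals $\mathbf{\check{J}}_{\tau \alpha }\widehat{\mathcal{R}}^{\alpha \tau }$, and then one must track how the antisymmetry of $\mathbf{\check{J}}$ (inherited from $\check{\theta}$), the antisymmetry of the wedge $\mathbf{\check{e}}^{\tau }\wedge \mathbf{\check{e}}^{\alpha }$, and the symmetry type of $\widehat{\mathcal{Z}}^{\tau \alpha }$ combine so that the minus sign in $\check{\gamma}=-\tfrac14\mathbf{\check{J}}_{\tau }^{\ \alpha ^{\prime }}\widehat{\mathcal{R}}_{\ \alpha ^{\prime }}^{\tau }$ and the two minus signs on the right of (\ref{aux3}) produce exactly the overall $+\tfrac14$ and the stated internal signs. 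This requires fixing the paper's precise conventions for the index ordering in $\widehat{\mathcal{Z}}^{\tau \alpha }$ and the curvature $2$--form; no genuinely new computation beyond the algebra already assembled in Theorem \ref{theq} and the definition (\ref{def2}) is needed once those conventions are pinned down.
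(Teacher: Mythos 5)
Your proposal follows essentially the same route as the paper's own (sketched) proof: the paper likewise obtains the result directly by combining the Chern--Weyl representative (\ref{aux4}) from Theorem \ref{th3c} with the Einstein--space identity (\ref{aux3}) coming from the Corollary after Theorem \ref{theq}, exactly as you do. Your extra discussion of the index/sign bookkeeping in passing from $-\frac{1}{4}\mathbf{\check{J}}_{\tau }^{\ \alpha ^{\prime }}\widehat{\mathcal{R}}_{\ \alpha ^{\prime }}^{\tau }$ to $\frac{1}{4}\mathbf{\check{J}}_{\tau \alpha }\left( -\lambda \mathbf{\check{e}}^{\tau }\wedge \mathbf{\check{e}}^{\alpha }+\widehat{\mathcal{Z}}^{\tau \alpha }\right)$ is a legitimate filling-in of details the paper leaves implicit, and correctly identifies the only delicate point.
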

\begin{proof}
We sketch the key points of the proof which follows from (\ref{aux4}) and (%
\ref{aux3}). It should be noted that for $\lambda =0$ the 2--form $\widehat{%
\mathcal{Z}}^{\tau \alpha }$ is defined by the deformation d--tensor from
the Levi--Civita connection to the normal d--connection (\ref{cdeft}), see
formulas (\ref{cdeftc}). Such objects are defined by classical vacuum
solutions of the Einstein equations. We conclude that $c_{0}(\ast )$ encodes
the vacuum Einstein configurations, in general, with nontrivial constants
and their quantum deformations. $\square $
\end{proof}

If the Wheeler De Witt equations represent a quantum version of the Einstein
equations for loop quantum gravity, the Chern--Weyl 2--form
can be used to define the quantum version of Einstein equations (\ref%
{einsteq}) in the deformation quantization approach:

\begin{corollary}
In Lagrange--Finsler variables, the quantum field equations corresponding to
Einstein's general relativity are
\begin{equation}
\mathbf{\check{e}}^{\alpha }\wedge \check{\gamma}=\epsilon ^{\alpha \beta
\gamma \tau }2\pi G\mathbf{\check{J}}_{\beta \gamma }\widehat{\mathcal{T}}%
_{\ \tau }\ -\frac{\lambda }{4}\mathbf{\check{J}}_{\beta \gamma }\mathbf{%
\check{e}}^{\alpha }\wedge \mathbf{\check{e}}^{\beta }\wedge \ \mathbf{%
\check{e}}^{\gamma }.  \label{aseq}
\end{equation}
\end{corollary}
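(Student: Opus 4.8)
The plan is to derive (\ref{aseq}) by substituting the explicit Chern--Weyl representative $\check{\gamma}$ produced in Theorem \ref{th4r} into the 3--form version of the Einstein equations established in Theorem \ref{theq} and its Corollary. First I would record the two ingredients already in hand: the closed 2--form
\begin{equation*}
\check{\gamma}=\frac{1}{4}\mathbf{\check{J}}_{\tau \alpha }\left( -\lambda \mathbf{\check{e}}^{\tau }\wedge \mathbf{\check{e}}^{\alpha }+\widehat{\mathcal{Z}}^{\tau \alpha }\right),
\end{equation*}
in which the distortion curvature $\widehat{\mathcal{Z}}^{\tau\alpha}$ carries all the dynamical (non--cosmological) information, together with the curvature splitting (\ref{aux3}), $\widehat{\mathcal{R}}^{\beta\gamma}=-\lambda\mathbf{\check{e}}^{\beta}\wedge\mathbf{\check{e}}^{\gamma}-\widehat{\mathcal{Z}}^{\beta\gamma}$.

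Next I would wedge $\mathbf{\check{e}}^{\alpha}$ into $\check{\gamma}$ and split the result into two pieces. The cosmological piece $-\tfrac{\lambda}{4}\mathbf{\check{J}}_{\beta\gamma}\mathbf{\check{e}}^{\alpha}\wedge\mathbf{\check{e}}^{\beta}\wedge\mathbf{\check{e}}^{\gamma}$ appears immediately, reproducing the second term on the right--hand side of (\ref{aseq}) with no further work. The remaining piece is $\tfrac{1}{4}\mathbf{\check{J}}_{\beta\gamma}\mathbf{\check{e}}^{\alpha}\wedge\widehat{\mathcal{Z}}^{\beta\gamma}$, and the entire content of the corollary is to identify this distortion term with the effective matter contribution $2\pi G\,\epsilon^{\alpha\beta\gamma\tau}\mathbf{\check{J}}_{\beta\gamma}\widehat{\mathcal{T}}_{\ \tau}$.

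To make this identification I would invoke the field equations (\ref{veinst1}) (equivalently (\ref{einsteq}) with effective source $\mathcal{T}_{\ \tau}=\ ^{m}\mathcal{T}_{\ \tau}+\ ^{Z}\mathcal{T}_{\ \tau}$). Substituting (\ref{aux3}) into (\ref{einsteq}) cancels the two cosmological $\lambda$--terms and leaves, up to sign, $\epsilon_{\alpha\beta\gamma\tau}\mathbf{\check{e}}^{\alpha}\wedge\widehat{\mathcal{Z}}^{\beta\gamma}=-8\pi G\,\widehat{\mathcal{T}}_{\ \tau}$, i.e. the distortion curvature wedged with a coframe is, up to the factor $8\pi G$, the effective source 3--form. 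Using this relation to trade $\mathbf{\check{e}}^{\alpha}\wedge\widehat{\mathcal{Z}}^{\beta\gamma}$ for $\widehat{\mathcal{T}}_{\ \tau}$ and then contracting with $\mathbf{\check{J}}_{\beta\gamma}$ converts the distortion piece into the claimed $\mathbf{\check{J}}$--weighted source term; matching the numerical factors (the $\tfrac14$ from $\check{\gamma}$ against the $8\pi G$ from the field equation producing $2\pi G$) fixes the coefficient in (\ref{aseq}).

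The step I expect to be the genuine obstacle is the index bookkeeping in this last conversion: the field equations present $\mathbf{\check{e}}^{\alpha}\wedge\widehat{\mathcal{Z}}^{\beta\gamma}$ contracted through the totally antisymmetric $\epsilon_{\alpha\beta\gamma\tau}$, whereas the Chern--Weyl form contracts the same object through the almost--complex tensor $\mathbf{\check{J}}_{\beta\gamma}$. Passing cleanly from the $\epsilon$--contraction to the $\mathbf{\check{J}}$--contraction in four dimensions, while tracking which index stays free and which are summed and checking that the Hodge/volume normalizations implicit in $\widehat{\mathcal{T}}_{\ \tau}$ are consistent, is where the care lies; everything else is a direct insertion of Theorem \ref{th4r} and (\ref{aux3}) into the already established 3--form Einstein equations.
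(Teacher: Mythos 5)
Your proposal is correct and follows essentially the same route as the paper's own (one-sentence) proof: the paper likewise wedges $\mathbf{\check{e}}^{\alpha}\wedge$ into the 2--form $\check{\gamma}$ of Theorem \ref{th4r}, invokes (\ref{einsteq}) rewritten in the N--adapted form (\ref{veinst1}) with (\ref{aux3}), and introduces $\mathbf{\check{J}}_{\beta \gamma }$ to arrive at (\ref{aseq}). Your expanded version — splitting off the cosmological piece, deriving $\epsilon _{\alpha \beta \gamma \tau }\mathbf{\check{e}}^{\alpha }\wedge \widehat{\mathcal{Z}}^{\beta \gamma }=-8\pi G\widehat{\mathcal{T}}_{\ \tau }$, and matching $\tfrac{1}{4}\cdot 8\pi G=2\pi G$ — simply makes explicit the bookkeeping (including the $\epsilon$--versus--$\mathbf{\check{J}}$ contraction you rightly flag) that the paper leaves implicit.
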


\begin{proof}
Multiplying $\mathbf{\check{e}}^{\alpha }\wedge $ to the above 2--from  written
in Lagrange--Finsler variables and taking into account (\ref{einsteq}),
re--written also in the form adapted to the canonical N--connection, and
introducing the almost complex operator $\mathbf{\check{J}}_{\beta \gamma },$
we get the almost symplectic form of Einstein's equations (\ref{aseq}). $%
\square $
\end{proof}

It should be noted that even in the vacuum case, when $\lambda =0,$ the
2--form $\check{\gamma}$  from (\ref{aseq}) is not zero but
defined by $\widehat{\mathcal{T}}_{\ \tau }=\ ^{Z}\widehat{\mathcal{T}}_{\
\tau }.$

Finally, we emphasize that an explicit computation of $\check{\gamma}$ for
nontrivial matter fields has yet to be performed for a deformation
quantization model in which interacting gravitational and matter fields are
geometrized in terms of an almost K\"{a}hler model defined for spinor and
fiber bundles on spacetime. This is a subject for further investigations.

\chapter{Further Perspectives}

It is possible to elaborate a quite exact research program for the next three years in relation to the fact that the applicant
won recently a Romanian Government Grant IDEI, PN-II-ID-PCE-2011-3-0256, till October 2014. Section \ref{ch2s1}
is devoted to some important ideas and plans on future applicant's research activity using the Proposal for that Grant and
other ones. In section \ref{ch2s2}, we speculate on possible teaching and advanced pedagogical activity.

\section{Future Research Activity and Collaborations}

\label{ch2s1}

\subsection{Scientific context and motivation}

The elaboration of new geometric models and methods and their applications
in physics have a number of motivations coming form modern high energy
physics, gravity and geometric mechanics together with a general very
promising framework to construct a "modern geometry of physics". In a more
restricted context, but not less important, the geometric methods were
recently applied as effective tools for generating exact solutions for
fundamental physics and evolution equations.

Today, physical theories and a number of multi-disciplinary research
directions are so complex that it is often very difficult to formulate,
investigate and elaborate any applications without a corresponding
especially closed mathematical background and inter-disciplinary approaches
and methods. If in the past the physicists tried traditionally to not attack
problems of "pure" mathematics, the situation has substantially changed
during last 20 years. A number of mathematical ideas, notions and objects
were proposed and formulated in terms of general relativity, statistics and
quantum filed theory and strings. In modern fundamental physical theories,
mathematics provides not only tools and methods of solution of physical
problems, but governs the physicists' intuition.

The future applicant's research activity is planned in the line of the
mentioned unification of mathematics and physics being stated as a present
days program of developing new mathematical ideas and methods with
applications in modern classical and quantum gravity, Ricci flow theory and
noncommutative generalizations, geometric mechanics, stochastic processes
etc. Our general research goals are related to five main directions: 1) to
develop a new nonholonomic approach to geometric and deformation
quantization of gravity and nonlinear systems; 2) to construct and study
exact solutions in gravity and Ricci flow theory with generic local
anisotropy, non-trivial topology and/or hidden noncommutative structure
having motivation from string/brane and extra dimension gravity theories; 3)
to elaborate a corresponding formalism of nonholonomic Dirac operators and
generalizations to noncommutative geometry and evolution models of
fundamental geometric objects, exact solutions and quantum deformations; 4)
modified theories of gravity, exact solutions and quantization methods; 5)
anisotropic and nonholonomic configurations in modern cosmology and
astrophysics related to dark energy/matter problems.

We shall focus on new
features of the geometry of nonlinear connections and nonholonomic and
quantum deformations and study new aspects related to solitonic hierarchies,
bi-Hamilton formalism,  (non) commutative almost symplectic
structures and analogous models of Lagrange-Finsler and Hamilton-Cartan
geometries, differential geometry of superspaces, fractional derivatives and
 dimensions, stochastic anisotropic processes etc.

\subsection{Objectives}

There are formulated five main objectives with respective exploratory
importance, novelty, interdisciplinary character and possible applications:

\begin{enumerate}
\item Objective 1. Geometric and Deformation Quantization of Gravity and
Matter Field Interactions and Nonholonomic Mechanical Systems.
\begin{itemize}
\item In a general geometric approach, theories of classical and quantum
interactions with gravitational field equations for the Levi--Civita
connection can be re-formulated equivalently in almost K\"{a}hler and/or
Lagrange-Finsler variables on nonholonomic manifolds and bundle spaces. Such
nonholonomic configurations and/ or dynamical systems are determined by
corresponding non-inte\-grable distributions on space/-time manifolds. Our
goal and novelty are oriented to geometric quantization and renormalization
schemes for nonlinear theories following the nonlinear connection formalism
and techniques originally elaborated for quantum K\"{a}hler and almost
symplectic geometries.

\item We shall compare the new approach to geometric quantization of
nonholonomic almost K\"{a}hler spaces with former our constructions performed
for deformation and A-brane quantization. We shall analyze possible
connections and find key differences between geometric schemes and
physically important perturbative models with renormalization of
analogous/emergent commutative and noncommutative gravitational and gauge
like theories.

\item There will be provided a series of applications of geometric and
deformation quantization methods to theories with nonlinear dispersions,
local anisotropy and Lorenz violation induced from quantum gravity and/or
string/brane theories. Such models can  be described  as analogous
classical/quantum Lagrange-Hamilton and Finsler-Cartan geometries which
will extend the research with applications in modern mechanics and nonlinear
dynamics.
\end{itemize}

\item Objective 2. Exact Solutions in Gravity and Ricci Flow Theory

\begin{itemize}
\item Via nonholonomic (equivalently, anholonomic) deformations of the frame
and connection structures, the Einstein field equations can be decoupled and
solved in very general forms. This allows us to generate various classes of
off-diagonal solutions with associated solitonic hierarchies, characterized
by stochastic, fractional, fractal behavior and nonholonomic dynamical
multipole moments.

\item Following new geometric techniques, we shall derive new classes of
locally anisotropic black holes, ellipsoids, wormholes and cosmological
spacetimes. This requests new ideas and methods and generalizations for
black hole uniqueness and non-hair theorems in general relativity and
modified gravity theories. Various examples of vacuum and non-vacuum metrics
with noncommutative, supersymmetric and/or nonsymmetric variables will be
constructed and analyzed in explicit form.

\item Nonholonomically constrained Ricci flows of (semi) Riemannian
geometries result, in general, in various classes of commutative and
noncommutative geometries. A special interest presents the research related
to geometric evolution of Einstein metrics and possible connections to beta
functions and renormalization in quantum gravity. Encoding geometric and
physical data in terms of almost K\"{a}hler geometry, the classical and quantum
evolution scenarios can be performed and studied following our former
approach elaborated for commutative and noncommutative evolution of Einstein
and/or Finsler geometries.
\end{itemize}

\item Objective 3. Nonholonomic Clifford Structures and Dirac Operators

\begin{itemize}
\item The geometry of nonholonomic Clifford and spinor bundles enabled with
nonlinear connections was elaborated in our works following methods of
classical and quantum Lagrange--Finsler geometry. It is important to extend
such constructions to nonholonomic spinor and twistor structures derived via
almost K\"{a}hler and/or nonholonomic variables for certain important models of
classical and quantum gravity. The concept of nonholonomic Dirac operators
for almost symplectic classical and quantum systems will be developed in
connection to new spinor and twistor methods in gravity and gauge models and
exact solutions for Einstein--Yang-Mills--Dirac systems.

\item Quantization of nonholonomic Clifford and related almost K\"{a}hler
structures will be performed following geometric and deformation
quantization techniques. Possible connections to twistor diagrams formalism
and quantization will be analyzed.

\item The theory of nonholonomic Dirac operators and spectral triples and
functionals consists a fundamental mathematical background for various
approaches to noncommutative geometry, particle phy\-sics and Ricci flow
evolution models. Our new idea is to study nonholonomic and noncommutative
Clifford structures and their geometric evolution scenarios using almost K%
\"{a}hler and spinor variables. A comparative study with noncommutative
models derived via Seiberg-Witten transforms and deformation quantization
will be performed.
\end{itemize}

\item Objective 4. Modified theories of gravity, exact solutions and
quantization methods

\begin{itemize}
\item We shall extend our methods of constructing generic off--diagonal
exact solutions in Einstein gravity and (non) commutative
Ein\-stein--Finsler gravity theories to modified theories of gravity with
aniso\-tropic scaling and nonlinear dependence on scalar curvature, torsion
components, variation of constants etc. The conditions when such effective
theories can be modeled by nonholonomic constraints and nonlinear
off--diagonal interactions will be formulated in a geometric form. There
will be elaborated analystic and computer modeling programs for solitonic
interactions, \ black hole configurations, and other classes of exact
solutions. The criteria when analogous Dirac operators for almost symplectic
classical and quantum systems can be considered and exact solutions for
Einstein-Yang-Mills-Dirac systems in modified gravity will be constructed.

\item There are known perturbative theories of gravity, in covariant and
generalize non--perturbative forms, which seem to provide physically
important scenaria for quanum gravity, Lorenz violations, super-luminal
effects etc. We shall be interested to develope certain geometric schemes
for quantization of nonholonomic Clifford and related almost K\"{a}hler
structures in modified gravity and theories with off--diagonal analogous
configurations.
\end{itemize}

\item Objective 5. Anisotropic and nonholonomic configurations in modern
cosmology and astrophysics related to dark energy/matter problems

\begin{itemize}
\item This direction is strongly related to ''changing of paradigms'' in
modern fundamental physics. In some sence, it depends on experimental and
observational data and phenomenology in gravity and particle physics. We
suppose to apply our experience on geometric methods and mathematical
physics extended to computer modeling and graphics.

\item Possible tests and phenomenology for (non) commutative and quantum gravity models, with
almost K\"{a}hler and spinor variables, generalize Seiberg--Witten transforms
and deformation quantization, will be proposed and analyzed in details.
\end{itemize}
\end{enumerate}

\subsection{Methods and approaches}

\subsubsection{Techniques, Milestones and Objectives:}

The research methods span pure geometry, partial differential equations and
analytic methods of constructing of exact solutions and important issues in
mathematical physics and theoretical particle physics in equal measure.
Certain problems of geometric mechanics, diffusion theory, fractional
differential geometry related to nontrivial solutions in gravity and
quantization will be also concerned. Although the bulk motivation of the
tasks comes from fundamental gravity and particle physics, the approach to
the Project objectives is a patient and systematic development of what one
believes to be the necessary and inevitable application of geometrical tools
from almost K\"{a}hler geometry, generalized Finsler geometry and nonholonomic
manifolds. In the process, it is planned to contribute significantly to
geometric and deformation quantization and renormalization of nonholonomic
Einstein and generalized Lagrange--Finsler gravity models elaborated on
Lorenz spacetimes and, respectively, on (co) tangent bundles to such
manifolds.

We shall also consider both rigorous mathematical issues on uniqueness of
solutions, the simplest examples of exact solutions of fundamental field and
evolution equations and their encoding as bi-Hamilton structures and
solitonic hierarchies. There will be investigated the fundamental relations
between nonholonomic structures and quantum geometries and duality and
deformation quantization of almost K\"{a}hler models of nonholonomic
pseudo-Riemann. As phase space constructions they will provide a good
challenge for noncommutative Lagrange and Hamilton geometry. The methodology
will consist broadly in looking at such almost symplectic geometry methods
applied both on (semi) Riemannian and Riemann-Cartan manifolds and (co)
tangent bundle in order to elaborate a Fedosov type formalism related to
Lagrange and Hamilton geometries. We shall combine the approach with our
previous results and methods on generalized Finsler (super) geometries and
the anholonomic frame method in various models of gravity and strings. Let
us state the specific particularities with respect to assigned number of
Objectives (Obj.) in previous section:

For Obj. 1 oriented to geometric quantization of nonlinear physical systems,
gravity and mater fields and mechanics: Our approach is supposed to be a
synthesis of quantization schemes with nonholonomic distributions elaborated
in our recent papers oriented applications to quantum gravity, geometric
mechanics and almost symplectic geometries. There will be involved new
issues connected to geometric quantization of almost K\"{a}hler geometries and
quantization of Einstein and Einstein--Finsler spaces. As intermediate
milestones there will be considered and developed certain explicit
computations for perturbative models, diagrammatic techniques and
nonholonomic geometric renormalization of analogous/emergent commutative and
noncommutative gravitational and gauge like theories. In explicit form, we
shall compute observable effects with nonlinear dispersions, local
anisotropy and Lorenz violation determined from quantum gravity models on
pseudo-Riemannian spacetimes and their tangent bundle extensions. Possible
effective corrections for quantum Lagrange-Hamilton spaces, quantum solitons
and solitonic hierarchies will be computed using analytic methods.

For Obj. 2 on exact solutions in gravity and Ricci flow theory: Explicit
study of solutions for evolution equations and systems of nonlinear partial
differential equations (NPDE) modeling field interactions subjected to
nonholonomic constraints positively impose a coordinate / index style for
geometric and functional analysis constructions. Such coordinate-type
formalism, tensor-index formulas and corresponding denotations are typical
in Hamilton's and Grisha Perelman's works. Additional geometric studies are
necessary to state constructions in global form, for instance, with the aim
to derive global symmetries and study of certain nontrivial topological
configurations. We shall elaborate a distinguished tensor calculus, with
respect to frames adapted to the nonlinear connection and generalized
nonholonomic structures in order to be able to compute the evolution of such
objects under Ricci flows. As a first intermediate milestone step we shall
elaborate coordinate free criteria stating the conditions when nonholonomic
deformations of the frame and connection structures result in decoupling of
the Einstein field equations and formulating of general solutions in
abstract/global forms. We shall use for such constructions our former
results on associated solitonic hierarchies and multipole moment formalism.
The next step will be oriented to geometric methods and generalizations of
sigma models and nonholonomic deformation methods for generalized black hole
uniqueness and non-hair theorems in general relativity and modified gravity
theories. The third step (intermediate milestone) will be related to
computations for certain types of commutative and noncommutative and/or
nonholonomic variables. The geometry of hypersurfaces and possible relations
to global solutions of evolution of exact solutions of Einstein equations,
possible connections to beta functions and renormalization in quantum
gravity will be applied for classification purposes and study of possible
implications in modern cosmology and astrophysics.

For Obj. 3-5: We shall develop and apply an abstract index spinor techniques
adapted to nonlinear connections in Clifford bundles. The theory of
nonholonomic Dirac operators will be formulated in a form admitting
straightforward extensions to complex manifolds, almost symplectic spinors,
nonholonomic spinor spaces and noncommutative generalizations. Methods of
twistor geometry and applications to self-dual Einstein and Yang-Mills
systems will generalized for nonholonomic configurations and Pfaff systems
associated to twistor equations. The approach with twistor diagrams and
quantization will be extended to nonholonomic gravitational gauge
interactions and applied to almost symplectic manifolds and bundle spaces. A
recent techniques of distinguished spectral triples and nonlonomic Dirac
operators will be applied for generating nonholonomic and/or noncommutative
Clifford structures. We shall compute generalized series decompositions for
Seiberg-Witten transforms of gauge like reformulated Einstein equations,
deformation quantization of noncommutative spacetimes and their geometric
evolution.

 Finally, it is noted that there not presented comments in explicit form on
techniques and milestones for Objs. 4 and 5 because such issues are on
constant modification depending on observational and experimental data.
There are planned some important International Conferences on Gravity and
Cosmology for the second part of 2012 which will allow to formulate more
exact plans on activity in such directions.

\subsubsection{Travels and human and material resources}

The applicant have a more than 15 year experience as a senior researcher
(CS1) and administrative charge as an expert in the fields of mathematical
and theoretical physics, geometry and physics, evolution equations in
physics, deformation quantization, quantization, analogous gravity,
application in cosmology and astrophysics etc.

He is the researcher planned
to have most travels with lectures and talks at conferences in Western
Countries and Romania, all related to the Project IDEI and other funds. He
is assisted by a technician (employed for 36 months) with specific skills on
performing mathematical works, schemes, posters, latex and beamer
arrangements of manuscripts, posters, slides etc.

For a successful
realization of this multi-disciplinary research program (based on geometric
methods and new directions in modern mathematics, computer methods etc), he
has a very important professional support from  members of traditionally
strong school of differential geometry and applications existing at the
Department of Mathematics of UAIC, Institute of Mathematics at Ia\c{s}i etc.
It is planned to involve post-graduate students in the project at least with
a half charge during 36 months (depending on financial sources). It is also
supposed that the applicant as a leader of project may invite some
researchers outside Romania for a period up till 3 months.

Finally, in this section, it should be emphasized that the Project IDEI allows the
applicant to get 3 very desk tops/ laptops and computer macros necessary
for advanced research on mathematics, physics, astronomy and mechanics.

\section{Supervision and Pedagogical Activity}

\label{ch2s2}Applicant's activity after obtaining PhD in 1994 is a typical
research one (beginning 1996, as a senior researcher) for mathematical physics
scientists originating from former URSS and with ''high mobility'' in
Western Countries determined by a number of research grants and fellowships
and certain human rights issues. Nevertheless, it has a pluralistic
pedagogical activity and experience which can be quite important for his possible future
senior research positions in Romania.

\subsection{Teaching and supervision experience}

In brief, one should be mentioned such activities:

\begin{enumerate}
\item {\it University teaching in English, Romanian and Russian:} During 1996-1997, 2001,
2006, the applicant delivered lectures with seminars and labs activities
respectively at two Universities in R. Moldova (Free University of Moldova
and Academy of Economic Studies at Chi\c sin\v au; it was an attempt to introduce teaching in English for students at some departments), California State University at
Fresno, USA, and Brock University, Ontario Canada in such directions: a)
higher mathematics, mathematical programming, statistics and probability for
economists; b) physics lab; c) partial differential equations and discrete
optimization. A typical course of lectures, with problems and computer lab
elaborated by the applicant can be found in the Web, see \cite{vb4}.

\item In 2000, the applicant supervised a {\it Republican Seminar on ''Geometric
Methods in String Theory and Gravity''} at the Institute of Applied Physics,
Academy of Sciences, R. Moldova. He and the bulk of that participants moved
their activities as (post--graduate) students and researchers in Western Countries, after 2001. Let us consider some
 explicit examples of common research, publications in high
influence score journals, local journals and International Conferences: a)
papers on twistors and conservation laws in modified gravity, in
collaboration with S. Ostaf during 1993-1996, \cite{vcb2,vl2}; research on
gauge like Finsler--Lagrange gravity, together with Yu. Goncharenko (1995), %
\cite{ve1}; research and a series of publications and collaboration with E.
Gaburov and D. Gon\c ta (2000-2001), and with Prof. P. Stavrinos (Athens,
Greece), see a series of works in monograph \cite{vb1}, on metric--affine
Lagrange--Finsler gravity, exact solutions in such theories and brane
physics; a series of important papers was published together with Nadejda A.
Vicol  [one paper together with I. Chiosa, and other two students from
Chisinau, and Prof. D.\ Singleton, USA, and Profs. P. Stavrinos and G.\
Tsagas, Greece], see \cite{ve21a,vcb3,vcp4,vcp6,ve8}, on spinors in
generalized Finsler--Lagrange (super) spaces, wormholes, noncommutative
geometry etc.

\item During his fellowships and visits in Europe and North America, the
applicant collaborated and published papers with young researchers (master
students, post-graduates and post-docs): from Spain (2004-2007),  J. F.
Gonzalez--Hernandes \cite{ve22d} and R. Santamaria (also with Prof. F. Etayo) %
\cite{ve18}; from Romania (2001-2002), F. C. Popa \cite{ve9,vl2} and O.
Tintareanu--Mircea \cite{ve11,vl1},\ (post-graduates of Prof. M. Visinescu),
on locally anisotropic Taub NUT spining spaces, Einstein--Dirac solitonic
waves, locally anisotropic superspaces etc.

\item In R. Moldova, the applicant had the right to supervise PhD, master
and diploma theses, with different competencies, during 1993-2001.
\end{enumerate}

\subsection{Future plans}

The Habilitation Thesis would allow the applicant to supervise PhD thesis
in Romania. He may use his former experience (more than 5 years of
pluralistic activity) in such directions:

\begin{enumerate}
\item  involve young researchers in activities related to his grant
IDEI etc

\item organize a seminar (similarly to point 2 in previous subsection) on
"Geometric Methods in Modern Physics"  at UAIC and other universities and research centers

\item perform PhD supervision and research collaborations with post-docs, students
etc from various countries

\item organize advance teaching on math and physics in Romanian and English, with possible
visits and collaborations with researches from various places
\end{enumerate}

It should be concluded that main activity of the applicant, for the future, is supposed to be  a senior research one (a Romanian equivalent to Western "research professor") with certain additional advanced  pedagogical activity.

\chapter{Publications, Conferences and Talks}

\label{ch3} In this Chapter, there are  listed a series of "most important" applicant's publications and
last 7 years conference/seminar activity (see additional information in his
complete Publication List included in the file for Habilitation Thesis\footnote{ see also applicant's papers  in
http://inspirehep.net and/ or arXiv.org }).  In brief, the Bibliography is presented in this form:

\begin{itemize}
\item ten selected most important papers,  \cite{v1,v2,v3,v4,v5,v6,v7,v8,v9,v10};

\item important ISI and high influence absolute score papers relevant to the
first ones are with numbers \cite{ve1}-\cite{ve49};

\item works published in  Romania,  \cite{vl1}-\cite{vl5};

\item books and reviews of books and encyclopedia are with numbers \cite{vb1}%
-\cite{vb6}, chapters and  sections in books and collections are \cite{vcb1}-%
\cite{vcb3};

\item publications in R. Moldova, \cite{vrm1}-\cite{vrm5};

\item papers published in proceedings of conferences, \cite{vcp1}-\cite{vcp8};

\item communications and participation at conferences and seminars with
support of organizers/hosts, \cite{vcs1}-\cite{vcs40};

\item two electronic preprints \cite{veprep1,veprep2}  (from more than 100
ones) are listed because they contain some important references and
computations.
\end{itemize}

Additional references and citation of works by other authors can be found in
the mentioned works.

\end{document}